\documentclass[10pt,a4paper]{article}
\usepackage[utf8]{inputenc}
\usepackage{geometry}
\geometry{letterpaper}
\usepackage[titletoc,toc,title]{appendix}
\usepackage{graphicx}
\usepackage{amssymb}
\usepackage{epstopdf}
\usepackage{mathrsfs}
\usepackage{amsmath}
\usepackage{amsthm}
\begin{document}
\newtheorem{theoreme}{Theorem}[section]
\newtheorem{ex}{Example}[section]
\newtheorem{definition}{Definition}[section]
\newtheorem{lemme}{Lemma}[section]
\newtheorem{remarque}{Remark}[section]
\newtheorem{exemple}{Example}[section]
\newtheorem{proposition}{Proposition}[section]
\newtheorem{corolaire}{Corollary}[section]
\newtheorem{hyp}{Hypothesis}[section]
\newtheorem*{rec}{Recurrence Hypothesis}
\newtheorem*{prop}{Proposition}
\newtheorem*{theo}{Theorem}
\newtheorem*{Conj}{Conjecture}
\newcommand\Bound{\partial\overline{X}}
\newcommand\bdf{b_\xi}
\newcommand\Euc{\mathbb{R}^d}
\newcommand\COIn{\tilde{\chi}}
\newcommand\Comp{\overline{X}}
\newcommand\Dir{\varphi}
\newcommand\Lag{\mathcal{L}_0}
\newcommand\ins{\gamma_{uns}}
\newcommand\ext{\mathcal{W}_{0}}
\newcommand\brut{A_7}
\newcommand\inter{\mathcal{W}_2}
\newcommand\close{\mathcal{W}_1}
\newcommand\tins{N_{uns}}
\newcommand\din{\delta_{in}}
\newcommand\inco{\mathcal{DE}_-}
\newcommand\se{\epsilon_{sec}}
\newcommand\Nlag{N_{lag}}
\newcommand\cmin{c_{min}}
\newcommand\ind{J_a^{h,r}}
\newcommand\sor{\mathcal{T}}
\newcommand\secur{\varepsilon_0}
\newcommand\diam{\varepsilon_6}
\newcommand\tim{T_0}
\newcommand\zone{\overline{a}}
\newcommand\exit{N_\epsilon}
\newcommand\wait{N_1}
\newcommand\cli{\varepsilon_7}
\newcommand\nuag{\chi_{7}}
\newcommand\rain{\Pi_7}
\newcommand\sui{A_1^{\leq N}}
\newcommand\symp{\mathcal{C}_0}
\newcommand\zeit{t_2}
\newcommand\petit{\varepsilon_2}
\newcommand\temps{t_5}
\newcommand\set{\mathcal{W}_3}
\newcommand\taille{\varepsilon_2}
\newcommand\bel{>}
\newcommand\R{\mathbb{R}}
\newcommand\Sp{\mathbb{S}^{d-1}}
\newcommand\spt{\mathrm{supp}}
\makeatletter
\renewcommand\theequation{\thesection.\arabic{equation}}
\@addtoreset{equation}{section}
\makeatother
\title{Distorted plane waves on manifolds of nonpositive curvature}
\author{Maxime Ingremeau}

\maketitle
\begin{abstract}
We will consider the high frequency behaviour of distorted plane waves on manifolds of nonpositive curvature which are Euclidean or hyperbolic near infinity, under the assumption that the curvature is negative close to the trapped set of the geodesic flow and that the topological pressure associated to half the unstable Jacobian is negative.

We obtain a precise expression for distorted plane waves in the high frequency limit, similar to the one in \cite{GN} in the case of convex co-compact manifolds. In particular, we will show $L_{loc}^\infty$ bounds on distorted plane waves that are uniform with frequency. We will also show a small-scale equidistribution result for the real part of distorted plane waves, which implies sharp bounds for the volume of their nodal sets.
\end{abstract}

\section{Introduction}

Consider a Riemannian manifold $(X,g)$ such that there exists $X_0\subset X$ and $R_0\bel 0$ such that $(X\backslash X_0,g)$ and $(\mathbb{R}^d\backslash B(0,R_0),g_{eucl})$ are isometric (we shall say that such a manifold is \emph{Euclidean near infinity}). The distorted plane waves on $X$ are a family of functions $E_h(x;\xi)$ with parameters $\xi\in\mathbb{S}^{d-1}$ (the direction of propagation of the incoming wave) and $h$ (a semiclassical parameter corresponding to the inverse of the square root of the energy) such that
\begin{equation}\label{eigen}
(-h^2\Delta-1) E_h(x;\xi)=0, 
\end{equation}
and which can be put in the form
\begin{equation}\label{jeanne}
E_h(x;\xi)= (1-\chi)e^{i x\cdot \xi /h} + E_{out}.
\end{equation}
Here, $\chi\in C_c^\infty$ is such that $\chi\equiv 1$ on $X_0$, and $E_{out}$ is \emph{outgoing} in the sense that it is the image of a function in
$C_c^\infty(\mathbb{R}^d)$ by the outgoing resolvent
$(-h^2\Delta- 1+i0)^{-1}$. It can be shown (cf. \cite{Mel}) that there is only one function $E_h(\cdot;\xi)$ such that (\ref{eigen}) is satisfied and which can be put in the form (\ref{jeanne}).

In \cite{Ing}, the author studied the behaviour as $h\rightarrow 0$ of $E_h$, under some assumptions on the geodesic flow $\Phi^t: S^*X\rightarrow S^*X$. The \emph{trapped set} is defined as
\begin{equation*}
K:=\{(x,\xi)\in S^*X; \Phi^t(x,\xi) \text{ remains in a bounded set for all } t\in \mathbb{R}\}.
\end{equation*}

The main result of \cite{Ing} was that provided that $K$ is non-empty, that the dynamics is hyperbolic close to $K$, and that some topological pressure is negative, then $E_h(\cdot;\xi)$ can be written as a convergent sum of Lagrangian states associated to Lagrangian manifolds which are close to the unstable directions of the hyperbolic dynamics.

In this paper, we will show that this result can be made more precise if we work on a manifold of nonpositive curvature. In this framework, we shall show that all the Lagrangian states which make up $E_h(\cdot;\xi)$ are associated to Lagrangian manifolds which can be projected smoothly on the base manifold $X$. As a consequence, we will deduce the following estimates on the $C^\ell$ norms of the distorted plane waves.

We refer to section \ref{assumptions} for the general assumptions we need, and to section \ref{newyork} for the theorem concerning the decomposition of distorted plane waves into a sum of Lagrangian states. In this section, we will also be able to describe the semiclassical measures associated to the distorted plane waves.

\begin{theoreme}\label{linfini}
Let $(X,g)$ be a Riemannian manifold which is Euclidean near infinity. We suppose that $(X,g)$ has nonpositive sectional curvature, and that it has strictly negative curvature near $K$. We also suppose that
Hypothesis \ref{Husserl} on topological pressure is satisfied.

Let $\xi\in\mathbb{S}^{d-1}$, $\ell\in \mathbb{N}$ and $\chi\in C_c^\infty(X)$. Then there exists $C_{\ell,\chi}\bel 0$ such that, for any $h\bel 0$, we have
\begin{equation*}
\|\chi E_h(\cdot,\xi)\|_{C^\ell}\leq \frac{C_{\ell,\chi}}{h^\ell}.
\end{equation*}
\end{theoreme}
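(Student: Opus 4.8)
The plan is to deduce Theorem~\ref{linfini} from the (yet-to-be-stated) decomposition theorem of section~\ref{newyork}, which writes $\chi E_h(\cdot,\xi)$ as a convergent sum of Lagrangian states, each associated to a Lagrangian manifold that is a smooth graph over (a piece of) the base $X$. The key point granted by the nonpositive-curvature hypothesis is precisely this graph property: each contributing Lagrangian $\Lambda_j$ is of the form $\{(x,d\phi_j(x)): x\in U_j\}$ for a smooth phase $\phi_j$ with $C^\ell$ norms controlled uniformly in $h$ on the compact set $\spt\chi$, because unstable manifolds in nonpositive curvature project diffeomorphically and their second fundamental forms are bounded in terms of curvature bounds. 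So the first step is to recall the form of a single Lagrangian state: $u_j(x) = a_j(x;h)\, e^{i\phi_j(x)/h}$ with symbol $a_j$ admitting an asymptotic expansion $a_j \sim \sum_k h^k a_{j,k}$, with all the $a_{j,k}$ and $\phi_j$ having $C^\ell$ bounds on $\spt\chi$ uniform in $h$ and, crucially, in the index $j$.

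The second step is the elementary observation that differentiating $e^{i\phi_j/h}$ produces at worst a factor $h^{-1}$ per derivative: by the Leibniz rule and the chain rule, for any multi-index $\alpha$ with $|\alpha|\le\ell$,
\begin{equation*}
\partial^\alpha\big(a_j e^{i\phi_j/h}\big) = \Big(\sum_{|\beta|\le|\alpha|} h^{-|\beta|} b_{j,\beta}(x;h)\Big) e^{i\phi_j/h},
\end{equation*}
where each $b_{j,\beta}$ is a universal polynomial in derivatives of $a_j$ and $\phi_j$ of order $\le\ell$, hence bounded on $\spt\chi$ by a constant depending only on $\ell$, $\chi$, and the uniform $C^\ell$ bounds above. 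Thus $\|\chi u_j\|_{C^\ell}\le C_{\ell,\chi}\, h^{-\ell}\,\|a_j\|_{C^\ell,\infty}$ up to lower-order-in-$h^{-1}$ terms, where the implied constant does not depend on $j$.

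The third step is summation. The decomposition theorem should come with a statement that the series $\sum_j u_j$ converges, and this convergence is quantitative: the amplitudes $\|a_j\|_\infty$ (indeed their $C^\ell$ norms, since the geometry is uniformly controlled) decay geometrically in the "length" of the index $j$, with the decay rate governed by the negativity of the topological pressure in Hypothesis~\ref{Husserl} — this is exactly the mechanism by which \cite{Ing} obtains a convergent sum. Summing the per-term bound over $j$ therefore gives $\sum_j \|\chi u_j\|_{C^\ell} \le C_{\ell,\chi}\, h^{-\ell} \sum_j \|a_j\|_{C^\ell,\infty} \le C'_{\ell,\chi}\, h^{-\ell}$, and since $C^\ell(\spt\chi)$ is a Banach space the sum $\chi E_h = \sum_j \chi u_j$ converges in $C^\ell$ with the stated bound.

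The main obstacle is ensuring that the constants are genuinely uniform in the summation index $j$: one needs that every Lagrangian manifold appearing in the decomposition, no matter how "deep" (how many times the geodesic flow has wound around near $K$), is a graph with uniformly bounded geometry, so that the $C^\ell$ seminorms of the phases $\phi_j$ and symbols $a_{j,k}$ do not degrade with $j$. This is where nonpositive curvature is essential and is the substantive content imported from section~\ref{newyork}: bounded curvature yields a uniform lower bound on the injectivity-type radius controlling how far the unstable Lagrangians remain graphs, and uniform bounds on the Jacobi fields governing $d^2\phi_j$ and the transport of the symbol. Once that uniformity is in hand, the rest is the routine Leibniz/geometric-series estimate sketched above.
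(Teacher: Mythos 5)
Your proposal matches the paper's argument: Theorem~\ref{linfini} is obtained as Corollary~\ref{clocks} directly from the decomposition~(\ref{qad2}) of Theorem~\ref{ibrahim2}, the uniform $C^\ell$ bounds on the phases~(\ref{bogota}) furnished by Theorem~\ref{getup}, and the pressure-driven summability~(\ref{sheriff3}), which is exactly the three-step Leibniz-plus-geometric-series argument you sketch. The one small imprecision --- the estimate~(\ref{sheriff3}) controls the sum of $\|a_{\tilde\beta}\|_{C^\ell}$ over all words of given length $n$, not each amplitude individually, since the number of words grows exponentially --- does not affect your conclusion.
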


\paragraph{Small-scale equidistribution}
If $x_0\in X$ and $r>0$, let us write $B(x_0,r)$ for the geodesic ball centred at $x_0$ of radius $r$. The following result, which tells us that $\Re E_h$ has $L^2$ norm bounded from below on any ball of radius larger than $C h$ for $C$ large enough, can be seen as a “small-scale equidistributuon” result. Note that the upper bound in (\ref{smallscale21}) is just a consequence of Theorem \ref{linfini}.

\begin{theoreme}\label{smallscale7}
Let $(X,g)$ be a Riemannian manifold which is Euclidean near infinity. We suppose that $(X,g)$ has nonpositive sectional curvature, and that it has strictly negative curvature near $K$. We also suppose that
Hypothesis \ref{Husserl} on topological pressure is satisfied.

Let $\xi\in\mathbb{S}^{d-1}$, and $\chi\in C_c^\infty(X)$. There exist constants $C,C_1, C_2\bel 0$ such that the following holds. For any $x_0\in X$ such that $\chi(x_0)=1$, for any sequence $r_h$ such that $1\geq  r_h \bel C h$, we have for $h$ small enough:
\begin{equation}\label{smallscale21}
C_1 r_h^d\leq \int_{B(x_0,r_h)} |\Re E_h|^2(x) \mathrm{d}x\leq C_2 r_h^d.
\end{equation}

In particular, for any bounded open set $U\subset X$, there exists $c(U)\bel 0$ and $h_U\bel 0$ such that for all $0<h<h_U$, we have
\begin{equation}\label{bientolafin2}
\int_U |\Re E_h|^2\geq c(U).
\end{equation}
\end{theoreme}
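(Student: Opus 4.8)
The upper bound in (\ref{smallscale21}) is immediate from Theorem \ref{linfini}. Pick $\tilde\chi\in C_c^\infty(X)$ equal to $1$ on the $1$-neighbourhood of $\spt\chi$; then $\|E_h\|_{L^\infty(\spt\tilde\chi)}\le C$, so, as $\chi(x_0)=1$ and $r_h\le1$, $|E_h|\le C$ on $B(x_0,r_h)$ and $\int_{B(x_0,r_h)}|\Re E_h|^2\le C^2\,\mathrm{Vol}\,B(x_0,r_h)\le C_2 r_h^d$ with a volume constant uniform over the compact set $\spt\tilde\chi$. So everything reduces to the lower bound, of which (\ref{bientolafin2}) is the special case of a fixed ball.

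For the lower bound I would use, from Section \ref{newyork}, the decomposition $E_h=\sum_j a_j e^{i\phi_j/h}$ into Lagrangian states — where the pressure hypothesis \ref{Husserl} makes $\sum_j\|a_j\|_{C^N}$ converge \emph{uniformly in $h$}, with the tail $\sum_{|w|>n_0}\|a_{w}\|_{C^N}$ tending to $0$ as $n_0\to\infty$ — and the description of the semiclassical measure. The structural role of nonpositive curvature is the absence of conjugate points: the direct phase $\phi_0$ (the global solution of $|\nabla\phi_0|_g=1$ agreeing with $x\cdot\xi$ in the incoming Euclidean end) is then smooth on all of $X$, $\Lambda_0=\{(x,\nabla\phi_0(x))\}$ is a caustic-free graph over $X$, and the amplitude $a_0$, transported from $1$ at infinity, never vanishes, so $|a_0|\ge c_0>0$ on any fixed compact set. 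Consequently $\pi_*\mu\ge c_0^2\,\mathrm dx$ on compact sets ($\mu$ the semiclassical measure of $E_h$, $\pi\colon S^*X\to X$); a compactness argument then gives, for fixed $\rho>0$ and uniformly over $x_0$ in a fixed compact set, $\int_{B(x_0,\rho)}|E_h|^2\ge c_0^2\,\mathrm{Vol}\,B(x_0,\rho)+o_\rho(1)$ as $h\to0$. With the reality refinement below this settles (\ref{bientolafin2}) and (\ref{smallscale21}) for $r_h$ bounded below.

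For $r_h\to0$, down to $Ch$, I would rescale: $F_h(y):=E_h(x_0+r_h y)$ solves a Helmholtz equation with parameter $\tilde h:=h/r_h\le1/C$ on $B(0,2)$, is bounded in $L^2$ by the upper bound, and decomposes as $\sum_j b_j e^{i\Psi_j/\tilde h}$ with $\Psi_j(y)=\tfrac1{r_h}\big(\phi_j(x_0+r_h y)-\phi_j(x_0)\big)$ — each $\Psi_j$ solves the eikonal equation in the rescaled metric, so $|\nabla\Psi_j|\equiv1$ \emph{exactly}, and $|b_j|=|a_j(x_0+r_h\cdot)|$, the whole sum still converging absolutely in $C^N(B(0,2))$ uniformly. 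Since $|b_0|\ge c_0$ on $B(0,2)$, I would test $F_h$ against $\zeta e^{i\Psi_0/\tilde h}$, $\zeta$ a bump supported in $B(0,1)$: the $j=0$ term of $\langle F_h,\zeta e^{i\Psi_0/\tilde h}\rangle$ is $\int\zeta b_0\,\mathrm dy$, of size $\gtrsim c_0$ with \emph{no oscillation} (the phase $\phi_0$ is conjugated out exactly, so no Taylor error appears); the terms $j\ne0$ with $|w_j|\le n_0$ carry a phase $\Psi_j-\Psi_0$ with $|\nabla(\Psi_j-\Psi_0)|$ bounded below on $B(0,1)$ — because a short, genuinely scattered trajectory cannot match the direction of the unscattered direct one — and are removed by $N$ integrations by parts, the gain per step being $\le\tilde h\,/\,\mathrm{gap}\le 1/(C\cdot\mathrm{gap})$, so their total is $\lesssim (C\cdot\mathrm{gap})^{-N}\sum_j\|b_j\|_{C^N}$; the terms $j\ne0$ with $|w_j|>n_0$ contribute at most $\sum_{|w|>n_0}\|a_w\|_{C^0}\cdot\|\zeta\|_{L^1}=o_{n_0}(1)$. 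Choosing $n_0$ large, then $C$ large, then $h$ small, gives $|\langle F_h,\zeta e^{i\Psi_0/\tilde h}\rangle|\ge \tfrac12 c_0\|\zeta\|_{L^1}$, hence by Cauchy–Schwarz $\|F_h\|_{L^2(B(0,1))}^2\gtrsim c_0^2$, i.e. $\int_{B(x_0,r_h)}|E_h|^2=r_h^d\|F_h\|_{L^2(B(0,1))}^2\gtrsim r_h^d$.

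Finally, to pass to $|\Re E_h|^2$ I would write $|\Re E_h|^2=\tfrac12|E_h|^2+\tfrac12\Re(E_h^2)$ and show $\int_{B(x_0,r_h)}E_h^2=o(r_h^d)$, whence $\int_{B(x_0,r_h)}|\Re E_h|^2\ge\tfrac14\int_{B(x_0,r_h)}|E_h|^2\gtrsim r_h^d$. In $E_h^2=\sum_{j,k}a_ja_k e^{i(\phi_j+\phi_k)/h}$, the pairs with $|w_j|,|w_k|\le n_0$ are finitely many and have $\nabla\phi_j+\nabla\phi_k$ nonvanishing on $\spt\chi$ — \emph{no two of the Lagrangians are antipodal}, since a geodesic issued from the incoming end $\{x\cdot\xi<-R_0\}$ in direction $\xi$ cannot, run forward, re-enter that region in direction $\xi$ — hence $|\nabla(\phi_j+\phi_k)|$ is bounded below there and integration by parts bounds their contribution by $\lesssim(h/r_h)^N r_h^d=o(r_h^d)$; the pairs with $\max(|w_j|,|w_k|)>n_0$ contribute $\lesssim r_h^d\sum_{|w|>n_0}\|a_w\|_{C^0}=o_{n_0}(1)r_h^d$. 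Equivalently, $\mu$ and its image under $(x,\xi)\mapsto(x,-\xi)$ are mutually singular, so the semiclassical measure of $\Re E_h=\tfrac12(E_h+\bar E_h)$ is $\tfrac14(\mu+\iota_*\mu)$, with $\pi$-push-forward $\ge\tfrac12c_0^2\,\mathrm dx$. The step I expect to be the main obstacle is the small-scale analysis of the previous paragraph at the critical scale $r_h\asymp h$: one has to run the integration-by-parts bookkeeping for the \emph{full} (infinite) decomposition simultaneously with being below the scale where microlocalisation is comfortable, and it is exactly there that both standing hypotheses are indispensable — negativity of the pressure to make the long-word tail uniformly negligible, and strict negativity of the curvature near $K$ (via hyperbolicity) to keep the short-word Lagrangians transverse to $\Lambda_0$ and to their own antipodes.
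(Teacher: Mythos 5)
Your plan for $\int_{B(x_0,r_h)}|E_h|^2$ (test against the direct phase, IBP on the finitely many short-word pairs using the gap estimate (\ref{hurry2}), tail estimate from (\ref{sheriff3})) is in the spirit of Proposition \ref{smallscale5}, though you localise the non-vanishing of the diagonal term on the single direct branch $a_0$, whereas the paper obtains it from the infinite-covering statement of Corollary \ref{musso}, which is where Hypothesis \ref{puissanceplus2} enters. That substitution is plausible but should be argued (the direct branch is one summand in the decomposition (\ref{faust}); the paper's route via Corollary \ref{musso} and (\ref{beatles}) also gives the lower bound restricted to arbitrarily long words, which is used later).

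The real gap is in the passage to $\Re E_h$. You write $|\Re E_h|^2=\tfrac12|E_h|^2+\tfrac12\Re(E_h^2)$ and argue $\int_{B(x_0,r_h)}E_h^2=o(r_h^d)$, based on the claim that \emph{no two of the branches are ever antipodal}, i.e. $\nabla\varphi_{\tilde\beta}(x)=-\nabla\varphi_{\tilde\beta'}(x)$ never happens on $\spt\chi$. This is not justified and is in general false. Being antipodal at $x$ means the geodesic through $(x,\nabla\varphi_{\tilde\beta}(x))$, run forward, eventually lies in $\iota(\Lag)$, i.e. enters the incoming end with velocity $-\xi$ — a geodesic that wraps around the trapped set and ``comes back'' can perfectly well do this, and your heuristic (cannot re-enter in direction $\xi$) addresses the wrong direction. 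The paper does not claim such coincidences are absent: Lemma \ref{youhou} only shows, and only for words with $\tilde n(\tilde\beta)\ge N$ large, that the antipodal locus is a union of isolated trajectories (measure zero); this suffices for a stationary-phase bound $o\big((h/r_h)^{(d-1)/2}\big)$, not for a non-stationary-phase $O\big((h/r_h)^{\infty}\big)$ gain, and it says nothing at all about short words. To get around the short-word pairs the paper uses a different algebraic split, $E_h=S_1+S_2$ (short/long), writing $|\Re E_h|^2=(\Re S_1)^2+\tfrac14 S_2^2+\tfrac14\overline{S_2}^2+\tfrac12|S_2|^2+2\Re S_1\Re S_2+O(h)$: the term $(\Re S_1)^2$ is discarded as non-negative, $\tfrac12\int|S_2|^2$ is bounded below via the restriction of small-scale equidistribution to long words (\ref{smallscale9}) (which is exactly why (\ref{beatles}) is stated for all $N$), and the remaining terms involve at least one long index, so Lemma \ref{youhou} and stationary phase apply. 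Your argument, as written, requires controlling $\int a_j a_k e^{i(\phi_j+\phi_k)/h}$ for both $j,k$ short, which is precisely what the paper's decomposition is engineered to avoid.
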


Here, we could have considered the imaginary part of $E_h$ instead, or even $E_h$ itself, and we would have obtained the same result.

\paragraph{Nodal sets}
Theorem \ref{smallscale7} has interesting applications to the study of the nodal volume of the real part of the distorted plane waves. Namely, let us fix a compact set $\mathcal{K}$ and a $\xi\in \mathbb{S}^{d-1}$ and consider
$$N_{\mathcal{K},h}:=\{x\in \mathcal{K}; \Re(E_h)(x,\xi)=0\}.$$

Then we have the following estimate. We refer once again to section \ref{newyork} for the precise assumptions we make.
\begin{corolaire}\label{hydre}
We make the same assumptions as in Theorem \ref{linfini}. Then there exist $C_{\mathcal{K}}, C'_{\mathcal{K}}>0$ such that
\begin{equation}\label{yaudistorted}
\frac{C_\mathcal{K}}{h} \leq Haus_{d-1} (N_{\mathcal{K},h}) \leq \frac{C'_\mathcal{K}}{h},
\end{equation}
where $Haus_{d-1}$ denotes the $(d-1)$-dimensional Hausdorff measure.
\end{corolaire}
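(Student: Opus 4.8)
\emph{Proof proposal.} Write $u_h:=\Re E_h(\cdot,\xi)$. Since $-h^2\Delta-1$ has real coefficients, $u_h$ solves $(-h^2\Delta-1)u_h=0$ as well, and choosing a cutoff $\chi\in C_c^\infty(X)$ equal to $1$ on a neighbourhood $V$ of $\mathcal{K}$, Theorem~\ref{linfini} gives $\|\partial^\alpha u_h\|_{L^\infty(V)}\le C_\alpha h^{-|\alpha|}$ for every multi-index $\alpha$ and every $h$. The plan is to reduce \eqref{yaudistorted} to a scale‑$h$ statement by a blow‑up argument. Fix $x_0\in\mathcal{K}$, work in geodesic normal coordinates centred at $x_0$, and set $v_{h,x_0}(y):=u_h(x_0+hy)$ for $|y|\le R$, where $R$ is a fixed radius chosen below. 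Three facts hold uniformly in $x_0\in\mathcal{K}$ and in $h$ small: first, since $g_{ij}(x)=\delta_{ij}+O(|x-x_0|^2)$ in these coordinates, a direct computation gives $-\Delta v_{h,x_0}=v_{h,x_0}+O_{C^0(B(0,R))}(h)$, the error being controlled by the uniform $C^2$ bound $\|v_{h,x_0}\|_{C^2(B(0,R))}\le C_R$ coming from Theorem~\ref{linfini}; more generally $\|v_{h,x_0}\|_{C^\ell(B(0,R))}\le C_{\ell,R}$ for all $\ell$. Second, applying Theorem~\ref{smallscale7} with $r_h=Ch$ (and accounting for the $1+O(h^2)$ volume factor) yields $\|v_{h,x_0}\|_{L^2(B(0,C))}^2\ge c_1>0$. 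Third, by scaling, $\mathcal{H}^{d-1}\big(Z_{u_h}\cap B(x_0,\rho h)\big)=h^{d-1}\,\mathcal{H}^{d-1}\big(Z_{v_{h,x_0}}\cap B(0,\rho)\big)$ for any $\rho$, where $Z_w:=\{w=0\}$.

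\textbf{Lower bound.} I would argue by contradiction. If there were sequences $h_k\to0$ and $x_k\in\mathcal{K}$ with $\mathcal{H}^{d-1}(Z_{v_{h_k,x_k}}\cap B(0,R))\to0$, then, passing to a subsequence, $x_k\to x_\infty$, and by the uniform $C^\infty$ bounds and Arzelà–Ascoli, $v_{h_k,x_k}\to v_\infty$ in $C^\infty_{loc}(\R^d)$; passing to the limit in the (approximate) equation, $v_\infty$ solves the flat Helmholtz equation $-\Delta v_\infty=v_\infty$ on $\R^d$, and by the $L^2$ lower bound $v_\infty\not\equiv0$. A non‑trivial solution of $-\Delta v_\infty=v_\infty$ cannot keep a constant sign on a ball $B(0,R)$ once $R$ exceeds a dimensional constant (compare $1$ with the first Dirichlet eigenvalue of $B(0,R)$, which tends to $0$, via a Barta/Faber–Krahn‑type argument), so $Z_{v_\infty}\cap B(0,R)\neq\emptyset$; being the nodal set of a non‑trivial (real‑analytic) elliptic solution, it then satisfies $\mathcal{H}^{d-1}(Z_{v_\infty}\cap B(0,R))>0$. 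Along a $C^1$‑convergent sequence of solutions, lower semicontinuity of the nodal measure holds — near the regular part of $Z_{v_\infty}$, which carries full $\mathcal{H}^{d-1}$‑measure, one compares $Z_{v_{h_k,x_k}}$ with $Z_{v_\infty}$ by the implicit function theorem (equivalently, one uses the integral‑geometric inequality and stability of sign changes under $C^0$ convergence). This forces $\mathcal{H}^{d-1}(Z_{v_{h_k,x_k}}\cap B(0,R))\ge c_0>0$ for $k$ large, a contradiction. Hence, for a fixed $R$ and all $h$ small, $\mathcal{H}^{d-1}(Z_{u_h}\cap B(x_0,Rh))\ge c_0h^{d-1}$ for all $x_0\in\mathcal{K}$. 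Replacing $\mathcal{K}$ by a compact subset of its interior (which may be assumed non‑empty), covering it by $\gtrsim h^{-d}$ disjoint balls $B(x_i,Rh)$ lying in $\mathcal{K}$, and summing, gives $\mathcal{H}^{d-1}(N_{\mathcal{K},h})\gtrsim h^{-1}$.

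\textbf{Upper bound.} From $\|v_{h,x_0}\|_{L^\infty(B(0,2C))}\le C_R$ and $\|v_{h,x_0}\|_{L^2(B(0,C))}^2\ge c_1$ one gets that the doubling index of $v_{h,x_0}$ on $B(0,C)$ is bounded by a constant $N_0$ independent of $x_0$ and $h$. The quantitative estimate of the nodal volume in terms of the doubling index (Logunov–Malinnikova) then yields $\mathcal{H}^{d-1}(Z_{v_{h,x_0}}\cap B(0,C/2))\le C'(N_0)$, i.e. $\mathcal{H}^{d-1}(Z_{u_h}\cap B(x_0,Ch/2))\le C'h^{d-1}$. Covering $\mathcal{K}$ by $O(h^{-d})$ such balls with bounded overlap and summing gives $\mathcal{H}^{d-1}(N_{\mathcal{K},h})\le C'_{\mathcal{K}}h^{-1}$, which completes \eqref{yaudistorted}.

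\textbf{Main obstacle.} The bookkeeping (covering arguments, change of scale) is routine; the crux is the blow‑up analysis. One must (i) verify that in the rescaled coordinates the metrics flatten uniformly over the compact set $\mathcal{K}$ of base points, so that the limiting equation is genuinely the constant‑coefficient Helmholtz equation, and (ii) control the nodal measure at scale $h$ by that of the limiting Helmholtz solutions in both directions: for the lower bound this is the lower semicontinuity of $\mathcal{H}^{d-1}$ along $C^1$‑convergent elliptic solutions, and for the upper bound it is the (non‑trivial, imported) bound of nodal volume by the doubling index. The uniformity in $x_0\in\mathcal{K}$, which is what lets the contradiction/compactness argument run, is the delicate point to get right.
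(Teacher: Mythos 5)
Your proof is correct in its broad strokes, but the route you take for the lower bound differs genuinely from the paper's, and it is worth comparing the two. For the \emph{upper bound}, you are essentially making explicit the argument of \cite{hezari2016applications}, which the paper cites without reproducing: small-scale equidistribution (\ref{smallscale21}) plus the $L^\infty$ bound of Theorem~\ref{linfini} give a uniform bound on the doubling index of the rescaled functions $v_{h,x_0}$, and Logunov's theorem then controls the nodal volume at scale $h$; summing over a bounded-overlap cover of $\mathcal{K}$ by $O(h^{-d})$ balls of radius $\sim h$ gives $O(h^{-1})$. This agrees with the paper's intent.

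For the \emph{lower bound}, the paper does something quite different (Section~\ref{lowernod}): it invokes the Dong--Sogge--Zelditch identity (\ref{DSZ}),
\[
\int_X \big((-h^2\Delta+1)f\big)\,|\Re E_h|\,\mathrm{d}x = h^2\int_{Z_h} f\,|\nabla \Re E_h|\,\mathrm{d}S,
\]
bounds $|\nabla\Re E_h|$ from above by $C/h$ using Theorem~\ref{linfini}, and bounds the left-hand side from below using $\int_U|\Re E_h|^2\ge c(U)$ (that is, only (\ref{bientolafin2}) rather than the full small-scale statement). This yields $Haus_{d-1}(N_{\mathcal{K},h})\gtrsim 1/h$ in a few lines, with no compactness argument and no appeal to real-analyticity, Barta's inequality, or semicontinuity of nodal measure. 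Your blow-up argument is also sound: the rescaled family $v_{h,x_0}$ is precompact in $C^\infty_{\mathrm{loc}}$ by Theorem~\ref{linfini}, nontrivial in the limit by (\ref{smallscale21}), solves the flat Helmholtz equation by the flatness of normal coordinates, must change sign on a large ball, and the $(d-1)$-measure of nodal sets is lower semicontinuous along $C^1$-convergent elliptic solutions because the singular set of the real-analytic limit has $\mathcal{H}^{d-1}$-measure zero. But note that this uses more machinery (compactness, analytic structure of nodal sets, the stronger input (\ref{smallscale21}) rather than (\ref{bientolafin2})) to obtain the same conclusion; the paper's DSZ route is shorter and more elementary, which is precisely why the author chose it over simply citing \cite{logunov2016lowernodal}. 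Finally, both arguments, like the paper's, implicitly need $\mathcal{K}$ to have nonempty interior for the lower bound to be nonvacuous; you handle this by shrinking to a compact subset of the interior, which matches the paper's choice of a nonzero $f\in C_c^\infty$ supported in $\mathcal{K}$.
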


Here, again we could have considered the imaginary part of $E_h$ instead, and we would have obtained the same result.

The lower bound in (\ref{yaudistorted}) could be deduced from \cite{logunov2016lowernodal}, but we will give a proof of this fact in section \ref{lowernod} since it is easy to deduce from (\ref{bientolafin2}). As for the upper bound, it follows from the recent work \cite{hezari2016applications}, which says precisely that the upper bound in (\ref{yaudistorted}) can be deduced from small-scale equidistribution (\ref{smallscale21}). We refer to this paper for more details. For other applications of small-scale equidistribution, see \cite{hezari2016inner} and \cite{hezari2016quantum}.

\paragraph{Comparison of the results with the case of compact manifolds}
Nodal sets, small-scale behaviour and $L^p$ norms of eigenfunctions of the Laplace-Beltrami operator on compact manifolds have been actively studied recently. Let us recall what is known in this framework.

Let $(X,g)$ be a $d$-dimensional closed (compact, without boundary) manifold. Then there exists a sequence $h_n$ of positive numbers going to zero and an $L^2$ orthonormal basis of (real-valued) eigenfunctions $\phi_n$ such that
\begin{equation*}
-h_n^2\Delta \phi_n = \phi_n.
\end{equation*}

The following estimate on the $L^p$ norm of $\phi_n$ was proven in \cite{sogge1988concerning}:
\begin{equation*}
\|\phi_n\|_{L^p}\leq \frac{C}{h_n^{\sigma(p)}},
\end{equation*}
where $\sigma(p)= \frac{d-1}{2}\Big{(}\frac{1}{2}-\frac{1}{p}\Big{)}$
 if $2\leq p \leq \frac{2(d+1)}{d-1}$, and  $\sigma(p)= \frac{d-1}{2}-\frac{d}{p}$ if $\frac{2(d+1)}{d-1}\leq p \leq \infty$.
 These estimates are sharp if no further assumption is made on the manifold. However,
if $(X,g)$ has negative curvature, these bounds were slightly improved in \cite{HeRi}, \cite{hassell2015improvement} and \cite{SoggeNegativ}. We refer to these papers and to the references therein for precise statements and for more historical background. These estimates are far from showing that $\phi_n$ is bounded uniformly in $L^\infty$: actually, it is not clear if such a bound should hold.

The estimates given by
Theorem \ref{linfini} for distorted plane waves are therefore much better than what is available in the case of compact manifolds.

Small scale equidistribution results, similar to (\ref{smallscale21}) were obtained in \cite{Han} and in \cite{HeRi} for a density one sequence of eigenfunctions on the Laplace-Beltrami operator on compact manifolds of negative curvature and for sequences of the form $r_h\leq \frac{C_\alpha}{|\log h|^\alpha}$ for any $0<\alpha < 1/(2d)$. Small scale equidistribution results similar to (\ref{smallscale21}) were also obtained on the torus (see \cite{lester2015small} and the references therein). We may conjecture that on compact manifolds of negative curvature, an inequality like (\ref{smallscale21}) should hold for any sequence $r_h\bel \bel h$, for some density $1$ subsequence of eigenfunctions of the Laplace-Beltrami operator.

As for nodal sets, let us write $N_{h_n}:=\{x\in X; \phi_n(x)=0\}$. The following conjecture was made in \cite{Yau}.
\begin{Conj}[Yau]
There exists $c_1,c_2\bel 0$ such that
\begin{equation*}
\frac{c_1}{h_n}\leq Haud_{d-1}(N_{h_n})\leq \frac{c_2}{h_n}.
\end{equation*}
\end{Conj}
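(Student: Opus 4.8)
The plan is to derive both inequalities in (\ref{yaudistorted}) from Theorems \ref{linfini} and \ref{smallscale7}, by now-classical nodal-set arguments; the point is that the uniform, frequency-independent $L^\infty$ and $C^1$ bounds supplied by Theorem \ref{linfini} are exactly the extra input needed to sharpen the standard estimates to the order $h^{-1}$. Throughout, write $u=\Re E_h$. Since $-h^2\Delta-1$ has real coefficients, $(-h^2\Delta-1)u=0$, and Theorem \ref{linfini}, applied with a cutoff $\chi$ equal to $1$ on a neighbourhood of $\mathcal{K}$, gives $\|u\|_{L^\infty(\mathcal{K})}\le C$ and $\|\nabla u\|_{L^\infty(\mathcal{K})}\le C/h$. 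Fixing once and for all a nonempty open set $U$ with $\overline{U}\subset\mathrm{int}\,\mathcal{K}$, the bound (\ref{bientolafin2}) together with the $L^\infty$ bound gives $\int_U|u|\ge \int_U|u|^2/\|u\|_{L^\infty(U)}\ge c>0$, with $c$ independent of $h$.

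For the lower bound I would run the argument of Sogge and Zelditch. The nodal set $N=\{u=0\}$ is $(d-1)$-rectifiable and smooth outside a set of Hausdorff dimension at most $d-2$ on which $\nabla u$ vanishes (Cheng, Hardt--Simon), so one has, on $\mathrm{int}\,\mathcal{K}$, the distributional identity $\Delta|u|=-h^{-2}|u|+2|\nabla u|\,\mathrm{d}S_N$, where $\mathrm{d}S_N$ is the Riemannian hypersurface measure on $N$. Picking $\psi\in C_c^\infty(\mathrm{int}\,\mathcal{K})$ with $0\le\psi\le 1$ and $\psi\equiv 1$ on $U$, and pairing the identity with $\psi$, gives
\begin{equation*}
h^{-2}\int|u|\,\psi\ =\ \int\nabla|u|\cdot\nabla\psi\ +\ 2\int_N|\nabla u|\,\psi\,\mathrm{d}S_N .
\end{equation*}
The left-hand side is $\ge c\,h^{-2}$; on the right, $\bigl|\int\nabla|u|\cdot\nabla\psi\bigr|\le\|\nabla u\|_{L^\infty(\mathcal{K})}\,\|\nabla\psi\|_{L^1}\le C'/h$ and $2\int_N|\nabla u|\,\psi\,\mathrm{d}S_N\le (C'/h)\,Haus_{d-1}(N_{\mathcal{K},h})$. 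For $h$ small the $h^{-2}$ term dominates the $O(h^{-1})$ term, and rearranging yields $Haus_{d-1}(N_{\mathcal{K},h})\ge C_{\mathcal{K}}/h$ — which recovers, in the present setting, the lower bound of \cite{logunov2016lowernodal}, the sharp order $h^{-1}$ being a consequence of the $h$-independent $L^\infty$ bound of Theorem \ref{linfini}.

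For the upper bound there is essentially nothing new to do. The lower inequality in (\ref{smallscale21}), taken at scale $r_h=Ch$, says that near $\mathcal{K}$ the function $u$ is a solution of $(-h^2\Delta-1)u=0$ whose $L^2$ mass on every ball of radius $\sim h$ is $\gtrsim h^{d}$; together with the matching upper inequality this pins down the doubling index of $u$ uniformly on $\mathcal{K}$, and then \cite{hezari2016applications}, which sharpens Logunov's control of nodal volume in terms of the doubling index, yields $Haus_{d-1}(N_{\mathcal{K},h})\le C'_{\mathcal{K}}/h$; as the reasoning is purely local the non-compactness of $X$ plays no role. The only slightly delicate points are therefore the measure-theoretic structure of the nodal set justifying the identity for $\Delta|u|$ (classical, but to be stated with care) and checking that the reduction of \cite{hezari2016applications}, proven for Laplace eigenfunctions on closed manifolds, transfers to Helmholtz solutions on a coordinate chart, which it does since it only uses interior elliptic estimates and the doubling property. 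The substantive work lies upstream, in Theorems \ref{linfini} and \ref{smallscale7}.
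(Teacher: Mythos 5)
There is a mismatch between the statement you were given and the statement you actually prove. The display you were handed is \emph{Yau's conjecture} for a \emph{closed} manifold $(X,g)$: it concerns $L^2$-normalized Laplace eigenfunctions $\phi_n$ on a compact manifold without boundary, and it is stated in the paper precisely as an (open) conjecture, with the paper recalling which partial cases are known (analytic manifolds, dimension $2$, and the Logunov lower bound and polynomial upper bound). The paper offers no proof of it, and none of the tools developed here — Theorems \ref{linfini}, \ref{smallscale7}, \ref{ibrahim2} — apply in that setting: they all require $X$ to be Euclidean or hyperbolic near infinity with a nonpositive-curvature structure and a topological pressure condition, and they control \emph{distorted plane waves} $E_h$, which are non-$L^2$ generalized eigenfunctions, not orthonormal eigenbases on closed manifolds. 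Indeed, the paper explicitly contrasts the sharp, frequency-independent $L^\infty$ bound of Theorem \ref{linfini} with what is known for compact manifolds, noting it is not even clear an $L^\infty$ bound should hold there. So the statement as posed cannot be deduced from the theorems you invoke, and your proposal does not prove it.

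What your argument does prove — and you say so yourself by working throughout with $u=\Re E_h$ and invoking (\ref{yaudistorted}), (\ref{smallscale21}) and (\ref{bientolafin2}) — is Corollary \ref{hydre}, the non-compact analogue for distorted plane waves. Against the paper's own proof of that corollary (Section \ref{lowernod}), your argument is essentially the same. For the lower bound, the distributional identity $\Delta|u|=-h^{-2}|u|+2|\nabla u|\,\mathrm{d}S_N$ paired with a cutoff $\psi$ is exactly the Dong--Sogge--Zelditch formula (\ref{DSZ}) in integrated form (after one integration by parts); you then bound $\nabla u$ by $C/h$ via Theorem \ref{linfini} and bound the volume term from below via the $L^\infty$ bound together with (\ref{bientolafin2}), which is word for word the paper's chain of estimates in (\ref{kremlin}) and after. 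For the upper bound both you and the paper simply defer to \cite{hezari2016applications}, which extracts the $h^{-1}$ bound from the small-scale equidistribution (\ref{smallscale21}); the paper does not reprove this and neither do you. So if the intended target were Corollary \ref{hydre}, your proposal would be correct and match the paper's route. But as a response to Yau's conjecture as stated, the proposal begins by silently replacing the statement with a different one, and that substitution is the substantive gap.
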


This conjecture was proven for analytic manifolds in \cite{DF}, and the lower bound was proven in dimension 2 in \cite{Bru}. A breakthrough was made recently in \cite{logunov2016lowernodal} and \cite{logunov2016uppernodal}, where the lower bound was proved and a polynomial upper bound was established respectively, in any dimension. 
Corollary \ref{hydre} can be seen as an analogue of Yau's conjecture in the case of non-compact manifolds.

\paragraph{Relation to other works}
An important part of this paper consists in describing the semiclassical measures associated to distorted plane waves. It can thus be considered as a (partial) generalization of \cite{GN}, where the authors describe the semiclassical measures associated to eigenfunctions of the Laplace-Beltrami operator on manifolds of infinite volume, with sectional curvature constant equal to $-1$ (convex co-compact hyperbolic manifolds). While the proofs in \cite{GN} rely heavily on the quotient structure of constant curvature hyperbolic manifolds, our proofs are based on the properties of the classical dynamics, being thus more versatile.

Semiclassical measures associated to distorted plane waves were studied in \cite{DG} under very general assumptions, and we follow their approach on many points. Semiclassical measures for Eisenstein series associated to specral parameters away from the spectrum of the Laplacian were also studied in \cite{Dcusp} and \cite{Yannick} on noncompact manifolds with finite volume (manifolds with cusps) using methods similar to ours. In all these papers, distorted plane waves are seen as the propagation in the long-time limit of usual plane waves (or hyperbolic waves). However, the reason for the convergence in the long time limit is very different in these papers: in \cite{Dcusp} and \cite{Yannick}, the convergence happens because the energy parameter is away from the real axis. In \cite{DG}, convergence occurs because the authors average on all directions and on a small energy layer. In this paper, just as in \cite{Ing}, convergence takes place because of a topological pressure assumption, as in \cite{NZ}. We will often use the methods and results of \cite{NZ} to take advantage of the hyperbolicity and topological pressure assumptions.

Nodal sets of distorted plane waves on manifolds of infinite volume were studied for the first time in \cite{JaNa} in the framework of Eisenstein series on convex co-compact hyperbolic surfaces. Since convex co-compact manifolds are analytic, the proof of \cite{DF}, which is purely local, gives an analogue of Yau's conjecture. The main results in \cite{JaNa} concern the counting of the number of intersections of the nodal sets $N_{\mathcal{K},h}$ with a given geodesic. Some of the results in \cite{JaNa} should still work on a manifold on nonpositive curvature under the assumptions made in this paper. This will be pursued elsewhere.

\paragraph{Organisation of the paper}
In section 2, we will state the general assumptions we need on the manifold $(X,g)$ and on the generalised eigenfunctions $E_h$. In section 3, we will recall the main results from \cite{Ing}, and state the new results we obtain. In section 4, we will prove results about the propagation of Lagrangian manifolds. In section 5, we will prove results about distorted plane waves, including Theorem \ref{linfini}. Finally, section 6 is devoted to the proof of Theorem \ref{smallscale7}, and of Corollary \ref{hydre}.

Appendix A simply recalls a few classical facts from semiclassical analysis. In Appendix B, we show that the general hypotheses formulated in \cite{Ing} and used in the present paper are fulfilled on manifolds that are hyperbolic near infinity.

\paragraph{Acknowledgements}
The author would like to thank Stéphane Nonnenmacher for his supervision and advice during this work.
He would also like to thank Frédéric Naud for suggesting to study nodal sets of distorted plane waves, as well as for fruitful discussion during the writing of this paper. Finally, the author would like to thank the anonymous referee for his many suggestions and comments.

The author is partially supported by the Agence Nationale de la Recherche project GeRaSic (ANR-13-BS01-0007-01).

\section{General assumptions}\label{assumptions}
In this section, we will state the main assumptions under which our results apply. The assumptions in section \ref{hypvariete} concern the background manifold $(X,g)$, while the assumptions in section \ref{hypdistorted} concern the distorted plane waves.
Most of these assumptions were already made in \cite{Ing}, in the framework of potential scattering. The additional assumptions which allow us to obtain more precise results than those of \cite{Ing} were regrouped in sections \ref{NewManifold} and \ref{NewDistorted}.

\subsection{Assumptions on the manifold}\label{hypvariete}

Let $(X,g)$ be a noncompact complete Riemannian manifold of dimension $d$, and let us denote by $p$ the classical Hamiltonian $p: T^*X \ni (x,\xi)\mapsto \|\xi\|_x^2\in \mathbb{R}$.

For each $t\in \mathbb{R}$, we denote by $\Phi^t:T^*X\longrightarrow T^*X$
the geodesic flow generated by $p$ at time $t$. We will write by the same letter its restriction $\Phi^t : S^*X \longrightarrow S^*X$ to the energy layer $p(x,\xi)=1$.

Given any smooth function $f : X \longrightarrow \mathbb{R}$, it may be
lifted to a function $f : T^*X \longrightarrow \mathbb{R}$, which we
denote by the same letter. We may then define $\dot{f}, \ddot{f}\in
C^\infty (T^*X)$ to be the derivatives of $f$ with respect to the
geodesic flow.
\begin{equation*}\dot{f}(x,\xi):= \frac{d}{dt}\Big{|}_{t=0} f(\Phi^t(x,\xi)),~~ \ddot{f}(x,\xi):=  \frac{d^2}{dt^2}\Big{|}_{t=0} 
f(\Phi^t(x,\xi)).
\end{equation*}

\subsubsection{Hypotheses near infinity} \label{Hector}

We suppose the following conditions are fulfilled.

\begin{hyp}[Structure of $X$ near infinity] \label{Guepard}
We suppose that the manifold $(X,g)$ is such that the following holds:

(1) There exists a compactification $\Comp$ of $X$, that is, a compact
manifold with boundaries $\Comp$ such that $X$ is diffeomorphic to the
interior of $\Comp$. The boundary $\Bound$ is called the boundary at
infinity.

(2) There exists a boundary defining function $b$ on $X$, that is, a
smooth function $b : \Comp \longrightarrow [0,\infty)$ such that $b>0$ on
$X$, and $b$ vanishes to first order on $\Bound$.

(3) There exists a constant $\epsilon_0>0$ such that for any point
$(x,\xi)\in S^*X$,
\begin{equation*}\text{if } b(x,\xi)\leq \epsilon_0 \text{ and } \dot{b}(x,\xi)=0 \text{
then } \ddot{b}(x,\xi)<0.
\end{equation*}
\end{hyp}

Part (3) in the hypothesis implies that any geodesic ball with a large enough radius is \emph{geodesically convex}.

\begin{ex} \label{nokia}
$\mathbb{R}^d$ fulfils the Hypothesis \ref{Guepard}, by taking the
boundary defining function $b(x)=(1+|x|^2)^{-1/2}$. $\overline{X}$ can then be identified with the closed unit ball in $\mathbb{R}^d$.
\end{ex}
\begin{ex} \label{samsung}
The Poincaré space $\mathbb{H}^{d}$
also fulfils the Hypothesis \ref{Guepard}. Indeed, in the ball model
$B_0(1)=\{x\in \mathbb{R}^d; |x|<1\}$, where $|\cdot|$ denotes the
Euclidean norm, then $\mathbb{H}^{d}$ compactifies to the closed unit ball,
and the boundary defining function $b(x)=2\frac{1-|x|}{1+|x|}$ fulfils
conditions (2) and (3).
\end{ex}
\begin{ex}
Let $e_1\in \mathbb{R}^2$, and consider $X= \mathbb{R}^2/ (\mathbb{Z}e_1)$, the flat two-dimensional cylinder. It may be compactified in the $e_2$-direction by setting $b(x)=(1+|x_2|^2)^{-1/2}$. $\overline{X}$ may then be identified with $S^1\times S^1$, and $b$ is a boundary defining function.

However, part (3) of the hypothesis is not satisfied. Indeed, for any $\epsilon\bel 0$, the set ${b=\epsilon}$ contains a closed geodesic (whose trajectory is just a circle). On this geodesic, we have $\dot{b}=0$ and $\ddot{b}=0$.
\end{ex}
We will write $X_0:=\{x\in X; b(x)\geq\epsilon_0/2\}$.
We will call $X_0$ the \emph{interaction region}. We will also write
\begin{equation}\label{frite}
V_0:=T^*(X\backslash X_0) = \{\rho\in T^*X; b(\rho) < \epsilon_0/2\}.
\end{equation}

By possibly taking $\epsilon_0$ smaller, we may ask that 
\begin{equation}\label{descendance}
\forall \rho \in S^*(X\backslash X_0), b(\Phi^1(\rho)) < \epsilon_0.
\end{equation}

\begin{definition}
If $\rho=(x,\xi)\in S^*X$, we say that $\rho$ escapes directly in the forward
direction, denoted $\rho\in\mathcal{D}\mathcal{E}_+$, if $b(x) <
\epsilon_0$ and
$\dot{b}(x,\xi)\leq 0$.

If $\rho=(x,\xi)\in S^*X$, we say that $\rho$ escapes directly in the backward
direction, denoted $\rho\in \mathcal{D}\mathcal{E}_-$, if $b(x)<
\epsilon_0$ and
$\dot{b}(x,\xi)\geq 0$.
\end{definition}
Note that we have
\begin{equation*}
\{\rho\in S^*X; b(\rho) < \epsilon_0\}=\mathcal{DE}_-\cup \mathcal{DE}_+.
\end{equation*}

\subsubsection{Hyperbolicity} \label{averse}
Let us now describe the hyperbolicity assumption we make.

For $\rho\in S^*X$, we will say that $\rho\in \Gamma^\pm$ if $\{\Phi^t(\rho), \pm t\leq 0\}$
is a bounded subset of $S^*X$; that is to say, $\rho$ does not “go to
infinity”, respectively in the past or
in the future. The sets $\Gamma^\pm$ are called respectively the
\textit{outgoing} and \textit{incoming} tails.

The \textit{trapped set} is defined as
\begin{equation*}K:=\Gamma^+\cap \Gamma^-.
\end{equation*}
It is a flow invariant set, and it is compact by the geodesic convexity assumption.

\begin{hyp}[Hyperbolicity of the trapped set] \label{sieste}
We assume that $K$ is non-empty, and is a hyperbolic set for the flow
$\Phi^t$. That is to say,
there exists an adapted metric $g_{ad}$ on a neighbourhood of $K$ included in
$S^*X$, and $\lambda>0$, such that the following holds. For each
$\rho\in K$, there is a decomposition \begin{equation*}T_\rho(S^*X)=\mathbb{R}\frac{\partial \big{(}\Phi^t(\rho)\big{)}}{\partial t} \oplus E_\rho^+\oplus
E_\rho^-
\end{equation*} such that
\begin{equation*}\|d\Phi_\rho^t(v)\|_{g_{ad}}\leq  e^{-\lambda|t|}\|v\|_{g_{ad}}
\text{   for all } v\in E_\rho^\mp, \pm t\geq 0.
\end{equation*}
\end{hyp}

The spaces $E_\rho^\pm$ are respectively called the \emph{unstable} and \emph{stable} spaces at $\rho$.

\begin{figure}
    \center
   \includegraphics[scale=0.4]{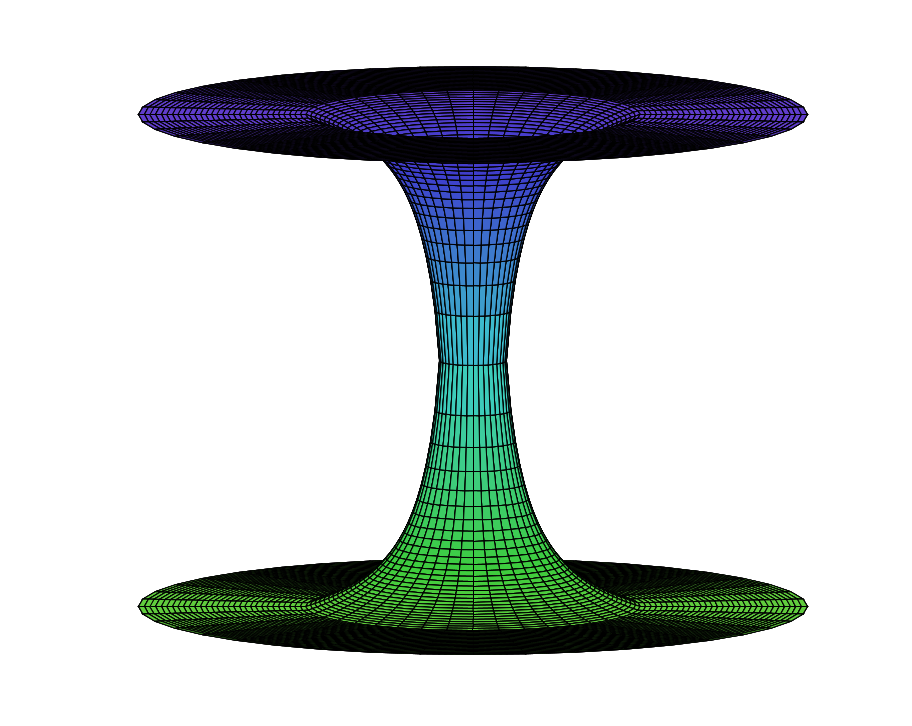}
    \caption{A surface which has negative curvature close to the trapped
set of the geodesic flow, and which is isometric to two copies of
$\mathbb{R}^2\backslash B(0,R_0)$ outside of a compact set. It satisfies Hypothesis \ref{sieste} near the trapped set (which consists of a single geodesic) and Hypothesis \ref{Guepard} near infinity.} \label{example}
\end{figure}

We may extend $g_{ad}$ to a metric
on $S^*X$, so that outside of the interaction region, it coincides with the restriction of the metric on $T^*X$ induced from the Riemannian metric on $X$. From now on, we will denote by
\begin{equation*}d_{ad} \text{ the Riemannian
distance associated to the metric } g_{ad} \text{ on } S^*X.
\end{equation*}

Any $\rho\in K$  admits local strongly (un)stable manifolds $
W_{\epsilon}^\pm (\rho)$, defined for $\epsilon\bel 0$ small enough by
\begin{equation*}W_{\epsilon}^\pm (\rho)=\{\rho'\in \mathcal{E};
d(\Phi^t(\rho),\Phi^t(\rho'))<\epsilon \text{ for all } \pm t\leq 0 \text{
and} \lim\limits_{t\rightarrow \mp
\infty} d(\Phi^t(\rho'),\Phi^t(\rho))=0\}.
\end{equation*}
Note that $E_\rho^\pm$ is the ($(d-1)$-dimensional) tangent space of $W^\pm_\epsilon$ at $\rho$. 
We also define the weakly unstable manifolds by
\begin{equation*}W_{\epsilon}^{\pm 0} (\rho)=\{\rho'\in \mathcal{E};
d(\Phi^t(\rho),\Phi^t(\rho'))<\epsilon \text{ for all } \pm t\leq 0\}.
\end{equation*}

We call
\begin{equation*}E_\rho^{+0}:= E_\rho^+ \oplus \mathbb{R} H_p(\rho),~~~~ E_\rho^{-0}:=
E_\rho^- \oplus \mathbb{R} H_p(\rho),
\end{equation*}
the \textit{weak unstable} and \textit{weak stable} subspaces at the point
$\rho$ respectively, which are respectively the tangent spaces to $W^{\pm 0}_\epsilon$ at $\rho$.
\subsubsection*{Adapted coordinates}
To state our result concerning the propagation of Lagrangian manifolds, we
need \textit{adapted coordinates} close to the trapped set.  These
coordinates, constructed in \cite[Lemma 2]{Ing}, satisfy the following
properties:

For each $\rho\in K$, we build an adapted
system of symplectic coordinates $(y^\rho,\eta^\rho)$ on a neighbourhood
of $\rho$ in
$T^*X$, such that the following holds.
\[
\begin{aligned}
&(i)~~\rho\equiv (0,0)\\
&(ii)~~E_{\rho}^+ = span \{\frac{\partial}{\partial y^\rho_i}(\rho), ~~~
i=2,...,d\} ,\\
&(iii)~~E_{\rho}^- = span \{\frac{\partial}{\partial \eta_i^\rho}(\rho), ~~~
i=2...d\} ,\\
&(iv)~~ \eta_1^\rho= p-1  \text{ is the energy coordinate},\\
&(v)~~\big{\langle}\frac{\partial}{\partial
y^\rho_i}(\rho),\frac{\partial}{\partial
y^\rho_j}(\rho)\big{\rangle}_{g_{ad}(\rho)}=\delta_{i,j},~~~i,j=2,...,d.
\end{aligned}
\]

For $j=1,...,d-1$, write 
\begin{equation}\label{pavillon}
u_j^\rho := y_{j+1}^\rho \text{, and } s_j^\rho :=
\eta_{j+1}^\rho.
\end{equation}

Let us now introduce unstable Lagrangian manifolds, that is to say, Lagrangian manifolds whose tangent spaces form small angles with the unstable space at $\rho$.

\begin{definition}
Let $\Lambda\subset S^*X$ be an isoenergetic Lagrangian manifold
(not
necessarily connected) included in a small neighbourhood $W$ of a point
$\rho\in K$,
and let $\gamma>0$. We will say that $\Lambda$ is a
$\gamma$\textit{-unstable Lagrangian manifold} (or
that $\Lambda$ is in the $\gamma$-unstable cone) in the coordinates
$(y^\rho,\eta^\rho)$ if it
can be written in the form
\begin{equation*}\Lambda = \{(y_1^\rho,u^\rho,0,F(y_1^\rho,u^\rho)) ; (y^\rho_1,u^\rho)\in D \},
\end{equation*}
where $D\subset \mathbb{R}^d$, is an open subset with finitely many
connected components, and with piecewise smooth boundary, and $F :
\mathbb{R}^{d} \longrightarrow
\mathbb{R}^{d}$ is a smooth function with $\|dF\|_{C^0}\leq \gamma$.
\end{definition}
Let us note that, since $\Lambda$ is isoenergetic and is Lagrangian, an immediate computation shows that $F$ does not depend on $y_1^\rho$, so that $\Lambda$ can actually be put in the form
\begin{equation*}\Lambda = \{(y_1^\rho,u^\rho,0,f(u^\rho)) ; (y^\rho_1,u^\rho)\in D \},
\end{equation*}
where $f :
\mathbb{R}^{d-1} \longrightarrow
\mathbb{R}^{d-1}$ is a smooth function with $\|df\|_{C^0}\leq \gamma$.

Note that, since $f$ is defined on $\mathbb{R}^{d-1}$, a $\gamma$-unstable
manifold may always be seen as a submanifold of a \textit{connected
$\gamma$-unstable Lagrangian manifold}.

\subsubsection{Topological pressure} \label{press}
We shall now give a definition of topological pressure, so as to formulate Hypothesis \ref{Husserl}.
Recall that the distance $d_{ad}$ was defined in section \ref{averse}, and that it was associated to the adapted metric. We say
that a set $\mathcal{S}\subset K$ is $(\epsilon,t)$-separated if for
$\rho_1, \rho_2\in \mathcal{S}$, $\rho_1 \neq \rho_2$, we have
$d_{ad}(\Phi^{t'}(\rho_1),\Phi^{t'}(\rho_2))>\epsilon$ for some $0\leq t \leq
t'$. (Such a set is necessarily finite.)

The metric $g_{ad}$ induces a volume form $\Omega$ on any $d$-dimensional
subspace of $T(T^*\mathbb{R}^d)$. Using this volume form, we will define
the unstable Jacobian on $K$. For any $\rho\in K$, the determinant map
\begin{equation*}\Lambda^n d\Phi^t(\rho)|_{E_\rho^{+0}} : \Lambda^n E_\rho^{+0}
\longrightarrow \Lambda^n E_{\Phi^t(\rho)}^{+0}
\end{equation*}
can be identified with the real number
\begin{equation*}\det\big{(} d\Phi^t(\rho)|_{E_\rho^{+0}}\big{)} :=
\frac{\Omega_{\Phi^t(\rho)}\big{(}d\Phi^tv_1 \wedge
d\Phi^tv_2\wedge...\wedge d\Phi^tv_n\big{)}}{\Omega_\rho(v_1\wedge v_2
\wedge... \wedge v_n)},
\end{equation*}
where $(v_1,...,v_n)$ can be any basis of $E_\rho^{+0}$. This number
defines the unstable Jacobian:
\begin{equation}\label{defJaco}
\exp \lambda^+_t(\rho) := \det\big{(}
d\Phi^t(\rho)|_{E_\rho^{+0}}\big{)}.
\end{equation}
From there, we take
\begin{equation*}Z_t(\epsilon,s):= \sup \limits_{\mathcal{S}} \sum_{\rho \in \mathcal{S}}
\exp(-s\lambda_t^+(\rho)),
\end{equation*}
where the supremum is taken over all $(\epsilon,t)$-separated sets. The
pressure is then defined as
\begin{equation*}\mathcal{P}(s):= \lim \limits_{\epsilon \rightarrow 0} \limsup \limits_{t
\rightarrow \infty} \frac{1}{t} \log  Z_t(\epsilon,s) .
\end{equation*}
This quantity is actually independent of the volume form $\Omega$ and of the metric chosen: after
taking logarithms, a change in $\Omega$ or in the metric will produce a term $O(1)/t$,
which is not relevant in the $t\rightarrow \infty$ limit.
\begin{hyp} \label{Husserl}
We assume the following inequality on the topological pressure associated with $\Phi^t$ on $S^*X$:
\begin{equation} \label{Laurence}
\mathcal{P}(1/2)<0.
\end{equation}
\end{hyp}

\subsubsection{Additional assumptions on the manifold}\label{NewManifold}
In order to obtain stronger results than in \cite{Ing}, we will need the following additional assumptions on $(X,g)$.

\begin{hyp} \label{Strong}
From now on, we will suppose that

(i) $(X,g)$ has nonpositive sectional curvature.

(ii) The sectional curvatures are all bounded from below by some constant $-b_0$, with $b_0\in (0,\infty)$.

(iii) The injectivity radius goes to infinity at infinity, in the following sense : for all sequences of points $(x_n)\subset X$ such that $b(x_n)$ goes to $0$, we have $r_i(x_n)\longrightarrow \infty$.
\end{hyp}

Recall that if $x\in X$, the injectivity radius of $x$, denoted by $r_i(x)$, is the largest number $r\bel 0$ such that the exponential map at $x$ is injective on the open ball $B(0,r)$. On a manifold of nonpositive curvature, saying that $r_i(x)< \infty$ means that there exists $y\in X$ such that $d_X(x,y)= r_i(x)$ and there exist two different unite-speed minimizing geodesics from $x$ to $y$.

Part (iii) of Hypothesis \ref{Strong} implies\footnote{Actually, one can easily show that (\ref{mininj}) is implied by Hypothesis \ref{Guepard} and part (i) of Hypothesis \ref{Strong}.} that
\begin{equation}\label{mininj}
r_i:=\inf\limits_{x\in X} r_i(x) \bel 0.
\end{equation}

\begin{ex}
The manifold represented on Figure \ref{example} is a simple example of a manifold which fulfils Hypothesis
\ref{Strong}.
\end{ex}
\begin{ex}
Any quotient of the hyperbolic space by a convex co-compact group of isometries satisfies Hypothesis \ref{Strong}. If we perturb slightly the metric on a compact set of such a manifold, it will still satisfy Hypothesis \ref{Strong}. Manifolds which are hyperbolic near infinity will be considered in more detail in Appendix \ref{portugal}.
\end{ex}

\subsection{Hypotheses on the distorted plane waves}\label{hypdistorted}

\subsubsection{Hypotheses on the incoming Lagrangian
manifold}\label{scotiabank}
Let us consider an isoenergetic Lagrangian manifold $\Lag\subset
S^*X$ of the form 
$$\Lag:=\{(x,\Dir(x)),x\in
X_1\},$$
where $X_1$ is a closed subset of $X$ with finitely
many connected components and piecewise smooth boundary, and $\Dir :
X_2\ni x\longrightarrow \Dir(x)\in T_x^*X$ is a smooth
co-vector field defined on some neighbourhood $X_2$ of $X_1$.

We make the following additional hypothesis on $\Lag$:

\begin{hyp}[Invariance hypothesis]\label{chaise}
We suppose that $\Lag$ satisfies the following invariance properties.
\begin{equation}\label{invfut}
\forall t \geq 0, \Phi^t(\Lag)\cap \mathcal{DE}_- = \Lag\cap\mathcal{DE}_-.
\end{equation}
\begin{equation}\label{invpass}
\forall t \geq 0, \Phi^{-t}(\Lag)\cap \mathcal{DE}_+ = \Lag\cap\mathcal{DE}_+.
\end{equation}
\end{hyp}

\begin{ex} \label{sega}
Suppose that $(X\backslash X_0,g) \cong (\mathbb{R}^d\backslash B(0,R),g_{Eucl}) $ for
some $R>0$.
Given a $\xi\in \mathbb{R}^d$ with $|\xi|^2=1$, the Lagrangian manifold $$S^*X\supset \Lambda_\xi:=\{(x,\xi); x\notin X_0\}$$ fulfils Hypothesis \ref{chaise}.
\end{ex}

\begin{ex} \label{playstation}
Suppose that $(X\backslash X_0,g) \cong (\mathbb{R}^d\backslash B(0,R),g_{Eucl}) $ for
some $R>0$. Then the incoming spherical Lagrangian, defined by
\begin{equation*}\Lambda_{sph}:= \{ (x,-\frac{x}{|x|}); |x|>R\},
\end{equation*}
fulfills Hypothesis \ref{chaise}.
\end{ex}

We also make the following transversality assumption on the Lagrangian
manifold $\Lag$. It roughly says that $\Lag$ intersects the stable manifold transversally.

\begin{hyp}[Transversality hypothesis]\label{Happy}
We suppose that $\Lag$ is such that, for any $\rho\in K$, for
any $\rho'\in \Lag$, for any $t\geq 0$, we have for $\epsilon\bel 0$ small enough,
\begin{equation*}
\Phi^t(\rho')\in W_{\epsilon}^-(\rho)\Longrightarrow W_{\epsilon}^-(\rho) \text{ and } \Phi^t(\Lag) \text{ intersect transversally at } \Phi^t(\rho'),
\end{equation*} 
that is to say
\begin{equation}\label{telus}
T_{\Phi^t(\rho')}\Lag \oplus T_{\Phi^t(\rho')} W_{\epsilon}^-(\rho) = T_{\Phi^t(\rho')} S^*X.
\end{equation}
\end{hyp}
Note that (\ref{telus}) is equivalent to $T_{\Phi^t(\rho')}\Lag \cap T_{\Phi^t(\rho')} W_{\epsilon}^-(\rho) = \{0\}$.

In general, this assumption is not easy to check. However, we will show in Proposition \ref{musaraigne} that it is always satisfied if the hypotheses of sections \ref{NewManifold} and \ref{NewDistorted} are satisfied.

\subsubsection{Assumptions on the generalized eigenfunctions}
We consider a family of smooth functions $E_h\in C^\infty(X)$ indexed by $h\in(0,1]$ which satisfy
\begin{equation*}
(P_h -1) E_h=0,
\end{equation*} 
where
\begin{equation*}
P_h=-h^2\Delta-c_0 h^2.
\end{equation*}
Here, $c_0\bel 0$ is a constant which is equal to 0 in the case of Euclidean near infinity manifolds, and to $(d-1)^2/4$ on manifolds that are hyperbolic near infinity.

We will furthermore assume that these generalized eigenfunctions may be decomposed as follows. For the definitions of Lagrangian states, tempered distributions and wave-front sets, we refer the reader to Appendix \ref{pivoine}

\begin{hyp}\label{pied}
We suppose that $E_h$ can be put in the form
\begin{equation}\label{Poitiers}
E_h= E_h^0+E_h^1,
\end{equation}
where $E_h^0$ is a Lagrangian state associated to a Lagrangian manifold $\Lag$ which satisfies Hypothesis \ref{chaise} of invariance, as well as Hypothesis \ref{Happy} of transversality, and where $E_h^1$ is a tempered distribution such that
for each $\rho\in WF_h(E_h^1)$, we have $\rho\in S^*X$.

Furthermore, we suppose that $E_h^1$ is \emph{outgoing} in the sense that there exists $\epsilon_2\bel 0$ such that for all $(x,\xi)\in T^*X$ such that $b(x)<\epsilon_2$, we have
\begin{equation}\label{outgoing4}
\rho\in WF_h (E_h^1) \Rightarrow \rho\in\mathcal{DE}_+.
\end{equation}
\end{hyp}

\begin{remarque}\label{remarqueoutgoing}
Note that (\ref{outgoing4}) implies that for any $\chi\in C_c^\infty(X)$, we may find $\hat{\chi}\in C_c^\infty(X)$ such that $\hat{\chi}\equiv 1$ on $\mathrm{supp}(\chi)\cup \{x\in X; b(x)\geq \epsilon_2\}$, and such that the support of $\hat{\chi}(1-\hat{\chi})$ is small enough so that for any $t\geq 1$, we have
\begin{equation}\label{outgoing}
\Phi^t\Big{(}WF_h\big{(}(1-\hat{\chi}) E_h^1\big{)}\Big{)} \cap T^*\mathrm{supp} (\hat{\chi}) = \emptyset.
\end{equation}
We will often use this consequence of (\ref{outgoing4}).
\end{remarque}

\begin{ex}
In \cite{Ing}, it is explained how distorted plane waves enter in this framework on manifold that are Euclidean near infinity. In Appendix \ref{portugal}, we will show that distorted plane waves on manifolds that are hyperbolic near infinity do also satisfy this assumption.
\end{ex}

\subsubsection{Additional assumptions on the Lagrangian manifold $\Lag$}\label{NewDistorted}
From now on, we will denote by $d_X$ the Riemannian distance on the base manifold. It should not be confused with the distance $d_{ad}$ on the energy layer which was introduced in section \ref{averse}, and which we will sometimes use too. If $\rho,\rho'\in T^*X$, we will write $d_X(\rho,\rho')$ for $d(\pi_X(\rho),\pi_X(\rho'))$, where $\pi_X$ denotes the projection on the base manifold.

We will need the following assumptions on the incoming Lagrangian manifold $\Lag$. First of all, we require that $\Lag$ does not expand when propagated in the past.

\begin{hyp} \label{puissanceplus}
We suppose that $\Lag$ is such that $\forall \rho_1, \rho_2\in \Lag, \forall t\leq 0, d_X
(\Phi^t(\rho_1),\Phi^t(\rho_2))\leq
d_X (\rho_1,\rho_2)$.
\end{hyp}

\begin{ex}
Without Hypothesis \ref{puissanceplus}, Theorem \ref{linfini} might not be satisfied. For example, on $X=\mathbb{R}^d$
the Lagrangian manifold $\Lambda_{circ}$ from Example \ref{playstation} does
not satisfy part of Hypothesis \ref{puissanceplus}.

Consider $E_h(x):=\frac{1}{h^{(d-1)/2}}\int_{\mathbb{S}^{d-1}} e^{i\theta\cdot x/h}\mathrm{d}\theta$. One can show, by stationary phase (see \cite[\S 2]{Mel}) that 
\begin{equation}\label{orenci}
E_h(x)= \frac{C}{|x|^{(d-1)/2}} \big{(}e^{ir/h}+ e^{-ir/h} \big{)}+ R_h,
\end{equation} where $R_h(x)$ goes to zero when $h/|x|$ goes to zero.

If $\chi\in C^\infty_c(\mathbb{R}^d)$ is such that $\chi(x)=1$ for $|x|\leq R$ for some $R\bel 0$, then $E_h^0(x):= (1-\chi(x))e^{i|x|/r}$ is a Lagrangian state associated to $\Lambda_{circ}$. $E_h^1:= E_h-E_h^0$ is a tempered distribution, and Hypothesis \ref{pied} is satisfied. However, $E_h(0)=  \frac{C}{h^{(d-1)/2}}$, so that $E_h$ is not bounded in $L^\infty$ independently of $h$.

Therefore, Hypothesis \ref{puissanceplus} is essential for Theorem \ref{linfini} to hold.
\end{ex}

For Theorem \ref{smallscale7}, we also require a sort of completeness assumption for $\Lag$, which is as follows. Note that Hypothesis \ref{puissanceplus2} is not required for Theorem \ref{linfini} to hold.

\begin{hyp}\label{puissanceplus2}
We suppose that $\Lag$ is such that for all $\rho\in \mathcal{DE}_-$, we have
\begin{equation*} \Big{[}\exists \rho'\in \Lag, \forall t\leq 0, d_X
(\Phi^t(\rho),\Phi^t(\rho'))\leq
d_X (\rho,\rho')\Big{]}\Longrightarrow \rho\in \Lag.
\end{equation*}
\end{hyp}

Beware that if $\Lag$ satisfies Hypothesis \ref{puissanceplus2}, a subset of $\Lag$ may not satisfy Hypothesis \ref{puissanceplus2}, even if it satisfies the invariance property of Hypothesis \ref{chaise}.

\section{Main results}
In this section, we state our main results concerning distorted plane waves on manifolds of nonpositive curvature. Before doing so, we recall the main results of \cite{Ing}, so as to introduce some useful notations, and since we will need them in the proofs in sections \ref{preuveclassique} and \ref{preuvequantique}.
\subsection{Recall of the main results from \cite{Ing}}
Let us recall the main result from \cite{Ing}. The definitions of pseudo-differential operators and of Fourier integral operators are recalled in appendix \ref{pivoine}.

\begin{theoreme}\label{ibrahim}
 Suppose that the manifold $(X,g)$ satisfies Hypothesis \ref{Guepard} near infinity, and that the geodesic flow $(\Phi^t)$ satisfies
Hypothesis \ref{sieste} on hyperbolicity and Hypothesis \ref{Husserl} concerning the topological pressure. Let $E_h$ be a generalized eigenfunction of the form described in Hypothesis \ref{pied}, where $E_h^0$ is associated to a Lagrangian manifold $\Lag$ which satisfies
the invariance Hypothesis \ref{chaise} as well as the transversality
Hypothesis \ref{Happy}.
 
Then there exists a finite set of points $(\rho_b)_{b\in B_1}\subset K$ and a family $(\Pi_b)_{b\in B_1}$ of operators in $\Psi^{comp}_h(X)$ microsupported in a small neighbourhood of $\rho_b$ such that $\sum_{b\in B_1} \Pi_b = I$ microlocally on a neighbourhood of $K$ in $T^*X$ such that the following holds.

Let $\mathcal{U}_b: L^2(X)\longrightarrow L^2(\mathbb{R}^d)$ be a Fourier integral operator quantizing the symplectic
change of local coordinates $\kappa_b:(x,\xi) \mapsto
(y^{\rho_b},\eta^{\rho_b})$, and which is microlocally unitary on the microsupport of $\Pi_b$.

For any $r>0$ and $\ell\in \mathbb{N}$, there exists $M_{r,\ell}>0$ such that
 we have as $h\rightarrow 0$:
\begin{equation}\label{greenday}\mathcal{U}_b \Pi_b E_h(y^{\rho_b}) = \sum_{n=0}^{\lfloor
M_{r,\ell} |\log h|\rfloor}
\sum_{\beta\in \mathcal{B}_n} e^{i \phi_{\beta,b}(y^{\rho_b})/h}
a_{\beta,b}(y^{\rho_b};h) + R_{r,\ell},
\end{equation}
where the $a_{\beta,b}\in S^{comp}(\mathbb{R}^d)$ are classical symbols in the sense of Definition \ref{defsymbclassique}, and each $\phi_{\beta,b}$ is a smooth function independent of $h$, and defined in a neighbourhood of the support of
$a_{\beta,b}$. Here, $\mathcal{B}_n$ is a set of words with length close to $n$; hence  its cardinal behaves like some exponential of $n$.

We have the following estimate on the remainder
 \begin{equation*}\|R_{r,\ell}\|_{C^\ell}=O(h^r).
\end{equation*}

 For any $\ell\in \mathbb{N}$, $\epsilon>0$, there exists $C_{\ell,\epsilon}$
such that for all $n\geq 0$, for all $h\in (0,h_0]$, we have
\begin{equation}\label{sheriff}
\sum_{\beta\in \mathcal{B}_n} \|a_{\beta,b}\|_{C^\ell} \leq C_{\ell,\epsilon}
e^{n(\mathcal{P}(1/2)+\epsilon)}.
\end{equation}
\end{theoreme}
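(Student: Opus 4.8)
The plan is to realise $\chi E_h$ as a controlled, truncated iteration of the Schr\"odinger propagator applied to the incoming Lagrangian state $E_h^0$, organised according to the open hyperbolic dynamics near $K$, in the spirit of \cite{NZ}. Fix $T_0>0$ large enough that $e^{-\lambda T_0}$ is small, a finite set $(\rho_b)_{b\in B_1}\subset K$ with small neighbourhoods $W_b\ni\rho_b$ carrying the adapted symplectic coordinates $(y^{\rho_b},\eta^{\rho_b})$ of \cite[Lemma 2]{Ing}, a microlocal partition of unity $\sum_b\Pi_b=I$ near $K$ with $\Pi_b$ microsupported in $W_b$, and the microlocally unitary FIOs $\mathcal{U}_b$ quantising $\kappa_b$. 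The operator $e^{-it(P_h-1)/h}$ is a semiclassical FIO quantising the geodesic flow $\Phi^t$, and since $(P_h-1)E_h=0$ we have $e^{-iT(P_h-1)/h}E_h=E_h$ for every $T$. Writing $E_h=E_h^0+E_h^1$ and using that $E_h^1$ is outgoing (Hypothesis \ref{pied}, Remark \ref{remarqueoutgoing}), the geodesic convexity in Hypothesis \ref{Guepard}(3), and the invariance of $\Lag$ in $\mathcal{DE}_\pm$ (Hypothesis \ref{chaise}), a propagation argument shows that, for $T_h=\lfloor M|\log h|\rfloor T_0$ with $M$ large,
\[
\chi E_h=\chi\, e^{-iT_h(P_h-1)/h}E_h^0+O(h^\infty)\quad\text{in every }C^\ell,
\]
the forward-escaping part of the wavefront set leaving $T^*\spt\chi$ permanently, and the invariance hypothesis ensuring that propagating $\Lag$ forward creates no new incoming branch.

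It then remains to expand $e^{-iT_h(P_h-1)/h}E_h^0$. Decompose $e^{-iNT_0(P_h-1)/h}$, microlocally near $K$, into a sum of compositions $\mathcal{M}_\beta:=\Pi_{\beta_N}e^{-iT_0(P_h-1)/h}\Pi_{\beta_{N-1}}\cdots\Pi_{\beta_1}e^{-iT_0(P_h-1)/h}$ over words $\beta=(\beta_1,\dots,\beta_N)$ (or, better, over words defined by a stopping time, so that the length is only $\approx n$), the part of the dynamics that has escaped the interaction region being, at each level, an explicit Lagrangian state computable by stationary/non-stationary phase and absorbed into the finitely many small-$n$ terms of (\ref{greenday}). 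By the WKB / Fourier integral calculus, $\mathcal{U}_b\,\mathcal{M}_\beta E_h^0$ is a Lagrangian state $e^{i\phi_{\beta,b}/h}a_{\beta,b}$ associated with the piece of $\Phi^{NT_0}(\Lag)$ lying over $W_b$. One proves inductively, using the transversality Hypothesis \ref{Happy} and the hyperbolicity Hypothesis \ref{sieste} (graph transform / $\lambda$-lemma), that after a bounded number of steps this piece is a $\gamma_n$-unstable Lagrangian manifold in the coordinates $(y^{\rho_b},\eta^{\rho_b})$ with $\gamma_n\to0$, hence projectable with generating function $\phi_{\beta,b}$ uniformly $C^\ell$-bounded, the local unstable manifolds $W^\pm_\epsilon$ being uniformly smooth. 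Solving the transport equations along $\Phi^t$ then yields $a_{\beta,b}\in S^{comp}$ whose $C^\ell$ norm is bounded, uniformly in $n$ and $h$, by the half-power of the inverse unstable Jacobian accumulated over the $N$ steps, namely $\|a_{\beta,b}\|_{C^\ell}\le C_\ell\exp\!\big(-\tfrac12\lambda^+_{NT_0}(\rho_\beta)\big)$ up to lower-order corrections, where $\rho_\beta\in K$ shadows the word $\beta$.

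Summing over $\beta\in\mathcal{B}_n$ reduces to the topological pressure: since the $W_b$ are small, distinct words of length $N$ correspond to points $\rho_\beta$ forming an $(\epsilon,NT_0)$-separated subset of $K$, whence
\[
\sum_{\beta\in\mathcal{B}_n}\|a_{\beta,b}\|_{C^\ell}\le C_\ell\,Z_{NT_0}(\epsilon,1/2)\le C_{\ell,\epsilon}\,e^{n(\mathcal{P}(1/2)+\epsilon)},
\]
which is (\ref{sheriff}); this is the only place Hypothesis \ref{Husserl} enters, and it makes the series (\ref{greenday}) geometrically convergent. For the remainder: the tail $n>M|\log h|$ is bounded in $C^\ell$ by $h^{-\ell}\sum_{n>M|\log h|}e^{n(\mathcal{P}(1/2)+\epsilon)}\lesssim h^{-\ell}h^{M(|\mathcal{P}(1/2)|-\epsilon)}=O(h^r)$ once $M=M_{r,\ell}$ is large enough (this determines the dependence of $M_{r,\ell}$ on $r$ and $\ell$), while the WKB errors, each $O(h^\infty)$ but summed over $\#\mathcal{B}_n\le e^{Cn}$ branches and $n\le M|\log h|$ levels, together contribute $e^{CM|\log h|}O(h^\infty)=O(h^r)$. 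The technical heart — and the main obstacle — is the second paragraph: controlling the geometry of the propagated Lagrangians uniformly in the number of steps, in particular ruling out caustics for the pieces that stay near $K$ (handled by transversality together with hyperbolicity) and correctly tracking the escaping pieces that re-enter the interaction region, and checking that the transport equations reproduce exactly the half unstable Jacobian with $C^\ell$-uniform error. For all of this we follow \cite{Ing}.
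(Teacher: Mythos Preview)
Your outline is essentially the same strategy as the paper's (which in turn defers to \cite{Ing}): use the invariance $e^{-iT(P_h-1)/h}E_h=E_h$, reduce to propagating the incoming Lagrangian state $E_h^0$ for logarithmic time, decompose the propagator into words indexed by the microlocal partition near $K$, apply WKB/FIO calculus to each word, control the geometry of the truncated Lagrangians via hyperbolicity and transversality (this is exactly Theorem \ref{Cyril}), and sum using the topological pressure. The structure, the role of each hypothesis, and the remainder analysis all match.

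There is, however, one genuine gap in your reduction step. You claim that the outgoing property of $E_h^1$ (Hypothesis \ref{pied}, Remark \ref{remarqueoutgoing}) together with geodesic convexity and the invariance of $\Lag$ is enough to conclude $\chi\,e^{-iT_h(P_h-1)/h}E_h^1=O(h^\infty)$. This is not correct: the outgoing condition (\ref{outgoing4}) only constrains $WF_h(E_h^1)$ in the region $\{b<\epsilon_2\}$, and says nothing about the part of $WF_h(E_h^1)$ sitting in the interaction region, in particular the part lying on $\Gamma^+$ near $K$. Those points do \emph{not} escape under forward propagation, so a pure wavefront argument cannot kill them. The paper's sketch is explicit that two further ingredients are needed here: \emph{resolvent estimates}, which give an a priori polynomial bound $\|\chi E_h^1\|\le Ch^{-N}$ (this is where temperedness is actually quantitative), and the \emph{hyperbolic dispersion estimates} of \cite{NZ}, which say that $\|A\,e^{-iT(P_h-1)/h}B\|_{L^2\to L^2}\le C e^{T(\mathcal{P}(1/2)+\epsilon)}$ for $A,B$ microsupported near $K$. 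Combining the two, after $T_h=M|\log h|$ the contribution of $E_h^1$ near $K$ is bounded by $h^{-N}h^{M(|\mathcal{P}(1/2)|-\epsilon)}$, which is $O(h^r)$ once $M$ is large enough. Note in particular that this step already uses Hypothesis \ref{Husserl}, not only the final summation; your remark that ``this is the only place Hypothesis \ref{Husserl} enters'' is therefore not accurate.
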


Let us recall in a very sketchy way the idea behind the proof of Theorem \ref{ibrahim}. Since $(-h^2\Delta-1)E_h=0$, we have formally that $e^{-itP_h/h} E_h= e^{-it/h}E_h$, where $e^{-itP_h/h}$ is the Schrödinger propagator. Of course, this statement can only be formal, since $E_h$ is not in $L^2$, but by working with cut-off functions, we can make it rigorous up to a $O(h^\infty)$ remainder. 

By using some resolvent estimates and hyperbolic dispersion estimates, one can show that if we propagate $E_h$ by the Schrödinger flow during a long enough time (of the order of some logarithm of $h$), the term involving $E_h^1$ becomes smaller than any power of $h$. Hence we only have to study the propagation during long times of $E_h^0$. Since, by assumption, $E_h^0$ is a Lagrangian state, we can use the WKB method to study its propagation. The main part in the WKB analysis is to understand the Lagrangian manifold $\Phi^t(\Lag)$, especially for large values of $t$. This is the content of Theorem \ref{Cyril} below. Before stating it, we recall a few notations.

Let us fix 
\begin{equation}\label{defgamma}
\ins>0
\end{equation}
 small enough. In \cite{Ing}, we built $(V_b)_{b\in B}$ a finite open cover of $S^*X$ in $T^*X$ (depending on $\ins$) such that Theorem \ref{Cyril} below holds. This open cover had the following properties.
\begin{itemize}
\item We have $B= B_1\sqcup B_2\sqcup \{0\}$, where $V_0$ is as in (\ref{frite}).

\item For each $b\in B_1$, $(V_b)_{b\in B_1}$ is an open cover of $K$ in $T^*X$, such that for every $b\in B_1$, there exists a point $\rho^b\in V_b\cap K$, and
such that the adapted coordinates $(y^b,\eta^b)$ centred on $\rho^b$ are well defined on $V_b$
for every $b\in B_1$. 

\item The sets $V_b$ are all bounded for $b\in B_1\sqcup B_2$.
\end{itemize}

\paragraph{Truncated Lagrangians}
Let
$N\in\mathbb{N}$, and
let $\beta=\beta_0,\beta_1...\beta_{N-1}\in B^{N}$.
Let $\Lambda$ be a Lagrangian manifold
in $T^*X$. We define the sequence of (possibly empty) Lagrangian manifolds
$(\Phi_\beta^{k}(\Lambda))_{0\leq k \leq N-1}$ by recurrence by:
\begin{equation}\label{deftronque}
\Phi_\beta^{0}(\Lambda)= \Lambda \cap V_{\beta_0},~~~~
\Phi_\beta^{k+1}(\Lambda) =
V_{\beta_{k+1}} \cap  \Phi^1 (\Phi_{\beta}^{k}
(\Lambda)).
\end{equation}

If $\beta\in B^N$, we will often write 
\begin{equation*}
\Phi_\beta (\Lambda) := \Phi_\beta^{N-1}(\Lambda). 
\end{equation*}

For any $\beta\in B^{N}$ such that $\beta_{N-1}\neq 0$, we will define
\begin{equation}\label{rollingstone}
\tau(\beta):= \max \{1\leq i \leq N-1; \beta_i=0\}
\end{equation}
 if there exists 
$1\leq i \leq N-1$ with $\beta_i=0$, and $\tau(\beta)=0$ otherwise.

The sets $\Phi^N_\beta(\Lag)$ are related to the result of Theorem \ref{ibrahim} as follows. For all $N\in \mathbb{N}$, $\beta\in B^N$ and $b\in B_1$, we have, using the notations of Theorem \ref{ibrahim}:
\begin{equation}\label{calamar}
\kappa_b\big{(} \Phi^N_\beta(\Lag)\big{)}\supset \{ (y^{\rho_b},\partial \phi_{\beta,b}(y^{\rho_b})); y^{\rho_b}\in \Omega_{\beta,b}\},
\end{equation}
where $\Omega_{\beta,b}\subset \mathbb{R}^d$ is an open set containing the support of $a_{\beta,b}$.
The properties of the sets $\Phi^N_\beta(\Lag)$ are described in the following theorem.

\begin{theoreme} \label{Cyril}
Suppose that, the manifold $(X,g)$ satisfies Hypothesis
\ref{Guepard} at infinity, that the Hamiltonian flow $(\Phi^t)$ satisfies
Hypothesis \ref{sieste}, and that the Lagrangian manifold $\Lag$ satisfies
the invariance Hypothesis \ref{chaise} as well as the transversality
Hypothesis \ref{Happy}.

There exists $\tins\in \mathbb{N}$ such that for all
$N\in
\mathbb{N}$, for all $\beta\in B^{N}$ and all $b\in B_1$, then
$V_b\cap\Phi_\beta(\Lag)$ is either empty, or can be written in the coordinates $(y^{b},\eta^{b})$ as

\begin{equation*}V_b\cap\Phi_\beta(\Lag) = \{(y_1^b,u^b,0,f_{b,\beta}(u^b)) ; (y^b_1,u^b)\in D_{b,\beta} \},
\end{equation*}
where $f_{b,\beta} :
\mathbb{R}^{d-1} \longrightarrow
\mathbb{R}^{d-1}$ is a smooth function. and $D_{b,\beta}\subset \mathbb{R}^d$ is a bounded open set.

For each $\alpha\in \mathbb{N}^{d-1}$, there exists a constant $C_\alpha$ such that for all $N\in
\mathbb{N}$, for all $\beta\in B^{N}$ and all $b\in B_1$, we have
\begin{equation*}
\|\partial^\alpha f_{b,\beta}\|_{C^0}\leq C_\alpha.
\end{equation*}

Furthermore, if $N-\tau(\beta)\geq \tins$, then $\|d f_{b,\beta}\|_{C^0}\leq \ins$.
\end{theoreme}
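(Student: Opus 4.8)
The plan is to prove the three assertions of Theorem \ref{Cyril} by an induction on the number of iterations, the crucial structural input being the adapted coordinates and the cone-stability of the unstable cone under the (truncated) geodesic flow. I would set up the induction on $N$ (for fixed but arbitrary word $\beta$), tracking the Lagrangian manifold $\Phi_\beta^k(\Lag)$ and its representation in the adapted coordinates $(y^{\beta_k},\eta^{\beta_k})$. The base case $k=0$ splits into two subcases: either $\beta_0 = 0$, in which case $\Phi_\beta^0(\Lag)=\Lag\cap V_0$ is handled directly from the hypotheses on $\Lag$ (it is a graph over the base by the very form of $\Lag$ in section \ref{scotiabank}); or $\beta_0 \in B_1$, in which case we need the transversality Hypothesis \ref{Happy} together with the invariance Hypothesis \ref{chaise} to guarantee that when $\Lag$ enters a small neighbourhood $V_{\beta_0}$ of a point of $K$ it is already close to the unstable cone, or at worst transverse to the stable direction. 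Here I would invoke the fact, standard in hyperbolic dynamics and presumably used already in \cite{Ing}, that a Lagrangian transverse to $W^-_\epsilon$ is pushed into an arbitrarily thin unstable cone after a bounded number of steps.

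The inductive step is where the hyperbolicity does its work. Assuming $V_{\beta_k}\cap \Phi_\beta^k(\Lag)$ is a graph $\{(y_1,u,0,f_{k}(u))\}$ with uniformly bounded $C^\alpha$ norms (and, if enough steps have elapsed since the last passage through infinity, with $\|df_k\|_{C^0}\le\ins$), I would apply $\Phi^1$ and then intersect with $V_{\beta_{k+1}}$. The map $\Phi^1$ conjugated through the adapted charts $\kappa_{\beta_k},\kappa_{\beta_{k+1}}$ is, up to uniformly controlled error, a hyperbolic linear map expanding the $u$-coordinates and contracting the $s$-coordinates, composed with the transition diffeomorphism between the two charts; the cone field $\{\|df\|\le\ins\}$ is strictly invariant under such maps, which gives the last assertion. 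The uniform $C^\alpha$ bounds on $f_{b,\beta}$ require more care: one writes the propagated graph function as a composition/implicit solve and differentiates, obtaining a recursion of the schematic form $\|\partial^\alpha f_{k+1}\| \le \theta\,\|\partial^\alpha f_k\| + (\text{polynomial in lower-order derivatives})$ with contraction factor $\theta<1$ coming from the expansion in the unstable direction (a tame/Faà di Bruno estimate). Summing the geometric series yields a bound $C_\alpha$ depending only on $\alpha$, uniformly in $N$ and $\beta$ — this is the key point that the number of iterations does not enter.

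The handling of the passages through $V_0$ (the region near infinity) is the remaining ingredient and, I expect, the main obstacle. When $\beta_k = 0$ the piece of Lagrangian leaves the interaction region; to re-enter $V_b$ for $b\in B_1$ it must travel through the Euclidean/hyperbolic-near-infinity end, and we need that upon re-entry it is again a controlled graph. This is precisely where Hypothesis \ref{Strong} (nonpositive curvature, lower curvature bound, injectivity radius growing at infinity) and the convexity part (3) of Hypothesis \ref{Guepard} are used: nonpositive curvature ensures that the graph property and the $C^\alpha$ bounds are not destroyed during the free propagation near infinity (Jacobi fields do not focus), and the convexity ensures that a trajectory that re-enters does so in a controlled way. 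The quantity $\tau(\beta)$ bookkeeps the last visit to infinity, and the statement "$\|df_{b,\beta}\|_{C^0}\le\ins$ once $N-\tau(\beta)\ge\tins$" is exactly the assertion that after $\tins$ purely-interior hyperbolic steps the cone has been squeezed back down regardless of how badly the excursion through infinity may have opened it. I would isolate this as a lemma: starting from \emph{any} isoenergetic Lagrangian graph with $C^0$-bounded first derivative (the bound coming from Hypothesis \ref{puissanceplus} is not needed here, only boundedness), $\tins$ applications of the truncated flow through $B_1$-charts bring it inside the $\ins$-cone — this is the quantitative cone-contraction statement of hyperbolic dynamics, and $\tins$ depends only on $\ins$ and $\lambda$.
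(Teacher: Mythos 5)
The paper does not reprove Theorem~\ref{Cyril}: it is imported verbatim from~\cite{Ing}, so there is no in-text argument to compare against; I can only assess your sketch against the stated hypotheses and the surrounding framework. Your overall architecture (induction on the word in adapted coordinates, cone-field invariance for the $C^1$ bound, a contracting Fa\`a di Bruno recursion for higher derivatives, and a separate treatment of the excursions through $V_0$ bookkept by $\tau(\beta)$) is plausible and is very likely the right skeleton. But there is a genuine misattribution at the step you yourself flag as "the main obstacle." You propose to control re-entry into the interaction region via Hypothesis~\ref{Strong} (nonpositive curvature, lower curvature bound, injectivity radius large at infinity). Hypothesis~\ref{Strong} is \emph{not} among the assumptions of Theorem~\ref{Cyril}: the theorem is stated under only Hypotheses~\ref{Guepard}, \ref{sieste}, \ref{chaise} and \ref{Happy}, and the curvature assumptions of section~\ref{NewManifold} are the \emph{additional} input that the present paper layers on top of Theorem~\ref{Cyril} to get the stronger Theorem~\ref{getup}. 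A proof that routes through Hypothesis~\ref{Strong} proves a different (weaker) statement, and would also be circular in the paper's logic since Theorem~\ref{getup} builds on Theorem~\ref{Cyril}. The same remark applies to your passing invocation of Hypothesis~\ref{puissanceplus}.

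The mechanism that actually handles the passages through $V_0$ with no curvature input is the invariance Hypothesis~\ref{chaise}, which you correctly use in the base case but then abandon where it is most needed. Equation~(\ref{invfut}) gives $\Phi^t(\Lag)\cap\mathcal{DE}_- = \Lag\cap\mathcal{DE}_-$ for all $t\geq 0$. Consequently, whenever the truncated propagation passes through the escape region (each index $k$ with $\beta_k=0$), the piece of the propagated Lagrangian that is about to re-enter the interaction region is not some distorted image whose Jacobi fields you must track through the free end; it is \emph{literally} $\Lag\cap\mathcal{DE}_-$ again, with all the regularity guaranteed by the smooth covector field $\Dir$ in section~\ref{scotiabank}. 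This is precisely why $\tau(\beta)$ records only the \emph{last} visit to $0$: the estimate restarts from the same regular initial data on each re-entry, and $\tins$ subsequent steps through $B_1$-charts squeeze the graph into the $\ins$-cone by the hyperbolic cone contraction you describe in your final lemma. No Jacobi-field estimate at infinity, and no bound of the type~(\ref{flowers}) or Remark~\ref{lenovo} (which belong to the setting of section~\ref{NewManifold}), should appear. With that substitution your sketch is on the right track; as written it relies on hypotheses the theorem does not grant.
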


\subsection{New results in nonpositive curvature}\label{newyork}
The results of \cite{Ing} can be improved in the case of geometric scattering in
nonpositive sectional curvature, for Lagrangian manifolds that are “non-expanding in the past”, as we shall now describe.

\subsubsection{Results on the propagation of $\Lag$}
The first consequence of Hypotheses \ref{Strong} and \ref{puissanceplus} is the following lemma, which guarantees that Hypothesis \ref{Happy} concerning transversality is always satisfied.
\begin{proposition}\label{musaraigne}
Suppose that $(X,g)$ satisfies Hypothesis \ref{Guepard} near infinity, Hypothesis \ref{Strong}, as well as Hypothesis \ref{sieste} on hyperbolicity, and that $\Lag$ is a Lagrangian manifold which satisfies Hypothesis \ref{chaise} of invariance, as well as Hypothesis \ref{puissanceplus}. Then $\Lag$ satisfies Hypothesis \ref{Happy} on transversality.
\end{proposition}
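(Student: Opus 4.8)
The plan is to argue by contradiction, exploiting the fact that on a nonpositively curved manifold the exponential divergence of geodesics along the unstable direction can be detected at the level of the base manifold, and that Hypothesis \ref{puissanceplus} forbids any such divergence backward in time for pairs of points on $\Lag$. So suppose the conclusion fails: there exist $\rho\in K$, $\rho'\in\Lag$, $t\ge 0$, and $\epsilon>0$ small with $\Phi^t(\rho')\in W^-_\epsilon(\rho)$ but $T_{\Phi^t(\rho')}\Lag\cap T_{\Phi^t(\rho')}W^-_\epsilon(\rho)\neq\{0\}$. Pick a nonzero $v$ in this intersection. Since $v\in T_{\Phi^t(\rho')}W^-_\epsilon(\rho)$, the stable estimate of Hypothesis \ref{sieste} gives $\|d\Phi^{-s}v\|_{g_{ad}}\ge c\,e^{\lambda s}\|v\|$ for $s\ge 0$ (up to the usual cone/angle adjustments near $\rho$, valid once $\epsilon$ is small, and using that $\Phi^t(\rho')$ stays uniformly close to $K$ for backward times along $W^-_\epsilon$). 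On the other hand $v\in T_{\Phi^t(\rho')}\Lag$, so $d\Phi^{-s}v$ is tangent to $\Phi^{t-s}(\Lag)=\Phi^{-(s-t)}(\Lag)$; for $s\ge t$ this is a backward iterate of $\Lag$.

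The heart of the matter is to transfer this exponential growth in $S^*X$ (measured in $g_{ad}$) into exponential growth of the base-point separation $d_X$ of a pair of points on $\Lag$, contradicting Hypothesis \ref{puissanceplus}, which says $d_X(\Phi^{-s}(\rho_1),\Phi^{-s}(\rho_2))\le d_X(\rho_1,\rho_2)$ for all $s\ge 0$. I would do this by integrating the Jacobi field: take a short curve $c(\sigma)$ in $\Lag$ near $\Phi^t(\rho')$ with $\dot c(0)=v$, so that $c_s(\sigma):=\Phi^{-s}(c(\sigma))$ is a variation through geodesics on $X$, and its variation field is a Jacobi field $J_s$ with $J_0 = v$ (identified with its projection, which is nonzero because $v\notin \mathbb{R}H_p$: indeed $v\in E^-_\rho$-direction is transverse to the flow, and a vector tangent to $W^-_\epsilon$ that is nonzero has nonzero base projection on a manifold of negative curvature since $W^-_\epsilon$ projects diffeomorphically). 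On a manifold of nonpositive curvature, Jacobi fields that are initially "unstable" — here, in the weak-stable direction propagated backward, i.e. unstable forward — have norm that is convex and in fact grows; more precisely, using the lower curvature bound $-b_0$ and the stable/unstable structure, $\|J_s\|_{g_X}$ grows like $e^{\lambda' s}$ for some $\lambda'>0$ as $s\to\infty$. Integrating, $d_X(\Phi^{-s}(c(\sigma_1)),\Phi^{-s}(c(\sigma_2)))$ grows exponentially in $s$ for $\sigma_1\ne\sigma_2$ close to $0$, contradicting Hypothesis \ref{puissanceplus} applied to the pair $\rho_1=c(\sigma_1), \rho_2=c(\sigma_2)\in\Lag$.

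I expect the main obstacle to be precisely this transfer step: controlling the base-projection of the Jacobi field uniformly, and ruling out the degenerate possibility that $v$ projects to zero on $X$ (which would happen only if $v$ were a multiple of the geodesic spray, impossible since $v\in T W^-_\epsilon(\rho)$ is transverse to the flow by the decomposition in Hypothesis \ref{sieste}), together with making sure the estimates are uniform as $\sigma\to 0$ so that the Jacobi-field linearization genuinely controls the distance $d_X$. A secondary technical point is that $\Phi^t(\rho')$ need not lie in $K$, so one must use that on $W^-_\epsilon(\rho)$ the stable manifold theory and the curvature-based growth estimates persist in a neighbourhood of $K$ (this is where the adapted metric $g_{ad}$, extended as in section \ref{averse}, and the nonpositivity of the curvature globally, are used). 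Once the exponential growth of $d_X$ along backward iterates of a pair in $\Lag$ is established, the contradiction with Hypothesis \ref{puissanceplus} is immediate, and the proposition follows.
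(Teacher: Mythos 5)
Your overall strategy is the same as the paper's: derive a contradiction between (a) Hypothesis \ref{puissanceplus}, which forbids any backward growth of $d_X$ on $\Lag$, and (b) the backward exponential expansion along the local stable manifold, transferred to base distance via the fact that $W^{-0}_\epsilon(\rho)$ projects diffeomorphically onto $X$. The paper packages exactly this dichotomy into Lemma \ref{hibou} (hypotheses (i) and (ii')), verified here using Hypothesis \ref{puissanceplus} together with Lemmas \ref{dompteur2}, \ref{fix}, \ref{res} and \ref{technicolor}; you are unrolling that criterion by hand, which is legitimate but leaves the key estimate only partially justified.

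The step you yourself flag as the heart of the matter is supported by the wrong ingredient, and two other details are off:
\begin{itemize}
\item You invoke ``the lower curvature bound $-b_0$'' to get $\|J_s\|_{g_X}\gtrsim e^{\lambda' s}$. But the bound $\text{sec}\geq -b_0$ controls Jacobi-field growth \emph{from above} (this is precisely what Remark~\ref{lenovo} uses: $\|J_s\|\lesssim e^{\sqrt{b_0}s}$); it gives no lower bound. The exponential lower bound on $\|d\Phi^{-s}v\|_{g_{ad}}$ comes solely from hyperbolicity (Lemma~\ref{fix} applied backward on $W^-_\epsilon(\rho)$), and transferring it to $\|\pi_X\,d\Phi^{-s}v\|_{g_X}$ requires the comparison of $d_{ad}$ and $d_X$ along $W^{-0}_\epsilon(\rho)$ and its backward iterates, i.e.~Lemma~\ref{technicolor} and Remark~\ref{gabriele2}. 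Curvature enters only through nonpositivity (convexity of $d_X^2$, used to check (i) of Lemma~\ref{hibou}), not through $b_0$.
\item ``$\Phi^t(\rho')$ stays uniformly close to $K$ for backward times along $W^-_\epsilon$'' is the wrong direction: points on a stable leaf approach $K$ forward and recede backward. This is harmless because, as in Lemma~\ref{hibou}(ii'), the expansion estimate is only needed on a bounded interval of backward times (just long enough that $Ce^{\nu t}>2$), during which $\Phi^{-s}(\Phi^t(\rho'))$ remains in the local stable leaf for $\epsilon$ small; but the estimate should not be stated as $s\to\infty$.
\item Your parenthetical ``which would happen only if $v$ were a multiple of the geodesic spray'' is incorrect: a tangent vector with zero base projection is \emph{vertical} (tangent to $S^*_xX$), whereas the spray has nonzero base projection. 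The correct reason, which you give immediately after, is that $W^{-0}_\epsilon(\rho)$ projects diffeomorphically (Lemma~\ref{technicolor}), so $d\pi_X$ is injective on its tangent space, and hence on $T W^-_\epsilon(\rho)\subset T W^{-0}_\epsilon(\rho)$.
\end{itemize}
With these corrections the argument matches the paper's proof of Proposition~\ref{musaraigne}; the only structural difference is that the paper isolates the transversality criterion as Lemma~\ref{hibou}, which it also reuses to prove Lemma~\ref{technicolor} and Lemma~\ref{res}.
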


To state our main result concerning the propagation of $\Lag$, which is an improvement on Theorem \ref{Cyril}, we need the following definition.
\begin{definition}\label{smoothproj}
If $X'$ is a $d$-dimensional submanifold of $T^*X$, we shall say that $X'$ \emph{projects smoothly} on $X$ if it is contained in a smooth section of $T^*X$. That is to say, $X'$ can be written in the form
\begin{equation}\label{lisse}
X'= \{(x, f(x)), x\in \Omega\},
\end{equation}
where $\Omega$ is an open subset of $X$, and $f$ is smooth.
\end{definition}
\begin{theoreme}\label{getup}
Suppose that $(X,g)$ satisfies Hypothesis \ref{Guepard} near infinity, Hypothesis \ref{Strong}, as well as Hypothesis \ref{sieste} on hyperbolicity, and that $\Lag$ is a Lagrangian manifold which satisfies Hypothesis \ref{chaise} of invariance, as well as Hypothesis \ref{puissanceplus}. Then there exists a $\tilde{\gamma}\bel 0$ such that, if we take $\ins\leq \tilde{\gamma}$ in (\ref{defgamma}), the following holds.

Let $\mathcal{O}\subset X$ be an open set which is small enough so that we may define local coordinates on it.
Then for any $N\in \mathbb{N}$ and any $\beta\in B^N$, $\Phi_\beta(\Lag)\cap (S^*\mathcal{O})$ is a Lagrangian manifold which may be projected smoothly on $X$. In particular, in local coordinates, the manifold $\Phi^N_\beta(\Lag)\cap T^*\mathcal{O}$ may be written in the form
\begin{equation*}
\Phi^N_\beta(\Lag)\cap T^*\mathcal{O}\equiv \{(x,\partial_x \varphi_{\beta, \mathcal{O}} (x)); x\in \mathcal{O}^\beta\},
\end{equation*}
where $\mathcal{O}^\beta$ is an open subset of $\mathcal{O}$.

Furthermore, for any $\ell\in \mathbb{N}$, there exists a $C_{\ell,\mathcal{O}}\bel 0$ such that for any $N\in \mathbb{N}$, $\beta\in B^N$, we have
\begin{equation}\label{bogota}
\|\partial_x \varphi_{\beta,\mathcal{O}}\|_{C^\ell}\leq C_{\ell,\mathcal{O}}.
\end{equation}
\end{theoreme}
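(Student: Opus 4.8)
The plan is to show that the truncated Lagrangians $\Phi^N_\beta(\Lag)$, which Theorem \ref{Cyril} already controls in the adapted coordinates near $K$, in fact never develop a caustic over any small coordinate chart $\mathcal{O}$, and then to upgrade the $C^0$ bounds of Theorem \ref{Cyril} to $C^\ell$ bounds on the generating function. The non-expansion Hypothesis \ref{puissanceplus} is what makes this possible: it says that $d\Phi^{-t}$ does not stretch the base-projection of $\Lag$, so the projection $\pi_X\colon \Phi^t(\Lag)\to X$ stays a local diffeomorphism for all $t\ge 0$. I would organise the argument as follows.

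First, the smooth-projection property. Fix $\rho\in\Phi^N_\beta(\Lag)\cap S^*\mathcal{O}$ and let $\rho_0\in\Lag$ be its preimage under $\Phi^{N}$ (more precisely, under the appropriate composition of $\Phi^1$'s respecting the word $\beta$; the truncation only removes points, so each point of $\Phi^N_\beta(\Lag)$ has a genuine $\Lag$-preimage). I claim $d\pi_X$ restricted to $T_\rho\Phi^N_\beta(\Lag)$ is injective, which by dimension count is the statement that the Lagrangian projects smoothly near $\rho$. Suppose $v\in T_\rho\Phi^N_\beta(\Lag)$ with $d\pi_X(v)=0$, i.e. $v$ is vertical. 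Pull back: $w:=d\Phi^{-N}(v)\in T_{\rho_0}\Lag$. The non-expansion hypothesis, differentiated, says that for a curve $s\mapsto\rho(s)$ in $\Lag$ through $\rho_0$, the base-distance $d_X(\Phi^{-t}\rho(s),\Phi^{-t}\rho(0))$ is nonincreasing in... — rather, it is $\le d_X(\rho(s),\rho(0))$ for all $t\ge 0$; this forces, in the limit, that the horizontal part of $d\Phi^{-t}(w)$ has norm bounded by $\|d\pi_X(w)\|$ for every $t\ge 0$. If $v$ is vertical then $d\pi_X(w)$ could a priori be nonzero, so I instead run the argument forward: a vertical tangent vector to an unstable Lagrangian near $K$ is, after one backward step, forced into the stable cone (this is the standard fact that the vertical/stable subbundle is invariant and attracting under $\Phi^{-1}$ for the graph-transform on Lagrangians), and an unstable Lagrangian satisfying Hypothesis \ref{puissanceplus} cannot contain a vector whose backward iterates all shrink in the base while the vector itself is nonzero — because on $\Lag$, backward propagation is base-nonexpanding, so a stable-cone vector in $T\Lag$ would have to be zero. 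More carefully: by Theorem \ref{Cyril}, for $N-\tau(\beta)\ge \tins$ the manifold is already $\ins$-unstable in the adapted coordinates near $K$, hence automatically projects smoothly there provided $\ins\le\tilde\gamma$ is chosen so that the $\ins$-unstable cone lies in the region where $\eta\mapsto$ base is a graph; the remaining "short" pieces $N-\tau(\beta)<\tins$ are handled by noting that they are images under at most $\tins$ steps of $\Phi^1$ of a piece of $\Lag$ sitting in $\mathcal{DE}_+$, where by Hypothesis \ref{chaise} and the structure near infinity the flow is a small perturbation of the Euclidean (or hyperbolic) geodesic flow and manifestly preserves smooth projectability; $\tilde\gamma$ is then shrunk if needed so the finitely many maps $\Phi^1$ (restricted to bounded sets) do not destroy it. This case division — "deep" times controlled by Theorem \ref{Cyril}, "shallow" times controlled by the explicit near-infinity dynamics together with the invariance hypothesis — is the backbone.

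Second, the quantitative $C^\ell$ bound \eqref{bogota}. Once we know $\Phi^N_\beta(\Lag)\cap T^*\mathcal{O}=\{(x,\partial_x\varphi_{\beta,\mathcal{O}}(x))\}$, the function $\partial_x\varphi_{\beta,\mathcal{O}}$ is obtained from $f_{b,\beta}$ of Theorem \ref{Cyril} by a change of coordinates (the composition of $\kappa_b^{-1}$ with the local chart on $\mathcal{O}$) that is independent of $N$ and $\beta$ and smooth on the relevant bounded set. Theorem \ref{Cyril} gives $\|\partial^\alpha f_{b,\beta}\|_{C^0}\le C_\alpha$ uniformly in $N,\beta$; composing with a fixed smooth diffeomorphism and using the Faà di Bruno / chain-rule estimates propagates these into uniform bounds on all derivatives of $\partial_x\varphi_{\beta,\mathcal{O}}$, giving \eqref{bogota}. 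For the "shallow" pieces one instead transports the uniform $C^\ell$ bounds on $\Lag$ itself (it is a fixed smooth Lagrangian, so $\varphi$ has fixed $C^\ell$ norms near infinity) through at most $\tins$ applications of $\Phi^1$; since $\tins$ is fixed and each $\Phi^1$ is smooth on bounded sets, the resulting constants are again uniform. One must also check the two descriptions agree on overlaps, which is immediate since both just describe the same manifold.

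\textbf{Main obstacle.} The crux is the smooth-projection claim — specifically, ruling out caustics of $\Phi^N_\beta(\Lag)$ over $\mathcal{O}$ uniformly in $N$. Theorem \ref{Cyril} only controls the picture in a neighbourhood of $K$ in the adapted coordinates, whereas $\mathcal{O}$ is an arbitrary small chart, possibly far from $K$; bridging these requires carefully tracking how a point of $\Phi^N_\beta(\Lag)$ over $\mathcal{O}$ relates to the part of the Lagrangian near $K$, and invoking Hypothesis \ref{puissanceplus} precisely at the step where the base-projection could fold. Making the interplay between the dynamical (hyperbolic, near-$K$) control and the geometric (nonpositive curvature, no conjugate points) control fully rigorous — in particular choosing $\tilde\gamma$ uniformly so that both the deep and shallow regimes cooperate — is where the real work lies; the $C^\ell$ estimate, by contrast, is a routine consequence of Theorem \ref{Cyril} once projectability is in hand.
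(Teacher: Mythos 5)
Your overall strategy — use Theorem \ref{Cyril} to control $\Phi^N_\beta(\Lag)$ in the adapted coordinates near $K$, deduce smooth projectability, and upgrade the bounds on $f_{b,\beta}$ to bounds on $\varphi_{\beta,\mathcal{O}}$ — is the same as the paper's, and you correctly identify Hypothesis \ref{puissanceplus} as the input that rules out caustics. But there are two genuine gaps.

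First, the smooth-projection step. You do not actually carry it out: the ``vertical vector forced into the stable cone'' argument is abandoned midway, the ``deep'' case rests on the unproven assertion that the $\ins$-unstable cone projects smoothly (this is the content of Lemma \ref{technicolor} in the paper, which itself requires an argument), and you explicitly flag the case of $\mathcal{O}$ far from $K$ as unresolved. The mechanism you are missing is geometric rather than linearized: on a simply connected manifold of nonpositive curvature the function $t\mapsto d_X^2(\Phi^t\rho_1,\Phi^t\rho_2)$ is convex (Lemma \ref{dompteur}/Corollary \ref{dompteur2}). If two points of $\Phi^N_\beta(\Lag)$ sat over the same base point, then by Hypothesis \ref{puissanceplus} their base distance stays bounded for all $t\le 0$ while vanishing at $t=0$; a bounded convex nonnegative function on $(-\infty,0]$ that vanishes at $0$ is identically zero, forcing the two points to coincide. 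The same convexity, fed into the transversality criterion of Lemma \ref{hibou}, rules out a vertical tangent vector. This is Lemmas \ref{res} and \ref{res2} in the paper, and it covers both your ``deep'' and ``shallow'' regimes as well as $\mathcal{O}$ far from $K$ uniformly.

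Second, and more seriously, the $C^\ell$ bounds \eqref{bogota} are not ``a routine consequence of Theorem \ref{Cyril} once projectability is in hand.'' The change of coordinates taking $\{(y^b,F_{b,\beta}(y^b))\}$ to $\{(x,\partial_x\varphi_{\beta,\mathcal{O}}(x))\}$ is \emph{not} a fixed diffeomorphism: one must invert the $\beta$-dependent map $\tilde\kappa_{\beta,b}^{-1}\colon y^b\mapsto \hat x_b(y^b,F_{b,\beta}(y^b))$, and controlling its $C^\ell$ norms requires a uniform \emph{lower} bound on $|\partial\tilde\kappa^{-1}_{\beta,b}|$ — this is Lemma \ref{rennes}, whose proof uses the $\ins$-small $C^1$ norm of $F_{b,\beta}$ and the transversality of the unstable foliation to the fibers. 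Likewise, passing from a chart $\mathcal{O}_b\subset\pi(V_b)$ near $K$ to an arbitrary $\mathcal{O}$ requires a second lower bound, (\ref{inverseborne}), on the Jacobian of the exit map $\tilde\Phi_{\beta,k}^{-1}$, which is again proved by the convexity argument (any folding of the base projection during the finitely many exit steps would contradict Corollary \ref{dompteur2} together with Hypothesis \ref{puissanceplus}). These two lower Jacobian bounds are where the actual work in the $C^\ell$ estimate lies; without them the chain rule only gives upper bounds, which is not enough.
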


\subsubsection{Quantum results}
\begin{theoreme}\label{ibrahim2}
Let $X$ be a manifold which is Euclidean or hyperbolic near infinty, and which satisfies Hypothesis \ref{Strong}. Suppose that the geodesic flow $(\Phi^t)$ satisfies
Hypothesis \ref{sieste} on hyperbolicity, Hypothesis \ref{Husserl} concerning the topological pressure. Let $E_h$ be a generalized eigenfunction of the form described in Hypothesis \ref{pied}, where $E_h^0$ is associated to a Lagrangian manifold $\Lag$ which satisfies
the invariance Hypothesis \ref{chaise} as well as
Hypothesis \ref{puissanceplus}.

Let $\mathcal{K}\subset X$ be compact. There exists $\varepsilon_\mathcal{K}\bel 0$ such that for any  $\chi\in C_c^\infty(X)$ with a support in $\mathcal{K}$ of diameter smaller than $\varepsilon_{\mathcal{K}}$, the following holds.
There exists a set $\tilde{\mathcal{B}}^\chi$ and a function $\tilde{n}: \tilde{\mathcal{B}}^\chi\rightarrow \mathbb{N}$ such that the number of elements in $\{\tilde{\beta}\in\tilde{\mathcal{B}}^\chi; \tilde{n}(\tilde{\beta})\leq N\}$ grows at most exponentially with $N$.

For any $r>0$, $\ell>0$, there exists $\tilde{M}_{r,\ell}>0$ such that
 we have as $h\rightarrow 0$:
\begin{equation}\label{qad2} \chi E_h(x) = \sum_{\substack{\tilde{\beta}\in \tilde{\mathcal{B}}^\chi\\
                \tilde{n}(\tilde{\beta})\leq \tilde{M}_{r,\ell}|\log h|}} e^{i \varphi_{\tilde{\beta}}(x)/h}
a_{\tilde{\beta}}(x;h) + R_{r,\ell},
\end{equation}
where $a_{\tilde{\beta}}\in S^{comp}(X)$ is a classical symbol in the sense of Definition \ref{defsymbclassique}, and each $\varphi_{\tilde{\beta}}$ is a smooth function defined in a neighbourhood of the support of
$a_{\tilde{\beta}}$. 
We have
 \begin{equation*}\|R_{r,\ell}\|_{C^\ell}=O(h^r).
\end{equation*}

 For any $\ell\in \mathbb{N}$, $\epsilon>0$, there exists $C_{\ell,\epsilon}$
such that
\begin{equation}\label{sheriff3}
\sum_{\substack{\tilde{\beta}\in \tilde{\mathcal{B}}^\chi\\ \tilde{n}(\tilde{\beta})=n}} \|a_{\tilde{\beta}}\|_{C^\ell} \leq C_{\ell,\epsilon}
e^{n(\mathcal{P}(1/2)+\epsilon)}.
\end{equation}

Furthermore, there exists a constant $C_1$ such that for all $\tilde{\beta}, \tilde{\beta}'\in \tilde{\mathcal{B}}^\chi$, we have
\begin{equation}\label{hurry2}
|\partial \varphi_{\tilde{\beta}} (x)- \partial \varphi_{\tilde{\beta'}}(x)|\geq C_1 e^{-\sqrt{b_0} \max(\tilde{n}(\tilde{\beta}),\tilde{n}(\tilde{\beta'}))}, 
\end{equation}
where $b_0$ is as in Hypothesis \ref{Strong}.
\end{theoreme}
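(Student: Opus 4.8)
The plan is to combine Theorem \ref{ibrahim}, which furnishes the decomposition of $E_h$ near the trapped set, with Theorem \ref{getup}, which says that the propagated Lagrangians project smoothly onto $X$; this lets us drop the passage to adapted coordinates used in \cite{Ing} and work directly with position-space WKB states. The genuinely new ingredient, \eqref{hurry2}, will then come out of the two-sided curvature bound in Hypothesis \ref{Strong}.

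\emph{Decomposition \eqref{qad2} and the symbol bound \eqref{sheriff3}.} I would first fix $\varepsilon_{\mathcal{K}}>0$ small enough (in terms of $r_i$ from \eqref{mininj} and of the size of the cover $(V_b)$) so that the support of any admissible $\chi$ is contained in an open set $\mathcal{O}$ carrying local coordinates to which Theorem \ref{getup} applies. As in the sketch following Theorem \ref{ibrahim}, one writes $\chi E_h = e^{iT/h}\,\chi\, e^{-iTP_h/h}(\hat\chi E_h^0) + O_{C^\ell}(h^\infty)$ for $T=\tilde M_{r,\ell}|\log h|$, using that $E_h$ is a generalized eigenfunction and that the outgoing part $E_h^1$ contributes only $O(h^\infty)$ after propagation, by \eqref{outgoing4} and the resolvent estimates of \cite{Ing}. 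The WKB analysis of \cite{Ing} expands $e^{-iTP_h/h}(\hat\chi E_h^0)$ as a sum, over words $\beta\in B^N$ with $N\le T$, of Lagrangian states carried by the truncated Lagrangians $\Phi^N_\beta(\Lag)$. Restricted to $T^*\mathcal{O}$, each such Lagrangian projects smoothly onto $X$ by Theorem \ref{getup}, so a stationary phase computation turns the corresponding term into $e^{i\varphi_{\beta,\mathcal O}/h}\,\tilde a_{\beta}+O(h^\infty)$ with $\tilde a_\beta\in S^{comp}(X)$ and $\partial\varphi_{\beta,\mathcal O}$ the smooth section of Theorem \ref{getup}. Relabelling the non-empty branches as $\tilde\beta$ and setting $\tilde n(\tilde\beta)\simeq N-\tau(\beta)$ (the number of near-trapped-set steps) gives \eqref{qad2}, with the stated exponential count since $|\mathcal{B}_n|$ is exponential. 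For \eqref{sheriff3} I note that the stationary phase expansion bounds $\|\tilde a_\beta\|_{C^\ell}$ by $\|a_{\beta,b}\|_{C^{\ell}}$ up to multiplicative constants and lower-order terms involving finitely many derivatives of the phase $\varphi_{\beta,\mathcal O}$, all of which are bounded uniformly in $N,\beta$ by Theorem \ref{Cyril} and \eqref{bogota}; summing \eqref{sheriff} over $\beta\in\mathcal{B}_n$ and over the finite index set $B_1$ yields \eqref{sheriff3}.

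\emph{The separation estimate \eqref{hurry2}.} Fix $x$ in the common domain of $\varphi_{\tilde\beta}$ and $\varphi_{\tilde\beta'}$, set $\xi=\partial\varphi_{\tilde\beta}(x)$, $\xi'=\partial\varphi_{\tilde\beta'}(x)$, and assume without loss that $m:=\tilde n(\tilde\beta)=\max(\tilde n(\tilde\beta),\tilde n(\tilde\beta'))$. Consider the backward geodesics $\gamma(s):=\pi_X\Phi^{-s}(x,\xi)$ and $\gamma'(s):=\pi_X\Phi^{-s}(x,\xi')$ for $s\ge0$, which issue from the common point $x$. Since $(X,g)$ has nonpositive curvature, $s\mapsto d_X(\gamma(s),\gamma'(s))$ is convex and vanishes at $s=0$, hence non-decreasing; since the curvature is $\ge-b_0$ and there are no conjugate points, Rauch's comparison theorem gives $d_X(\gamma(s),\gamma'(s))\le C e^{\sqrt{b_0}\,s}\,|\xi-\xi'|$ for all $s\ge0$ (and likewise for the distance in $S^*X$, up to constants). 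On the other hand, because $\tilde\beta\neq\tilde\beta'$ the arcs $\gamma,\gamma'$ are distinct, and by the way the truncated Lagrangians $\Phi^N_\beta(\Lag)$ and the cover $(V_b)$ are built in \cite{Ing} this distinctness is already witnessed at some integer backward time $s_1\le m$: the points $\Phi^{-s_1}(x,\xi)$ and $\Phi^{-s_1}(x,\xi')$ lie in distinct cover elements, which the cover can be arranged to force $d_X(\gamma(s_1),\gamma'(s_1))\ge c_1$ for a uniform $c_1>0$. Combining the two bounds, $c_1\le C e^{\sqrt{b_0}\,s_1}|\xi-\xi'|\le C e^{\sqrt{b_0}\,m}|\xi-\xi'|$, which is \eqref{hurry2} with $C_1=c_1/C$.

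\emph{Main obstacle.} The delicate point is the geometric input to the last step — extracting from the symbolic coding of \cite{Ing} that two distinct words $\tilde\beta\neq\tilde\beta'$ produce backward trajectories separated at a uniform scale no later than time $\max(\tilde n(\tilde\beta),\tilde n(\tilde\beta'))$. This forces a careful choice of the cover $(V_b)$ and of the identification of $\tilde n$ with $N-\tau(\beta)$, and is the only place where the lower curvature bound (not merely nonpositivity) enters. Everything else reduces to routine stationary phase together with the uniform Lagrangian bounds already supplied by Theorem \ref{Cyril} and Theorem \ref{getup}.
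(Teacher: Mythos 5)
Your treatment of the decomposition \eqref{qad2} and the symbol bound \eqref{sheriff3} matches the paper's approach: propagate $E_h$ for logarithmic time, invoke the WKB expansion of \cite{Ing}, use Theorem~\ref{getup} to ensure the propagated Lagrangians project smoothly onto $X$ so that Lemma~\ref{pacal} (the FIO-on-graph-Lagrangian lemma) applies without oscillatory variables, and inherit \eqref{sheriff3} from \eqref{sheriff} together with the uniform $C^\ell$ bounds \eqref{bogota}. The geometric core of your argument for \eqref{hurry2} is also the same as the paper's Lemma~\ref{gidelim}: two backward trajectories starting from the same $x$ with different covectors must separate at some integer time $\le\max(\tilde n,\tilde n')$ (otherwise both covectors would lie on the same truncated Lagrangian, contradicting that it projects smoothly by Theorem~\ref{getup}), and the lower curvature bound then controls how small $|\xi-\xi'|$ can be.

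However, there is a genuine gap in the step ``Relabelling the non-empty branches as $\tilde\beta$'' and the subsequent claim that ``because $\tilde\beta\neq\tilde\beta'$ the arcs $\gamma,\gamma'$ are distinct.'' If $\tilde\beta$ are simply relabelled words $\beta\in\mathcal{B}_n$, this is false: two distinct symbolic words can generate the same backward trajectory (the coding by the cover $(V_b)$ is not injective because of overlaps), in which case $\partial\varphi_{\tilde\beta}(x)=\partial\varphi_{\tilde\beta'}(x)$ and \eqref{hurry2} fails with a vanishing left-hand side. The paper's Lemma~\ref{gidelim} is in fact only a dichotomy (``either the phase gradients coincide, or the separation estimate holds''), and upgrading it to \eqref{hurry2} is exactly what the regrouping construction of section~\ref{regrouping} is for. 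One must pass to the universal cover $\tilde X$, fix a lift $\tilde{\mathcal{L}}_0$ of $\mathcal{L}_0$, observe (Lemma~\ref{dedikodu}) that for a sufficiently small support the backward trajectory of $\Phi_\beta(\mathcal{L}_0)\cap S^*\mathcal{O}$ stays within the injectivity radius of itself and hence singles out one sheet $\mathcal{O}_{j_\beta}$ above $\mathcal{O}$, and then declare $\beta\sim_{\mathcal{O}}\beta'$ when $j_\beta=j_{\beta'}$. Lemma~\ref{yandaki} then shows that $\tilde\beta\neq\tilde\beta'$ is \emph{equivalent} to the phase gradients being nowhere equal on the common domain, which is precisely the input your separation argument needs, and Lemma~\ref{classefinie} bounds the size of each equivalence class (needed both to form $a_{\tilde\beta}=\sum_{\beta\in\tilde\beta}a_{\beta,\chi}e^{i(\varphi_{\tilde\beta}-\varphi_{\beta})/h}$ as a legitimate symbol and to replace $n-\tau(\beta)$ by $\tilde n(\tilde\beta)$ in the exponent). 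Without this construction, the set $\tilde{\mathcal{B}}^\chi$, the function $\tilde n$, and the lower bound \eqref{hurry2} are not actually defined, and your final paragraph's assumption of ``distinctness already witnessed at some backward time'' is unproven. The small-support condition $\mathrm{diam}(\mathrm{supp}\,\chi)<\varepsilon_{\mathcal K}$ in the statement is exactly what makes the sheet-choosing argument of Lemma~\ref{dedikodu} work; your proposal uses this smallness only to apply Theorem~\ref{getup} in a single chart, which misses its real purpose.
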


The link between this theorem and the previous one is as follows: let $\chi\in C_c^\infty(X)$ be as in the theorem and $\mathcal{O}$ be a small open set such that
 $\spt(\chi)\cap \mathcal{O}\neq \emptyset$. As defined in section \ref{regrouping}, the set $\tilde{\mathcal{B}}^\chi$ is a set of equivalence classes of a subset of $\cup_{n\in\mathbb{N}}B^{n}$. 
 Let $\tilde{\beta}\in \tilde{\mathcal{B}}^\chi$ and let $\beta\in B^n$ be a representative of $\tilde{\beta}$. 
 We may consider $\Phi_\beta^{\tilde{n}}(\Lag)$. We have:

\begin{equation*}
\forall x\in \spt(\varphi_{\beta,\mathcal{O}})\cap \spt(\varphi_{\tilde{\beta}}),~~ \partial \varphi_{\beta,\mathcal{O}}(x)= \partial \varphi_{\tilde{\beta}}(x).
\end{equation*}

Therefore, locally, the gradient of the phases described in Theorem \ref{getup} and \ref{ibrahim2} are the same.

\begin{remarque}
Note that in Theorem \ref{ibrahim2}, the assumption that $\chi$ has a small support is important only to obtain (\ref{hurry2}). If $\chi$ is any function in $C^\infty_c(X)$, we may use Theorem \ref{ibrahim2} combined with a partition of unity argument to write $\chi E_h$ as a decomposition similar to (\ref{qad2}), with an estimate as in (\ref{sheriff3}). Actually, this will be done in a more direct way in the proof of Theorem \ref{ibrahim2} (see (\ref{faust}) and the discussion which follows).
\end{remarque}

As a corollary of Theorem \ref{ibrahim2}, we may deduce the following generalisation of Theorem \ref{linfini}.
\begin{corolaire}\label{clocks}
We make the same hypotheses as in Theorem \ref{ibrahim2}. Let $\ell\in \mathbb{N}$ and $\chi\in C_c^\infty(X)$. Then there exists $C_{\ell,\chi}\bel 0$ such that, for any $h\bel 0$, we have
\begin{equation*}
\|\chi E_h\|_{C^\ell}\leq \frac{C_{\ell,\chi}}{h^\ell}.
\end{equation*}
\end{corolaire}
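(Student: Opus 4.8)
The plan is to deduce Corollary \ref{clocks} directly from the decomposition \eqref{qad2} together with the symbol estimate \eqref{sheriff3}, summing the $C^\ell$ norms of the individual WKB pieces. First I would reduce to the case of a $\chi$ with small support: by a partition of unity $\chi=\sum_j\chi_j$ with each $\chi_j$ supported in a set of diameter $<\varepsilon_\mathcal{K}$ (finitely many $j$), it suffices to bound $\|\chi_j E_h\|_{C^\ell}$, so we may assume $\chi$ itself is as in Theorem \ref{ibrahim2}. Then apply \eqref{qad2} with $r=\ell$: we get $\chi E_h = \sum_{\tilde{\beta}} e^{i\varphi_{\tilde{\beta}}/h} a_{\tilde{\beta}} + R_{\ell,\ell}$ with $\|R_{\ell,\ell}\|_{C^\ell}=O(h^\ell)$, so the remainder already satisfies the desired bound and we only need to control the sum.

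Next I would estimate $\|e^{i\varphi_{\tilde{\beta}}/h} a_{\tilde{\beta}}\|_{C^\ell}$ for a single term. Differentiating $\ell$ times via the Leibniz and Faà di Bruno formulas, each derivative falling on the exponential produces a factor $h^{-1}\partial\varphi_{\tilde{\beta}}$ (and higher derivatives of $\varphi_{\tilde{\beta}}$, which are controlled). The key input is that the phases $\varphi_{\tilde{\beta}}$ have $C^\ell$ norms bounded \emph{uniformly in $\tilde{\beta}$}: this is exactly the content of \eqref{bogota} in Theorem \ref{getup} (via the identification $\partial\varphi_{\beta,\mathcal{O}}=\partial\varphi_{\tilde{\beta}}$ on the relevant supports explained after Theorem \ref{ibrahim2}). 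Hence $\|e^{i\varphi_{\tilde{\beta}}/h} a_{\tilde{\beta}}\|_{C^\ell} \leq C_\ell h^{-\ell} \sum_{k=0}^\ell \|a_{\tilde{\beta}}\|_{C^k} \leq C'_\ell h^{-\ell}\|a_{\tilde{\beta}}\|_{C^\ell}$, using also that $h\leq 1$.

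Finally I would sum over $\tilde{\beta}$ with $\tilde{n}(\tilde{\beta})\leq \tilde{M}_{\ell,\ell}|\log h|$. Grouping by the value $n=\tilde{n}(\tilde{\beta})$ and using \eqref{sheriff3} with a fixed $\epsilon$ chosen so that $\mathcal{P}(1/2)+\epsilon<0$ (possible by Hypothesis \ref{Husserl}), we get
\begin{equation*}
\sum_{\tilde{\beta}} \|e^{i\varphi_{\tilde{\beta}}/h} a_{\tilde{\beta}}\|_{C^\ell} \leq \frac{C'_\ell}{h^\ell}\sum_{n\geq 0} \sum_{\tilde{n}(\tilde{\beta})=n}\|a_{\tilde{\beta}}\|_{C^\ell} \leq \frac{C'_\ell}{h^\ell}\sum_{n\geq 0} C_{\ell,\epsilon} e^{n(\mathcal{P}(1/2)+\epsilon)} = \frac{C''_{\ell,\chi}}{h^\ell},
\end{equation*}
since the geometric series converges. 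Adding the $O(h^\ell)=O(h^{-\ell})$ remainder and summing over the finitely many $j$ in the partition of unity gives the claimed bound. The only genuine subtlety — and the step I expect to require the most care — is the uniform control of all the $C^\ell$ norms of $a_{\tilde{\beta}}$ and $\varphi_{\tilde{\beta}}$ against the single exponentially-decaying factor $e^{n(\mathcal{P}(1/2)+\epsilon)}$; once Theorems \ref{getup} and \ref{ibrahim2} are in hand this is bookkeeping, but it is essential that the phase estimate \eqref{bogota} is $N$-independent, which is precisely where the nonpositive-curvature and non-expansion hypotheses are used.
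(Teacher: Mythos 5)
Your proposal is correct and is exactly the argument the paper has in mind: the paper's own proof is the single sentence ``The corollary follows from the decomposition (\ref{qad2}) along with the estimates (\ref{sheriff3}) and (\ref{bogota}),'' and you have simply unpacked the bookkeeping (partition of unity to reduce to small support, Leibniz/Faà di Bruno with the uniform phase bound (\ref{bogota}), geometric summability of the amplitudes via (\ref{sheriff3}) and Hypothesis \ref{Husserl}). The paper's remark after Theorem \ref{ibrahim2} even flags the same partition-of-unity reduction (alternatively via (\ref{faust})) that you use, so there is no methodological divergence to report.
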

In particular, the sequence $(E_h)_h$ is uniformly bounded with respect to $h$ in $L^\infty_{loc}$.
\begin{proof}
The corollary follows from the decomposition (\ref{qad2}) along with the estimates (\ref{sheriff3}) and  (\ref{bogota}).
\end{proof}

\begin{corolaire} \label{blacksabbath3}
We make the same hypotheses as in Theorem \ref{ibrahim2}.
Let $\chi\in C_c^\infty(X)$ and let $\epsilon\bel 0$.
Then there exists a finite measure $\mu_\chi$ on $S^*X$ such that we have for any $\psi\in S^{comp}(S^*X)$
\begin{equation*}\langle Op_h(\psi) \chi E_h, \chi E_h\rangle = \int_{T^*X} \psi(x,\xi)
\mathrm{d}\mu_{\chi}(x,\xi) +
O\Big{(}h^{\min\big{(}1, \frac{|\mathcal{P}(1/2)|}{2|\sqrt{b_0}}-\epsilon \big{)}}\Big{)},
\end{equation*}
where $-b_0$ is the minimal value taken by the sectional curvature on $X$.

If $\mathcal{K}$ is a compact set and if the support of $\chi$ is in $\mathcal{K}$ and of diameter smaller than $\varepsilon_\mathcal{K}$, we have
\begin{equation*}\mathrm{d}\mu_{\chi}(x,\xi) = \sum_{\tilde{\beta}\in
\tilde{\mathcal{B}}^\chi} | (a_{\tilde{\beta}}^0)|^2(x) \delta_{\{\xi=\partial
\varphi_{\tilde{\beta}}(x)\}} \mathrm{d} x,
\end{equation*}
where $a_{\tilde{\beta}}$ is as in (\ref{qad2}), and $a_{\tilde{\beta}}^0$ is its principal symbol as defined in Definition \ref{defsymbclassique}.

Furthermore, if $\Lag$ satisfies Hypothesis \ref{puissanceplus2}, then for every $N\in \mathbb{N}$, there exists $c_N\bel 0$ such that for any $x\in X$ such that $\chi(x)=1$, we have
\begin{equation}\label{beatles}
\sum_{\substack{\tilde{\beta}\in
\tilde{\mathcal{B}}^\chi\\
\tilde{n}(\tilde{\beta})\geq N}}  |a_{\tilde{\beta}}^0|^2(x) \geq c_N.
\end{equation}

\end{corolaire}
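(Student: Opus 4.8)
The plan is to deduce the statement from the decomposition \eqref{qad2} of Theorem \ref{ibrahim2} together with the $C^\ell$-estimate \eqref{sheriff3} on the symbols, the bound \eqref{bogota} on the phases, and the separation estimate \eqref{hurry2}. First I would reduce to the case where $\chi$ has support of diameter smaller than $\varepsilon_{\mathcal{K}}$: for general $\chi\in C_c^\infty(X)$ one uses a partition of unity $\chi=\sum_j \chi_j$ with each $\chi_j$ supported in such a small set, applies the small-support case to each $\langle Op_h(\psi)\chi_i E_h,\chi_j E_h\rangle$, and checks that cross terms $i\neq j$ only contribute through phases that remain non-stationary (or are handled by the microlocalisation of $\psi$ on $S^*X$ and the fact that the Lagrangian states composing $\chi E_h$ live on $p=1$); the measure $\mu_\chi$ is then assembled from the $\mu_{\chi_j}$. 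In the small-support case, substitute \eqref{qad2} into $\langle Op_h(\psi)\chi E_h,\chi E_h\rangle$, expanding the product as a double sum over $\tilde\beta,\tilde\beta'$ of terms
\[
\int Op_h(\psi)\big(e^{i\varphi_{\tilde\beta}/h}a_{\tilde\beta}\big)\;\overline{e^{i\varphi_{\tilde\beta'}/h}a_{\tilde\beta'}}\,\mathrm{d}x,
\]
plus remainder contributions controlled by $\|R_{r,\ell}\|_{C^\ell}=O(h^r)$ and by the $C^\ell$ bound \eqref{sheriff3} summing the tail $\tilde n(\tilde\beta)>\tilde M_{r,\ell}|\log h|$ against the exponential growth of the index set (choosing $r$ large and $\ell$ large enough makes these $O(h^\infty)$, resp.\ $O(h^{r'})$ for any $r'$).

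Next I would analyse each diagonal and off-diagonal term by stationary/non-stationary phase. Using the standard action of a pseudodifferential operator on a Lagrangian state (Appendix \ref{pivoine}), the diagonal term $\tilde\beta=\tilde\beta'$ equals $\int \psi(x,\partial\varphi_{\tilde\beta}(x))\,|a^0_{\tilde\beta}|^2(x)\,\mathrm{d}x + O(h)$, which produces exactly the claimed density $\sum_{\tilde\beta}|a^0_{\tilde\beta}|^2(x)\,\delta_{\{\xi=\partial\varphi_{\tilde\beta}(x)\}}\,\mathrm{d}x$. For the off-diagonal terms the phase difference is $\varphi_{\tilde\beta}-\varphi_{\tilde\beta'}$, whose gradient is bounded below by $C_1 e^{-\sqrt{b_0}\max(\tilde n(\tilde\beta),\tilde n(\tilde\beta'))}$ by \eqref{hurry2}; non-stationary phase with this lower bound and the $C^\ell$ control \eqref{bogota} gives a gain of $h/(\text{phase gradient})$ per integration by parts, i.e.\ a factor $(h\, e^{\sqrt{b_0}\max(\tilde n,\tilde n')})^\ell$. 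Summing over all pairs with $\tilde n,\tilde n'\leq \tilde M|\log h|$, using \eqref{sheriff3} (so the $\ell^1$ mass of the symbols at level $n$ is $\lesssim e^{n(\mathcal P(1/2)+\epsilon)}$ with $\mathcal P(1/2)<0$), one optimises $\ell$ and the cutoff in $n$: this is the one place where the exponent $\min(1,|\mathcal P(1/2)|/(2\sqrt{b_0})-\epsilon)$ appears, coming from balancing the loss $e^{\sqrt{b_0}n\ell}$ against the gain $e^{n\mathcal P(1/2)}h^\ell$. Finiteness and positivity of $\mu_\chi$ as a measure follow from the absolute convergence of $\sum_{\tilde\beta}\|a^0_{\tilde\beta}\|_{C^0}^2$, again a consequence of \eqref{sheriff3} with $\ell=0$ and $\mathcal P(1/2)<0$.

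Finally, for the lower bound \eqref{beatles} under Hypothesis \ref{puissanceplus2}, I would argue as follows. Fix $x$ with $\chi(x)=1$ and fix $N$. The sum $\sum_{\tilde n(\tilde\beta)<N}|a^0_{\tilde\beta}|^2(x)$ is a finite sum of quantities controlled by the geometry; the point is to show that the \emph{total} mass $\sum_{\tilde\beta}|a^0_{\tilde\beta}|^2(x)$ is bounded below by a positive constant independent of $x$ on $\{\chi=1\}$, and then subtract the finitely many low-generation terms — here one must instead show directly that the high-generation tail carries positive mass. The idea is that the principal symbols $a^0_{\tilde\beta}$ are, up to the transport equation along the flow, transports of the principal symbol of the incoming Lagrangian state $E_h^0$; Hypothesis \ref{puissanceplus2} (the completeness/non-contraction property of $\Lag$) guarantees that every covector in $\mathcal{DE}_-$ whose backward orbit stays non-expanding relative to some point of $\Lag$ actually lies in $\Lag$, so that the union over $\tilde\beta$ of the Lagrangians $\Phi^{\tilde n}_\beta(\Lag)$ passing through $x$ covers a full neighbourhood of directions, and in particular arbitrarily high generations contribute non-trivially near $x$. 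Combined with the fact that $|a^0_{\tilde\beta}(x)|$ depends continuously on $x$ and is non-zero whenever $\Phi^{\tilde n}_\beta(\Lag)$ meets $S^*_x X$ (no caustic cancellation, since by Theorem \ref{getup} these Lagrangians project smoothly on $X$, hence carry a nowhere-vanishing half-density whenever non-empty), a compactness argument on $\{\chi=1\}$ yields the uniform lower bound $c_N>0$. I expect this last step — making precise that Hypothesis \ref{puissanceplus2} forces infinitely many generations to contribute positive symbol mass near any point of $\{\chi=1\}$, uniformly — to be the main obstacle; it requires unwinding how $\tilde{\mathcal B}^\chi$ and the regrouping of section \ref{regrouping} interact with $\Lag$, and is presumably where the bulk of the work in the full proof lies.
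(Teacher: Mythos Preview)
Your overall strategy coincides with the paper's: reduce to small support, expand using \eqref{qad2}, compute diagonal terms by the action of $Op_h(\psi)$ on Lagrangian states, and control off-diagonal terms via non-stationary phase combined with \eqref{hurry2} and \eqref{sheriff3}. Two points are worth noting.

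For the off-diagonal terms, the paper's bookkeeping is simpler than the optimisation you describe. Rather than tracking $(h\,e^{\sqrt{b_0}\max(\tilde n,\tilde n')})^\ell$ and optimising in $\ell$, the paper fixes the threshold $n_*=\big(\tfrac{1}{2\sqrt{b_0}}-\epsilon\big)|\log h|$: when both $\tilde n(\tilde\beta),\tilde n(\tilde\beta')\le n_*$, \eqref{hurry2} gives $|\partial\varphi_{\tilde\beta}-\partial\varphi_{\tilde\beta'}|\ge Ch^{1/2-\epsilon}$, and Proposition~\ref{nonstat} kills the term outright as $O(h^\infty)$; when at least one index exceeds $n_*$, one simply bounds the term by the product of the $C^0$ norms and sums using \eqref{sheriff3}, which is where the exponent $|\mathcal P(1/2)|/(2\sqrt{b_0})-\epsilon$ appears. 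This avoids any interplay between $\ell$ and $n$.

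For the lower bound \eqref{beatles}, your sketch has the right skeleton (non-vanishing of $a^0_{\tilde\beta}$ whenever the Lagrangian passes over $x$, then compactness), but the geometric input is not that the directions ``cover a full neighbourhood''---they do not in general. The actual mechanism is Corollary~\ref{musso}: under Hypothesis~\ref{puissanceplus2}, for every $x$ there are \emph{infinitely many} distinct $\xi_i\in S_x^*X$ with $(x,\xi_i)\in\Phi^\infty(\Lag)$. Since distinct $\xi_i$ correspond to distinct truncated sequences $\beta_i$ (otherwise two geodesics would coincide at time $0$ and stay within $r_i$ for all negative times, contradicting convexity) and each $\mathcal B_n^{\hat\chi}$ is finite, some $\beta_i$ must have $n_i\ge N$. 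One also uses that $|a_{\tilde\beta}^0|\ge |a_{\beta,\chi}^0|$ for any representative $\beta\in\tilde\beta$ (the phases absorbed in the regrouping of \S\ref{regrouping} are constants), so it suffices to find a single contributing $\beta$ at generation $\ge N$. Your anticipated obstacle is thus resolved by Corollary~\ref{musso}, not by a density-of-directions argument.
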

We will prove this corollary in section \ref{warpigs}.

Let us finally state a generalization of Theorem \ref{smallscale7}. We will prove it in section \ref{preuveequidpetite}.
\begin{theoreme}\label{nuque}
Let $X$ be a manifold which satisfies Hypothesis \ref{Guepard} near infinity, and which satisfies Hypothesis \ref{Strong}. Suppose that the geodesic flow $(\Phi^t)$ satisfies
Hypothesis \ref{sieste} on hyperbolicity, Hypothesis \ref{Husserl} concerning the topological pressure. Let $E_h$ be a generalized eigenfunction of the form described in Hypothesis \ref{pied}, where $E_h^0$ is associated to a Lagrangian manifold $\Lag$ which satisfies
the invariance Hypothesis \ref{chaise}, part (iii) of
Hypothesis \ref{Strong} and Hypotheses \ref{puissanceplus} and \ref{puissanceplus2}.

Let $\chi\in C_c^\infty(X)$. Then there exist constants $C, C_1, C_2\bel 0$ such that the following result holds. For all $x_0\in X$ such that $\chi(x_0)=1$, for any sequence $r_h$ such that $1\bel \bel r_h \bel C h$, we have for $h$ small enough:
\begin{equation*}
C_1 r_h^d\leq \int_{B(x_0,r_h)} |\Re E_h|^2(x) \mathrm{d}x\leq C_2 r_h^d.
\end{equation*}

In particular, for any bounded open set $U\subset X$, there exists $c(U)\bel 0$ and $h_U\bel 0$ such that for all $0<h<h_U$, we have
\begin{equation*}
\int_U |\Re E_h|^2\geq c(U).
\end{equation*}
\end{theoreme}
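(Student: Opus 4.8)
The plan is to deduce the lower bound $\int_{B(x_0,r_h)}|\Re E_h|^2 \geq C_1 r_h^d$ from the decomposition (\ref{qad2}) of Theorem \ref{ibrahim2} together with the key dispersion-type estimate (\ref{hurry2}) on the gradients of the phases, following the general strategy: the interferences between different terms $e^{i\varphi_{\tilde\beta}/h}a_{\tilde\beta}$ integrated against a ball of radius $r_h\geq Ch$ are controlled by a non-stationary-phase argument, so the diagonal terms dominate, and $\sum_{\tilde\beta}|a_{\tilde\beta}^0|^2$ is bounded below by (\ref{beatles}) (which is where Hypothesis \ref{puissanceplus2} enters). The upper bound is immediate from Corollary \ref{clocks} (with $\ell=0$), since $|\Re E_h|^2\leq |E_h|^2\leq C^2$ locally, so $\int_{B(x_0,r_h)}|\Re E_h|^2\leq C_2 r_h^d$; hence only the lower bound requires work.

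First I would fix a small open set $\mathcal{O}$ containing $x_0$ on which local coordinates are defined, and a $\chi\in C_c^\infty(X)$ equal to $1$ near $x_0$ with support in $\mathcal{O}$ of diameter $<\varepsilon_\mathcal{K}$; apply Theorem \ref{ibrahim2} with $r$ large (say $r=d+1$) and some $\ell$ to get $\chi E_h(x)=\sum_{\tilde n(\tilde\beta)\leq \tilde M|\log h|}e^{i\varphi_{\tilde\beta}(x)/h}a_{\tilde\beta}(x;h)+R_{r,\ell}$ with $\|R_{r,\ell}\|_{C^0}=O(h^{d+1})$, which contributes $O(h^{2(d+1)})|B(x_0,r_h)|=o(r_h^d)$ to the integral and can be discarded. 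Taking real parts, $\Re(\chi E_h)=\tfrac12\sum_{\tilde\beta}(e^{i\varphi_{\tilde\beta}/h}a_{\tilde\beta}+e^{-i\varphi_{\tilde\beta}/h}\overline{a_{\tilde\beta}})+\Re R$. Then I would expand $\int_{B(x_0,r_h)}|\Re(\chi E_h)|^2\,dx$ as a double sum over $\tilde\beta,\tilde\beta'$ of four types of terms: $\int e^{i(\varphi_{\tilde\beta}-\varphi_{\tilde\beta'})/h}a_{\tilde\beta}\overline{a_{\tilde\beta'}}$, $\int e^{i(\varphi_{\tilde\beta}+\varphi_{\tilde\beta'})/h}a_{\tilde\beta}a_{\tilde\beta'}$, and complex conjugates. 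The idea is that the diagonal $\tilde\beta=\tilde\beta'$ of the first type gives $\sum_{\tilde\beta}\int_{B(x_0,r_h)}|a_{\tilde\beta}|^2\geq \tfrac12\big(\sum_{\tilde n(\tilde\beta)\geq N}|a_{\tilde\beta}^0|^2(x_0)\big)|B(x_0,r_h)|+o(r_h^d)\geq \tfrac{c_N}{2}|B(x_0,r_h)|\geq c' r_h^d$ using continuity of the symbols, (\ref{beatles}), and $|B(x_0,r_h)|\sim \omega_d r_h^d$; the off-diagonal and "$++$" terms must be shown to be negligible by comparison, say $o(r_h^d)$ in total, or at least smaller than half the diagonal contribution for $h$ small. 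I would handle the cross terms by a non-stationary phase / integration by parts argument in $B(x_0,r_h)$: on this ball $|\partial\varphi_{\tilde\beta}-\partial\varphi_{\tilde\beta'}|\geq C_1 e^{-\sqrt{b_0}\max(\tilde n,\tilde n')}$ by (\ref{hurry2}) (after noting $\partial\varphi_{\tilde\beta}(x_0)=\partial\varphi_{\beta,\mathcal{O}}(x_0)$ and using $C^1$-control of phases from (\ref{bogota}) to extend this to all of $\mathcal{O}$ up to constants), so each integration by parts gains a factor $O\big(\tfrac{h}{r_h}e^{\sqrt{b_0}\max(\tilde n,\tilde n')}\big)$ at the cost of a boundary term of size $O(r_h^{d-1})\cdot O(\tfrac{h}{r_h}e^{\sqrt{b_0}\max})$; the "$++$" terms need instead that $|\partial\varphi_{\tilde\beta}+\partial\varphi_{\tilde\beta'}|$ is bounded below (true since these are nonvanishing covectors on $S^*X$, i.e.\ of norm $1$, and for $\varepsilon_\mathcal{K}$ small the directions can't be opposite — or one restricts $\chi E_h$ to avoid that; this should be arranged as in \cite{Ing} or absorbed into $E_h^1$).

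The main obstacle is summing these error bounds over the (exponentially many in $N$, hence super-polynomially many up to $\tilde n\leq \tilde M|\log h|$) pairs $\tilde\beta,\tilde\beta'$ while keeping the total below $c' r_h^d$: the number of pairs with $\max(\tilde n,\tilde n')=n$ grows like $e^{Cn}$, the symbol sum at level $n$ decays like $e^{n(\mathcal P(1/2)+\epsilon)}$ by (\ref{sheriff3}) with $\mathcal P(1/2)<0$, but the phase gain per integration by parts degrades like $e^{\sqrt{b_0}n}$, so one must do enough integrations by parts (about $d+1$, using the uniform $C^\ell$ bounds (\ref{bogota})) and balance $r_h\geq Ch$ with $C$ chosen large depending on $b_0$, $\mathcal P(1/2)$, $d$ and the implied constants, so that $\big(\tfrac{h}{r_h}\big)^{d+1}e^{(d+1)\sqrt{b_0}n}$ times $e^{Cn}e^{n(\mathcal P(1/2)+\epsilon)}$ times the volume factor $r_h^{d-1}$ is summable in $n$ and of total size $o(r_h^d)$ — this is essentially the same pressure-versus-expansion competition underlying Corollary \ref{blacksabbath3}, and in fact the cleanest route may be to quote Corollary \ref{blacksabbath3}: apply it with a symbol $\psi=\psi_h$ localized to $B(x_0,r_h)$ in the base variable (suitably rescaled) and carry the $r_h$-dependence through the remainder, reducing the whole statement to (\ref{beatles}) plus a uniform-in-$r_h$ version of the semiclassical-measure estimate. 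Once the errors are controlled, the final ball integral is bounded below by $C_1 r_h^d$, and the consequence $\int_U|\Re E_h|^2\geq c(U)$ follows by covering $U\cap\{\chi=1\}$ by a fixed ball (so taking $r_h$ constant).
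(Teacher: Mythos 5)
Your overall architecture (expand $|\Re(\chi E_h)|^2$ into a double sum via (\ref{qad2}), isolate the diagonal ``$\varphi_{\tilde\beta}-\varphi_{\tilde\beta}$'' contribution, lower-bound it via (\ref{beatles}), and dispose of the rest by non-stationary phase and the pressure-vs-expansion balance underlying Corollary \ref{blacksabbath3}) is genuinely the right shape, and the upper bound from Corollary \ref{clocks} is correct. But there is a real gap in your treatment of the ``$++$'' terms, and it is exactly the point the paper devotes a separate lemma to. You assert that $|\partial\varphi_{\tilde\beta}+\partial\varphi_{\tilde\beta'}|$ is bounded below ``since these are nonvanishing covectors on $S^*X$, i.e.\ of norm $1$, and for $\varepsilon_{\mathcal K}$ small the directions can't be opposite.'' This is false: the different sheets $\Phi_\beta^N(\Lag)$ all project onto $X$ and can perfectly well pass through $x_0$ with antipodal covectors --- think of a closed geodesic in the trapped set traversed in both directions, which is unavoidable because the distorted plane wave samples the whole unstable lamination. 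Shrinking the support of $\chi$ does nothing here, since the obstruction is two distinct covectors over the \emph{same} base point. Neither ``restricting $\chi E_h$'' nor ``absorbing into $E_h^1$'' makes sense: these antipodal pairs live inside the Lagrangian decomposition itself and cannot be pushed into the remainder.

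What the paper actually does is prove Lemma \ref{youhou}: for words of length $\geq N$, the set $\{y:\partial\varphi_{\tilde\beta}(y)=-\partial\varphi_{\tilde\beta'}(y)\}$ is locally a single geodesic arc through the critical point, because (by the convexity Corollary \ref{dompteur2}, Hypothesis \ref{Strong}(iii) and the hyperbolicity near $K$) two full trajectories that stay at bounded distance forwards and backwards must coincide. The integral of the ``$++$'' cross term is then estimated not by non-stationary phase in all $d$ variables but by stationary phase transverse to that geodesic, giving a factor $\bigl(h/r_h\bigr)^{(d-1)/2}$, which is small once $C$ is chosen large. The paper also organizes the expansion differently: it splits $E_h=S_1+S_2$ with $S_1$ the finitely many terms of depth $\leq N$ and $S_2$ the rest, bounds $\int_{B(x_0,r_h)}|S_2|^2$ from below by a variant (\ref{smallscale9}) of Proposition \ref{smallscale5}, and then only the single mixed term $2\Re S_1\,\Re S_2$ needs Lemma \ref{youhou}, avoiding the direct summation over all pairs that you were worried about. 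So: your plan captures the diagonal/off-diagonal and pressure-balance ideas but is missing the central dynamical input (Lemma \ref{youhou}) that makes the antipodal cross terms controllable, and you would also need the $S_1/S_2$ split (or an equivalent device) to keep the bookkeeping finite.
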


\section{Proofs of the results concerning the propagation of $\Lag$.}\label{preuveclassique}
\subsection{General facts concerning manifolds of nonpositive curvature}
\subsubsection{Growth of the distance between points on manifolds of nonpositive curvature.}
In this paragraph, we will recall a few facts about the way the distances $d_X$ and $d_{ad}$ between $\Phi^t(\rho_1)$ and $\Phi^t(\rho_2)$ depend on time. Remember that the distance $d_{ad}$ was introduced in section \ref{averse}, while $d_X$ was defined in section \ref{NewDistorted}.

The easiest bound simply comes from the compactness of $S^*X_0$, and the geodesic convexity of $X_0$: we may find a constant $\mu\bel 0$ such that for any $\rho,\rho'\in S^*X_0$ and for any $t\geq 0$ such that $\Phi^t(\rho),\Phi^t(\rho')\in S^*X_0$, we have
\begin{equation}\label{flowers}
d_{ad}(\Phi^t(\rho),\Phi^t(\rho'))\leq e^{\mu t} d_{ad}(\rho,\rho').
\end{equation}

In all the sequel, we will shrink the sets $(V_b)_{b\in B_2}$ appearing in Theorems \ref{ibrahim} and \ref{Cyril} so that the following holds: the sets $V_b$, $b\in B_1\cup B_2$ have a
diameter smaller than some constant $\epsilon_{max}$ such that
\begin{equation} \label{conditionepsilon}
\forall x,y\in X_0, ~~ d_{ad}(x,y)< \epsilon_{max} ~~ \implies  d_X(\Phi^{1}(x), \Phi^{1}(y)) < e^{-\mu} r_i,
\end{equation}
where $\mu$ is as in (\ref{flowers}).

\begin{remarque}\label{lenovo}
Actually, when we work close to the trapped set, using the bounds on the growth of Jacobi fields which may be found in \cite[III.B]{Eber}, one can show that there exists $C\bel 0$ such that if $\rho,\rho'\in S^*X$ and $T\geq 0$ are such that for all $t\in [0,T]$, we have $\Phi^t(\rho)\in \bigcup_{b\in B_1} V_b$ and $\Phi^t(\rho')\in \bigcup_{b\in B_1} V_b$, then we have
\begin{equation*}
d_{ad}(\Phi^t(\rho),\Phi^t(\rho'))\leq C e^{\sqrt{b_0} t} d_{X}(\rho,\rho'),
\end{equation*}
where $b_0$ is the lowest value taken by the sectional curvature on $X$ as in Hypothesis \ref{Strong}.
\end{remarque}

On the other hand, if two points are on the same local stable manifold, they will approach each other exponentially fast in the future. This is the point of the following classical lemma, whose proof can be found in \cite[Theorem 17.4.3 (3)]{KH}.

\begin{lemme}\label{fix}
There exists $C',\lambda\bel 0$ such that for any $\rho\in K$ and $\rho_1,\rho_2\in W_{\epsilon}^-(\rho)$ for some $\epsilon\bel 0$ small enough, we have
\begin{equation*}
d_{ad}(\Phi^t(\rho_1),\Phi^t(\rho_2))\leq C' e^{-\lambda t} d_{ad}(\rho_1,\rho_2).
\end{equation*}
\end{lemme}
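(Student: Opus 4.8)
The plan is to reduce the statement to the standard exponential contraction along strong stable manifolds of a hyperbolic set; alternatively one simply invokes \cite[Theorem 17.4.3]{KH}, as is done in the paper. Here is the argument one would write if one wanted it self-contained. First I would upgrade the infinitesimal hyperbolicity of Hypothesis \ref{sieste} to a neighbourhood of $K$: by continuity of $\rho\mapsto E^\pm_\rho$ and compactness of $K$, there is an open neighbourhood $U$ of $K$ in $S^*X$, a continuous field of stable cones $\rho\mapsto C^-_\rho\subset T_\rho(S^*X)$ (cones of a fixed small aperture around $E^-_{\rho'}$ for the nearest $\rho'\in K$), and constants $C_0\geq 1$ and $\lambda'\in(0,\lambda]$ such that whenever $\Phi^s(\rho)\in U$ for all $s\in[0,t]$ one has $d\Phi^t_\rho(C^-_\rho)\subset C^-_{\Phi^t\rho}$ and $\|d\Phi^t_\rho v\|_{g_{ad}}\leq C_0 e^{-\lambda' t}\|v\|_{g_{ad}}$ for every $v\in C^-_\rho$. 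This is the usual cone-field characterisation of hyperbolicity and follows from Hypothesis \ref{sieste} by a standard perturbation/continuity argument.

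Next I would use the stable manifold theorem for the hyperbolic set $K$ (again classical, and uniform over $\rho\in K$ by compactness): for $\epsilon$ small enough, $W^-_\epsilon(\rho)$ is a $C^1$ embedded disc through $\rho$, tangent to $E^-_\rho$ at $\rho$, whose tangent space $T_{\rho'}W^-_\epsilon(\rho)$ lies in $C^-_{\rho'}$ for every $\rho'\in W^-_\epsilon(\rho)$, and whose forward orbit $\{\Phi^s(\rho');\ s\geq 0,\ \rho'\in W^-_\epsilon(\rho)\}$ stays inside $U$ (it remains within $\epsilon$ of the orbit of $\rho$, which lies in $K\subset U$, by the very definition of $W^-_\epsilon(\rho)$). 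Moreover, uniformly in $\rho\in K$, $W^-_\epsilon(\rho)$ is a graph over a piece of $E^-_\rho$ with $C^1$-norm bounded by an absolute constant; hence there is a constant $C_1$ such that any $\rho_1,\rho_2\in W^-_\epsilon(\rho)$ can be joined inside $W^-_\epsilon(\rho)$ by a $C^1$ curve $c:[0,1]\to W^-_\epsilon(\rho)$ of $g_{ad}$-length at most $C_1\, d_{ad}(\rho_1,\rho_2)$.

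Finally I would propagate this curve. For $t\geq 0$ the curve $\Phi^t\circ c$ joins $\Phi^t(\rho_1)$ to $\Phi^t(\rho_2)$, and its velocity at $s$ equals $d\Phi^t_{c(s)}(\dot c(s))$ with $\dot c(s)\in T_{c(s)}W^-_\epsilon(\rho)\subset C^-_{c(s)}$; since the forward orbit of $c(s)$ stays in $U$, the cone estimate gives $\|d\Phi^t_{c(s)}\dot c(s)\|_{g_{ad}}\leq C_0 e^{-\lambda' t}\|\dot c(s)\|_{g_{ad}}$. Integrating over $s\in[0,1]$ and using $d_{ad}(\Phi^t(\rho_1),\Phi^t(\rho_2))\leq \mathrm{length}(\Phi^t\circ c)$ yields $d_{ad}(\Phi^t(\rho_1),\Phi^t(\rho_2))\leq C_0 C_1 e^{-\lambda' t}\, d_{ad}(\rho_1,\rho_2)$, which is the claim with $C'=C_0C_1$ and (renaming the constant) the required $\lambda=\lambda'>0$.

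The only genuinely non-trivial inputs are the stable cone field in a neighbourhood of $K$ and the uniform $C^1$-regularity of the local stable manifolds $W^-_\epsilon(\rho)$; both are classical consequences of Hypothesis \ref{sieste} together with the compactness of $K$, and are exactly what \cite{KH} packages, so in the paper it suffices to quote that reference rather than reproduce the cone-field/stable-manifold machinery. I expect the one place a careful writer would have to be slightly attentive is the comparison between the ambient distance $d_{ad}$ and the intrinsic distance along $W^-_\epsilon(\rho)$, i.e.\ the constant $C_1$ above, which is where the uniform graph bound on the local stable manifolds is used.
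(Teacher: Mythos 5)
The paper gives no proof of this lemma: it is stated as a ``classical lemma, whose proof can be found in \cite[Theorem 17.4.3 (3)]{KH}.'' Your preferred shortcut — quoting exactly that reference — therefore coincides with what the paper does.

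Your self-contained sketch has the right shape (propagate a short curve in $W^-_\epsilon(\rho)$, bound its velocity, integrate, compare intrinsic and ambient distance), but the cone-field step is stated with the wrong direction of invariance and would fail as written. Stable cone fields are \emph{backward} invariant, not forward invariant: in the linear model $d\Phi^1=\mathrm{diag}(e^\lambda,e^{-\lambda})$ on $E^+\oplus E^-$, the cone $\{|v^+|\leq\gamma|v^-|\}$ is carried by $d\Phi^1$ to $\{|v^+|\leq e^{2\lambda}\gamma|v^-|\}$, a strictly \emph{wider} cone; and a vector $v\in C^-_\rho$ with a tiny nonzero $E^+$-component eventually has that component dominate and $\|d\Phi^t_\rho v\|$ grows. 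So the assertion that $d\Phi^t_\rho(C^-_\rho)\subset C^-_{\Phi^t\rho}$ and $\|d\Phi^t_\rho v\|\leq C_0 e^{-\lambda' t}\|v\|$ for \emph{every} $v\in C^-_\rho$ and all $t\geq0$ (with the orbit in $U$) is false. What is true is the one-step contraction for $t\in[0,1]$, say; to iterate it one needs to know the propagated vector stays in the stable cone, and the cone field alone does not give that. The missing ingredient is precisely the invariance of the local stable lamination: $\Phi^1\big(W^-_\epsilon(\rho)\big)\subset W^-_\epsilon\big(\Phi^1(\rho)\big)$, so $d\Phi^t(\dot c(s))$ remains tangent to a local stable leaf, hence stays in the cone, and the one-step estimate can be iterated. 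With that substitution your argument is sound — but it is then essentially a reconstruction of the stable manifold theorem itself, which is why the paper simply cites \cite{KH}.
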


On a manifold of nonpositive curvature, the square of the distance $d_X$ between two points will be convex with respect to time, as long as these two points remain close enough to each other. This is the content of the following lemma, whose proof may be found in \cite[\S 4.8]{jost}.

\begin{lemme}\label{dompteur}
Let $\gamma_1,\gamma_2$ be two geodesics on a manifold of nonpositive sectional curvature $X$ which is simply connected. Then $t\mapsto d^2_X(\gamma(t),\gamma'(t))$ is a convex function.

Furthermore, if $\gamma_1$ and $\gamma_2$ are different geodesics, there exist $-\infty\leq t_1<t_2 \leq +\infty$ such that for all $t\in (t_1,t_2)$, $\gamma_1(t)$ and $\gamma_2(t)$ belong to a region of $X$ where sectional curvature is strictly negative, then on $(t_1,t_2)$, $t\mapsto d^2_X(\gamma_1(t),\gamma_2(t))$ is strictly convex.
\end{lemme}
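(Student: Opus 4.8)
The plan is to deduce the lemma from the classical second variation formula for the energy functional on a Cartan--Hadamard manifold. Since $X$ is complete, simply connected and of nonpositive curvature, the Cartan--Hadamard theorem guarantees that any two points are joined by a unique minimizing geodesic and that $d_X^2:X\times X\to\R$ is smooth. So, for each $t$ with $\gamma_1(t)\ne\gamma_2(t)$, let $c_t:[0,1]\to X$ be the geodesic with $c_t(0)=\gamma_1(t)$, $c_t(1)=\gamma_2(t)$, parametrized proportionally to arclength; then $\alpha(s,t):=c_t(s)$ is smooth in $(s,t)$, $s\mapsto\alpha(s,t)$ is a geodesic for each $t$, and
\[
f(t):=d_X^2\big(\gamma_1(t),\gamma_2(t)\big)=2\,E(c_t),\qquad E(c):=\tfrac12\int_0^1|\dot c(s)|^2\,ds .
\]
As $f$ is smooth on all of $\R$ (even at times where the geodesics cross), it suffices to show $f''\ge 0$ everywhere, with strict inequality under the hypothesis of the second part.

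First I would differentiate $E(c_t)$ in $t$. Integrating by parts in $s$, the interior term vanishes because $\nabla_s\partial_s\alpha=0$ (the $c_t$ are geodesics), so $\tfrac12 f'(t)=\langle\partial_t\alpha,\partial_s\alpha\rangle\big|_{s=0}^{s=1}$. Differentiating once more, the boundary terms $\langle\nabla_t\partial_t\alpha,\partial_s\alpha\rangle$ drop out because $t\mapsto\gamma_1(t)$ and $t\mapsto\gamma_2(t)$ are geodesics; using the symmetry $\nabla_t\partial_s\alpha=\nabla_s\partial_t\alpha$ and integrating by parts in $s$ once more, one lands on the index form of the Jacobi field $J(s):=\partial_t\alpha(s,t)$ along $c_t$ (it is a Jacobi field since $\alpha$ is a variation through $s$-geodesics):
\[
\tfrac12 f''(t)=\int_0^1\Big(|\nabla_s J|^2-\big\langle R(J,\dot c_t)\dot c_t,\,J\big\rangle\Big)\,ds,\qquad J(0)=\dot\gamma_1(t),\ \ J(1)=\dot\gamma_2(t).
\]
This is the second variation of energy with free (geodesic) endpoints.

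To finish, I invoke the sign of the curvature. Since $\langle R(J,\dot c_t)\dot c_t,J\rangle=K(J,\dot c_t)\big(|J|^2|\dot c_t|^2-\langle J,\dot c_t\rangle^2\big)$, with $K$ the sectional curvature and the bracket $\ge0$ by Cauchy--Schwarz, $K\le0$ makes the integrand pointwise $\ge0$; hence $f''\ge0$ and $f$ is convex (at a crossing time $t_0$, distinctness of $\gamma_1,\gamma_2$ forces $\dot\gamma_1(t_0)\ne\dot\gamma_2(t_0)$, and a direct Taylor expansion gives $f''(t_0)=2|\dot\gamma_1(t_0)-\dot\gamma_2(t_0)|^2>0$). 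For the strict statement, I use that the hypothesis ensures the connecting geodesic $c_t$ runs inside a region of strictly negative curvature for $t\in(t_1,t_2)$ (e.g. because that region is geodesically convex, so $c_t$ stays in it once its endpoints do). If $f''(t_0)=0$ for such a $t_0$, both nonnegative contributions to the integral must vanish: $\nabla_s J\equiv0$, so $J$ is parallel along $c_{t_0}$; and, since $K<0$ there, the curvature term forces $|J|^2|\dot c_{t_0}|^2=\langle J,\dot c_{t_0}\rangle^2$ for all $s$, i.e. $J$ is proportional to $\dot c_{t_0}$ everywhere. A parallel field proportional to the (parallel) field $\dot c_{t_0}$ is $J=\lambda\dot c_{t_0}$ with $\lambda$ constant, and $\lambda\ne0$ since $\gamma_1$ has nonzero constant speed. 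Then $\dot\gamma_1(t_0)=\lambda\dot c_{t_0}(0)$ and $\dot\gamma_2(t_0)=\lambda\dot c_{t_0}(1)$ are tangent to the complete geodesic carrying $c_{t_0}$, so by uniqueness of geodesics $\gamma_1$ and $\gamma_2$ have the same image as that geodesic, contradicting that they are different geodesics. Hence $f''>0$ on $(t_1,t_2)$.

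The variational identities above are entirely routine and can be quoted. The genuine content is (i) the Cartan--Hadamard input making $c_t$, hence $f$, smooth enough for the formula to apply, and (ii) in the second part, extracting $J=\lambda\dot c_{t_0}$ from $f''(t_0)=0$ and converting it into the statement that the two geodesics coincide. I expect step (ii), together with the hypothesis-bookkeeping needed to guarantee that the connecting geodesics really do stay in the strictly negatively curved region, to be the main (though minor) difficulty; this is presumably why the author simply refers to \cite[\S 4.8]{jost}.
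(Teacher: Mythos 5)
The paper supplies no proof of this lemma; it simply refers the reader to Jost [\S 4.8]. Your argument via the second variation of energy, writing $\tfrac12 f''(t)$ as the index form $\int_0^1\big(|\nabla_s J|^2-\langle R(J,\dot c_t)\dot c_t,J\rangle\big)\,ds$ of the Jacobi field $J$ along the connecting geodesic $c_t$ (with the boundary terms killed because $\gamma_1,\gamma_2$ are geodesics) and then exploiting $K\le 0$, resp.\ $K<0$, is correct and is precisely the standard textbook proof one finds in the cited reference.
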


From now on, we will denote by $\tilde{X}$ the universal covering of $X$.

\begin{corolaire}\label{dompteur2}
Suppose that $\rho_1, \rho_2\in X$ and $-\infty\leq t_1<t_2\leq + \infty$ are such that for all $t\in (t_1,t_2)$, we have $d_X(\Phi^t(\rho_1),\Phi^t(\rho_2))< r_i$. Then $(t_1,t_2)\ni t\mapsto d^2_X(\Phi^t(\rho_1),\Phi^t(\rho_2))$ is a convex function.

Furthermore, if $\rho_1$ and $\rho_2$ do not belong to the same geodesic and if for all $t\in (t_1,t_2)$, $\Phi^t(\rho_1)$ and $\Phi^t(\rho_2)$ belong to a region of $X$ where sectional curvature is strictly negative, then on $(t_1,t_2)$, $t\mapsto d^2_X(\Phi^t(\rho_1),\Phi^t(\rho_2))$ is strictly convex.

\end{corolaire}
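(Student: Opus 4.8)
The plan is to reduce Corollary \ref{dompteur2} to Lemma \ref{dompteur} by lifting the two geodesics to the universal cover $\tilde X$, where Lemma \ref{dompteur} applies directly since $\tilde X$ is simply connected and of nonpositive curvature. First I would fix a choice of lifts: pick $\tilde\rho_1$ a lift of $\rho_1$ to $T^*\tilde X$, and among the (countably many) lifts of $\rho_2$ choose the one, call it $\tilde\rho_2$, such that $d_{\tilde X}(\pi_{\tilde X}(\tilde\rho_1),\pi_{\tilde X}(\tilde\rho_2))$ equals $d_X(\pi_X(\rho_1),\pi_X(\rho_2))$ — such a lift exists because the covering projection is a local isometry and the distance in the base is the infimum over lifts. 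Write $\tilde\gamma_i(t):=\pi_{\tilde X}(\Phi^t(\tilde\rho_i))$ for the corresponding geodesics in $\tilde X$; note $\Phi^t$ on $\tilde X$ projects to $\Phi^t$ on $X$, so $\pi_{\tilde X}(\Phi^t(\tilde\rho_i))$ lifts $\pi_X(\Phi^t(\rho_i))$.

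The key point is that, under the hypothesis $d_X(\Phi^t(\rho_1),\Phi^t(\rho_2))<r_i$ for all $t\in(t_1,t_2)$, the distance in $\tilde X$ between the lifts stays equal to the distance in the base on this whole interval, i.e. $d_{\tilde X}(\tilde\gamma_1(t),\tilde\gamma_2(t))=d_X(\Phi^t(\rho_1),\Phi^t(\rho_2))$ for all $t\in(t_1,t_2)$. I would prove this by a connectedness/continuity argument: the set of $t$ where equality holds is nonempty (it contains the basepoint time, say $t=0$ if $0\in(t_1,t_2)$; otherwise one first establishes it at one point by the same minimizing-lift reasoning), and it is closed in $(t_1,t_2)$ by continuity of both sides. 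It is open because whenever $d_X(\Phi^t(\rho_1),\Phi^t(\rho_2))<r_i$, any two lifts realizing distances below $r_i$ must be related by the deck transformation already fixed — here one uses that $r_i\le$ injectivity radius everywhere (cf. (\ref{mininj})), so that the minimizing geodesic between $\tilde\gamma_1(t)$ and $\tilde\gamma_2(t)$ of length $<r_i$ is the unique lift of the minimizing geodesic downstairs and hence the two lifts cannot jump to a different deck-translate. Thus equality propagates along $(t_1,t_2)$.

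Once this identification of distances is in place, Lemma \ref{dompteur} applied to $\tilde\gamma_1,\tilde\gamma_2$ in $\tilde X$ gives that $t\mapsto d_{\tilde X}^2(\tilde\gamma_1(t),\tilde\gamma_2(t))$ is convex on all of $\mathbb{R}$, hence a fortiori on $(t_1,t_2)$; combined with the equality of distances this yields convexity of $t\mapsto d_X^2(\Phi^t(\rho_1),\Phi^t(\rho_2))$ on $(t_1,t_2)$. For the second assertion, note that if $\rho_1$ and $\rho_2$ do not lie on the same geodesic of $X$, then $\tilde\gamma_1$ and $\tilde\gamma_2$ are distinct geodesics of $\tilde X$ (two distinct lifted geodesics project to the same geodesic only if they differ by a deck transformation, but then they would stay at constant positive distance, contradicting — well, or one argues directly that they cannot coincide); and the region of strict negative curvature on $X$ pulls back to a region of strict negative curvature on $\tilde X$ containing $\tilde\gamma_i(t)$ for $t\in(t_1,t_2)$. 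The "furthermore" clause of Lemma \ref{dompteur} then gives strict convexity on $(t_1,t_2)$, which transfers down as before.

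The main obstacle is the openness step in the propagation argument — making precise that the minimizing lift does not switch deck transformations as $t$ varies — which is exactly where the hypothesis $d_X<r_i$ and the global positivity of the injectivity radius (\ref{mininj}) are used; the rest is a routine transfer of a simply-connected statement through the covering map. One should also be slightly careful that "$\rho_1,\rho_2\in X$" in the statement really means $\rho_1,\rho_2\in S^*X$ (points of the unit cotangent bundle, so that $\Phi^t$ acts), which I would note at the outset.
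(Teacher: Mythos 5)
Your proof is correct and follows essentially the same route as the paper's: lift the two orbits to the universal cover $\tilde X$ so that the distance upstairs matches the distance in $X$ on all of $(t_1,t_2)$, then invoke Lemma \ref{dompteur}. The only difference is that you flesh out (via an open-closed argument using $(\ref{mininj})$) the distance-preservation step that the paper merely asserts as a consequence of the exponential map being a covering on balls of radius $< r_i$, and you rightly flag the typographical slip that $\rho_1,\rho_2$ should be read as points of $S^*X$.
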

\begin{proof}
Using the fact that the exponential map is a covering map on balls of radius smaller than the injectivity radius, we can push back the curves $\Phi^t(\rho_i)$, $i=1,2$ to geodesics $\tilde{\Phi}^t(\tilde{\rho}_i)$ on the universal cover $\tilde{X}$ of $X$ such that $d_{\tilde{X}}(\tilde{\Phi}^t(\tilde{\rho}_1), \tilde{\Phi}^t(\tilde{\rho}_2)) = d_X(\Phi^t(\rho_1),\Phi^t(\rho_2))$. We may then conclude by Lemma \ref{dompteur}.
\end{proof}

\subsubsection{Covering of $X$ by $\bigcup_{t\bel 0}\Phi^t(\Lag)$}
The following lemma can be found in \cite[\S IV.A]{Eber}.
\begin{lemme}\label{piscine}
Let $X$ satisfy Hypothesis \ref{Strong}, and let $\rho\in T^*\tilde{X}$ and $x\in \tilde{X}$. Then there exists a unique $\xi\in T^*_x\tilde{X}$ such that $d_{\tilde{X}}(\Phi^t(\rho),\Phi^t(x,\xi))$ remains bounded as $t\rightarrow - \infty$. In particular, by Lemma \ref{dompteur}, $t\mapsto d_{\tilde{X}} (\Phi^t(\rho),\Phi^t(x,\xi))$ is non-decreasing.
\end{lemme}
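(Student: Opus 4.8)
The plan is to reduce everything to the Hadamard manifold $\tilde X$ — simply connected, complete, with nonpositive curvature by part (i) of Hypothesis \ref{Strong} — on which Lemma \ref{dompteur} applies to \emph{any} pair of geodesics with no "locally close" caveat, since $\tilde X$ has infinite injectivity radius. After rescaling we may assume $\rho$ is a unit covector; write $\gamma$ for the unit-speed geodesic $t\mapsto\pi_{\tilde X}(\Phi^t(\rho))$ and set $D:=d_{\tilde X}(x,\gamma(0))$. Existence of the asymptotic direction and its uniqueness will be handled separately, both ultimately through the convexity of $t\mapsto d_{\tilde X}^2(\cdot,\cdot)$ along pairs of geodesics.

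For \textbf{existence} I would use the classical limit-of-connecting-geodesics construction. For each $T>0$ let $\sigma_T$ be the unit-speed geodesic with $\sigma_T(0)=x$ and $\sigma_T(-\ell_T)=\gamma(-T)$, where $\ell_T:=d_{\tilde X}(x,\gamma(-T))$; the triangle inequality gives $|\ell_T-T|\le D$, so $\ell_T\to\infty$ and $\ell_T/T\to 1$. The initial velocities $\dot\sigma_T(0)$ live in a compact sphere, so along a sequence $T_k\to\infty$ they converge to a unit vector $v_\infty$; let $\sigma$ be the geodesic it generates and let $\xi\in T^*_x\tilde X$ be dual to $v_\infty$. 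To bound $d_{\tilde X}(\gamma(t),\sigma(t))$ for $t\le 0$, I would reparametrize the segments $\gamma|_{[-T_k,0]}$ and $\sigma_{T_k}|_{[-\ell_{T_k},0]}$ affinely over $[0,1]$: both start at $\gamma(-T_k)$, and at parameter $1$ they are at $\gamma(0)$ and $x$, at distance $D$; convexity of the distance along this pair of affinely parametrized geodesics (Lemma \ref{dompteur}) forces the distance at every parameter to be $\le D$. Evaluating at the parameter corresponding to time $t$ gives $d_{\tilde X}\big(\gamma(t),\sigma_{T_k}(t\ell_{T_k}/T_k)\big)\le D$, and letting $k\to\infty$ — using $\ell_{T_k}/T_k\to1$ and continuous dependence of geodesics on their initial data, so $\sigma_{T_k}(t\ell_{T_k}/T_k)\to\sigma(t)$ — yields $d_{\tilde X}(\gamma(t),\sigma(t))\le D$ for all $t\le0$. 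Thus $\xi$ has the required property.

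For \textbf{uniqueness}, suppose $\xi_1,\xi_2$ both work, with geodesics $\sigma_1,\sigma_2$ through $x$ at time $0$. Then $g(t):=d_{\tilde X}^2(\sigma_1(t),\sigma_2(t))$ is convex by Lemma \ref{dompteur}, nonnegative, vanishes at $t=0$, and is bounded on $(-\infty,0]$ (bound each $d_{\tilde X}(\sigma_i(t),\gamma(t))$ and use the triangle inequality). If $g(t_0)>0$ for some $t_0<0$, writing $t_0$ as the convex combination $(t_0/t)\,t+(1-t_0/t)\,0$ of $t<t_0$ and $0$ gives $g(t)\ge(t/t_0)\,g(t_0)\to+\infty$ as $t\to-\infty$, contradicting boundedness; hence $g\equiv0$ on $(-\infty,0]$, so $\sigma_1\equiv\sigma_2$ and $\xi_1=\xi_2$. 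The "in particular" assertion uses the same ingredient: $t\mapsto d_{\tilde X}^2(\Phi^t(\rho),\Phi^t(x,\xi))$ is convex on $\mathbb{R}$ (Lemma \ref{dompteur}) and, by the existence step, bounded on $(-\infty,0]$; a convex function on $\mathbb{R}$ bounded on a past half-line is non-decreasing (otherwise the monotonicity of its chord slopes would push it to $+\infty$ as $t\to-\infty$), so $d_{\tilde X}(\Phi^t(\rho),\Phi^t(x,\xi))$ is non-decreasing.

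The main obstacle is the existence step: one must be sure the limiting geodesic $\sigma$ really stays at bounded distance from $\gamma$ in the past, and the only delicate bookkeeping is matching the affine parameter $s\in[0,1]$ to geodesic time and controlling the mismatch through $\ell_T/T\to1$. Once that is set up, uniqueness and the monotonicity statement are soft consequences of convexity. (This is all classical; the statement and essentially this argument are in \cite[\S IV.A]{Eber}, which is why the lemma is merely quoted in the text.)
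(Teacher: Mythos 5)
Your proof is correct, and since the paper does not actually give an argument for Lemma~\ref{piscine} (it simply cites \cite[\S IV.A]{Eber}), your reconstruction is the canonical one: limit of connecting geodesics for existence, convexity of $d^2$ along pairs of geodesics for the distance bound and for uniqueness, and the elementary fact that a convex function bounded on a past half-line is non-decreasing. One small terminological slip worth fixing: in the existence step you invoke ``convexity of the distance'' while pointing to Lemma~\ref{dompteur}, which asserts convexity of $d^2$, not of $d$ (and $f^2$ convex does not imply $f$ convex); the conclusion you want still follows, since $g(s)=d^2(\alpha(s),\beta(s))$ convex with $g(0)=0$, $g(1)=D^2$ gives $g(s)\le sD^2\le D^2$ and hence $d\le D$, but the wording should match the lemma you are citing.
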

Consider the set
\begin{equation*}
\Phi^\infty(\Lag):= \bigcup_{t\bel 0} \Phi^t(\Lag).
\end{equation*}

The following corollary of Lemma \ref{piscine} says that if Hypothesis \ref{puissanceplus2} is satisfied, then $X$ is covered infinitely many times by $\Phi^\infty(\Lag)$. Note that this is the only place where we actually need Hypothesis \ref{puissanceplus2} to hold.
\begin{corolaire}\label{musso}
Let $X$ be a manifold satisfying Hypothesis \ref{Guepard} near infinity and Hypothesis \ref{Strong} of nonpositive curvature, and such that the trapped set $K$ is non-empty and satisfies Hypothesis \ref{sieste} of hyperbolicity.

Let $\Lag$ be a Lagrangian manifold which satisfies Hypotheses \ref{chaise}, \ref{puissanceplus} and \ref{puissanceplus2}.

Let $x\in X$. Then there exist infinitely many $\xi \in T^*_x X$ such that $(x,\xi)\in \Phi^\infty(\Lag)$.
\end{corolaire}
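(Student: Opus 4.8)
The plan is to lift everything to the universal cover $\tilde{X}$ and use Lemma \ref{piscine} to produce, for a given $x\in X$, enough directions $\xi$ at $x$ whose backward geodesic rays stay a bounded distance (in fact non-increasing distance) from the backward ray of some point of $\Lag$, and then invoke Hypothesis \ref{puissanceplus2} to conclude that the corresponding point of $S^*X$ actually lies on $\Lag$ (after suitable forward propagation, i.e.\ on $\Phi^\infty(\Lag)$). First I would fix $x\in X$ and a lift $\tilde{x}\in\tilde{X}$. I would also pick a point $\rho_0\in\Lag$; by the invariance Hypothesis \ref{chaise} and the structure of $\Lag$ near infinity, $\Lag$ contains points that escape directly in the backward direction, and in particular for every unit covector in a suitable cone there is an associated point of $\Lag\cap\mathcal{DE}_-$ obtained by going far enough toward infinity. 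Lifting such a point to $\tilde{\rho}\in T^*\tilde{X}$, Lemma \ref{piscine} gives a unique $\tilde\xi\in T^*_{\tilde x}\tilde X$ with $d_{\tilde X}(\Phi^t(\tilde\rho),\Phi^t(\tilde x,\tilde\xi))$ bounded, hence non-increasing, as $t\to-\infty$.

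Next I would push $(\tilde x,\tilde\xi)$ down to $\rho=(x,\xi)\in S^*X$. The non-increasing-distance property on $\tilde X$ descends to the bound $d_X(\Phi^t(\rho),\Phi^t(\rho'))\le d_X(\rho,\rho')$ for all $t\le 0$, where $\rho'$ is the image in $S^*X$ of $\tilde\rho$, which lies in $\Lag$ (or in some $\Phi^{t_0}(\Lag)$). Here one must be a little careful: the distance comparison on the cover is only literally $d_X$ downstairs as long as the geodesics stay within the injectivity radius of each other; but since the distance is non-increasing toward $t=-\infty$ and $\rho'\in\Lag$ with $\Lag\subset\mathcal{DE}_-$ far out, the relevant distances are controlled, and by choosing $\rho'$ far enough toward infinity (using Hypothesis \ref{chaise}) the whole backward trajectory pair stays within $r_i$ of each other, so Corollary \ref{dompteur2} applies and the descent is valid. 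Once we have $\rho\in\mathcal{DE}_-$ with this non-expansion property relative to a point of $\Lag$, Hypothesis \ref{puissanceplus2} forces $\rho\in\Lag$, hence $\rho\in\Phi^\infty(\Lag)$; if instead $\rho$ is only non-expanding relative to a point of $\Phi^{t_0}(\Lag)$ we push forward by $\Phi^{t_0}$ and still land in $\Phi^\infty(\Lag)$ after noting $\Phi^\infty(\Lag)$ is forward-invariant up to the parametrization.

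Finally, to get \emph{infinitely many} such $\xi$, I would observe that the construction above can be run with infinitely many distinct choices of the ``source at infinity'': because the injectivity radius goes to infinity at infinity (part (iii) of Hypothesis \ref{Strong}) and $X$ is noncompact, the set of asymptotic directions (points at infinity on $\tilde X$) attainable by backward geodesics from $\Lag$ is infinite, and Lemma \ref{piscine} sets up an injection from such asymptotic classes to covectors at $\tilde x$. Distinct asymptotic classes give distinct $\tilde\xi$ at $\tilde x$, hence distinct $\xi$ at $x$, each lying in $\Phi^\infty(\Lag)$. The main obstacle I anticipate is the bookkeeping in the second step: making sure that the non-increasing distance property from Lemma \ref{piscine} on $\tilde X$ really transfers to the exact hypothesis needed by \ref{puissanceplus2} downstairs — i.e.\ controlling that the minimizing geodesic realizing $d_X$ is the projection of the one on $\tilde X$ throughout the backward trajectory — and handling the case where the point of $\Lag$ we must use is reached only after some forward time $t_0$, so that one produces a point of $\Phi^\infty(\Lag)$ rather than of $\Lag$ itself.
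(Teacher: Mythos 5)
Your overall skeleton — lift to the universal cover, use Lemma \ref{piscine} to get a direction at $\tilde{x}$ asymptotic (backward) to a point of $\Lag$, descend and invoke Hypothesis \ref{puissanceplus2} — is the same as the paper's, and your handling of the two bookkeeping points (distance on the cover vs.\ on $X$, and getting a point of $\Phi^\infty(\Lag)$ rather than of $\Lag$) is fine. But the step that actually produces \emph{infinitely many} directions is where your proposal has a genuine gap.

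You assert that ``the set of asymptotic directions attainable by backward geodesics from $\Lag$ is infinite'' because the injectivity radius goes to infinity at infinity and $X$ is noncompact. That is not a valid justification, and it misidentifies the source of the infiniteness. For a single lift $\tilde{\Lag}\subset S^*\tilde{X}$ of $\Lag$, nothing forces the set of backward asymptotic classes in $\partial\tilde{X}$ to be infinite; in the model ``plane wave'' examples this set can be very small for a single lift. The paper's proof obtains infiniteness from the group action: since $K$ is non-empty and hyperbolic it contains a closed orbit, so $\pi_1(X)$ is non-trivial; by \cite[III.G]{Eber} every non-trivial element of the fundamental group of a manifold of nonpositive curvature has infinite order, so $\pi_1(X)$ is infinite. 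Hence a \emph{single} $\rho_0\in\Lag\cap\mathcal{DE}_-$ has infinitely many preimages $\tilde{\rho}_0^i$ in $S^*\tilde{X}$, and Lemma \ref{piscine} attaches to each of them a $\tilde{\xi}_i\in S^*_{\tilde{x}}\tilde{X}$. Part (iii) of Hypothesis \ref{Strong} (injectivity radius $\to\infty$) is then used not for existence but for \emph{distinctness} of the $\tilde{\xi}_i$: if two lifts $\tilde{\rho}_0^i,\tilde{\rho}_0^{i'}$ stayed at bounded mutual distance as $t\to-\infty$, one would obtain a family of non-contractible closed curves $\gamma_t$ in $X$ of uniformly bounded length whose base points go to infinity, and for each $t$ there must be an $s$ with $d_X(\gamma_t(0),\gamma_t(s))\geq r_i(\gamma_t(0))$, contradicting (iii). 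Your proposal neither produces the infinite family of lifts (you never use that $K$ is non-empty and hyperbolic, which is precisely what makes $\pi_1(X)$ infinite and is part of the statement you are proving), nor gives the non-contractible-loop argument that justifies their backward asymptotic classes being pairwise distinct. Without these two ingredients the claim of infinitely many $\xi$ does not follow.
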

\begin{proof}
Fix some $(x_0,\xi_0)=\rho_0\in \Lag\cap \mathcal{DE}_-$ and some $x\in X$.

Since $K$ is non-empty and hyperbolic, it contains at least a closed orbit. Therefore, $\pi_1(X)$ is not trivial, hence infinite. (Indeed, by \cite[III.G]{Eber}, any non trivial element in the fundamental group of a manifold of nonpositive curvature has infinite order.) Therefore,
$\rho_0$ has infinitely many pre-images by the projection $\widetilde{S^*X} \cong S^*\tilde{X}\rightarrow S^*X$.

 Let us denote them by $(\tilde{\rho}_0^i)_{i\in I}=(\tilde{x}_0^i,\tilde{\xi}_0^i)_{i\in I}$, and let us fix a point $\tilde{x}$ such that $\pi_{\tilde{X}\rightarrow X} (\tilde{x})=x$.
 
Thanks to Lemma \ref{piscine}, for each $i\in I$, there is a $\tilde{\xi}_i \in S^*_{\tilde{x}}\tilde{X}$ such that $d_{\tilde{X}}(\Phi^t(\tilde{\rho}^0_i),\Phi^t(\tilde{x},\tilde{\xi}_i))$ remains bounded as $t\rightarrow - \infty$. Let us write 
$(x,\xi_i) = \pi_{T^*\tilde{X}\rightarrow T^*X}(\tilde{x},\tilde{\xi}_i).$ (The $\xi_i$ and $\tilde{\xi}_i$ may of course be identified.)

For each $t\leq 0$, we have
\begin{equation*}
d_{X}(\Phi^t(\rho_0),\Phi^t(x,\xi_i)) \leq d_{\tilde{X}}(\Phi^t(\tilde{\rho}^i_0),\Phi^t(\tilde{x_i},\tilde{\xi}_i)),
\end{equation*} 
which is bounded as $t\rightarrow -\infty$ by assumption. Therefore, in the topology of the compactification of $X$ given by Hypothesis \ref{Guepard}, $\Phi^t(\rho_0)$ and $\Phi^t(x,\xi_i)$ are approaching each other as $t\rightarrow -\infty$. Since we took $\rho_0\in\mathcal{DE}_-$, we also have $\Phi^t(x,\xi_i)\in \mathcal{DE}_-$ if $-t$ is large enough. Therefore, by Hypothesis \ref{puissanceplus2}, $\Phi^{t}(x,\xi_i)\in \Lag$ if $-t$ is large enough. 

To prove the corollary, we have to check that the $\xi_i$ are all distinct, and hence that the $\tilde{\xi}_i$ are all distinct. It suffices to show that for $i\neq i'$, we have
\begin{equation}\label{luminaire}
d_{\tilde{X}}(\Phi^t(\rho_0^i),\Phi^t(\rho_0^{i'}))~~ \text{ is unbounded as } t\rightarrow - \infty.
\end{equation}
Suppose for contradiction that (\ref{luminaire}) were not true for some $i \neq i'$.

\begin{figure}
    \center
   \includegraphics[scale=0.6]{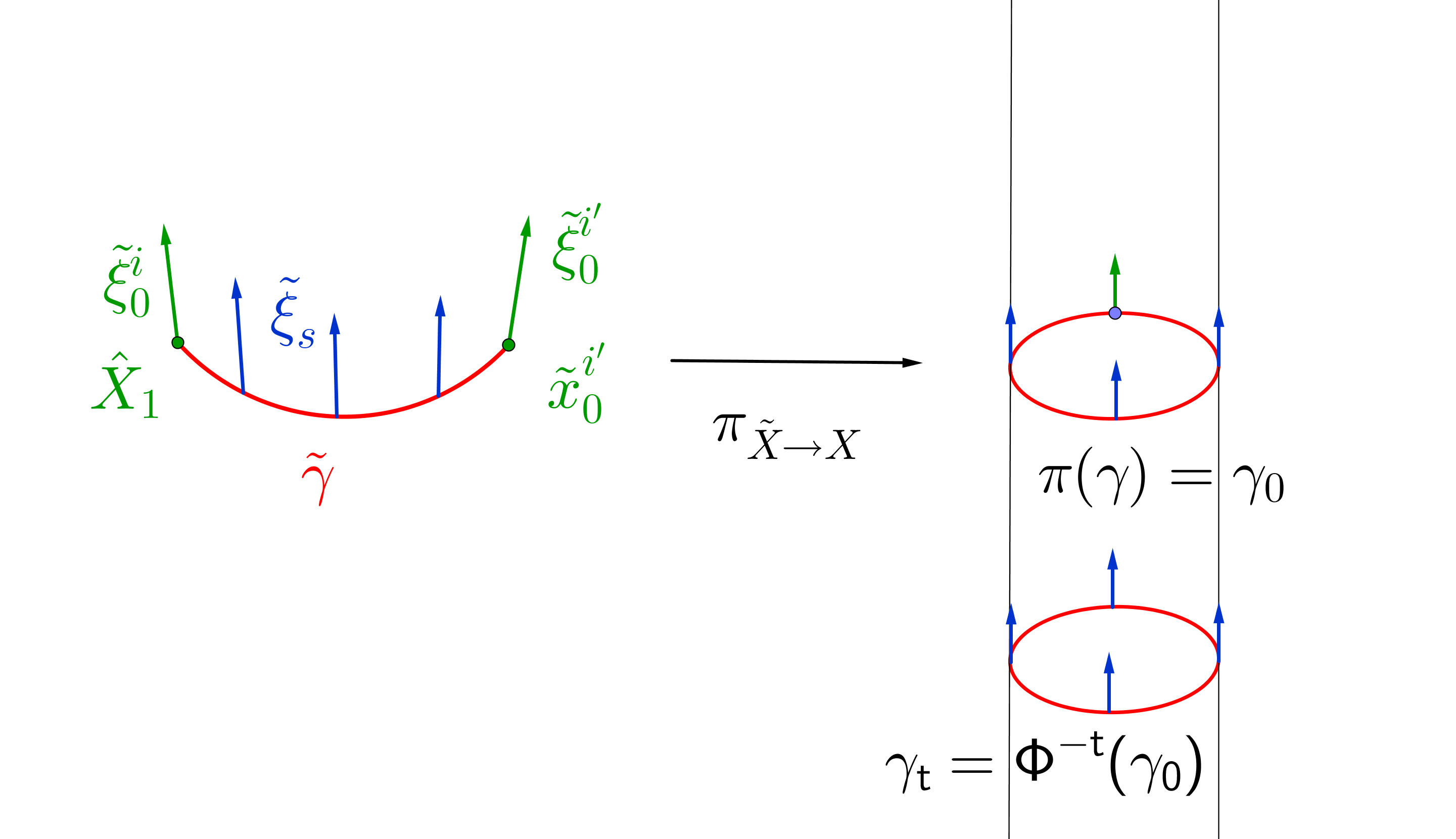}
   \caption{What happens if (\ref{luminaire}) is not true}\label{projectionrevetement}
\end{figure}

 Consider, as in Figure \ref{projectionrevetement}, the geodesic $\tilde{\gamma}$ in $\tilde{X}$ such that $\tilde{\gamma}(0)=\tilde{x}_0^i$ and $\tilde{\gamma}(1)=\tilde{x}_0^{i'}$.
  According to Lemma \ref{piscine}, for each $s\in [0,1]$, there exists a unique direction $\tilde{\xi}_s$ such that $d_{\tilde{X}}(\Phi^t(\rho_0^i), \Phi^t(\tilde{\gamma}(s),\tilde{\xi}_s))$ is bounded as $t\rightarrow -\infty$.
By the uniqueness part in Lemma \ref{piscine}, and because we assumed that (\ref{luminaire}) is not satisfied, we have that $\tilde{\xi}_1=\tilde{\xi}_0^{i'}$.

Let us consider $(\gamma(s),\xi_s):= \pi_{S^*\tilde{X}\rightarrow S^*X} (\tilde{\gamma}(s),\tilde{\xi}_s)$. We have $(\gamma(0),\xi_0)= \rho_0$ by assumption, and $\gamma(0)=\gamma(1)$. Furthermore,
\begin{equation*}
d_{X} (\Phi^t(\rho_0), \Phi^t(\gamma(s),\xi_s))\leq d_{\tilde{X}} (\Phi^t(\tilde{\gamma(0)},{\xi}_0), \Phi^t(\tilde{\gamma}(s),\tilde{\xi}_s)),
\end{equation*}
which is bounded as $t\rightarrow -\infty$ by construction of $\tilde{\xi}_s$.

Let us write $\gamma_t(s) = \Phi^t(\gamma(s),\xi_s)$. This curve has a length bounded independently of $t$ by what precedes, and its points are going to infinity as $t\rightarrow -\infty$.

Furthermore, for each $t\geq 0$, $\gamma_t$ is a closed curve which is not contractible, since it joins two different points in $\tilde{X}$. 

Therefore, for each $t\geq 0$, there must be a $s\in [0,1]$ such that $d_X(\gamma_t(0),\gamma_t(s))\geq r_i(\gamma_t(0))$. Indeed, if this were not true, $\gamma_t$ would be contained in a chart where the exponential map is injective, and it would be contractible.

But since $\gamma_t$ has a length bounded independently of $t$, and since $\gamma_t(0)$ goes to infinity with $t$, we obtain a contradiction with part (iii) of Hypothesis \ref{Strong}.
\end{proof}

\begin{remarque}\label{plaque}
The end of the proof actually shows that, if we denote by $\big{(}\tilde{\Lag}^j\big{)}_{j\in J}$ the different preimages of $\Lag$ by the projection $\pi_{S^*\tilde{X}\rightarrow S^*X}$, then we have for each $j\neq j'\in J$, $\tilde{\Lag}^j\cap\tilde{\Lag}^{j'}=\emptyset$.
\end{remarque}

\subsection{Smooth projection and Transversality}
\subsubsection{General criteria}
Recall that \emph{smoothly projecting} manifolds were introduced in Definition \ref{smoothproj}. We shall now rephrase the notion of projecting smoothly on $X$ in terms of transversality.

Let $L_1$, $L_2$ be two submanifolds of $S^*X$ and let $\rho\in L_1\cap L_2$. Recall that we say that $L_1$ and $L_2$ \emph{intersect transversally at $\rho$} if $T_\rho(S^*X)=T_\rho L_1\oplus T_\rho L_2$.

\begin{lemme}\label{smoothproj2}
Let $X'$ be a submanifold of $S^*X$
which can be written in the form
\begin{equation*}
X'= \{(x, f(x)), x\in \Omega\},
\end{equation*}
where $\Omega$ is an open subset of $X$, and where $f$ is $C^0$.

Suppose furthermore that for every $x\in \Omega$, the manifolds $X'$ and $S_{x}^*X$ intersect transversally at $(x,f(x))$. Then $X'$ projects smoothly on $X$.
\end{lemme}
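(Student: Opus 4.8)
The plan is to show that the continuous section $f$ is in fact smooth by exhibiting it locally as the section defined by a smooth Lagrangian (or merely submanifold) that meets the fibres transversally, and then invoking the implicit function theorem. First I would fix a point $x_0\in\Omega$ and work in local coordinates $(x,\xi)$ on $T^*X$ near $\rho_0:=(x_0,f(x_0))$, so that $S^*_xX$ is cut out by the smooth equations $x=\text{const}$ (together with $\|\xi\|_x=1$, which only matters for dimension count; here $X'$ is a $d$-dimensional submanifold of the $(2d-1)$-dimensional $S^*X$, and $S^*_xX$ is $(d-1)$-dimensional, so the transversality $T_\rho X'\oplus T_\rho S^*_xX=T_\rho(S^*X)$ is a dimension-balanced direct sum). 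The key point is that $X'$ is assumed to be a genuine \emph{submanifold} of $S^*X$, hence locally the zero set of a smooth submersion; I would not yet use that it is written as a graph of the continuous $f$, only that it is smooth as a set.

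The main step is then: consider the restriction to $X'$ of the base projection $\pi_X:S^*X\to X$, $\rho\mapsto x$. Its differential at $\rho_0$ has kernel $T_{\rho_0}X'\cap \ker d\pi_X(\rho_0)=T_{\rho_0}X'\cap T_{\rho_0}S^*_{x_0}X$, which is $\{0\}$ by the transversality hypothesis; since $X'$ and a neighbourhood of $x_0$ in $X$ have the same dimension $d$, $d(\pi_X|_{X'})(\rho_0)$ is an isomorphism. By the inverse function theorem, $\pi_X|_{X'}$ is a diffeomorphism from a neighbourhood $U$ of $\rho_0$ in $X'$ onto a neighbourhood $\Omega'$ of $x_0$ in $X$. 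Its inverse $\sigma:\Omega'\to U\subset S^*X$ is therefore a smooth map. But $\sigma$ is a section of $\pi_X$ over $\Omega'$ whose image lies in $X'$, so by the graph description $\sigma(x)=(x,f(x))$, i.e. $f=\mathrm{pr}_\xi\circ\sigma$ on $\Omega'$, which is smooth. Since $x_0\in\Omega$ was arbitrary, $f$ is smooth on all of $\Omega$, which is exactly the assertion that $X'$ projects smoothly on $X$ in the sense of Definition 3.6.

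The only genuine subtlety — the step I would be most careful about — is the consistency between the abstract submanifold structure of $X'$ and its graph description: a priori the inverse function theorem only gives that $\pi_X|_{X'}$ is \emph{locally} a diffeomorphism, and one must check that the resulting smooth local section agrees with the given continuous $f$. This is immediate, though, because any section $\sigma$ of $\pi_X$ with image in $X'$ must satisfy $\sigma(x)\in X'\cap \pi_X^{-1}(x)=\{(x,f(x))\}$, the fibre of the graph over $x$ being a single point. One should also note that the hypothesis on $f$ being merely $C^0$ is used only to know that $X'$ meets each fibre in exactly one point (so that $f$ is well-defined and the local smooth section is globally consistent); the smoothness genuinely comes from the submanifold-plus-transversality hypotheses via the inverse function theorem, and no curvature or dynamical input is needed here.
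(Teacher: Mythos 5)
Your proof is correct and follows essentially the same route as the paper: transversality of $X'$ with the fibres $S_x^*X$ makes $d(\pi_X|_{X'})$ injective, hence (by dimension count) an isomorphism, and the inverse function theorem then produces a smooth local section which, being a section with image in the graph, must coincide with $x\mapsto(x,f(x))$, so $f$ is smooth. The paper packages the same argument slightly more compactly by naming $\kappa:x\mapsto(x,f(x))$ and $\pi:(x,f(x))\mapsto x$ with $\pi\circ\kappa=\mathrm{Id}$ and deducing smoothness of $\kappa$ from invertibility of $d\pi$; your version spells out the dimension bookkeeping and the uniqueness-of-the-section point, which the paper leaves implicit.
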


\begin{proof}
Let us write $\kappa :X\ni x \mapsto (x,f(x))\in X'$ and $\pi : X'\ni (x,f(x))\mapsto x\in X$. We have of course $\pi \circ \kappa = Id$. The transversality assumption ensures us that $d\pi$ is invertible wherever it is defined. By the inverse function theorem, $\kappa$ is $C^\infty$. Therefore, since $f$ is the second component of $\kappa$, it is also smooth. 
\end{proof}
The following lemma gives us a criterion to check that two manifolds intersect transversally, which we shall use several times.

\begin{lemme}\label{hibou}
Let $L_1$, $L_2$ be two submanifolds of $S^*X$ with dimension $d$ and $d-1$ respectively, and let $\rho\in L_1\cap L_2$. We assume that (i) is satisfied, as well as (ii) or (ii'), where the assumptions (i), (ii) and (ii') are as follows.

(i) For all $\rho_1\in L_1$, the function $t\mapsto d_X (\Phi^{-t}(\rho_1),\Phi^{-t}(\rho))$ is non-increasing for $t\geq 0$.

(ii) For all $\rho_2\in L_2$, we have $d_X(\rho_2,\rho)=0$ (that is to say, $L_2\subset S_x^*X$ for some $x\in X$).

(ii') Set $x=\pi_{S^*X\rightarrow X}(\rho)$. The manifolds $L_1$ and $S^*_x X$ are transverse at $\rho$. Furthermore, there exists $\nu \bel 0$, $C\bel 0$ and $\epsilon\bel 0$ such that for all $\rho_2\in L_2$ and for all $t\leq 0$ such that $d_{ad}(\Phi^t(\rho),\Phi^t(\rho_2))\leq \epsilon$, we have:
\begin{equation*}
d_X(\Phi^{t}(\rho_2),\Phi^{t}(\rho))\geq C e^{-\nu t} d_X(\rho_2,\rho).
\end{equation*}

Then $L_1$ and $L_2$ intersect transversally at $\rho$.
\end{lemme}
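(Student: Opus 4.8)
\textit{Proof plan.}

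The plan is to begin with the dimension count: since $\dim S^*X = 2d-1 = \dim L_1 + \dim L_2$, the manifolds $L_1$ and $L_2$ are transverse at $\rho$ if and only if $T_\rho L_1 \cap T_\rho L_2 = \{0\}$. So it suffices to show that every $v \in T_\rho L_1 \cap T_\rho L_2$ vanishes. The key tool is the Jacobi-field description of the linearized geodesic flow: for $v \in T_\rho S^*X$ set $J_v(t) := d\pi_{S^*X\to X}\big(d\Phi^{-t}(v)\big)$, so that $t\mapsto J_v(t)$ is the Jacobi field along the geodesic $t\mapsto \pi(\Phi^{-t}(\rho))$. In the horizontal/vertical splitting of $T_\rho S^*X$, the vector $v$ corresponds to the pair $\big(J_v(0),\tfrac{D}{dt}J_v(0)\big)$, hence $v$ is determined by it; in particular $v$ is vertical (i.e.\ tangent to the fibre $S_{\pi(\rho)}^*X$) exactly when $J_v(0) = d\pi(v) = 0$, and if both $J_v(0) = 0$ and $\tfrac{D}{dt}J_v(0) = 0$ then $v = 0$.

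Next I would translate the hypotheses into infinitesimal statements about $J_v$. Because $v \in T_\rho L_1 \cap T_\rho L_2$, one can choose smooth curves $s\mapsto \rho_1(s)\in L_1$ and $s\mapsto\rho_2(s)\in L_2$ with $\rho_i(0) = \rho$ and $\dot\rho_i(0) = v$. For any smooth curve $c$ with $c(0) = y$ one has $d_X(c(s),y) = \|c'(0)\|\,|s| + O(s^2)$; applying this to $c(s) = \pi(\Phi^{-t}(\rho_i(s)))$, which has velocity $J_v(t)$ at $s = 0$, and dividing the inequalities in the hypotheses by $s > 0$ before letting $s\to 0^+$, I obtain: from (i), $t\mapsto \|J_v(t)\|$ is non-increasing on $[0,\infty)$, hence $\|J_v(t)\| \le \|J_v(0)\| = \|d\pi(v)\|$ for all $t\ge 0$; and from (ii'), writing $t = -\tau$ with $\tau\ge 0$ so that $e^{-\nu t} = e^{\nu\tau}$, one gets $\|J_v(\tau)\| \ge C e^{\nu\tau}\|d\pi(v)\|$ for all $\tau \ge 0$ (for each fixed $\tau$ the constraint $d_{ad}(\Phi^{-\tau}(\rho),\Phi^{-\tau}(\rho_2(s)))\le \epsilon$ holds once $s$ is small enough, so it may be arranged before sending $\tau\to\infty$).

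Then I would conclude by a dichotomy on whether $d\pi(v) = 0$. If $d\pi(v)\neq 0$ — which under hypothesis (ii) is impossible, since there $T_\rho L_2 \subset T_\rho S_{\pi(\rho)}^*X$, so this case occurs only under (ii') — the two inequalities above give $C e^{\nu\tau}\|d\pi(v)\| \le \|J_v(\tau)\| \le \|d\pi(v)\|$ for all $\tau\ge 0$, forcing $C e^{\nu\tau}\le 1$ for all $\tau\ge 0$, which contradicts $\nu > 0$. If $d\pi(v) = 0$, then $v$ is vertical: under (ii') this means $v \in T_\rho L_1 \cap T_\rho S_{\pi(\rho)}^*X$, which is $\{0\}$ because $L_1$ and $S_{\pi(\rho)}^*X$ are assumed transverse at $\rho$ (again by the dimension count), so $v = 0$; under (ii), $d\pi(v) = 0$ gives $J_v(0) = 0$, and the monotonicity from (i) forces $J_v(t) = 0$ for all $t\ge 0$, whence also $\tfrac{D}{dt}J_v(0) = 0$, so the pair $\big(J_v(0),\tfrac{D}{dt}J_v(0)\big)$ vanishes and therefore $v = 0$. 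In all cases $v = 0$, proving transversality.

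I expect the only genuinely delicate point to be the passage from the finite-distance hypotheses to the pointwise Jacobi-field inequalities, namely the expansion $d_X\big(\pi(\Phi^{-t}(\rho_i(s))),\pi(\Phi^{-t}(\rho))\big) = \|J_v(t)\|\,s + o(s)$ as $s\to 0^+$ for each fixed $t\ge 0$, together with the observation that the $\epsilon$-closeness condition in (ii') can be secured for small $s$ before the limit $\tau\to\infty$ is taken. Everything else reduces to the standard identification of $d\Phi^t$ with Jacobi-field transport and to elementary dimension counting.
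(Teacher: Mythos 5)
Your proof is correct and takes a genuinely different, tidier route than the paper's. The paper argues by contradiction with sequences $\rho_1^n\in L_1$, $\rho_2^n\in L_2$ approximating a nonzero $v\in T_\rho L_1\cap T_\rho L_2$, with $d_{ad}(\rho_i^n,\rho)=1/n$ but $d_{ad}(\rho_1^n,\rho_2^n)=o(1/n)$; it then controls $d_X(\Phi^{-t}(\rho_1^n),\Phi^{-t}(\rho_2^n))$ by the uniform expansion bound (\ref{flowers}), and in case (ii) performs a short-time second-derivative argument to get the distance to grow linearly in $t$ before comparing at a fixed $t_0$. You instead pass to the infinitesimal level immediately via the Jacobi field $J_v(t)=d\pi(d\Phi^{-t}v)$ and the horizontal-vertical identification of $T_\rho(T^*X)$, so the two test curves $\rho_1(s)\subset L_1$ and $\rho_2(s)\subset L_2$ collapse onto the same vector $v$ in the limit. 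This eliminates both the $e^{\mu t}o(1/n)$ bookkeeping and the second-derivative step: under (ii) you read off $J_v(0)=d\pi(v)=0$ directly, monotonicity from (i) then forces $J_v\equiv 0$ on $[0,\infty)$, and uniqueness of Jacobi fields gives $v=0$; under (ii') your pair of inequalities $Ce^{\nu\tau}\|d\pi(v)\|\leq\|J_v(\tau)\|\leq\|d\pi(v)\|$ is exactly the linearized form of the paper's comparison of $d_X(\Phi^{-t}(\rho),\Phi^{-t}(\rho_1^n))$ against $d_X(\rho,\rho_1^n)$. The trade-off is that you invoke the (standard but not elementary) fact that $d\Phi^t$ is represented by Jacobi-field transport under the connection splitting of $T_\rho(T^*X)$, machinery the paper's proof avoids by arguing purely with distances between points.
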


\begin{proof}

Suppose for contradiction that $L_1$ and $L_2$ do not intersect transversally at $\rho$. This means that there exists $v\in T_\rho(L_1)\cap T_\rho(L_2)$, $v\neq 0$.

Therefore, there exist $\rho_1^n\in L_1$, $\rho_2^n\in L_2$ such that $d_{ad}(\rho_1^n,\rho)=1/n$ and $d_{ad}(\rho_2^n,\rho)=1/n$ but $d_{ad}(\rho_1^n,\rho_2^n)=o(1/n)$, as in as in Figure \ref{nontrans}.

\begin{figure}
    \center
   \includegraphics[scale=0.2]{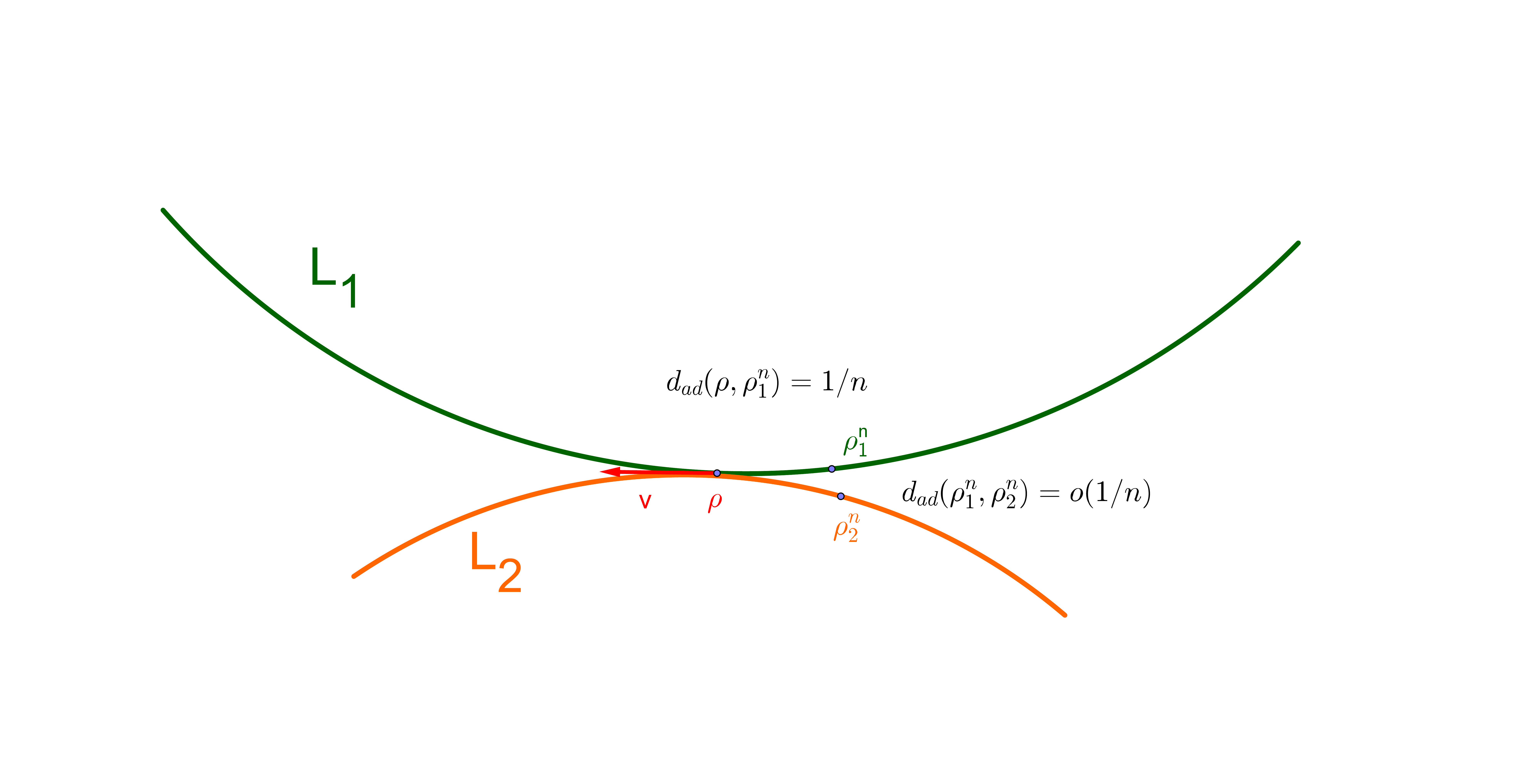}
   \caption{Two non-transverse manifolds}\label{nontrans}
\end{figure}
We will find a contradiction by finding a time $t\bel 0$ such that 
\begin{equation}\label{nespresso}
d_X(\Phi^{-t}(\rho),\Phi^{-t}(\rho_1^n))\bel d_X(\rho,\rho_1^n),
\end{equation} 
which will contradict assumption (i).

We have 
\begin{equation}\label{heureux}
d_X(\Phi^{-t}(\rho_1^n),\Phi^{-t}(\rho_2^n))\leq d_{ad}(\Phi^{-t}(\rho_1^n),\Phi^{-t}(\rho_2^n)) \leq e^{\mu t} o(1/n).
\end{equation}

Suppose first that assumption (ii) is satisfied, and write $\rho=(x,\xi)$ and $\rho_2^n=(x,\xi_2^n)$. Then, since $d_{ad}(\rho_2^n,\rho)= 1/n$ and since the distances induced by all metrics on $S^*X$ are equivalent in a neighbourhood of $\rho$, there exists a constant $c\bel 0$ such that $\|\xi-\xi_2^n\|\geq c/n$ for all $n$ large enough.

Therefore, $ \frac{d}{dt}\Big{|}_{t=0} d_X(\Phi^{-t}(\rho),\Phi^{-t}(\rho_2^n))\geq c/n$. Since the second derivative with respect to time of the distance between geodesics only involve local terms and since the trajectories of $\Phi^t(\rho)$ and $\Phi^{t}(\rho_1^n)$ are close to each other for $t$ small enough and $n$ large enough,
we may find a time $t_0\bel 0$ independent of $n$ and a $c'\bel 0$ such that for all $n$ large enough and all $t\in [0,t_0]$, we have $d_X(\Phi^{-t}(\rho),\Phi^{-t}(\rho_2^n))\bel c't/n$.  

We hence have
\begin{equation*}
\begin{aligned}
d_X(\Phi^{-t_0}(\rho),\Phi^{-t_0}(\rho_1^n))&\geq \Big{|}d_X(\Phi^{-t_0}(\rho),\Phi^{-t_0}(\rho_2^n))- d_X(\Phi^{-t_0}(\rho_2^n),\Phi^{-t_0}(\rho_1^n)) \Big{|}\\
&\geq c't_0/n + o(1/n).
\end{aligned}
\end{equation*}

On the other hand, we have
\begin{equation*}
d_{X}(\rho_1^n,\rho)\leq d_X (\rho_2^n,\rho)+ d_X(\rho_1^n,\rho_2^n)= d_X(\rho_1^n,\rho_2^n)\leq C d_{ad}(\rho_1^n,\rho_2^n) = o(1/n).
\end{equation*}

Therefore, taking $t_n=t_0$ gives (\ref{nespresso}).

Suppose now that (ii') is satisfied. 
 Since we have assumed that the manifolds $L_1$ and $S^*_x X$ are transverse at $\rho$, there exists a constant $c\bel 0$ such that for all $n$ large enough, we have
\begin{equation*}
c d_{ad}(\rho,\rho_1^n)\leq  d_X(\rho, \rho_1^n).
\end{equation*}

Therefore, $d_X(\rho_1^n,\rho)\geq c/n$. Let us consider a $t\geq 0$ such that $Ce^{\nu t}\bel 2$. If $n$ is large enough, we have $d_{ad}(\Phi^{-t}(\rho),\Phi^{-t}(\rho_2^n))\leq \epsilon$. We hence have
\begin{equation*}
\begin{aligned}
d_X(\Phi^{-t}(\rho),\Phi^{-t}(\rho_2^n)) \geq Cc \frac{e^{\nu t}}{n} \geq \frac{2c}{n}.
\end{aligned}
\end{equation*}

On the other hand, 
$$d_X(\Phi^{-t}(\rho_2^n),\Phi^{-t}(\rho_1^n))\leq e^{\mu t}d_{ad}(\rho_2^n,\rho_1^n) = o\Big{(}\frac{1}{n}\Big{)}.$$

We deduce from this that

\begin{equation*}
\begin{aligned}
d_X(\Phi^{-t}(\rho),\Phi^{-t}(\rho_1^n))&\geq \frac{2c}{n} - d_X(\Phi^{-t}(\rho_2^n),\Phi^{-t}(\rho_1^n)) \\
&> \frac{c}{n}.
\end{aligned}
\end{equation*}

This gives us (\ref{heureux}), and concludes the proof of the lemma.
\end{proof}

\subsubsection{Three applications of Lemma \ref{hibou}}
As a first application of Lemma \ref{hibou}, let us now state a useful lemma. It seems rather classical, but since we could not find a proof of it in the literature, we recall it. It is likely that this result (and most of the results of this section) still holds if we suppose that $(X,g)$ has \emph{no conjugate points}, instead of supposing that it has non-positive curvature. We refer the reader to \cite{ruggiero2007dynamics} for an overview of the properties of such manifolds.
\begin{lemme}\label{technicolor}
Let $X$ be a manifold of non-positive sectional curvature, such that $K$ is a hyperbolic set. Let $\rho\in K$. Then there exists $\epsilon\bel 0$ such that $W_{\epsilon}^{\pm0}(\rho)$ projects smoothly on $X$.
\end{lemme}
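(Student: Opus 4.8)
The plan is to apply Lemma \ref{hibou} with $L_1 = W_\epsilon^{\pm 0}(\rho)$ and $L_2 = S^*_x X$ for each $x$ in the projection of $W^{\pm 0}_\epsilon(\rho)$, and then conclude via Lemma \ref{smoothproj2}. The content of Lemma \ref{smoothproj2} reduces the claim to showing that $W^{\pm 0}_\epsilon(\rho)$ is locally a $C^0$ graph over an open subset of $X$, and that this graph is transverse to every fiber $S^*_x X$ that it meets.

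First I would treat the weak \emph{stable} manifold $W^{-0}_\epsilon(\rho)$; the weak unstable case follows by reversing time (replacing $\Phi^t$ by $\Phi^{-t}$, which exchanges $E^+$ and $E^-$ and hence $W^{+0}$ and $W^{-0}$, and is again a flow of a manifold of nonpositive curvature). So fix $\rho \in K$ and $\epsilon > 0$ small. To verify hypothesis (i) of Lemma \ref{hibou} with $L_1 = W^{-0}_\epsilon(\rho)$: take $\rho_1 \in W^{-0}_\epsilon(\rho)$. By the definition of the weak stable manifold, $d_{ad}(\Phi^t(\rho),\Phi^t(\rho_1)) < \epsilon$ for all $t \geq 0$; in particular, since $\epsilon$ may be taken smaller than $r_i$, the points $\Phi^{-s}(\rho)$ and $\Phi^{-s}(\rho_1)$ stay within $r_i$ of each other for $s \leq 0$, i.e. for $t = -s \geq 0$ the pair $\Phi^{t}$ — wait, I need the statement in the direction $t \mapsto d_X(\Phi^{-t}(\rho_1),\Phi^{-t}(\rho))$ non-increasing for $t \geq 0$. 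Equivalently, $t \mapsto d_X(\Phi^t(\rho_1),\Phi^t(\rho))$ is non-decreasing for $t \leq 0$, i.e. this distance is largest in the far past. This is exactly the content of Lemma \ref{piscine} combined with Lemma \ref{dompteur}/Corollary \ref{dompteur2}: $\rho_1$ lies on the stable horosphere through $\rho$ in the universal cover, so the lifted geodesics satisfy the hypothesis of Lemma \ref{piscine}, and hence $t \mapsto d_{\tilde X}(\tilde\Phi^t(\tilde\rho),\tilde\Phi^t(\tilde\rho_1))$ is non-decreasing — which, as long as this distance stays below $r_i$ (true by choosing $\epsilon$ small, using Lemma \ref{fix} that points on $W^-$ converge in the future so stay uniformly close), descends to $d_X$. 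So (i) holds. Hypothesis (ii) is immediate since $L_2 = S^*_x X$ is literally a fiber. Therefore Lemma \ref{hibou} gives transversality of $W^{-0}_\epsilon(\rho)$ with $S^*_x X$ at every intersection point.

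It remains to check the remaining hypothesis of Lemma \ref{smoothproj2}, namely that $W^{-0}_\epsilon(\rho)$ is, after shrinking $\epsilon$, a \emph{graph} $\{(x,f(x)) : x \in \Omega\}$ with $f$ continuous over an open $\Omega \subset X$. This follows once we know the projection $\pi_{S^*X \to X}$ restricted to $W^{-0}_\epsilon(\rho)$ is a local homeomorphism onto its image and is injective: local homeomorphism comes from the transversality with the fibers just established (the differential of $\pi|_{W^{-0}_\epsilon}$ is an isomorphism at each point where the manifold is transverse to the fiber, and $W^{-0}_\epsilon$ is at least $C^1$ as a set — it is the image of $E^{-0}_\rho$ under a smooth map, being a weak stable manifold, cf. the stable manifold theorem), so $f$ is at least continuous; injectivity of $\pi$ on a small enough piece follows because two points of $W^{-0}_\epsilon(\rho)$ with the same base point $x$ would be two unit covectors at $x$ whose forward $\Phi^t$-orbits both stay $\epsilon$-close to $\Phi^t(\rho)$ — by Lemma \ref{fix} they both converge to $\Phi^t(\rho)$, hence to each other, and two distinct geodesics through the same point separate; more quantitatively, the transversality of $W^{-0}_\epsilon(\rho)$ to $S^*_xX$ already forbids two nearby sheets over the same $x$, and shrinking $\epsilon$ makes $W^{-0}_\epsilon(\rho)$ a single sheet. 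Then Lemma \ref{smoothproj2} upgrades $f$ from $C^0$ to $C^\infty$, giving the conclusion.

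The main obstacle I anticipate is the verification of hypothesis (i) of Lemma \ref{hibou} for $W^{-0}_\epsilon$ with the base distance $d_X$ rather than the adapted distance $d_{ad}$, and in particular making sure the "convexity of $d_X^2$" argument (Corollary \ref{dompteur2}) applies: one must confirm that along the backward orbit the two trajectories never separate by more than the injectivity radius $r_i$, so that the universal-cover argument is legitimate. This is where smallness of $\epsilon$ is used crucially — combining the future contraction on $W^-$ (Lemma \ref{fix}) with the uniform bound \eqref{mininj} on $r_i$ and the at-most-exponential growth of $d_{ad}$ (e.g.\ \eqref{flowers}, Remark \ref{lenovo}) to control the distance on the relevant finite window, then Lemma \ref{piscine} handles $t \to -\infty$. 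The weak unstable case $W^{+0}_\epsilon$ is then a formal time-reversal of the weak stable case. A secondary subtlety is that $W^{-0}_\epsilon(\rho)$ has dimension $d$ and $S^*_xX$ has dimension $d-1$ inside the $(2d-1)$-dimensional $S^*X$, so the dimension count in Lemma \ref{hibou} ($d$ and $d-1$) is exactly met — worth remarking explicitly to justify invoking that lemma.
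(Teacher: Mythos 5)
Your overall strategy---apply Lemma~\ref{hibou} with $L_2 = S^*_xX$ and conclude by Lemma~\ref{smoothproj2}, with the graph property coming from convexity of $d_X^2$ (Corollary~\ref{dompteur2})---is exactly the paper's strategy. But the direction in which you carry out the transversality verification is backwards, and this is a genuine error, not merely a presentational choice.

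Hypothesis~(i) of Lemma~\ref{hibou} asks that $t\mapsto d_X(\Phi^{-t}(\rho_1),\Phi^{-t}(\rho))$ be non-increasing for $t\geq 0$, i.e.\ the distance does \emph{not} grow when flowing \emph{into the past}. This is a property of the weak \emph{unstable} manifold $W^{+0}_\epsilon(\rho)$, where by definition $d(\Phi^t(\rho),\Phi^t(\rho_1))<\epsilon$ for all $t\leq 0$: the distance is bounded (and $\leq r_i$) on $(-\infty,0]$, so $d_X^2$ is convex there by Corollary~\ref{dompteur2}, and a convex function bounded near $-\infty$ is non-decreasing on $(-\infty,0]$; hence $t\mapsto d_X(\Phi^{-t}(\rho_1),\Phi^{-t}(\rho))$ is non-increasing for $t\geq 0$. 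On the weak \emph{stable} manifold $W^{-0}_\epsilon(\rho)$, which you chose to treat first, the boundedness is for $t\geq 0$: the distance \emph{decreases} as $t\to +\infty$, and therefore \emph{increases} as you flow into the past, which is the opposite of what~(i) requires. Your own sentence ``$t\mapsto d_X(\Phi^t(\rho_1),\Phi^t(\rho))$ is non-decreasing for $t\leq 0$, i.e.\ this distance is largest in the far past'' contains the slip: non-decreasing on $(-\infty,0]$ means \emph{smallest} in the far past, not largest, and it is the latter (false for $W^{-0}$) that your subsequent argument implicitly uses. Moreover Lemma~\ref{piscine}, which you invoke, concerns geodesics whose distance stays bounded as $t\to-\infty$ --- that is the setup relevant to $W^{+0}$, not $W^{-0}$.

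The fix is simply to swap the roles: verify~(i) directly for $L_1 = W^{+0}_\epsilon(\rho)$ via the convexity argument above, conclude that $W^{+0}_\epsilon(\rho)$ projects smoothly, and then get $W^{-0}_\epsilon(\rho)$ by the time-reversal $\xi\mapsto -\xi$, which you already observed exchanges $W^{\pm 0}$ and commutes with the projection to $X$. The remaining pieces of your proposal (the dimension count $d$ vs.\ $d-1$, the graph/injectivity argument via convexity plus $d_X^2(0)=0$, and the deduction of smoothness from Lemma~\ref{smoothproj2}) are correct and agree with the paper.
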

\begin{proof}
We shall only prove the lemma for $W^{+0}_\epsilon$. The result for $W^{-0}_\epsilon$ will follow, since stable and unstable manifolds are exchanged by changing the sign of the speeds $\xi$.

We want to apply Lemma \ref{smoothproj2}. 
Let us first check that $W_\epsilon^{+0}$ can indeed be written as a graph. Suppose that $\rho_i=(x,\xi_i)\in S^*X\cap W_\epsilon^{+0}$ for $i=1,2$. Then if $g(t)= d_X(\Phi^t(\rho_1),\Phi^t(\rho_2))$, we have $g(0)=0$ and $\lim \limits_{t\rightarrow \infty} g(t) <\infty$ by definition of the stable manifold. But, if $\epsilon$ is small enough, we have $g(t)\leq r_i$ for all $t\geq 0$, so that by Corollary \ref{dompteur2}, we have $\xi_1=\xi_2$. Therefore, $W_\epsilon^{+0}$ can be written for $\epsilon$ small enough as
\begin{equation*}
W_\epsilon^{+0} = \{(x,f(x)); x\in \Omega\}
\end{equation*}
for some open set $\Omega\subset X$. The fact that $W_\epsilon^{+0}$ is a connected manifold implies that $f$ is continuous.

We now have to prove the transversality condition of Lemma \ref{smoothproj2}, by applying Lemma \ref{hibou} for $L_1= W_\epsilon^-$ and $L_2=S_x^*X$. (ii) is trivially satisfied. Let us check that (i) is satisfied. Take $\rho_1,\rho_2\in W_\epsilon^{+0}(\rho)$, and write $g(t) = d_X(\Phi^t(\rho_1),\Phi^t(\rho_2))$. We have $\lim\limits_{t\rightarrow +\infty} g(t)<\infty$, and if $\epsilon$ is taken small enough, we may assume that $g(t)\leq r_i$ for all $t\geq 0$. Therefore, by Corollary \ref{dompteur2}, $g$ is non-increasing for $t\geq 0$, and (i) is satisfied. Hence we may apply Lemma \ref{hibou} and Lemma \ref{smoothproj2} to conclude the proof.
\end{proof}

\begin{remarque}\label{gabriele2}
Actually, with the same proof, we can prove the following statement, saying that if we propagate in the past (resp. future) a small piece of the stable (resp. unstable) manifold, it projects smoothly on $X$:

Let $\rho\in K$. Then there exists $\epsilon\bel 0$ such that for all $\pm t \geq 0$ and $\rho'\in W_{\epsilon}^{\pm 0}(\rho)$, there exists $\epsilon'\bel 0$ such that $\Phi^t(\{\rho''\in W^{\pm 0}_\epsilon(\rho); d_{ad}(\rho'',\rho')< \epsilon'\}$ projects smoothly on $X$. 
\end{remarque}

We now prove a lemma which is the first step in the proof of Theorem \ref{getup}.

\begin{lemme}\label{res}
Let $\tau\geq 0$ and let $\rho\in\Lag$. For $\epsilon\bel 0$, the manifold $\Phi^\tau( \{\rho'\in \Lag; d_{ad}(\rho,\rho')<\epsilon\}$ projects smoothly on $X$.
\end{lemme}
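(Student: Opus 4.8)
The strategy is to apply Lemma \ref{smoothproj2} to $X' := \Phi^\tau(\{\rho'\in\Lag; d_{ad}(\rho,\rho')<\epsilon\})$, after possibly shrinking $\epsilon$. So two things must be established: that $X'$ can be written as a $C^0$ graph over an open set of $X$, and that $X'$ intersects each fibre $S^*_xX$ transversally. For the graph property, fix two points $\rho_1,\rho_2\in\Phi^\tau(\{\rho'\in\Lag; d_{ad}(\rho,\rho')<\epsilon\})$ lying over the same base point $x$; write $\rho_i = \Phi^\tau(\rho_i')$ with $\rho_i'\in\Lag$ close to $\rho$. By Hypothesis \ref{puissanceplus}, propagating $\Lag$ into the past is non-expanding in $d_X$, so $t\mapsto d_X(\Phi^{-t}(\rho_1),\Phi^{-t}(\rho_2))$ is non-increasing on $[0,\tau]$; in particular $d_X(\rho_1',\rho_2')\le d_X(\rho_1,\rho_2)=0$, so $\rho_1'=\rho_2'$ and hence $\rho_1=\rho_2$. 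Thus, for $\epsilon$ small enough (so the propagated piece stays in a single coordinate chart), $X'$ is a graph $\{(x,f(x)); x\in\Omega\}$ with $f$ continuous by connectedness of the piece of $\Lag$ we started from.

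For the transversality condition, I would apply Lemma \ref{hibou} with $L_1 = X'$ and $L_2 = S^*_xX$ at a point $\bar\rho = (x,f(x))\in X'$. Condition (ii) is automatic. For condition (i), I need that for all $\rho_1\in X'$, $t\mapsto d_X(\Phi^{-t}(\rho_1),\Phi^{-t}(\bar\rho))$ is non-increasing for $t\ge0$. Here is where Hypothesis \ref{puissanceplus} again does the work, but only directly for $t\in[0,\tau]$: both $\rho_1$ and $\bar\rho$ are in $\Phi^\tau(\Lag)$, so for $t\le\tau$ we push back to points of $\Lag$ and invoke the non-expansion property. For $t>\tau$ the two points are genuinely in $\Lag$, and one uses Hypothesis \ref{puissanceplus} directly on $\Lag$. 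So (i) holds for all $t\ge 0$, Lemma \ref{hibou} applies, and transversality follows; then Lemma \ref{smoothproj2} upgrades $f$ to $C^\infty$.

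A subtlety I would be careful about: Hypothesis \ref{puissanceplus} is stated as a global inequality $d_X(\Phi^t(\rho_1),\Phi^t(\rho_2))\le d_X(\rho_1,\rho_2)$ for $t\le 0$ and $\rho_1,\rho_2\in\Lag$, not as monotonicity in $t$. To get genuine monotonicity of $t\mapsto d_X(\Phi^{-t}(\rho_1),\Phi^{-t}(\rho_2))$ on all of $[0,\infty)$, I would combine this with Corollary \ref{dompteur2} (convexity of $d_X^2$ along geodesic pairs, valid as long as the distance stays below $r_i$): a convex function that does not exceed its value at $0$ on $[0,\tau]$ — combined with the global non-expansion bound at every negative time — must be non-increasing. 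One must check the distance stays $<r_i$ so that Corollary \ref{dompteur2} applies; this is where shrinking $\epsilon$ (using \eqref{conditionepsilon}-type control on how much the piece has spread after time $\tau$, together with the global non-expansion in the past) is essential.

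**Main obstacle.** The delicate point is reconciling the hypothesis as literally stated (a comparison between time $0$ and time $t\le 0$ on $\Lag$) with what Lemma \ref{hibou}(i) actually demands (monotonicity of the distance-to-$\bar\rho$ function along the full backward trajectory, for points of $X'=\Phi^\tau(\cdots)$ rather than of $\Lag$ itself), and ensuring the injectivity-radius bound needed for the convexity argument holds uniformly over the (shrunk) piece. Once $\epsilon$ is chosen small enough that the time-$\tau$ image stays within a single chart and within pairwise distance $<r_i$, everything goes through; choosing that $\epsilon$ correctly is the real content.
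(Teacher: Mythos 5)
Your overall plan—verify the graph property and then apply Lemma \ref{smoothproj2} via Lemma \ref{hibou} with $L_2 = S^*_xX$, using Corollary \ref{dompteur2} to supply the needed monotonicity—is the same as the paper's. However, the way you invoke Hypothesis \ref{puissanceplus} is wrong at the key step, and the error recurs in three places.

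Hypothesis \ref{puissanceplus} compares, for $\rho_1',\rho_2'\in\Lag$, the distance at time $t\le 0$ with the distance at time $0$: it only constrains the flow \emph{into the past from $\Lag$}. Now take $\rho_1,\rho_2\in\Phi^\tau(\Lag)$ and write $\rho_i=\Phi^\tau(\rho_i')$. For $t\in[0,\tau]$ one has $\Phi^{-t}(\rho_i)=\Phi^{\tau-t}(\rho_i')$ with $\tau-t\ge 0$, i.e.\ these are \emph{forward} evolutions of points of $\Lag$, about which the hypothesis says nothing. In particular, the chain ``$d_X(\rho_1',\rho_2')\le d_X(\rho_1,\rho_2)=0$, so $\rho_1'=\rho_2'$'' in your graph argument does not follow from Hypothesis \ref{puissanceplus}, and neither does the assertion that $t\mapsto d_X(\Phi^{-t}\rho_1,\Phi^{-t}\rho_2)$ is ``non-increasing on $[0,\tau]$.'' The same issue reappears when you verify (i) of Lemma \ref{hibou} (``for $t\le\tau$ we push back to points of $\Lag$ and invoke the non-expansion property''), and again in the subtlety paragraph, where the premise ``a convex function that does not exceed its value at $0$ on $[0,\tau]$'' is precisely the unjustified claim.

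The correct mechanism, which is what the paper uses, is the following. After shrinking $\epsilon$ (as you say, using the exponential control \eqref{flowers} forward up to time $\tau$ and Hypothesis \ref{puissanceplus} backward from $\Lag$) so that $d_X\bigl(\Phi^{-s}(\rho_1),\Phi^{-s}(\rho_2)\bigr)<r_i$ for all $s\ge 0$, Corollary \ref{dompteur2} makes $s\mapsto d_X^2\bigl(\Phi^{-s}(\rho_1),\Phi^{-s}(\rho_2)\bigr)$ convex on $[0,\infty)$. Hypothesis \ref{puissanceplus} \emph{does} apply for $s\ge\tau$ (since $\Phi^{-\tau}(\rho_i)\in\Lag$), giving a bound by the value at $s=\tau$. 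A convex nonnegative function bounded above on an unbounded interval has nonpositive slope everywhere, so it is non-increasing on all of $[0,\infty)$. That is what both the graph property (value $0$ at $s=0$ forces it identically $0$) and Lemma \ref{hibou}(i) require; the comparison on $[0,\tau]$ is a \emph{consequence} of convexity plus the bound on $[\tau,\infty)$, not an input from the hypothesis. If you make this replacement throughout, the rest of your argument (Lemma \ref{smoothproj2}, continuity of the graph, transversality via Lemma \ref{hibou}) is fine and matches the paper's proof.
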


\begin{proof}
We have to check that the assumptions of Lemma \ref{smoothproj2} are satisfied. If $\epsilon$ is chosen small enough, then for all $t\leq \tau$ and for all $\rho'\in \Lag$ such that $d_{ad}(\rho,\rho')<\epsilon$, we have that $d_{ad}\big{(}\Phi^t(\rho'),\Phi^t(\rho)\big{)}\leq r_i$ thanks to Hypothesis \ref{puissanceplus}.
This implies that $\Phi^\tau( \{\rho'\in \Lag; d_{ad}(\rho,\rho')<\epsilon\}$ can be written as a graph above $X$. Indeed, suppose that this manifolds contains two points $\rho_1=(x,\xi_1)$ and $\rho_2=(x,\xi_2)$. Then, by Lemma \ref{dompteur2}, $t\mapsto d^2_X(\Phi^t(\rho_1),\Phi^t(\rho_2)\}$ is a convex function on $(-\infty;0)$. By Hypothesis \ref{puissanceplus}, this function goes to a constant as $t$ goes to $-\infty$. Hence it must be constant equal to zero, and $\rho_1=\rho_2$.

We now have to check that this graph is transversal to the vertical fibres. To do so, we want to apply Lemma \ref{hibou} for $L_1=\Phi^\tau( \{\rho'\in \Lag; d_{ad}(\rho,\rho')<\epsilon\}$ and $L_2= S_x^*X$.
Hypothesis (ii) of Lemma \ref{hibou} is then trivially satisfied. As for hypothesis (i), it is satisfied by Hypothesis \ref{puissanceplus} combined with Lemma \ref{dompteur2}. Therefore, we may apply Lemma \ref{hibou} and Lemma \ref{smoothproj2} to conclude the proof of the Lemma.
\end{proof}

As a last application of Lemma \ref{hibou}, we shall prove Proposition \ref{musaraigne}.
Recall that this proposition says that:
\begin{prop}
Suppose that $X$ satisfies Hypothesis \ref{Strong}, as well as Hypothesis \ref{sieste} on hyperbolicity, and that $\Lag$ satisfies Hypothesis \ref{chaise}. Then the Lagrangian manifold $\Lag$ automatically satisfies Hypothesis \ref{Happy} on transversality.
\end{prop}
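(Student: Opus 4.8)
The plan is to derive Hypothesis~\ref{Happy} from Lemma~\ref{hibou}. Fix $\rho\in K$, $\rho'\in\Lag$ and $T\geq 0$ with $\Phi^T(\rho')\in W^-_\epsilon(\rho)$ for $\epsilon>0$ small, and set $\sigma:=\Phi^T(\rho')$ and $x:=\pi_X(\sigma)$. Transversality at $\sigma$ is a purely local condition, so it suffices to exhibit transversality there of the two submanifolds $L_1:=\Phi^T\big(\{\rho_1\in\Lag;\ d_{ad}(\rho_1,\rho')<\delta\}\big)$ (dimension $d$) and $L_2:=W^-_\epsilon(\rho)$ (dimension $d-1$), where $\delta>0$ is chosen small depending on $T$ and $\epsilon$. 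I will apply Lemma~\ref{hibou} with these $L_1,L_2$ and with $\sigma$ in the role of the point ``$\rho$'' of that lemma; this amounts to checking hypotheses (i) and (ii').

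To check (i), take $\rho_1=\Phi^T(\rho_1')\in L_1$ with $\rho_1'\in\Lag$, $d_{ad}(\rho_1',\rho')<\delta$, and set $g(u):=d_X(\Phi^u(\rho_1'),\Phi^u(\rho'))$, so that $d_X(\Phi^{-s}(\rho_1),\Phi^{-s}(\sigma))=g(T-s)$. By Hypothesis~\ref{puissanceplus}, $g(u)\leq g(0)\leq\delta$ for all $u\leq 0$, while the a priori expansion bound~(\ref{flowers}) (together with Remark~\ref{lenovo} near $K$, and the Euclidean/hyperbolic structure near infinity) gives $g(u)\leq C(T)\,\delta$ for $u\in[0,T]$. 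Taking $\delta$ small enough we get $g(u)<r_i$ for all $u\leq T$, so Corollary~\ref{dompteur2} applies and $u\mapsto g(u)^2$ is convex on $(-\infty,T]$. A convex function on a half-line $(-\infty,T]$ that is bounded above is automatically non-decreasing (if it were decreasing somewhere, convexity would force it to tend to $+\infty$ as $u\to-\infty$). Hence $g$ is non-decreasing on $(-\infty,T]$, i.e. $s\mapsto d_X(\Phi^{-s}(\rho_1),\Phi^{-s}(\sigma))$ is non-increasing for $s\geq 0$, which is (i).

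For (ii'), the first requirement — that $L_1$ be transverse to $S^*_xX$ at $\sigma$ — is exactly the statement that $L_1$ projects smoothly on $X$ near $\sigma$, which is Lemma~\ref{res} (with $\tau=T$) combined with Lemma~\ref{smoothproj2}. For the second requirement, observe that $\sigma$ and every $\rho_2\in L_2=W^-_\epsilon(\rho)$ lie on a common local stable leaf of a point of $K$; on the range of $t\leq 0$ relevant in (ii') — namely where $d_{ad}(\Phi^t(\sigma),\Phi^t(\rho_2))\leq\epsilon$, which for $\epsilon$ small keeps the orbit segments inside the neighbourhood of $K$ on which hyperbolicity is valid — the backward flow expands $d_{ad}(\Phi^t(\sigma),\Phi^t(\rho_2))$ at an exponential rate, by the time-reversed form of Lemma~\ref{fix}. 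Since the local (weak) stable manifolds of points near $K$ project smoothly on $X$ with constants uniform over the compact set $K$ (Lemma~\ref{technicolor}), the metrics $d_X$ and $d_{ad}$ are comparable along such leaves, so the exponential lower bound transfers from $d_{ad}$ to $d_X$ up to a constant, giving the inequality of (ii') with $\nu=\lambda$.

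Having verified (i) and (ii'), Lemma~\ref{hibou} yields $T_\sigma L_1\oplus T_\sigma L_2=T_\sigma(S^*X)$, i.e. $T_{\Phi^T(\rho')}\Phi^T(\Lag)\oplus T_{\Phi^T(\rho')}W^-_\epsilon(\rho)=T_{\Phi^T(\rho')}(S^*X)$, which is~(\ref{telus}); hence $\Lag$ satisfies Hypothesis~\ref{Happy}. The delicate point throughout is the interplay between the two metrics $d_X$ and $d_{ad}$: the convexity argument for (i) lives naturally in $d_X$, whereas the hyperbolic expansion estimate underlying (ii') lives in $d_{ad}$, and reconciling them requires the smooth-projection properties of the stable and unstable manifolds established in Lemmas~\ref{technicolor} and~\ref{res} (with the constants being uniform over $K$). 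A secondary subtlety is that, in (i), it is precisely Hypothesis~\ref{puissanceplus} that keeps $g$ below $r_i$ all the way into the past, which is what makes the convexity — and hence the monotonicity — available on the whole half-line.
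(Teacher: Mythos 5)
Your proof is correct and follows essentially the same route as the paper: you apply Lemma~\ref{hibou} with $L_1$ a small piece of $\Phi^T(\Lag)$, $L_2=W^-_\epsilon(\rho)$, verify~(i) via Hypothesis~\ref{puissanceplus} and Corollary~\ref{dompteur2}, and verify~(ii') via Lemma~\ref{res} for the smooth projection and Lemma~\ref{fix} plus Lemma~\ref{technicolor} for the backward exponential separation in $d_X$. The only noticeable difference is that you spell out the convexity-plus-boundedness argument that upgrades the bound $g(u)\leq g(0)$ of Hypothesis~\ref{puissanceplus} to monotonicity (a step the paper compresses into ``by Lemma~\ref{dompteur2} and Hypothesis~\ref{puissanceplus}''), and you justify the restriction to a $\delta$-neighbourhood using~(\ref{flowers}); both are useful clarifications rather than deviations.
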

\begin{proof}
Let $\rho\in K$, let $\tau \geq 0$ and let $\rho'\in \Phi^\tau(\Lag)\cap W^-_\epsilon(\rho)$.
We want to apply Lemma \ref{hibou} to $\Phi^\tau(\Lag)$ and $W_\epsilon^-(\rho)$, or at least to the restriction of these manifolds to a small neighbourhood of $\rho'$. If we take a small enough neighbourhood $V$ of $\rho'$, then by Lemma \ref{dompteur2} and Hypothesis \ref{puissanceplus}, we have that $L_1=\Phi^\tau(\Lag)\cap V$ satisfies assumption (i) of Lemma \ref{hibou}. 

Let us check that assumption (ii') is satisfied. The fact that $L_1$ projects smoothly on $X$ comes from Lemma \ref{res}. As for the second condition, we know by Lemma \ref{fix} that 
there exists $C,\lambda\bel 0$ such that for any $\rho\in K$ and $\rho_1,\rho_2\in W_{\epsilon}^-(\rho)$, we have for all $t\geq 0$:
\begin{equation*}
d_{ad}(\Phi^t(\rho_1),\Phi^t(\rho_2))\leq C e^{-\lambda t} d_{ad}(\rho_1,\rho_2).
\end{equation*}

We therefore have, for all $t\geq 0$ such that $\Phi^{-t}(\rho_i)\in W_{\epsilon}^-(\rho)$ for $i=1,2$ :
\begin{equation*}
d_{ad}(\rho_1,\rho_2)\leq C e^{-\lambda t} d_{ad}(\Phi^{-t}(\rho_1),\Phi^{-t}(\rho_2)).
\end{equation*}

But, since $W^{-0}_\epsilon(\rho)$ projects smoothly on $X$ for $\epsilon$ small enough by Lemma \ref{technicolor}, there exists a constant $C'$ such that for all $\rho_1,\rho_2\in W^-_\epsilon(\rho)$, we have
\begin{equation*}
\frac{1}{C'} d_X(\rho_1,\rho_2)\leq d_{ad}(\rho_1,\rho_2)\leq C' d_X(\rho_1,\rho_2).
\end{equation*}
Therefore, hypothesis (ii') of Lemma \ref{hibou} is satisfied, and we can use Lemma \ref{hibou} and Lemma \ref{smoothproj2} to conclude the proof.
\end{proof}

\subsection{Proof of Theorem \ref{getup}}
\begin{proof}
First of all, let us make a few remarks to show that we don't need to consider all sequences $\beta$.
We have $\Lag= (\Lag\cap \mathcal{DE}_-)\cup(\Lag\cap \mathcal{DE}_+)$. We therefore have $$\Phi_\beta^N(\Lag)\cap T^*\mathcal{O}= \Big{(}\Phi_\beta^N(\Lag\cap \mathcal{DE}_+)\cap T^*\mathcal{O}\Big{)}\cup \Big{(}\Phi_\beta^N(\Lag\cap \mathcal{DE}_-)\cap T^*\mathcal{O} \Big{)}.$$
By hypothesis (\ref{invpass}), we have $\Phi_\beta^N(\Lag\cap \mathcal{DE}_+)\subset \Lag  $. 
Consequently, we only have to focus on the propagation of $\Lag\cap \mathcal{DE}_-$.

Let us note that by (\ref{invpass}), we have that for all $k\geq 1$, $\Phi^{k}_{0,...,0}(\Lag\cap\mathcal{DE}_-)\subset \Phi^1_0(\Lag\cap\mathcal{DE}_-)$. We may hence, without loss of generality, restrict ourselves in what follows to sequences $\beta$ such that $\beta_2\neq 0$.
 
By the geodesic convexity assumption, for all bounded open set $\mathcal{O}$, we may find a $N_\mathcal{O}$ such that for all $N\geq N_{\mathcal{O}}$ and for all $\beta\in B^N$ such that $\beta_2\neq 0$ and $\Phi_\beta^N(\Lag)\cap \mathcal{O}\neq \emptyset$, we have $\forall k=2... N-N_\mathcal{O}$, $\beta_k\neq 0$.

Now, since the sets $V_b$, $b\in B_2$ are at positive distance from the trapped set, we may find a $N_1$ such that for all $N\geq N_\mathcal{O}+ 2 N_1$ and for all $\beta\in B^N$ such that $\beta_2\neq 0$ and $\Phi_\beta^N(\Lag)\cap \mathcal{O}\neq \emptyset$, we have  $\forall k = N_1... N-N_1-N_\mathcal{O}$, $\beta_k\in B_1$.

 Let us now check that a single Lagrangian manifold $\Phi_\beta^N(\Lag)$ projects smoothly on the base manifold.

\begin{lemme}\label{res2}
Let $\mathcal{O}\subset X$ be an open bounded set. Then for any $N\in \mathbb{N}$, $\beta\in B^N$, $\Phi_\beta^N(\Lag)\cap (S^*\mathcal{O})$ projects smoothly on $X$.
\end{lemme}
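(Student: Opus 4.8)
The plan is to verify, for $X'=\Phi_\beta^N(\Lag)\cap(S^*\mathcal{O})$, the two requirements of Lemma~\ref{smoothproj2}: that $X'$ is the graph of a \emph{continuous} section of $S^*X$ over an open subset of $X$, and that $X'$ is transverse to each vertical fibre $S_x^*X$. The second is a local statement and will come from Lemma~\ref{hibou}; the first is a global statement and is where the work is, and it is here that Hypothesis~\ref{puissanceplus} and the nonpositive curvature enter, through the convexity of $t\mapsto d_X^2(\cdot,\cdot)$ supplied by Corollary~\ref{dompteur2}.

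For the graph property, suppose $\rho_1=(x,\xi_1)$ and $\rho_2=(x,\xi_2)$ both belong to $X'$, and write $\rho_i=\Phi^N(\sigma_i)$ with $\sigma_i\in\Lag$ and $\Phi^k(\sigma_i)\in V_{\beta_k}$ for all $k$. Put $F(t):=d_X^2(\Phi^t(\rho_1),\Phi^t(\rho_2))$, so $F(0)=0$. The crucial point is the bound $F(t)<r_i^2$ for every $t\leq 0$. On $[-N,0]$ it is obtained step by step: by the reductions made at the start of the proof of Theorem~\ref{getup} one may assume $\beta_k\neq 0$ for all $k$ outside a bounded initial/final stretch, so at those indices $\Phi^k(\sigma_1)$ and $\Phi^k(\sigma_2)$ both sit in the common bounded set $V_{\beta_k}$ of diameter $\leq\epsilon_{max}$; combining this with the growth estimates (\ref{flowers})--(\ref{conditionepsilon}) (and Remark~\ref{lenovo} over the steps near $K$) along the remaining bounded time interval keeps the two trajectories within distance $r_i$ on all of $[-N,0]$. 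In particular $d_X(\sigma_1,\sigma_2)=\sqrt{F(-N)}<r_i$, and then for $t\leq -N$ we write $\Phi^t(\rho_i)=\Phi^{t+N}(\sigma_i)$ with $t+N\leq 0$ and invoke Hypothesis~\ref{puissanceplus} to get $d_X(\Phi^t(\rho_1),\Phi^t(\rho_2))\leq d_X(\sigma_1,\sigma_2)<r_i$. With $F<r_i^2$ in hand, Corollary~\ref{dompteur2} shows $F$ is convex on $(-\infty,0]$, while Hypothesis~\ref{puissanceplus} shows it is bounded above there; a convex function bounded above on the half-line $(-\infty,0]$ is non-decreasing, so, being non-negative and vanishing at $t=0$, $F\equiv 0$ on $(-\infty,0]$. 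Hence the projected geodesics $t\mapsto\pi_X(\Phi^t(\rho_i))$ coincide, so $\rho_1=\rho_2$; thus $X'$ is a graph $\{(x,f(x))\}$, and $f$ is continuous because $\Phi_\beta^N(\Lag)$ is a submanifold.

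For transversality to the fibres, fix $\rho_1\in X'$, set $x=\pi_X(\rho_1)$, and apply Lemma~\ref{hibou} with $L_1$ a small neighbourhood of $\rho_1$ in $\Phi_\beta^N(\Lag)$ and $L_2=S_x^*X$; condition (ii) holds trivially, and condition (i) follows from the same convexity mechanism applied to the forward trajectories of two nearby points $\tilde\sigma,\sigma_1\in\Lag$ — for $L_1$ small these stay within $r_i$ up to time $N$, so $s\mapsto d_X^2(\Phi^s(\tilde\sigma),\Phi^s(\sigma_1))$ is convex on $(-\infty,N]$ and, being bounded above on $(-\infty,0]$, non-decreasing there, whence $t\mapsto d_X(\Phi^{-t}(\tilde\rho_1),\Phi^{-t}(\rho_1))$ is non-increasing for $t\geq 0$. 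Lemma~\ref{smoothproj2} then concludes. The step I expect to be the genuine obstacle is precisely the uniform-in-$(N,\beta)$ bound $F(t)<r_i^2$ on $[-N,0]$: one must prevent the two backward trajectories from ever drifting apart by more than the injectivity radius, which is exactly what the shrinking of the $V_b$ to satisfy (\ref{conditionepsilon}) and the preliminary reduction to $\beta$ with nonzero middle entries are designed to ensure; in the infinite past this costs nothing thanks to Hypothesis~\ref{puissanceplus}, so the difficulty is localized to a time interval of bounded length, where the hyperbolic growth of $d_{ad}$ must be balanced against the diameters of the $V_b$.
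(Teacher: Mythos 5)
Your proof is correct and takes essentially the same approach as the paper: the global graph property over $X$ comes from keeping the two backward trajectories within the injectivity radius for all $t\leq 0$ (via (\ref{conditionepsilon}) on $[-N,0]$ and Hypothesis \ref{puissanceplus} beyond) and then invoking convexity of $d_X^2$, while local smoothness of the resulting graph is the fibre-transversality mechanism of Lemmas \ref{hibou} and \ref{smoothproj2}. The only difference is cosmetic: the paper packages the second step by citing Lemma \ref{res} rather than unrolling Lemmas \ref{hibou} and \ref{smoothproj2} directly, and leaves implicit the half-line argument (convex and bounded above on $(-\infty,0]$ forces non-decreasing, hence identically zero) which you spell out.
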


\begin{proof}
The proof follows from Lemma \ref{res}, if we can check that  $\Phi_\beta^N(\Lag)\cap (S^*\mathcal{O})$  can be written as a graph on $X$.
Let $\rho_i=(x,\xi_i)\in\Phi_\beta^N(\Lag)$, $i=1,2$. Thanks to condition (\ref{conditionepsilon}), we see that $\Phi^t(\rho_1)$ and $\Phi^t(\rho_2)$ remain at a distance smaller than
$r_i$ from each other for all $t\leq 0$. Therefore, by Lemma \ref{dompteur}, we have $\Phi^t(\rho_1) \equiv \Phi^t(\rho_2)$, and
$\xi_1=\xi_2$, which concludes the proof.
\end{proof}

Let us fix a small open set $\mathcal{O}$ and a local chart on it. Thanks to Lemma \ref{res2}, in these local coordinates, the manifold $\Phi^N_\beta(\Lag)\cap T^*\mathcal{O}$, if nonempty, may be written in the form
\begin{equation}\label{aaron}
\Phi^N_\beta(\Lag)\cap T^*\mathcal{O}\equiv \{(x,\partial_x \varphi_{\beta, \mathcal{O}} (x)); x\in \mathcal{O}^\beta\},
\end{equation}
where $\mathcal{O}^\beta$ is an open subset of $\mathcal{O}$, and $\partial_x \varphi_{\beta, \mathcal{O}}$ is a smooth function.

We will now show that the functions $\varphi_{\beta,\mathcal{O}}$ are bounded independently of $\beta$ and $N$ in any $C^\ell$ norm.

Let us start by working close to the trapped set, that is to say, by considering the case where $\mathcal{O}=\mathcal{O}_b= \pi_X(V_b)$ for some $b\in B_1$.

Let us denote by $\kappa_b$ the symplectomorphism sending $(x,\xi)$ to $(y^b,\eta^b)$ in a neighbourhood of $\rho^b$. In the notations of Theorem \ref{Cyril},
we have
\begin{equation*}
\{(x,\partial_x \varphi_{\beta, \mathcal{O}_b} (x)); x\in \mathcal{O}_b^\beta\}= \kappa_b^{-1} \big{(} \{(y_1^b,u^b,0,f_{b,\beta}(u^b)) ; (y^b_1,u^b)\in D_{b,\beta,N} \}\big{)}.
\end{equation*}
We will write $F_{b,\beta}:= (0,f_{b,\beta})$, and we will denote by $\hat{x}_b$ and $\hat{\xi}_b$ the components of $\kappa_b^{-1}$. 

We shall consider the map $\tilde{\kappa}_{\beta,b}$ sending $x\in \mathcal{O}^\beta_b$ to $y^b\in \mathbb{R}^d$ such that
\begin{equation*}
\kappa_b(x,\partial\varphi_{\beta,\mathcal{O}_b}(x)) = (y^b,F_{b,\beta}(y^b)).
\end{equation*}

\begin{lemme}\label{rennes}
If we take $\ins$ small enough in (\ref{defgamma}), we may find a constant $c\bel 0 $ such that for any $b\in B_1$, for any $N\in \mathbb{N}$ and for any $\beta\in B^N$, we have
\begin{equation}\label{nantes}
\Big{|}\frac{\partial \tilde{\kappa}_{\beta,b}^{-1}}{\partial y^b}\Big{|} \bel c.
\end{equation}
\end{lemme}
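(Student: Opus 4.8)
The plan is to control the Jacobian of the projection $\tilde\kappa_{\beta,b}: \mathcal O^\beta_b \to \mathbb R^d$ from below by showing that this map is, up to bounded distortion, the composition of the inverse of a genuine projection of a smoothly projecting Lagrangian with the coordinate change $\kappa_b$. Concretely, writing $\pi_b:\Phi^N_\beta(\Lag)\cap T^*\mathcal O_b \ni (x,\partial\varphi_{\beta,\mathcal O_b}(x))\mapsto x$ for the base projection (smooth and with invertible differential by Lemma \ref{res2}) and $p_b: \kappa_b(\Phi^N_\beta(\Lag))\ni (y^b,F_{b,\beta}(y^b))\mapsto y^b$ for the analogous projection in adapted coordinates, we have $\tilde\kappa_{\beta,b} = p_b\circ \kappa_b\circ \pi_b^{-1}$. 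Since $\kappa_b$ is a fixed diffeomorphism on the bounded set $V_b$, its differential and that of its inverse are bounded above and below uniformly in $\beta, N$; the map $p_b$ has Jacobian bounded above and below by constants depending only on $\|df_{b,\beta}\|_{C^0}\le\ins$, which is controlled by Theorem \ref{Cyril}. So the estimate (\ref{nantes}) reduces to a \emph{uniform} lower bound on $|\det d\pi_b|$, equivalently an upper bound on the angle between $T\Phi^N_\beta(\Lag)$ and the vertical fibres $S^*_xX$.

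The core of the argument is therefore to quantify the transversality already obtained qualitatively. First I would observe that, because of (\ref{conditionepsilon}) and the structure of $\Phi^N_\beta$ with $\beta_2\neq0$, any $\rho=(x,\xi)\in \Phi^N_\beta(\Lag)\cap S^*\mathcal O_b$ is of the form $\Phi^t(\rho')$ with $\rho'\in\Lag$ and all the intermediate points $\Phi^s(\rho')$, $s\in[0,t]$, lying in the controlled region (first in $V_0$, then in $\bigcup_{b\in B_1}V_b$ up to bounded time at the end). Along the way, by Hypothesis \ref{puissanceplus} and Corollary \ref{dompteur2}, $t\mapsto d_X^2(\Phi^{-t}(\rho),\Phi^{-t}(\rho'))$ is convex and tends to a constant as $t\to-\infty$ for any nearby $\rho'\in\Phi^N_\beta(\Lag)$, which is exactly the hypothesis (i) of Lemma \ref{hibou} in quantitative form: the function is non-increasing in $t\ge 0$. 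Then I would run the \emph{quantitative} version of the dichotomy in the proof of Lemma \ref{hibou}: for a vertical tangent vector $v\in T_\rho S^*_xX$ of unit $d_{ad}$-norm, one pushes a curve tangent to $v$ backward; in the negatively curved part near $K$ the second variation of the distance-squared is bounded below (strict convexity, Lemma \ref{dompteur}), so after a \emph{fixed} time $t_0$ (chosen using that $\beta$ spends a definite amount of time near $K$, thanks to the $N_1$ bound) the distance $d_X(\Phi^{-t_0}(\rho),\Phi^{-t_0}(\rho'))$ is $\ge c't_0\|v'\|$ where $v'$ is the component of the $L_1$-tangent vector that would be needed to keep the two curves together — giving a definite lower bound on the angle, independent of $\beta$ and $N$.

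The key steps in order: (1) factor $\tilde\kappa_{\beta,b}$ as above and reduce (\ref{nantes}) to a uniform lower bound on $|\det d\pi_b|$; (2) use Theorem \ref{Cyril} to get $\|df_{b,\beta}\|_{C^0}\le\ins$ and hence bound the contribution of $p_b$ and $\kappa_b$; (3) reduce the angle bound to the quantitative transversality of $\Phi^N_\beta(\Lag)$ with $S^*_xX$; (4) for $\rho\in\Phi^N_\beta(\Lag)\cap S^*\mathcal O_b$, use (\ref{conditionepsilon}) and the $N_\mathcal O, N_1$ bounds to ensure the backward orbit spends time $\ge$ some fixed $t_0$ in the strictly negatively curved region near $K$; (5) invoke strict convexity of $d_X^2$ there (Corollary \ref{dompteur2}) with a curvature-dependent lower bound on the second derivative to convert "$\ge c/n$ vertical separation now" into "$\ge 2c/n$ base separation at time $-t_0$", exactly as in the (ii) branch of Lemma \ref{hibou}, but tracking constants; (6) conclude the angle between $T\Phi^N_\beta(\Lag)$ and $S^*_xX$ is bounded away from $0$ uniformly, hence $|\det d\pi_b|\ge c_0>0$ uniformly, which gives (\ref{nantes}).

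The main obstacle I anticipate is step (5): making the convexity estimate genuinely quantitative and \emph{uniform}. Corollary \ref{dompteur2} as stated only gives strict convexity, not a lower bound on $\frac{d^2}{dt^2}d_X^2$; one needs that on the (compact) region where curvature is $\le -\kappa_0<0$, the second variation of the energy of a geodesic variation is bounded below by $2\kappa_0\|J\|^2$ for the Jacobi field $J$ realizing the variation, and one must control $\|J\|$ from below over the fixed time interval $[-t_0,0]$ — this requires that $J$ does not vanish, which follows from the absence of conjugate points, but the \emph{uniformity} over all $\rho$ and all $\beta$ needs the compactness of the relevant part of $S^*X$ (the trapped-set neighbourhood) together with the fact that the backward orbit stays there for a definite amount of time. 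A secondary subtlety is handling the first few ($\le N_1$) and last few ($\le N_1+N_\mathcal O$) steps of $\beta$, where the orbit may be away from $K$: there one uses instead the crude bound (\ref{flowers}) and the non-expansion Hypothesis \ref{puissanceplus} to propagate the angle estimate through a bounded number of steps at a bounded cost, which only changes $c$ by a fixed factor.
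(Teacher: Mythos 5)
Your proposal takes a genuinely different route from the paper's proof. The paper gives a short, perturbative argument: writing $\tilde\kappa_{\beta,b}^{-1}(y^b)=\hat x_b(y^b,F_{b,\beta}(y^b))$ and differentiating, the Jacobian splits as $\frac{\partial\hat x_b}{\partial y}(y^b,F_{b,\beta}(y^b))+\frac{\partial F_{b,\beta}}{\partial y^b}\frac{\partial\hat x_b}{\partial\eta}(y^b,F_{b,\beta}(y^b))$. The first term evaluated at $\eta^b=0$ is exactly the Jacobian of the base projection of the weak unstable manifold, which is invertible by Lemma \ref{technicolor}; by compactness and finiteness of $B_1$ one gets a uniform lower bound $2c$, and then the remaining term is $O(\gamma_{uns})$ since $\|dF_{b,\beta}\|_{C^0}\le\gamma_{uns}$ once $N-\tau(\beta)\ge N_{uns}$ (Theorem \ref{Cyril}), so for $\gamma_{uns}$ small enough the bound $c$ survives; the finitely many remaining $N$ are handled by Lemma \ref{res}. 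You instead bypass the unstable manifold entirely and re-derive a \emph{quantitative} lower bound on the angle between $T\Phi_\beta^N(\Lag)$ and the vertical fibre by tracking constants in the contradiction argument of Lemma \ref{hibou}. This is viable, but you pay for it: you have to make effective a proof that in the paper is only qualitative, and you have to worry separately about the initial and final segments of the orbit away from $K$. I would also flag that your identified obstacle in step (5) is somewhat mislocated: the (ii)-branch of Lemma \ref{hibou}, which is the relevant one for $L_2=S^*_xX$, uses only a lower bound on the \emph{first} derivative of $t\mapsto d_X(\Phi^{-t}\rho,\Phi^{-t}\rho_2^n)$ at $t=0$ (geodesics through a common point separate linearly, a purely local fact independent of the sign of curvature), plus boundedness of the second derivative to hold this for a uniform short time $t_0$; the contradiction then comes from the non-expansion in Hypothesis \ref{puissanceplus}, not from strict negative curvature or Jacobi-field lower bounds near $K$. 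The real issue for your route is just establishing uniformity of the constants $c,t_0,C$ in that argument over $\rho\in S^*X_0$ and over $\beta,N$ — doable by compactness of $S^*X_0$, but it is more work than the paper's two-line perturbation off Lemma \ref{technicolor}, which is what Theorem \ref{Cyril} was built to feed.
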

\begin{proof}
We have
\begin{equation}\label{gabriele}
\tilde{\kappa}_{\beta,b}^{-1} (y^b) = \hat{x}_b(y^b, F_{b,\beta}(y^b)),
\end{equation}
hence
\begin{equation*}
\begin{aligned}
\frac{\partial \tilde{\kappa}_{\beta,b}^{-1}}{\partial y^b} = \frac{\partial \hat{x}_b}{\partial y} (y^b, F_{b,\beta}(y^b)) + \frac{\partial  F_{b,\beta}(y^b)}{\partial y^b} \frac{\partial \hat{x}_b}{\partial \eta} (y^b, F_{b,\beta}(y^b)).
\end{aligned}
\end{equation*}
Now, $\frac{\partial \hat{x}_b}{\partial y}(y^b,0)$ is invertible at every $y^b$, because the unstable manifold projects smoothly on the base manifold, hence we may find a $c\bel 0$ such that $\Big{|}\frac{\partial x_b}{\partial y}(y^b,0)\Big{|}\bel 2c$ for any $b\in B_1$ and any $y^b$. Using the fact that $\| F_{b,\beta}\|_{C^1}\leq \ins$ for $\tau(\beta)$ large enough, the lemma follows for $\tau(\beta)$ large enough, that is to say, for $N$ large enough since we have assumed that $\beta_2\neq 0$. For finite values of $N$, inequality (\ref{nantes}) is true thanks to Lemma \ref{res}. The statement follows.
\end{proof}

Let us come back to the proof of Theorem \ref{getup}.
Equation (\ref{gabriele}) ensures us that $\tilde{\kappa}_{\beta,b}^{-1}$ is bounded independently of $\beta$ in any $C^\ell$ norm. Indeed, when we differentiate it, it depends on $\beta$ only through $F_{b,\beta}$, which is bounded independently of $\beta$ in any $C^\ell$ norm.

Therefore, we deduce from Lemma \ref{rennes} and the chain rule that we may bound the $C^\ell$ norm of the functions $\tilde{\kappa}_{\beta,b}$ independently of $\beta$.

Next, we check that $\partial \varphi_{\beta,\mathcal{O}_b}$ is bounded independently of $\beta$ in any $C^\ell$ norm. Indeed, we have
\begin{equation*}
\partial \varphi_{\beta,\mathcal{O}_b}(x)= \hat{\xi}_b \big{(}\tilde{\kappa}_{\beta,b}(x), F_{b,\beta} (\tilde{\kappa}_{\beta,b}(x) )\big{)}.
\end{equation*}

When we differentiate this expression, we see that the only terms which depend on $\beta$ involve some derivatives of $F_{b,\beta}$ and some derivatives of $\kappa_{\beta,b}$, both of which are bounded independently of $\beta$.

This proves the theorem in the case where $\mathcal{O}\subset \pi(V_b)$ for some $b\in B_1$.

 Let us now consider an arbitrary $\mathcal{O}$. We may suppose that $N\geq N_\mathcal{O}+2N_1$, since for finite values of $N$, the result just follows from Lemma \ref{res2}. Let $\beta\in B^N$ be such that $\beta_2\neq 0$ and $\Phi_\beta^N(\Lag)\cap \mathcal{O}\neq \emptyset$. By the preliminary remarks of the proof, there exists $k\leq N$ such that $\beta_k\in B_1$. We may suppose that $N-k\leq N_\mathcal{O}$ for a $N_\mathcal{O}$ which does not depend on $N$ or $k$. We then have
\begin{equation*}
\Phi^N_\beta(\Lag)\cap T^*\mathcal{O}\equiv \{(x,\partial_x \varphi_{\beta, \mathcal{O}} (x))\} = \Phi^{N-k}_{\beta''}\big{(} \{x', \partial\varphi_{\beta',\mathcal{O}_{\beta_k}}(x')\}\big{)}\cap T^*\mathcal{O},
\end{equation*}
where $\beta'=\beta_0...\beta_{k-1}$, $\beta''=\beta_k...\beta_{N-1}$ and where the $x'$ belong to a subset of $\mathcal{O}_b$.

Let us denote by $\tilde{\Phi}_{\beta,k}$ the map sending $x$ to the $x'$ such that $$(x,\partial_x \varphi_{\beta, \mathcal{O}} (x)) =  \Phi^{N-k}\big{(} x', \partial\varphi_{\beta',\mathcal{O}_{\beta_k}}(x')\big{)}.$$

We have $\partial_x \varphi_{\beta,\mathcal{O}}(x)= \pi_{\xi} \big{(} \Phi^{N-k} \big{(} \tilde{\Phi}_{\beta,k}(x), \partial\varphi_{\beta',\mathcal{O}_{\beta_k}}(\tilde{\Phi}_{\beta,k}(x)\big{)}\big{)}\big{)}$, where $\pi_\xi$ denotes the projection on the $\xi$ variable. Hence, since $\varphi_{\beta',\mathcal{O}_{\beta_k}}$ is bounded in any $C^\ell$ norm independently of $\beta$, we only have to prove that $\tilde{\Phi}_{\beta,k}$ is bounded in any $C^\ell$ norm independently of $\beta$.

On the one hand, we have $\tilde{\Phi}^{-1}_{\beta,k}(x')= \pi_x \big{(} \Phi^{N-k}(x',\partial\varphi_{\beta',\mathcal{O}_{\beta,k}}(x'))\big{)}$. Therefore, since $\Phi^{N-k}$ is a diffeomorphism, the map $\tilde{\Phi}_{\beta,k}$ is bounded in any $C^\ell$ norm independently of $\beta$ just like $\partial\varphi_{\beta',\mathcal{O}_{\beta,k}}$ is.

On the other hand, if $(x_1,...,x_d)$ are local coordinates on $X$, then for all $j=1...d$, and all $x\in \mathcal{O}$, we claim that 
\begin{equation}\label{inverseborne}
\Big{|}\partial_j \tilde{\Phi}^{-1}_{\beta,k}(x)\Big{|}\geq 1.
\end{equation} 

Suppose for contradiction that we can find $x$ such that $\Big{|}\partial_j \tilde{\Phi}^{-1}_{\beta,k}(x)\Big{|}< 1$. We may then find a sequence $x_n$ such that $d_X(x,x_n)=1/n$, and $d_X\big{(}\tilde{\Phi}^{-1}_{\beta,k}(x),\tilde{\Phi}^{-1}_{\beta,k}(x_n)\big{)}= <1/n$ for $n$ large enough. Write $\rho= \tilde{\Phi}^{-1}_{\beta,k}(x)$ and $\rho_n= \tilde{\Phi}^{-1}_{\beta,k}(x_n)\big{)}$. The function $d_X(\Phi^t(\rho),\Phi^t(\rho^n))$ is bounded for all $t\leq 0$, thanks to Hypothesis \ref{puissanceplus}. Furthermore, we have $d_X\big{(}\Phi^k(\rho),\Phi^k(\rho_n)\big{)}< d_X(\rho,\rho_n)$. But, if we take $n$ large enough, we can ensure that $d_X\big{(}\Phi^t(\rho), \Phi^t(\rho_n)\big{)}<r_i$ for all $t\in (-\infty,k)$, thus contradicting Corollary \ref{dompteur2}. This proves (\ref{inverseborne}).

The theorem then follows from the chain rule, since the derivatives of the inverse map are bounded as long as we apply them to vectors away from zero.
\end{proof}
\subsection{Distance between Lagrangian manifolds}
Let us now state a lemma concerning the distance between the Lagrangian manifolds described in Theorem \ref{getup}, which will play a key role in the proof of Corollary \ref{blacksabbath3}. It is very similar to Proposition 4 in \cite{Ing}.
\begin{lemme}\label{gidelim}
Let $\mathcal{O}$ be a bounded open set in $X$. There exists a constant $C_\mathcal{O}\bel 0$ such that for any $n,n'\in \mathbb{N}$, for any $\beta\in B^n, \beta'\in B^{n'}$, and for any $x\in\mathcal{O}$, such that $x\in \pi_X\big{(}\Phi_{\beta,\mathcal{O}}(\Lag)\big{)}\cap  \pi_X\big{(}\Phi_{\beta',\mathcal{O}}(\Lag)\big{)}$,
we have either $\partial \varphi_{\beta,\mathcal{O}}(x) = \partial \varphi_{\beta',\mathcal{O}}(x) $ or 
\begin{equation}\label{hurry}
|\partial \varphi_{\beta,\mathcal{O}} (x)- \partial \varphi_{\beta',\mathcal{O}}(x)|\geq C_1 e^{-\sqrt{b_0} \max(n-\tau(\beta),n'-\tau(\beta'))}, 
\end{equation}
with $\tau(\beta)$ defined as in (\ref{rollingstone}), and where $b_0$ is as in Hypothesis \ref{Strong}.
\end{lemme}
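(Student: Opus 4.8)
The plan is to argue with the two points of $S^*X$ lying above $x$, namely $\rho:=(x,\partial\varphi_{\beta,\mathcal{O}}(x))\in\Phi_\beta(\Lag)$ and $\rho':=(x,\partial\varphi_{\beta',\mathcal{O}}(x))\in\Phi_{\beta'}(\Lag)$. I will flow them backwards until they return to $\Lag$ (synchronising the two time lengths), bound from above the rate at which they can have separated using only the lower curvature bound $-b_0$, and then derive a contradiction from the non‑expansion Hypothesis \ref{puissanceplus} together with the convexity of the squared distance (Corollary \ref{dompteur2}) whenever $\partial\varphi_{\beta,\mathcal{O}}(x)$ and $\partial\varphi_{\beta',\mathcal{O}}(x)$ are too close. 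So assume $\rho\neq\rho'$ (otherwise there is nothing to prove) and, exchanging the roles of $\beta$ and $\beta'$ if necessary, that $m:=n-1-\tau(\beta)\geq n'-1-\tau(\beta')=:\mu$, so that $\max(n-\tau(\beta),n'-\tau(\beta'))=n-\tau(\beta)$. Set $\theta:=|\partial\varphi_{\beta,\mathcal{O}}(x)-\partial\varphi_{\beta',\mathcal{O}}(x)|\bel 0$; since $\rho,\rho'\in S^*X$, the quantity $\theta$ is, up to a constant depending only on $\mathcal{O}$ and the chosen chart, the Riemannian angle at $x$ between the two unit covectors, and it suffices to show $\theta\geq C_\mathcal{O}e^{-\sqrt{b_0}(n-\tau(\beta))}$.

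\emph{Step 1 (flowing back to $\Lag$ and synchronising).} Arguing exactly as at the beginning of the proof of Theorem \ref{getup} and in \cite{Ing}, using the definition $\tau(\beta)$ of (\ref{rollingstone}) of the truncated Lagrangians together with the invariance Hypothesis \ref{chaise} (relation (\ref{invfut})) and part (3) of Hypothesis \ref{Guepard}, one sees that $\rho=\Phi^m(\hat\rho)$ for some $\hat\rho\in\Lag\cap\inco$, and similarly $\rho'=\Phi^{\mu}(\hat\rho')$ for some $\hat\rho'\in\Lag\cap\inco$. Applying (\ref{invfut}) once more with time $m-\mu\geq 0$ to the point $\hat\rho'\in\Lag\cap\inco=\Phi^{m-\mu}(\Lag)\cap\inco$, we get $\hat{\hat\rho}':=\Phi^{-(m-\mu)}(\hat\rho')\in\Lag$, and $\Phi^m(\hat{\hat\rho}')=\Phi^{\mu}(\hat\rho')=\rho'$. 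Thus $\hat\rho,\hat{\hat\rho}'\in\Lag$ and $\rho=\Phi^m(\hat\rho)$, $\rho'=\Phi^m(\hat{\hat\rho}')$ flow forward from two genuine points of $\Lag$ in the \emph{same} time $m$. (The intermediate trajectory of $\hat{\hat\rho}'$ may escape to infinity for a while, but this will not matter below.)

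\emph{Step 2 (upper bound on the separation).} Fix a lift $\tilde x\in\tilde X$ of $x$ and lift $\rho,\rho'$ to covectors over $\tilde x$. Their backward trajectories project to two geodesics of $\tilde X$ issued from the common point $\tilde x$. Since $\tilde X$ is simply connected with sectional curvature $\geq -b_0$ (part (ii) of Hypothesis \ref{Strong}) and hence has no conjugate points, the Rauch comparison theorem — applied to a geodesic variation with vanishing initial value, exactly as in Remark \ref{lenovo} but now valid globally because the lower curvature bound holds everywhere — yields a constant $C_0\bel 0$, depending only on $\mathcal{O}$, with
\begin{equation*}
d_X(\Phi^{-t}\rho,\Phi^{-t}\rho')\leq d_{\tilde X}\big(\pi_{\tilde X}\tilde\Phi^{-t}\tilde\rho,\pi_{\tilde X}\tilde\Phi^{-t}\tilde\rho'\big)\leq C_0\,\theta\,e^{\sqrt{b_0}\,t}\qquad(t\geq 0).
\end{equation*}
Equivalently, writing $c(s):=d_X^2(\Phi^s\hat\rho,\Phi^s\hat{\hat\rho}')$ and using $\rho=\Phi^m(\hat\rho)$, $\rho'=\Phi^m(\hat{\hat\rho}')$, this says $c(s)\leq C_0^2\theta^2 e^{2\sqrt{b_0}(m-s)}$ for all $s\leq m$.

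\emph{Step 3 (conclusion by convexity).} Suppose for contradiction that $\theta<(r_i/C_0)\,e^{-\sqrt{b_0}m}$, with $r_i$ as in (\ref{mininj}). Then $c(0)\leq C_0^2\theta^2 e^{2\sqrt{b_0}m}<r_i^2$; since $\hat\rho,\hat{\hat\rho}'\in\Lag$, Hypothesis \ref{puissanceplus} gives $c(s)\leq c(0)<r_i^2$ for $s\leq 0$, and the Step 2 bound gives $c(s)<r_i^2$ for $0\leq s\leq m$, so $c<r_i^2$ on $(-\infty,m]$ and Corollary \ref{dompteur2} makes $c$ convex there. A convex function bounded above by its value at the right endpoint of a left half‑line is non‑decreasing on that half‑line, so $c$ is non‑decreasing on $(-\infty,0]$; convexity then propagates this (the right derivative at $0$ is $\geq$ the left derivative, which is $\geq 0$) to $[0,m]$. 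But $c(m)=d_X^2(\rho,\rho')=0$, whence $c(0)\leq c(m)=0$, i.e. $\hat\rho=\hat{\hat\rho}'$ and therefore $\rho=\Phi^m(\hat\rho)=\Phi^m(\hat{\hat\rho}')=\rho'$, contradicting $\theta\bel 0$. Hence $\theta\geq (r_i/C_0)\,e^{-\sqrt{b_0}m}\geq (r_i/C_0)\,e^{-\sqrt{b_0}(n-\tau(\beta))}$, which is (\ref{hurry}) (with $C_\mathcal{O}$ the product of $r_i/C_0$ and the angle‑to‑coordinate constant). I expect the main obstacle to be Step 1: one must use the invariance Hypothesis \ref{chaise} carefully to flow each of $\rho,\rho'$ back to $\Lag$ \emph{itself} (rather than merely to some forward iterate of $\Lag$) and, crucially, to synchronise the two unequal lengths $m$ and $\mu$, so that in Step 3 the non‑expansion hypothesis can be applied to two points of $\Lag$ at a common time; the convexity bookkeeping of Step 3 also needs a little care, whereas the Rauch comparison in Step 2 is routine.
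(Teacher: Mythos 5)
Your argument is correct in outline but takes a genuinely different route from the paper's. The paper works \emph{symbolically}: it shows that if no time $k\leq\max(n,n')$ puts the two backward trajectories in distinct charts $V_b$, then $\rho$ and $\rho'$ would both sit over $x$ in a single truncated Lagrangian $\Phi_{\beta''}^{\max(n,n')}(\Lag)$ (padding $\beta,\beta'$ with $0$'s), contradicting the smooth‑projection statement of Theorem \ref{getup}; the resulting $c_0$‑separation at time $-k$ is then pushed forward to an angular lower bound via the exponential growth estimate of Remark \ref{lenovo}. You instead work \emph{geometrically}: you flow both points back to $\Lag$ at a common time $m$, bound the intermediate separation with a global Rauch comparison (valid because the curvature lower bound $-b_0$ of Hypothesis \ref{Strong}(ii) holds on all of $X$, not only near $K$), and then use the non‑expansion Hypothesis \ref{puissanceplus} together with the convexity of $d_X^2$ (Corollary \ref{dompteur2}) to force the squared distance to vanish identically on $[0,m]$ whenever $\theta$ is smaller than $r_i C_0^{-1}e^{-\sqrt{b_0}m}$, hence $\hat\rho=\hat{\hat\rho}'$, a contradiction. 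In effect you inline the very convexity mechanism that underlies Theorem \ref{getup} rather than citing that theorem as a black box, which gives a cleaner and more self‑contained geometric picture, at the cost of the extra bookkeeping in your Step 1. That step --- in particular the claim $\hat\rho'\in\Lag\cap\mathcal{DE}_-$, which is exactly what you need to invoke (\ref{invfut}) and land $\hat{\hat\rho}'$ back in $\Lag$ --- does deserve to be spelled out (a point of $\Lag\cap V_0$ whose forward orbit re‑enters the interaction region at a later step cannot lie in $\mathcal{DE}_+$, by Hypothesis \ref{Guepard}(3) and (\ref{descendance})); but this is the same $\tau(\beta),\tau(\beta')\leq1$ reduction that the paper also asserts with a one‑line reference, so it is not a gap specific to your route, and you correctly flagged it as the delicate point.
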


\begin{proof}
By the remarks at the beginning of the proof of Theorem \ref{ibrahim2}, we may restrict ourselves to sequences $\beta,\beta'$ such that $\beta_2,\beta'_2\neq 0$, that is to say, to sequences such that $\tau(\beta),\tau(\beta')\leq 1$, and we may find constants $N_\mathcal{O}$ and $N_1$ such that for all $N\geq N_\mathcal{O}+ 2 N_1$ and for all $\beta\in B^N$ such that $\beta_2\neq 0$ and $\Phi_\beta^N(\Lag)\cap \mathcal{O}\neq \emptyset$, we have  $\forall k = N_1... N-N_1-N_\mathcal{O}$, $\beta_k\in B_1$.

Let $n,n'\in \mathbb{N}$, $n,n'\geq N_\mathcal{O}+2N_1$, $\beta\in B^n, \beta'\in B^{n'}$, such that $\beta_2,\beta'_2\neq 0$, and let $x\in\mathcal{O}$ be such that $x\in \pi_X\big{(}\Phi_{\beta,\mathcal{O}}(\Lag)\big{)}\cap  \pi_X\big{(}\Phi_{\beta',\mathcal{O}}(\Lag)\big{)}$, and $\partial \varphi_{\beta,\mathcal{O}}(x) \neq \partial \varphi_{\beta',\mathcal{O}}(x)$. Let us write $\rho=(x,\partial\varphi_{\beta,\mathcal{O}})$ and $\rho'=(x,\partial\varphi_{\beta',\mathcal{O}})$.

  We claim that there exists $0\leq k\leq max(n,n')$ such that for each $b'\in B$, if $\Phi^{-k}(\rho)\in V_{b'}$, then $\Phi^{-k}(\rho')\notin V_{b'}$.
Indeed, if no such $k$ existed, then for each $k$, there would exist $b_k\in B$ such that $\Phi^{-k}(\rho)\in V_{b_k}$ and $\Phi^{-k}(\rho')\in V_{b_k}$ for each $0\leq k\leq \max(n,n')$. We would then have $\rho\in \Phi_{\beta''}^{\max(n,n')}(\Lag)$, and $\rho'\in \Phi_{\beta''}^{\max(n,n')}(\Lag)$ for some sequence $\beta''$ built by possibly adding some $0$'s in front of the sequences $\beta$ and $\beta'$. This would contradict the statement of Theorem \ref{getup}.

Recall that from the (elementary) Lemma 22 in \cite{Ing}, we have the existence of a constant $c_0\bel 0$ such that for any $\rho,\rho'\in S^*X_0$ such that $d_{ad}(\rho,\rho')< c_0$, there exists $b\in B$ such that $\rho,\rho'\in V_b$.

Thanks to this, we deduce that there exists $0\leq k \leq \max(n,n')$ such that $d_{ad}(\Phi^{-k}(\rho),\Phi^{-k} (\rho'))\geq c_0$.

Combining this fact with equation (\ref{flowers}) and Remark \ref{lenovo}, we get
\begin{equation*}
d_{ad}(\rho,\rho')\geq C e^{-\sqrt{b_0} \max(n,n')}
\end{equation*}
Using the fact that all metrics are equivalent on a compact set, we may compare $d_{ad}(\rho,\rho')$ with $|\partial \varphi_{\beta,\mathcal{O}} (x)- \partial \varphi_{\beta',\mathcal{O}}(x)|$ and we deduce from this the lemma.
\end{proof}

\section{Proof of the results concerning the distorted plane waves}\label{preuvequantique}
\subsection{Proof of Theorem \ref{ibrahim2}}
\begin{proof}
We want to write $E_h$ as a sum of Lagrangian states associated to Lagrangian manifolds which do all project smoothly on the base manifold. Most of the work was done in \cite{Ing}, and we shall now recall what we need from this paper.

Let us write $P_h:= -h^2\Delta-c_0 h^2 -1$ and $U(t):= e^{-it P_h/h}$.
Let us fix $\chi\in C_c^\infty(X)$, and a function $\hat{\chi}$ as in Remark \ref{remarqueoutgoing}. Recall that the pseudo-differential operators $(\Pi_b)_{b\in B}$ were introduced in Theorem \ref{ibrahim}.
 For each $b\in B$, we set $U_b:= e^{i/h}\Pi_b U(1)$. For each $n\in \mathbb{N}$, for each $\beta=\beta_0,...,\beta_{n-1}\in B^n$, we set $U_\beta= U_{\beta_{n-1}}\circ...\circ U_{\beta_0}$.
 
Recall that $\mathcal{U}_b: L^2(X)\longrightarrow L^2(\mathbb{R}^d)$ is a Fourier integral operator quantizing the symplectic
change of local coordinates $\kappa_b:(x,\xi) \mapsto
(y^{\rho_b},\eta^{\rho_b})$, and which is microlocally unitary on the microsupport of $\Pi_b$. Therefore, $\mathcal{U}_b^*: L^2(\mathbb{R}^d)\rightarrow L^2(X)$ quantizes the change of coordinates $(y^{\rho_b},\eta^{\rho_b})\mapsto (x,\xi)$.

Recall that the time $\tins$ was introduced in Theorem \ref{Cyril}. Equation (81) in \cite{Ing} tells us that there exist $\hat{\chi}_1\in C^\infty_c(X)$, and a constant\footnote{The time $N_{\hat{\chi}}\in \mathbb{N}$ is actually chosen so that for any $N\in \mathbb{N}$, if $\rho\in \mathrm{supp} (\hat{\chi}_{1})$ and $\Phi^N(\rho)\in \mathrm{supp} (\hat{\chi})$, then for all $k$ with $N_{\hat{\chi}}\leq k \leq N-N_{\hat{\chi}}$, we have $\Phi^{k}(\rho)\in V_b$ for some $b\in B_1\cup B_2$.} $N_{\hat{\chi}}\bel 0$ such that, writing $n':=n+2\tins+1$, we have

\begin{equation*}
\begin{aligned}
\chi E_h=\chi \hat{\chi}E_h &= \chi\sum_{n=1}^{M_r |\log h|} (\hat{\chi} U(1))^{N_{\hat{\chi}+1}} \Big{(}
\sum_{\beta\in \mathcal{B}_{n'+\tins+1}}U_{\beta_{n'}...\beta_{n'+\tins+1}}
\mathcal{U}^*_{\beta_{n'}} \theta_{n',\beta}\Big{)}\\
&+\chi\sum_{l=0}^{N_\chi+ 3\tins+3}\big{(}\hat{\chi} U(1)\big{)}^l(1-\hat{\chi})\hat{\chi}_{1} E_h^0 +O_{L^2}(h^r).
\end{aligned}
\end{equation*}

Here the sets $\mathcal{B}_N\subset B^N$ are defined, for $N\geq 2\tins+2$ by
\begin{equation*}
\mathcal{B}_N= (B_1\cup B_2)^{\tins+1} \times B_1^{N-2\tins-2} \times  (B_1\cup B_2)^{\tins+1} ,
\end{equation*}

and the $\theta_{n',\beta}$ are Lagrangian states of the form
\begin{equation*}
\theta_{n',\beta}(y)=a_{n',\beta}(y;h)e^{i\phi_{n',\beta}(y)},
\end{equation*}
with $y$ in some bounded open subset of $\mathbb{R}^d$.
Furthermore, we have estimates analogous to (\ref{sheriff}), that is: for any $\ell\in \mathbb{N}$, $\epsilon>0$, there exists $C_{\ell,\epsilon}$
such that for all $n\geq 0$, for all $h\in (0,h_0]$, we have
\begin{equation*}
\sum_{\beta\in \mathcal{B}_{n'}} \|a_{n',\beta}\|_{C^\ell} \leq C_{\ell,\epsilon}
e^{n(\mathcal{P}(1/2)+\epsilon)}.
\end{equation*}

Just as in (\ref{calamar}), the Lagrangian states $\theta_{n',\beta}$ are associated to the Lagrangian manifolds described in Theorem \ref{Cyril}. Namely, we have

\begin{equation*}
\{(y^{{\beta_{n'}}}, \partial \phi_{n',\beta} (y^{{\beta_{n'}}}))\}\subset \kappa_{\beta_{n'}}\big{(}\Phi_{\beta_{n'}}^{n'}(\Lag)\big{)}.
\end{equation*}

Now, $(\hat{\chi} U(1))^{N_{\hat{\chi}}+1} U_{\beta_{n'},...,\beta_{n'+\tins+1}} \mathcal{U}_{\beta_{n'}}^*$ 
and $(\hat{\chi} U(1))^l(1-\hat{\chi})\hat{\chi}_1$ are Fourier integral operators.
We want to apply Lemma \ref{pacal} to describe their action on the states $e_{n'}$ and $E_h^0$ respectively. 
Theorem \ref{getup} ensures us that (\ref{beignet}) is satisfied. We must check that (\ref{minhlien}) is satisfied. For the operators $(\chi U(1))^{N_\chi+1} U_{\beta_{n'},...,\beta_{n'+\tins+1}} \mathcal{U}_{\beta_{n'}}^*$, it follows from the fact that the propagation in the future of small pieces of local unstable manifolds project smoothly on $X$ as explained in Remark \ref{gabriele2}.

To apply $(\hat{\chi} U(1))^l(1-\hat{\chi})\hat{\chi}_1$ to $E_h^0$, take $\rho\in \Lag\cap S^*\big{(}\mathrm{supp}((1-\hat{\chi})\hat{\chi}_1))\big{)}$. We may take coordinates on $X$ near $\pi(x)$ which induces symplectic coordinates $(\hat{x},\hat{\xi})$ on a neighbourhood of $\rho$ in $T^*X$ such that $\rho=(0,0)$ in these coordinates and such that the manifold $\Lag$ is tangent to $\{\hat{\xi}=0\}$. In this system of coordinates near $\rho$, and with any system of coordinates on $X$ near $\pi(\Phi^t(\rho))$, we are in the framework of Lemma \ref{pacal}. Indeed, the condition (\ref{minhlien}) is then satisfied thanks to Lemma \ref{res}.

By writing 
\begin{equation*}
\begin{aligned}
{\mathcal{B}}_n^{\hat{\chi}}&:= \{\beta\beta'~~;~~ \beta\in \mathcal{B}_{n'},  ~~~\beta'\in B^{N_{\hat{\chi}} +1}\},~~~~ \text{ if } n\geq 2,\\
{\mathcal{B}}_1^{\hat{\chi}}&:= \{\beta\in B^{N_{\hat{\chi}}+ 3N_{uns} +1}\},
\end{aligned}
\end{equation*}
we obtain that

\begin{equation}\label{faust}
\chi E_h(x) = \sum_{n=0}^{\lfloor
M_{r,\ell} |\log h|\rfloor}
\sum_{\beta\in \mathcal{B}^{\hat{\chi}}_n} e^{i \varphi_{\beta}(x)/h}
a_{\beta,{\hat{\chi}}}(x;h) + R_{r,{\hat{\chi}}},
\end{equation}
where $a_{\beta,\hat{\chi}}\in S^{comp}(X)$, and each $\varphi_{\beta}$ is a smooth function defined in a neighbourhood of the support of
$a_{\beta,\hat{\chi}}$, and where
we have
 \begin{equation*}\|R_{r,\hat{\chi}}\|_{C^\ell}=O(h^r).
\end{equation*}

 For any $\ell\in \mathbb{N}$, $\epsilon>0$, there exists $C_{\ell,\epsilon}$
such that
\begin{equation}\label{sheriff4}
\sum_{\beta\in \mathcal{B}^{\hat{\chi}}_n} \|a_{\beta,{\hat{\chi}}}\|_{C^\ell} \leq C_{\ell,\epsilon}
e^{n(\mathcal{P}(1/2)+\epsilon)}.
\end{equation}

Note that the expression (\ref{faust}) has interest of its own, because it applies to functions $\chi\in C_c^\infty$ with arbitrary support. We shall now show that if the support of $\chi$ is small enough, we may regroup the terms of (\ref{faust}) in a clever way so that (\ref{hurry2}) is satisfied. This is the content of the next subsection.

\subsubsection{Regrouping the Lagrangian states}\label{regrouping}
From now on, we fix a compact set $\mathcal{K}\subset X$, and we suppose that $\hat{\chi}=1$ on $\mathcal{K}$.
\begin{lemme}\label{dedikodu}
There exists $\varepsilon_\mathcal{K}$ such that, for any open set $\mathcal{O}\subset \mathcal{K}$ of diameter smaller than $\varepsilon_\mathcal{K}$, the following holds: $\forall n\in \mathbb{N}$, $\forall \beta \in \mathcal{B}_n^{\hat{\chi}}$, $\forall t \geq 0$ and for all $\rho,\rho'\in \Phi_\beta^n(\Lag)\cap S^*\mathcal{O}$, we have $d_X(\Phi^{-t}(\rho),\Phi^{-t}(\rho')) < r_i$.
\end{lemme}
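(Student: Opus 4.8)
The plan is to reduce the statement to an application of the convexity of $t\mapsto d_X^2$ on the universal cover (Corollary \ref{dompteur2}), exactly as was done in the proof of Lemma \ref{res2} and at the end of the proof of Theorem \ref{getup}. First I would recall that, by Lemma \ref{res2}, for any bounded open set $\mathcal{O}$ and any $N\in\mathbb{N}$, $\beta\in B^N$, the Lagrangian manifold $\Phi_\beta^N(\Lag)\cap(S^*\mathcal{O})$ projects smoothly on $X$; in particular if $\rho=(x,\xi)$ and $\rho'=(x,\xi')$ both lie in it, then $\xi=\xi'$. So, given $\rho\neq\rho'$ in $\Phi_\beta^n(\Lag)\cap S^*\mathcal{O}$, their base points $\pi_X(\rho)$ and $\pi_X(\rho')$ are distinct, and both are in $\mathcal{O}$, hence $d_X(\rho,\rho')<\operatorname{diam}(\mathcal{O})\leq\varepsilon_\mathcal{K}$.

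Next I would use Hypothesis \ref{puissanceplus}: the truncated propagated Lagrangians $\Phi_\beta^n(\Lag)$ arise from $\Lag$ by alternately intersecting with the sets $V_b$ and applying $\Phi^1$, and by the preliminary remarks in the proof of Theorem \ref{getup} (using \eqref{invpass}) we may reduce to $\Lag\cap\mathcal{DE}_-$, whose propagation in the past is governed by Hypothesis \ref{puissanceplus}, i.e. $t\mapsto d_X(\Phi^t(\rho),\Phi^t(\rho'))$ is non-increasing for $t\leq 0$ as long as the two points stay in $X_0$, while outside $X_0$ the analogous control comes from condition \eqref{conditionepsilon} together with \eqref{flowers} (this is precisely the mechanism already invoked in Lemma \ref{res2}). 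Thus, for all $t\geq 0$, $d_X(\Phi^{-t}(\rho),\Phi^{-t}(\rho'))\leq d_X(\rho,\rho')<\varepsilon_\mathcal{K}$. Choosing $\varepsilon_\mathcal{K}\leq r_i$ (with $r_i$ as in \eqref{mininj}) then immediately gives $d_X(\Phi^{-t}(\rho),\Phi^{-t}(\rho'))<r_i$ for all $t\geq 0$, which is the assertion of the lemma. Formally, one lifts the geodesics $\Phi^{-t}(\rho)$ and $\Phi^{-t}(\rho')$ to the universal cover $\tilde X$ using that the exponential map is a covering map on balls of radius $<r_i$, applies Corollary \ref{dompteur2}, and reads off that the distance cannot exceed its value at $t=0$.

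The only subtle point—and the one I would be most careful about—is the passage from the "stays in $X_0$" estimate of Hypothesis \ref{puissanceplus} to a genuine bound valid for \emph{all} negative times along the broken trajectory defining $\Phi_\beta^n(\Lag)$, including the excursions outside the interaction region. This is exactly where one uses the geodesic convexity (part (3) of Hypothesis \ref{Guepard}) and the calibration \eqref{conditionepsilon}: once the diameter of each $V_b$ is $<\epsilon_{max}$, a pair of points that is $d_{ad}$-close in some $V_b$ has its images under $\Phi^1$ at $d_X$-distance $< e^{-\mu}r_i<r_i$, and \eqref{flowers} propagates this control across the bounded excursions; in the non-positively curved simply connected cover, convexity of $d_X^2$ prevents the distance from ever exceeding $r_i$ once it starts below it and is non-increasing at $-\infty$. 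So the choice of $\varepsilon_\mathcal{K}$ is: any positive number $\leq\min(r_i,\epsilon_{max})$ small enough that $S^*\mathcal{O}$-pairs fall into the regime where Lemma \ref{res2} and Hypothesis \ref{puissanceplus} apply; and the proof is then a routine combination of Lemma \ref{res2}, Hypothesis \ref{puissanceplus}, \eqref{conditionepsilon}, \eqref{flowers}, and Corollary \ref{dompteur2}.
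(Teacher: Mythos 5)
Your plan assembles the right ingredients (pull back to $\Lag$ via $\Phi^{-n}$, use the small $d_{ad}$-diameter of the sets $V_b$ together with \eqref{conditionepsilon} and \eqref{flowers}, invoke convexity via Corollary~\ref{dompteur2}), and this is indeed how the paper proceeds. But the middle of your argument has a genuine circularity that you flag in your last paragraph without actually resolving. The assertion ``for all $t\geq 0$, $d_X(\Phi^{-t}(\rho),\Phi^{-t}(\rho'))\leq d_X(\rho,\rho')$'' does \emph{not} follow from Hypothesis~\ref{puissanceplus}: that hypothesis concerns pairs of points \emph{in} $\Lag$, whereas $\rho,\rho'$ live in $\Phi_\beta^n(\Lag)$. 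It only gives monotonicity for $t\geq n$ (once $\Phi^{-n}(\rho),\Phi^{-n}(\rho')\in\Lag$), and even that is only useful after one knows $d_X(\Phi^{-n}(\rho),\Phi^{-n}(\rho'))<r_i$. For $0\leq t\leq n$ the monotonicity cannot be obtained from convexity either, since Corollary~\ref{dompteur2} requires $d_X<r_i$ on the whole interval, which is precisely the conclusion sought. (Hypothesis~\ref{puissanceplus} also has no ``as long as the two points stay in $X_0$'' caveat; that phrase misreads it.)

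The correct order is therefore: first establish, by the step-by-step mechanism (the sets $V_{\beta_k}$ have $d_{ad}$-diameter $<\epsilon_{max}$, so that each pair $\Phi^{-k}(\rho),\Phi^{-k}(\rho')$ is $d_{ad}$-close, and \eqref{conditionepsilon} together with \eqref{flowers} then keeps $d_X<r_i$ across each unit time interval), that $d_X(\Phi^{-t}(\rho),\Phi^{-t}(\rho'))<r_i$ for all $0\leq t\leq n$; only \emph{then} invoke Hypothesis~\ref{puissanceplus} to extend the bound to $t\geq n$. Your last paragraph sketches this mechanism but presents it as a backup to the monotonicity claim, rather than as the primary source of the $<r_i$ control. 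There is also a case you do not address: for $n=1$, the words in $\mathcal{B}_1^{\hat{\chi}}$ may include the letter $0$, so the trajectory can leave $X_0$, and the letter-by-letter argument does not apply; the paper handles this case separately by a compactness argument over the bounded time horizon $[0,N_{\hat\chi}+3N_{uns}+1]$, using the uniform metric comparison $d_{ad}\leq C_{\mathcal{K}}d_X$ which in turn relies on Theorem~\ref{getup} (not merely Lemma~\ref{res2}, which gives no uniform constant). So: right approach, but the logical dependency is inverted in the middle, and the $n=1$ case is missing.
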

\begin{proof}
First of all, note that thanks to Hypothesis \ref{puissanceplus}, we only have to show the result for $t\leq |\beta|$, since $\Phi^{-|\beta|}(\rho)$ and $\Phi^{-|\beta|}(\rho')$ both belong to $\Lag$, and hence can only approach each other in the past.

Take $\epsilon$ small enough so that for all $\rho_1,\rho_2\in S^*\mathcal{K}$ with $d_{ad}(\rho_1,\rho_2)< \epsilon$, we have $d\big{(}\Phi^t(\rho_1),\Phi^t(\rho_2)\big{)} < r_i$ for all $t\in [0;N_\chi+3\tins+1]$.

Thanks to Theorem \ref{getup}, the Lagrangian manifolds $\Phi_\beta^n(\Lag)\cap S^*\mathcal{K}$ project smoothly on $X$, so that we may find a constant $C_\mathcal{K}$ such that, $\forall n\in \mathbb{N}$, $\forall \beta \in \mathcal{B}_n^{\hat{\chi}}$ and for all $\rho,\rho'\in \Phi_\beta^n(\Lag)\cap S^*\mathcal{O}$, we have
$$d_{ad}(\rho,\rho')\leq C_{\mathcal{K}} d_X(\rho,\rho').$$

By taking $\varepsilon_{\mathcal{K}}= \epsilon/C_{\mathcal{K}}$, this gives us the result when $n=1$. 

When $n\geq 2$, the result follows from the definition of $\mathcal{B}_n^{\hat{\chi}}$, from the fact that the sets $(V_b)_{b\in B_1\cup B_2}$ all have diameter smaller than $r_i$, and from equation (\ref{conditionepsilon}).
\end{proof}

Without loss of generality, we will always take $\varepsilon_{\mathcal{K}}\leq 1$. Let $\mathcal{O}\subset \mathcal{K}$ be an open set of diameter smaller that $\varepsilon_\mathcal{K}$.

Let us fix $\tilde{\Lag}$ to be a pre-image of $\Lag$ by the projection $S^*\tilde{X}\rightarrow S^*X$.
Let us denote by $(\mathcal{O}_j)_{j\in \mathcal{J}}$ the pre-images of $\mathcal{O}$ by the projection $\tilde{X}\rightarrow X$. If we suppose that the diameter of $\mathcal{O}$ is smaller than $r_i$, then the $\mathcal{O}_j$ are all disjoint.

For each $b\in B$, we set $\tilde{V}_b$ to be the pre-image of $V_b$ by the projection $S^*\tilde{X}\rightarrow S^*X$. The truncated propagator $\Phi^k_\beta$ may then be defined on $S^*X$ just as in (\ref{deftronque}), but with $V_b$ replaced by $\tilde{V}_b$.

For every $n\in\mathbb{N}$ and every $\beta\in \mathcal{B}^{\hat{\chi}}_n$ such that $\Phi_\beta(\Lag)\cap S^*\mathcal{O}\neq \emptyset$, we claim that there exists a unique $j_\beta\in \mathcal{J}$ such that $\Phi_\beta(\tilde{\Lag})\subset S^*\mathcal{O}_{j_\beta}$. Indeed, it is by definition, we have $\Phi_\beta(\tilde{\Lag})\subset \bigcup_{j\in \mathcal{J}} S^*\mathcal{O}_{j}$. But, by Lemma \ref{dedikodu}, we know that for all $t\leq 0$, $\Phi^{-t} \big{(} \Phi_\beta(\tilde{\Lag}) \cap S^*\mathcal{O} \big{)}$ has a diameter smaller than $r_i$, so that it is contained in a single coordinate chart.

If $\beta\in \mathcal{B}^{\hat{\chi}}_n$ and $\beta'\in \mathcal{B}^{\hat{\chi}}_{n'}$
 are such that $\Phi_\beta(\Lag)\cap S^*\mathcal{O}\neq \emptyset$, $\Phi_{\beta'}(\Lag)\cap S^*\mathcal{O}\neq \emptyset$, 
 we shall say that $\beta\sim_{\mathcal{O}} \beta'$ if $j_\beta= j_{\beta'}$. The relation $\sim_{\mathcal{O}}$ is clearly an equivalence relation on $\bigcup_{n\in \mathbb{N}}\{\beta\in\mathcal{B}^{\hat{\chi}}_n; \Phi_\beta(\Lag)\cap S^*\mathcal{O}\neq \emptyset\}$, so we may define
\begin{equation}
\tilde{\mathcal{B}}^\mathcal{O}:= \Big{(}\bigcup_{n\in \mathbb{N}}\{\beta\in\mathcal{B}^{\hat{\chi}}_n; \Phi_\beta(\Lag)\cap S^*\mathcal{O}\neq \emptyset\}\Big{)} \backslash \sim_\mathcal{O}.
\end{equation}

We then define for each $\tilde{\beta}\in \tilde{\mathcal{B}}^\mathcal{O}$:
\begin{equation}\label{rhodia}
\tilde{n}(\tilde{\beta}):= \min \{n\in \mathbb{N};~~ \exists \beta \in \mathcal{B}^{\hat{\chi}}_n \text{ such that } \beta\in \beta'\}.
\end{equation}

\begin{lemme}\label{classefinie}
There exists $N_\mathcal{K}$ such that for all $n\in \mathbb{N}$, $\beta\in \mathcal{B}^{\hat{\chi}}_n$, we have $\beta\in \tilde{\beta}\Longrightarrow n\leq \tilde{n}(\tilde{\beta})+N_\mathcal{K}$.
\end{lemme}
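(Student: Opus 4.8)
The plan is to reduce everything to a geometric statement in the universal cover $\tilde X$ and then control the "near-$K$ shadowing time". First, note that every $\beta\in\mathcal{B}^{\hat\chi}_n$ has word-length $n$ plus a constant depending only on $\tins$ and $N_{\hat\chi}$, so the lemma is equivalent to: there is a constant $N_\mathcal{K}$ such that any two representatives $\beta,\beta'$ of a class $\tilde{\beta}$ satisfy $\big||\beta|-|\beta'|\big|\leq N_\mathcal{K}$. Assume $|\beta|\geq|\beta'|$. By construction of $\sim_\mathcal{O}$ the pieces $\Phi_\beta(\tilde\Lag)$ and $\Phi_{\beta'}(\tilde\Lag)$ — flow-outs of the \emph{same} fixed sheet $\tilde\Lag$ — lie over one and the same lifted chart $\mathcal{O}_j\subset\tilde X$. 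Pick $\rho\in\Phi_\beta(\tilde\Lag)\cap S^*\mathcal{O}_j$. Since $\mathcal{O}$ has diameter $<r_i$ and $\tilde X$ is simply connected of nonpositive curvature, the backward curve $t\mapsto\pi_{\tilde X}(\Phi^{-t}(\rho))$, $0\leq t\leq|\beta|-1$, is a \emph{globally minimizing} geodesic segment joining $\mathcal{O}_j$ to the point $\tilde a:=\Phi^{-(|\beta|-1)}(\rho)\in\tilde\Lag$; in particular $d_{\tilde X}(\pi_{\tilde X}(\tilde a),x_j)=|\beta|-1$, where $x_j$ is a fixed point of $\mathcal{O}_j$ (and similarly for $\beta'$).

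Next I would combine two features. First, the shape of $\mathcal{B}^{\hat\chi}_n$: its middle letters lie in $B_1$, so there is a constant $M_0=M_0(\mathcal{K})$ with $\Phi^{-m}(\rho)$ projecting into the fixed bounded neighbourhood $\bigcup_{b\in B_1}V_b$ of $K$ for all $m\in[M_0,|\beta|-M_0]$; thus the backward orbit of $\rho$ shadows $K$ for $\approx|\beta|$ steps and then reaches $\tilde\Lag$ within $M_0$ further steps. Secondly, $\Lag$ is supported away from the interaction region and, by the invariance Hypothesis \ref{chaise}, is disjoint from the incoming tail $\Gamma^-$, so no point whose entire backward orbit stays near $K$ can lie on $\tilde\Lag$: the shadowing time of $\rho$ is finite, and because the neighbourhood of $K$ and the region carrying $\tilde\Lag$ are adjacent (separated only by the fixed region $X_0$), $|\beta|$ equals that shadowing time up to $O(1)$ — and likewise $|\beta'|$ equals the shadowing time of the corresponding $\rho'\in\Phi_{\beta'}(\tilde\Lag)\cap S^*\mathcal{O}_j$ up to $O(1)$. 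The lemma therefore amounts to the uniform statement that the shadowing times of two points of flow-outs of $\tilde\Lag$ that both return to the same chart $\mathcal{O}_j$ differ by $O(1)$.

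The hard part is precisely this last estimate, and I would attack it by compactness and contradiction. Suppose it fails: there are classes $\tilde{\beta}_k$ with representatives $\beta_k,\beta'_k$, $|\beta_k|-|\beta'_k|\to\infty$, landing in charts $\mathcal{O}_{j_k}$; translating by deck transformations so that $\mathcal{O}_{j_k}$ becomes one of the finitely many charts meeting a fixed neighbourhood of $\mathcal{K}$, extract a subsequence with $\mathcal{O}_{j_k}\equiv\mathcal{O}_*$ fixed while $\tilde\Lag$ is replaced by sheets $\gamma_k\tilde\Lag$. Using the uniform $C^\ell$ bounds of Theorem \ref{getup} on $\Phi_{\beta_k}(\gamma_k\tilde\Lag)\cap S^*\mathcal{O}_*$ (which live in the compact set $S^*\mathcal{O}_*$), I would extract a $C^\ell$ limit: a Lagrangian $\Lambda_\infty\subset S^*\mathcal{O}_*$ that projects smoothly on $X$, whose backward orbit shadows $K$ for all positive times — so $\Lambda_\infty$ lies in a lift of $\Gamma^-$ — and which is a limit of genuine flow-outs of sheets of $\Lag$. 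Either the $\gamma_k\tilde\Lag$ stay in a fixed neighbourhood of $\mathcal{O}_*$ (finitely many sheets, one repeated, whence a pigeonhole over the finitely many adjacency patterns of the lifts of the $V_b$, $b\in B_1$, near $\mathcal{O}_*$ bounds the shadowing time directly), or they escape to infinity by Remark \ref{plaque}, in which case $|\beta'_k|\to\infty$ as well and the same extraction on $\Phi_{\beta'_k}(\gamma_k\tilde\Lag)$ produces a second such limit $\Lambda'_\infty$. One then contradicts the transversality Hypothesis \ref{Happy} (via Proposition \ref{musaraigne}) or the invariance Hypothesis \ref{chaise}: a smoothly-projecting Lagrangian entirely contained in a lift of $\Gamma^-$, obtained as a limit of flow-outs of $\Lag$, would have to become tangent to a stable manifold (or to violate invariance), which is impossible. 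Making this limiting object precise and identifying the exact hypothesis it breaks — so that the "finitely many lifts versus escape to infinity" dichotomy genuinely closes — is the crux; everything preceding it is geodesic geometry in the universal cover together with the combinatorics of $\mathcal{B}^{\hat\chi}_n$.
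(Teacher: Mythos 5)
You correctly reformulate the lemma as a bound on the difference of word-lengths of two representatives, and the reduction to the universal cover and the observation that the backward geodesic from $S^*\mathcal{O}_j$ to $\tilde\Lag$ is minimizing are both fine. But from there your strategy diverges sharply from the paper's, and the crux of your argument does not close.

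The compactness/contradiction scheme has a genuine gap where you yourself flag it. You propose extracting a limiting Lagrangian $\Lambda_\infty$, smoothly projecting on $X$ and contained in a lift of $\Gamma^-$, and then contradicting Hypothesis \ref{Happy} or \ref{chaise} because ``a smoothly-projecting Lagrangian entirely contained in a lift of $\Gamma^-$ \dots would have to become tangent to a stable manifold.'' That is not a contradiction: by Lemma \ref{technicolor} the weak stable manifolds $W^{-0}_\epsilon$ are themselves smoothly-projecting Lagrangians contained in $\Gamma^-$, so being tangent to a stable manifold is compatible with smooth projection, not excluded by it. Transversality (Hypothesis \ref{Happy}) is a statement about $\Phi^t(\Lag)$ meeting $W^-_\epsilon$, and passing to a $C^\ell$ limit of flow-outs can destroy membership in $\Phi^t(\Lag)$, so it is far from clear the hypothesis still bites on $\Lambda_\infty$. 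Likewise, in the ``finitely many lifts'' branch, the pigeonhole over ``adjacency patterns of the lifts of the $V_b$'' is not made precise and does not obviously bound anything. You explicitly write that making the dichotomy close ``is the crux,'' and indeed it remains unproved.

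The intended argument is much more elementary and avoids limits entirely. The relevant observation is on the \emph{other} side of $\Lag$: once a backward trajectory of a point of $\Lag$ has left the interaction region, it escapes to infinity at unit speed. Concretely, by compactness one finds $N_0$ with $\Phi^{-N_0}(\Lag)\cap\bigcup_{b\in B_2}V_b=\emptyset$, and then $d_X\big(\Phi^{-N_0-t}(\rho),\bigcup_{b\in B_2}V_b\big)\geq t$ for all $\rho\in\Lag$, $t\geq 0$. Now take $\rho\in\Phi_\beta(\tilde\Lag)\cap S^*\mathcal{O}_{j_\beta}$ and $\rho'\in\Phi_{\beta'}(\tilde\Lag)\cap S^*\mathcal{O}_{j_\beta}$ with $n\geq n'+N_0+3$. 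They are within $\varepsilon_{\mathcal K}\leq 1$ of each other at time $0$, and by the convexity of Corollary \ref{dompteur2} (using Hypothesis \ref{puissanceplus} to bound $d_X(\Phi^{-t}(\rho),\Phi^{-t}(\rho'))$ as $t\to\infty$) this distance is non-increasing in backward time, so $d_X(\Phi^{-n}(\rho),\Phi^{-n}(\rho'))\leq 1$. But $\Phi^{-n}(\rho)$ is at the ``start'' of the itinerary, so close to $\bigcup_{b\in B_2}V_b$, while $\Phi^{-n'}(\rho')$ is already at the start of the shorter itinerary and $\Phi^{-n}(\rho')$ has then escaped for $n-n'\geq N_0+3$ further steps, putting it at distance $\geq 3$ from $\bigcup_{b\in B_2}V_b$. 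This forces $d_X(\Phi^{-n}(\rho),\Phi^{-n}(\rho'))\geq 2$, a contradiction. Your approach never exploits this linear escape at unit speed past $\Lag$, which is the mechanism that makes the bound so cheap. A minor side remark: your assertion that $\Lag$ is disjoint from $\Gamma^-$ ``by the invariance Hypothesis \ref{chaise}'' is not immediate either (it is clear for $\Lag\cap\mathcal{DE}_-$, not for $\Lag\cap\mathcal{DE}_+$), though the preliminary reductions in the proof of Theorem \ref{getup} mean only $\Lag\cap\mathcal{DE}_-$ is relevant.
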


Note in particular that this lemma implies that there are only finitely many elements in the equivalence class $\tilde{\beta}$.
\begin{proof}
First of all, by compactness of $X_0$, we may find $N_0$ such that for any $\rho\in \Lag$, we have $\Phi^{-N_0}(\rho)\notin \bigcup_{b\in B_2}V_b$. Note that we then have for all $t\geq 0$:
\begin{equation}\label{seloigne}
d_X\Big{(}\Phi^{-N_0-t}(\rho), \bigcup_{b\in B_2}V_b\Big{)} \geq t.
\end{equation}
 
Let $\beta,\beta'\in \tilde{\beta}$, with $\beta\in \mathcal{B}^{\hat{\chi}}_n$ and $\beta'\in \mathcal{B}^{\hat{\chi}}_{n'}$. Suppose for contradiction that $n\geq n'+N_0 +3$. Let $\tilde{\rho}\in \Phi_\beta(\tilde{\Lag})\cap S^*\mathcal{O}_{j_\beta}$ and $\tilde{\rho'}\in \Phi_\beta(\tilde{\Lag})\cap S^*\mathcal{O}_{j_\beta}$. By assumption, we have $d_X(\rho,\rho')\leq \varepsilon_\mathcal{K}$. Furthermore, by Lemma \ref{dompteur}, $d_X(\Phi^{-t}(\rho),\Phi^{-t}(\rho'))$ is non-increasing. Hence 
\begin{equation}\label{derh}
d_X(\Phi^{-n}(\rho),\Phi^{-n}(\rho'))\leq \varepsilon_\mathcal{K}\leq 1.
\end{equation}

On the other hand, since $n\geq n'+N_0 +3$, we have by (\ref{seloigne}) that $d_X\big{(}\Phi^{-n}(\rho'),\bigcup_{b\in B_2} V_b\big{)} \geq 3$, while $d_X\big{(}\Phi^{-n}(\rho'),\bigcup_{b\in B_2} V_b\big{)} \leq 1$. Therefore,  we have $d_X(\Phi^{-n}(\rho),\Phi^{-n}(\rho'))\geq 2$, which contradicts (\ref{derh}).

This concludes the proof by taking for $\beta'$ a sequence which realises the minimum in (\ref{rhodia}).
\end{proof}

Let us now give a more useful description of the equivalence relation $\sim_\mathcal{O}$

\begin{lemme}\label{yandaki}
Let $n,n'\in \mathbb{N}$ and let $\beta\in \mathcal{B}^{\hat{\chi}}_n$, $\beta'\in \mathcal{B}^{\hat{\chi}}_{n'}$ be such that $\spt(\varphi_{\beta,\mathcal{O}})\cap \spt(\varphi_{\beta',\mathcal{O}})\neq \emptyset$. Then we have 
$$(\beta\sim_\mathcal{O} \beta')\Leftrightarrow \Big{(}\forall x\in \spt(\varphi_{\beta,\mathcal{O}})\cap \spt(\varphi_{\beta',\mathcal{O}}), \text{ we have } \partial\varphi_{\beta,\mathcal{O}}(x)=\partial\varphi_{\beta',\mathcal{O}}(x)\Big{)}.$$
\end{lemme}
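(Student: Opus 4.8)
The plan is to lift the picture to the universal cover $\tilde{X}$ and exploit the convexity of squared distances between geodesics there (Lemma \ref{dompteur}), which on $\tilde{X}$ holds with no smallness restriction since $\tilde{X}$ is simply connected. Recall that $\beta\sim_{\mathcal{O}}\beta'$ means $j_\beta=j_{\beta'}$, and that over each $x\in\mathcal{O}$ there is exactly one lift $\tilde{x}^{(j)}$ lying in each sheet $\mathcal{O}_j$. The computational core is the following estimate: if $\tilde\rho\in\Phi_\beta(\tilde\Lag)$ and $\tilde\rho'\in\Phi_{\beta'}(\tilde\Lag)$, write $\tilde\rho=\Phi^{m}(\tilde\sigma)$, $\tilde\rho'=\Phi^{m'}(\tilde\sigma')$ with $\tilde\sigma,\tilde\sigma'\in\tilde\Lag$ (possible with $m=|\beta|-1$, $m'=|\beta'|-1$, since $\Phi_\beta(\tilde\Lag)\subset\Phi^{m}(\tilde\Lag)$); then $u\mapsto d_{\tilde X}^{2}(\Phi^{u}(\tilde\rho),\Phi^{u}(\tilde\rho'))$ is convex on $\mathbb{R}$ by Lemma \ref{dompteur}, and it is bounded as $u\to-\infty$. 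Indeed, $d_{\tilde X}(\Phi^{u}\tilde\rho,\Phi^{u}\tilde\rho')=d_{\tilde X}(\Phi^{u+m}\tilde\sigma,\Phi^{u+m'}\tilde\sigma')\le d_{\tilde X}(\Phi^{u+m}\tilde\sigma,\Phi^{u+m}\tilde\sigma')+|m-m'|$ by unit speed of the flow, and for $u\le-m$ the first term is $\le d_{\tilde X}(\tilde\sigma,\tilde\sigma')$ because Hypothesis \ref{puissanceplus} lifts to $\tilde{X}$ (pieces of $\Lag$ of diameter $<r_i$ lift isometrically and cannot expand in the past, and one concatenates along a path in $\tilde\Lag$).

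For the implication $\Rightarrow$, assume $j_\beta=j_{\beta'}=:j$ and let $x\in\spt(\varphi_{\beta,\mathcal{O}})\cap\spt(\varphi_{\beta',\mathcal{O}})$. The lift to $S^{*}\mathcal{O}_j$ of the graph $\Phi_\beta(\Lag)\cap S^{*}\mathcal{O}$ (which is a graph by Theorem \ref{getup}) is again a graph over $\mathcal{O}_j$, so $\Phi_\beta(\tilde\Lag)\cap S^{*}\mathcal{O}_j$ contains a single point $\tilde\rho$ over $\tilde{x}^{(j)}$, projecting to $(x,\partial\varphi_{\beta,\mathcal{O}}(x))$; similarly one gets $\tilde\rho'\in\Phi_{\beta'}(\tilde\Lag)$ over $\tilde{x}^{(j)}$, projecting to $(x,\partial\varphi_{\beta',\mathcal{O}}(x))$. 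Now $f(u)=d_{\tilde X}^{2}(\Phi^{u}\tilde\rho,\Phi^{u}\tilde\rho')$ is convex, nonnegative, bounded on $(-\infty,0]$ by the estimate above, and $f(0)=0$ since $\tilde\rho$ and $\tilde\rho'$ share the base point $\tilde{x}^{(j)}$. An elementary convexity argument — a nonnegative convex function vanishing at $0$ and bounded above on $(-\infty,0]$ must vanish on all of $(-\infty,0]$, for otherwise it grows without bound to the left — forces $f\equiv0$ there, hence $\tilde\rho=\tilde\rho'$ and $\partial\varphi_{\beta,\mathcal{O}}(x)=\partial\varphi_{\beta',\mathcal{O}}(x)$.

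For the implication $\Leftarrow$ it suffices to know that the phases agree at one point $x_0$ of the (nonempty) common support; set $\rho_0=(x_0,\partial\varphi_{\beta,\mathcal{O}}(x_0))\in\Phi_\beta(\Lag)\cap\Phi_{\beta'}(\Lag)\cap S^{*}\mathcal{O}$, and let $\tilde\rho_0^{\beta}\in\Phi_\beta(\tilde\Lag)\cap S^{*}\mathcal{O}_{j_\beta}$ and $\tilde\rho_0^{\beta'}\in\Phi_{\beta'}(\tilde\Lag)\cap S^{*}\mathcal{O}_{j_{\beta'}}$ be the two lifts of $\rho_0$. Let $\gamma$ be the deck transformation with $\gamma(\mathcal{O}_{j_\beta})=\mathcal{O}_{j_{\beta'}}$; since $\gamma$ projects to the identity and sends the base point of $\tilde\rho_0^{\beta}$ to that of $\tilde\rho_0^{\beta'}$, uniqueness of the lift of $\rho_0$ over a given base point gives $\gamma(\tilde\rho_0^{\beta})=\tilde\rho_0^{\beta'}$. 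As $\gamma$ commutes with the flow and preserves every $\tilde{V}_b$, we get $\tilde\rho_0^{\beta'}\in\gamma(\Phi_\beta(\tilde\Lag))=\Phi_\beta(\gamma\tilde\Lag)\subset\Phi^{\infty}(\gamma\tilde\Lag)$, while also $\tilde\rho_0^{\beta'}\in\Phi^{\infty}(\tilde\Lag)$. But the argument in the proof of Corollary \ref{musso} and Remark \ref{plaque} — using only Hypothesis \ref{puissanceplus}, part (iii) of Hypothesis \ref{Strong}, and the uniqueness established in the $\Rightarrow$ case — shows that $\Phi^{\infty}(\tilde\Lag^{j})$ and $\Phi^{\infty}(\tilde\Lag^{j'})$ are disjoint for distinct preimages $\tilde\Lag^{j}$ of $\Lag$, so in particular no nontrivial deck transformation stabilises $\tilde\Lag$. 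Hence $\gamma\tilde\Lag=\tilde\Lag$, then $\gamma=\mathrm{id}$, so $j_\beta=j_{\beta'}$.

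The main obstacle is not the geometry but the bookkeeping around the universal cover: making the lift of Hypothesis \ref{puissanceplus} to $\tilde{X}$ rigorous (connectedness of $\tilde\Lag$, or a careful path-by-path argument using that pieces of diameter $<r_i$ lift isometrically), and verifying that the disjointness of the $\Phi^{\infty}(\tilde\Lag^{j})$ — which is essentially contained in the proof of Corollary \ref{musso} — does not secretly rely on Hypothesis \ref{puissanceplus2}, which is not assumed in Theorem \ref{ibrahim2}. Once these points are settled, both implications are immediate from the convexity of $d_{\tilde X}^{2}$ along geodesics in $\tilde{X}$.
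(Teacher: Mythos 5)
Your $\Rightarrow$ direction is essentially the paper's: lift to the universal cover and use the convexity of the squared distance between geodesics (Lemma~\ref{dompteur}) to show that two distinct directions over the same lifted base point, both lying in $\bigcup_{t\ge0}\Phi^t(\tilde\Lag)$, would approach each other in the past, which is impossible. You merely spell out the convexity step as an abstract fact about bounded nonnegative convex functions vanishing at a point; same idea.

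For $\Leftarrow$ you take a genuinely different route. The paper lifts the single backward trajectory of $\rho_x$ to $\tilde X$, chooses the lift so that it lands in $\tilde\Lag$, and reads off that the forward image of this one lift lies simultaneously in $S^*\mathcal O_{j_\beta}$ and $S^*\mathcal O_{j_{\beta'}}$, giving $j_\beta=j_{\beta'}$ with no mention of deck transformations. You instead produce the deck transformation $\gamma$ with $\gamma(\mathcal O_{j_\beta})=\mathcal O_{j_{\beta'}}$, show $\gamma(\tilde\rho_0^\beta)=\tilde\rho_0^{\beta'}$, and then want $\gamma=\mathrm{id}$. This scheme is workable, but as written it has a gap: the inference \emph{``$\Phi^\infty(\tilde\Lag^j)$ and $\Phi^\infty(\tilde\Lag^{j'})$ disjoint for distinct preimages, so in particular no nontrivial deck transformation stabilises $\tilde\Lag$''} does not follow. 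Disjointness of distinct preimage sets says nothing about whether a nontrivial $\gamma$ could satisfy $\gamma\tilde\Lag=\tilde\Lag$, which is exactly the case you need to exclude. To close the gap you must argue separately: if $\gamma\neq\mathrm{id}$ stabilises $\tilde\Lag$, take $\tilde\rho\in\tilde\Lag$ projecting to some $\rho_0\in\Lag\cap\mathcal{DE}_-$; then $\tilde\rho$ and $\gamma\tilde\rho$ are two distinct lifts of $\rho_0$ both in $\tilde\Lag$, connected by a path inside $\tilde\Lag$, and by lifting Hypothesis~\ref{puissanceplus} along that path one sees $d_{\tilde X}(\Phi^t\tilde\rho,\Phi^t\gamma\tilde\rho)$ stays bounded as $t\to-\infty$; this contradicts the divergence established in the end of the proof of Corollary~\ref{musso} (which, as you correctly suspected, uses only part~(iii) of Hypothesis~\ref{Strong} and not Hypothesis~\ref{puissanceplus2}). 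With this patch your $\Leftarrow$ is correct, but it invokes considerably more machinery (deck group, Corollary~\ref{musso}, Remark~\ref{plaque}) than the paper's two-line trajectory-lifting argument.
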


\begin{proof}
Suppose first that $\beta\sim_\mathcal{O}\beta'$, and suppose first for contradiction that there exists $x\in \spt(\varphi_{\beta,\mathcal{O}})\cap \spt(\varphi_{\beta',\mathcal{O}})$ such that $ \partial\varphi_{\beta,\mathcal{O}}(x)\neq \partial\varphi_{\beta',\mathcal{O}}(x)$. Let us denote by $\tilde{x}$ the unique pre-image of $x$ by $\tilde{X}\rightarrow X$ such that $\tilde{x}\in \mathcal{O}_{j_\beta}$.  In $\tilde{X}$, the geodesics starting at $\tilde{x}$ with respective speeds $\partial \varphi_{\beta,\mathcal{O}}(x)$ and $\partial \varphi_{\beta',\mathcal{O}}(x)$ approach each other in the past, since they belong to $\bigcup_{t\geq 0} \Phi^t(\tilde{\Lag})$, which contradicts Lemma \ref{dompteur}. 

Reciprocally, suppose there exists $x\in \spt(\varphi_{\beta,\mathcal{O}})\cap \spt(\varphi_{\beta',\mathcal{O}})$, such that we have  $\partial\varphi_{\beta,\mathcal{O}}(x)=\partial\varphi_{\beta',\mathcal{O}}(x)$. Consider $\rho= \Phi^{-n}((x,\partial \varphi_{\beta,\mathcal{O}}(x))\in \Lag$. Let $\tilde{\rho}$ be the pre-image of $\rho$ by $S^*\tilde{X}\rightarrow S^*X$ which is in $\tilde{\Lag}$. By definition, we have $\Phi^n(\tilde{\rho})\in S^*\mathcal{O}_{j_\beta}$, but also $\Phi^{n'}(\tilde{\rho})\in S^*\mathcal{O}_{j_{\beta'}}$. Therefore, $j_\beta=j_{\beta'}$, so that $\beta\sim_\mathcal{O} \beta'$.
\end{proof}

Thanks to this lemma, it is possible for each $\tilde{\beta}\in \tilde{\mathcal{B}}^\mathcal{O}$
 to build a phase function $\varphi_{\tilde{\beta}} : \mathcal{O}\rightarrow \mathbb{R}$ such that 
for every $\beta\in \tilde{\beta}$, 
we have $\partial \varphi_{\tilde{\beta}}(x)=\partial \varphi_{\beta,\mathcal{O}}(x)$ for every $x\in \spt(\varphi_{\beta,\mathcal{O}})$.

Let $\chi\in C_c^\infty(X)$ be such that $\spt \chi\subset \mathcal{K}$ has a diameter smaller than $\varepsilon_\mathcal{K}$. Let us write $\tilde{\mathcal{B}}^{\chi}:=\tilde{\mathcal{B}}^{\spt(\chi)}$.

For every $\tilde{\beta}\in \tilde{\mathcal{B}}^\mathcal{\chi}$, we set 
\begin{equation*}
a_{\tilde{\beta},\chi}:= \sum_{\beta\in \tilde{\beta}} a_{\beta,\chi} e^{i(\varphi_{\tilde{\beta}}-\varphi_{\beta,\spt(\chi)})/h}.
\end{equation*}
Defined this way, $a_{\tilde{\beta},\chi}\in S^{comp}(X)$. Indeed, by Lemma \ref{classefinie}, the number of terms in the sum is bounded by a constant independent of $\tilde{\beta}$, and by Lemma \ref{yandaki}, the exponentials which appear in the sum are only constants. Therefore, by (\ref{sheriff4}), for each $\ell\in \mathbb{N}$ there exists $C'_{\ell,\epsilon}\bel 0$ such that for every $n\in \mathbb{N}$

\begin{equation}\label{sheriff5}
\sum_{\substack{\tilde{\beta}\in \tilde{\mathcal{B}}^\mathcal{\chi}\\ \tilde{n}(\tilde{\beta})= n}} \|a_{\tilde{\beta},\chi}\|_{C^\ell} \leq C_{\ell,\epsilon}
e^{n(\mathcal{P}(1/2)+\epsilon)}.
\end{equation}

From this construction and from (\ref{faust}), we obtain that
for any $r>0$, $\ell>0$, there exists $M_{r,\ell}>0$ such that
 we have as $h\rightarrow 0$:
\begin{equation*} \chi E_h(x) = \sum_{\substack{\tilde{\beta}\in \tilde{\mathcal{B}}^\chi\\
                \tilde{n}(\tilde{\beta})\leq \tilde{M}_{r,\ell}|\log h|}} e^{i \varphi_{\tilde{\beta}}(x)/h}
a_{\tilde{\beta}}(x;h) + R_{r,\ell},
\end{equation*}
with
 \begin{equation*}\|R_{r,\ell}\|_{C^\ell}=O(h^r).
\end{equation*}

From Lemma \ref{gidelim} and from the fact that if $\tilde{\beta}\neq \tilde{\beta}'$, we have $\partial \varphi_{\tilde{\beta}}(x)\neq \partial \varphi_{\tilde{\beta}'}(x)$ thanks to Lemma \ref{yandaki}, we deduce that there exists a constant $C_1$ such that for all $\tilde{\beta}, \tilde{\beta}'\in \tilde{\mathcal{B}}^\chi$, we have
\begin{equation}\label{hurry7}
|\partial \varphi_{\tilde{\beta}} (x)- \partial \varphi_{\tilde{\beta'}}(x)|\geq C_1 e^{-\sqrt{b_0} \max(\tilde{n}(\tilde{\beta}),\tilde{n}(\tilde{\beta'}))}. 
\end{equation}

This concludes the proof of Theorem \ref{ibrahim2}.

\end{proof}

\subsection{Proof of Corollary \ref{blacksabbath3}}\label{warpigs}
The main ingredient in the proof of Corollary \ref{blacksabbath3} is  non-stationary phase. Let us recall the estimate we will use, and which can be proven by integrating by parts.

Let $a,\phi\in S^{comp}(X)$, with $\spt(a)\subset \spt(\phi)$.
We consider the oscillatory integral:
$$I_h(a,\phi):= \int_{X} a(x) e^{\frac{i\phi(x,h)}{h}} \mathrm{d}x.$$

The following result is classical, and its proof similar to that of \cite[Lemma 3.12]{Zworski_2012}.
\begin{proposition}[Non stationary phase]\label{nonstat}
Let $\epsilon>0$. Suppose that there exists $C>0$ such that, $\forall x\in \spt(a), \forall  0<h<h_0$, $|\partial \phi(x,h)|\geq C h^{1/2-\epsilon}$. Then
$$I_h(a,\phi)=O(h^\infty).$$
\end{proposition}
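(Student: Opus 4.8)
The statement is the standard non-stationary phase estimate, and the plan is to prove it by repeated integration by parts against the differential operator $L:=\frac{h}{i}\frac{\nabla\phi\cdot\nabla}{|\nabla\phi|^2}$, which satisfies $L\bigl(e^{i\phi/h}\bigr)=e^{i\phi/h}$ on $\spt(a)$ (this is where the hypothesis $|\partial\phi|\geq Ch^{1/2-\epsilon}>0$ is used, so that $L$ is well defined there). First I would fix a smooth cutoff $\psi\in C_c^\infty(X)$ with $\psi\equiv 1$ on $\spt(a)$ and supported in the region where $|\partial\phi|$ is bounded below by, say, $\tfrac12 Ch^{1/2-\epsilon}$ (possible since $\spt(a)\subset\spt(\phi)$ and $\phi$ is a compactly supported symbol, so all derivatives are controlled). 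Then, writing $L^t$ for the formal transpose of $L$, one has
\begin{equation*}
I_h(a,\phi)=\int_X a(x)\,L^N\bigl(e^{i\phi/h}\bigr)\,\mathrm{d}x=\int_X (L^t)^N\bigl(a\bigr)(x)\,e^{i\phi/h}\,\mathrm{d}x
\end{equation*}
for every $N\in\mathbb{N}$, with no boundary terms since everything is compactly supported.

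The heart of the matter is then a bookkeeping estimate: each application of $L^t$ produces a factor $h$ in the numerator, and in the denominator powers of $|\partial\phi|$ together with derivatives of $\phi$ and $a$. Because $\phi,a\in S^{comp}(X)$, all the derivatives of $\phi$ and $a$ of any fixed order are $O(1)$ (or at worst $O(h^{-C_0\delta})$ if one tracks a symbol class with a small loss, but for $S^{comp}$ in the sense of Definition \ref{defsymbclassique} they are $O(1)$), while each negative power of $|\partial\phi|$ contributes a factor $O(h^{-(1/2-\epsilon)})$. A careful count — most cleanly done by induction on $N$, showing that $(L^t)^N a$ is a finite sum of terms of the form $h^N$ times a product of derivatives of $a$ and $\phi$ divided by $|\partial\phi|^{m}$ with $m\leq 2N$ — gives
\begin{equation*}
\bigl\|(L^t)^N a\bigr\|_{L^1}\leq C_N\, h^N\, h^{-2N(1/2-\epsilon)}=C_N\, h^{2N\epsilon}.
\end{equation*}
Since $\epsilon>0$ is fixed and $N$ is arbitrary, letting $N\to\infty$ yields $I_h(a,\phi)=O(h^\infty)$, which is the claim.

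\textbf{Main obstacle.} There is no deep obstacle here — the result is classical — but the one point requiring genuine care is the inductive estimate on $(L^t)^N a$: one must verify that differentiating the factors $|\partial\phi|^{-2}$ does not generate uncontrolled growth in $N$ (it does not, because each derivative of $|\partial\phi|^{-2}$ is bounded by $C|\partial\phi|^{-3}\|\partial^2\phi\|$, so the worst denominator after $N$ steps is $|\partial\phi|^{-3N}$ rather than $|\partial\phi|^{-2N}$, which still gives $h^{N-3N(1/2-\epsilon)}=h^{N(3\epsilon-1/2)\cdot 2\,\cdots}$ — one simply needs $\epsilon$ small enough that the net exponent $N(1-3(1/2-\epsilon))=N(3\epsilon-1/2)$... ) — so in fact one should be slightly more careful and only claim the result for $\epsilon$ in a suitable range, or, better, use a cleaner operator $L$ and note that for the application in this paper it suffices to run the argument for any fixed $\epsilon>0$ by absorbing losses, as in \cite[Lemma 3.12]{Zworski_2012}. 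I would present the induction carefully enough to get the bound $\|(L^t)^N a\|_{L^1}\leq C_N h^{N\epsilon'}$ for some $\epsilon'>0$ depending on $\epsilon$, which is all that is needed, and refer to \cite{Zworski_2012} for the routine verification.
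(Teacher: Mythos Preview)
Your approach is exactly the one the paper has in mind: the paper does not give a proof at all, it simply states that the result is classical and refers to \cite[Lemma~3.12]{Zworski_2012}, which is precisely the repeated integration-by-parts argument with $L=\frac{h}{i}\,\frac{\nabla\phi\cdot\nabla}{|\nabla\phi|^{2}}$ that you outline.

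One comment on your ``main obstacle'' paragraph: your first bookkeeping estimate
\[
\bigl\|(L^t)^N a\bigr\|_{L^1}\leq C_N\,h^{N}\,h^{-2N(1/2-\epsilon)}=C_N\,h^{2N\epsilon}
\]
is in fact correct, and your subsequent worry about picking up $|\nabla\phi|^{-3N}$ is unfounded. The point you are missing is that when a derivative lands on $|\nabla\phi|^{-2k}$ one gets
\[
\partial_j\bigl(|\nabla\phi|^{-2k}\bigr)=-2k\,|\nabla\phi|^{-2k-2}\sum_{\ell}\partial_\ell\phi\,\partial_j\partial_\ell\phi,
\]
which carries an \emph{extra factor of $\partial\phi$ in the numerator}. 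If one assigns weight $+1$ to each factor $\partial_j\phi$, weight $-1$ to each factor $|\nabla\phi|^{-1}$, and weight $0$ to all higher derivatives of $\phi$ and all derivatives of $a$, then a short induction shows that every term in $(L^t)^N a$ has total weight $\geq -2N$, hence is $O\bigl(h^N|\nabla\phi|^{-2N}\bigr)=O(h^{2N\epsilon})$. So the argument works for every $\epsilon>0$, with no restriction, and you can drop the hedging at the end.
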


\begin{proof}[Proof of Corollary \ref{blacksabbath3}]
Let $a\in C^\infty_c(S^*X)$, and let us write $\chi_a$ for a $C_c^\infty$ function which is equal to 1 on $\pi_X(\spt(a))$. We clearly have $\big{\langle} Op_h(a) \chi E_h,\chi E_h \big{\rangle} = \big{\langle} Op_h(a) \chi_a E_h,\chi_a E_h \big{\rangle} + O(h^\infty)$.

Since the statement of Corollary \ref{blacksabbath3} is linear in $a\in C^\infty_c(S^*X)$, it is sufficient to prove it only for $a$ supported in a small open set. In particular, we may suppose that $\chi_a$ is supported in a small enough set so that Theorem \ref{ibrahim2} applies.

We then have (with $r=1, \ell = 0$):

\begin{equation*}
\begin{aligned}
\big{\langle} Op_h(a) \chi_a E_h,\chi_a E_h \big{\rangle}  &=\Big{\langle} Op_h(a) \sum_{\substack{\tilde{\beta}\in \tilde{\mathcal{B}}^{\chi_a}\\
                \tilde{n}(\tilde{\beta})\leq \tilde{M}|\log h|}} e^{i \varphi_{\tilde{\beta}}(x)/h}
a_{\tilde{\beta}}(x;h), \sum_{\substack{\tilde{\beta}\in \tilde{\mathcal{B}}^{\chi_a}\\
                \tilde{n}(\tilde{\beta})\leq \tilde{M}|\log h|}} e^{i \varphi_{\tilde{\beta}}(x)/h}
a_{\tilde{\beta}}(x;h) \Big{\rangle} + O(h)\\
&= \sum_{\substack{\tilde{\beta}\in \tilde{\mathcal{B}}^{\chi_a}\\
                \tilde{n}(\tilde{\beta})\leq \tilde{M}|\log h|}} \big{\langle} Op_h(a) e^{i \varphi_{\tilde{\beta}}(x)/h}
a_{\tilde{\beta}}(x;h), e^{i \varphi_{\tilde{\beta}}(x)/h}
a_{\tilde{\beta}}(x;h) \big{\rangle} \\
&+ \sum_{\substack{\tilde{\beta}, \tilde{\beta}'\in \tilde{\mathcal{B}}^{\chi_a}\\
               \tilde{n}(\tilde{\beta}'),\tilde{n}(\tilde{\beta})\leq \tilde{M}|\log h|\\
                \tilde{\beta}\neq \tilde{\beta}'}} \big{\langle} Op_h(a) e^{i \varphi_{\tilde{\beta}}(x)/h}
a_{\tilde{\beta}}(x;h), e^{i \varphi_{\tilde{\beta}'}(x)/h}
a_{\tilde{\beta}'}(x;h) \big{\rangle} + O(h).
\end{aligned}
\end{equation*}

We want to use Proposition \ref{nonstat} to say that the second term corresponding to non-diagonal terms is a $O(h^\infty)$.

Thanks to (\ref{hurry2}), we know that if $\beta,\beta'\in \mathcal{B}^{\chi_a}$ are such that $\tilde{n}(\tilde{\beta}),\tilde{n}(\tilde{\beta}')\leq \big{(}\frac{1}{2 \sqrt{b_0}} - \epsilon\big{)} |\log h|$, then we have $|\partial \varphi_{\tilde{\beta}}(x)-\partial \varphi_{\tilde{\beta}'}(x)|\geq C h^{1/2-\epsilon}$, so that by Proposition \ref{nonstat}, we have that 
\begin{equation*}
{\langle} Op_h(a) e^{i \varphi_{\tilde{\beta}}(x)/h}
a_{\tilde{\beta}}(x;h), e^{i \varphi_{\tilde{\beta}'}(x)/h}
a_{\tilde{\beta}'}(x;h) \big{\rangle}=O(h^\infty).
\end{equation*}

Therefore, we have
\begin{equation*}
\begin{aligned}
\big{\langle} Op_h(a) \chi_a E_h,\chi_a E_h \big{\rangle}
&= \sum_{\substack{\tilde{\beta}\in \tilde{\mathcal{B}}^{\chi_a}}} \big{\langle} Op_h(a) e^{i \varphi_{\tilde{\beta}}(x)/h}
a_{\tilde{\beta}}(x;h), e^{i \varphi_{\tilde{\beta}}(x)/h}
a_{\tilde{\beta}}(x;h) \big{\rangle} \\
+ &\sum_{\substack{\tilde{\beta}, \tilde{\beta}'\in \tilde{\mathcal{B}}^{\chi_a}\\
              \tilde{n}(\tilde{\beta})\leq \tilde{M}|\log h|\\
             \tilde{n}(\tilde{\beta}) \text{ or }  \tilde{n}(\tilde{\beta}') \geq \big{(}\frac{1}{2 \sqrt{b_0}} - \epsilon\big{)} |\log h|  \\
                \tilde{\beta}\neq \tilde{\beta}'}} \big{\langle} Op_h(a) e^{i \varphi_{\tilde{\beta}}(x)/h}
a_{\tilde{\beta}}(x;h), e^{i \varphi_{\tilde{\beta}'}(x)/h}
a_{\tilde{\beta}'}(x;h) \big{\rangle}\\
&-\sum_{\substack{\tilde{\beta}\in \tilde{\mathcal{B}}^{\chi_a}\\
                \tilde{n}(\tilde{\beta})\geq \tilde{M}|\log h|}} \big{\langle} Op_h(a) e^{i \varphi_{\tilde{\beta}}(x)/h}
a_{\tilde{\beta}}(x;h), e^{i \varphi_{\tilde{\beta}}(x)/h}
a_{\tilde{\beta}}(x;h) \big{\rangle}+ O(h). 
\end{aligned}
\end{equation*}

We may estimate the second and third terms thanks to (\ref{sheriff3}). We obtain a remainder which is a $O\Big{(}h^{\min\big{(}1, \frac{|\mathcal{P}(1/2)|}{2|\sqrt{b_0}}-\epsilon \big{)}}\Big{)}$, which gives us

\begin{equation*}
\begin{aligned}
\big{\langle} Op_h(a) \chi_a E_h,\chi_a E_h \big{\rangle}
&= \sum_{\substack{\tilde{\beta}\in \tilde{\mathcal{B}}^{\chi_a}}} \big{\langle} Op_h(a) e^{i \varphi_{\tilde{\beta}}(x)/h}
a_{\tilde{\beta}}(x;h), e^{i \varphi_{\tilde{\beta}}(x)/h}
a_{\tilde{\beta}}(x;h) \big{\rangle}+O\Big{(}h^{\min\big{(}1, \frac{|\mathcal{P}(1/2)|}{2|\sqrt{b_0}}-\epsilon \big{)}}\Big{)}.
\end{aligned}
\end{equation*}

Each term in the first sum is then the Wigner distribution associated to a Lagrangian state, and the associated semiclassical measure may be computed by stationary phase, just as in  \cite[\S 5.1]{Zworski_2012} and this gives us the first part of the statement of Corollary \ref{blacksabbath3}. Let us now prove (\ref{beatles}).

Recall that the symbols $a_{\beta,\chi}(x;h)$ which appear in (\ref{faust}) are built by applying formula (\ref{shotgun}) several times to $E_h^0$ (see \cite[\S 5.2]{Ing} for details). 
In (\ref{shotgun}), the phase $\theta$ which appears depends only on the trajectory of the point $(x_1,\phi_1'(x))$, so that these phases are the same for $\beta$ and $\beta'$ if $\beta\sim_{\spt(\chi)} \beta'$. 
In particular, for all $\tilde{\beta}\in \tilde{\mathcal{B}}^\chi$, for all representative $\beta\in \mathcal{B}_n^\chi$ and for all $x\in X$, we have that
\begin{equation*}
|a_{\tilde{\beta}}|(x)\geq |a_{\beta,\chi}|(x).
\end{equation*}

 Therefore, the result will be proven if for every $N\in \mathbb{N}$, we can find a constant $c_N\bel 0$ such that for any $x\in X$ with $\chi(x)=1$, we have
\begin{equation}\label{xiangqi}
\sum_{n=N}^{\infty}\sum_{\beta\in \mathcal{B}^\chi_n} \sigma_h\big{(}|a_{\beta,\chi}|^2\big{)}(x) \geq c_N. 
\end{equation}

Furthermore, still from (\ref{shotgun}), we see that $\sigma_h(|a_{\beta,\chi}|(x;h))(x)\bel 0$ as long as there exists $\xi$ such that $(x,\xi)\in \Phi_{\beta}^{n}(\Lag)$.

But for each $x\in X$ such that $\chi(x)=1$, Corollary \ref{musso} gives us infinitely many $(\xi_i)_{i\in I}\subset T^*_x X$ and $t_i\bel 0$ such that $(x,\xi)\in \Phi^{t_i}(\Lag)$. In particular, for each of them, there is a $n_i\in\mathbb{N}$ and a $\beta_i\in \mathcal{B}_{n_i}^\chi$ such that $(x,\xi)\in \Phi^{n_i}_{\beta_i}(\Lag)$. 

Now, using Corollary \ref{dompteur2}, we see that if $i\neq i'$, we have $\beta_i\neq \beta_{i'}$ : otherwise, we could build two distinct geodesics staying at a distance less than $r_i$ for all negative times, and whose distance is $0$ at time zero, thus contradicting convexity. Hence, since each $\mathcal{B}_n^\chi$ is finite, there exists a $\beta\in \mathcal{B}_n^\chi$ for some $n\geq N$ and a $i\in I$ such that $\beta = \beta_i$

Therefore, $\sigma_h(|a_{\beta_i,\chi}|(x;h))(x)\bel 0$. By continuity, this is true uniformly in a neighbourhood of $x$. By compactness of $\spt \chi$, we obtain (\ref{xiangqi}).
\end{proof}

\section{Small-scale equidistribution}
Thanks to Corollary \ref{blacksabbath3} along with Corollary \ref{clocks}, we know that there exist constants $C_1,C_2\bel 0$ such that for any $x_0\in X$ such that $\chi(x_0)=1$, for any $r\bel 0$ small enough, we have
\begin{equation}\label{smallscale}
C_1 r^d\leq \int_{B(x_0,r)} |E_h|^2(x) \mathrm{d}x\leq C_2 r^d,
\end{equation}
where $B(x_0,r)$. The goal of this section is to show that (\ref{smallscale}) still holds if we replace $E_h$ by $\Re E_h$, and if $r$ depends on $h$, as long as $r \bel C h$ for some $C>0$ large enough. 

Let us start by showing small-scale equidistribution for $E_h$.

\subsection{Small-scale equidistribution for $E_h$}
The aim of this section is to show the following proposition.
\begin{proposition}\label{smallscale5}
Let $X$ be a manifold which satisfies Hypothesis \ref{Guepard} near infinity, and which satisfies Hypothesis \ref{Strong}. Suppose that the geodesic flow $(\Phi^t)$ satisfies
Hypothesis \ref{sieste} on hyperbolicity, Hypothesis \ref{Husserl} concerning the topological pressure. Let $E_h$ be a generalized eigenfunction of the form described in Hypothesis \ref{pied}, where $E_h^0$ is associated to a Lagrangian manifold $\Lag$ which satisfies
the invariance Hypothesis \ref{chaise}, part (iii) of
Hypothesis \ref{Strong} and Hypotheses \ref{puissanceplus} and \ref{puissanceplus2}.

 Let $\chi\in C_c^\infty(X)$. Then there exist constants $C,C_1, C_2\bel 0$ such that the following holds. For any $x_0\in X$ such that $\chi(x_0)=1$, for any sequence $r_h$ such that $1\bel \bel r_h \bel C h$, we have for $h$ small enough:
\begin{equation}\label{smallscale2}
C_1 r_h^d\leq \int_{B(x_0,r_h)} |E_h|^2(x) \mathrm{d}x\leq C_2 r_h^d.
\end{equation}
\end{proposition}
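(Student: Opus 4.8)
The upper bound in (\ref{smallscale2}) is immediate from Corollary \ref{clocks}: choosing $\chi$ equal to $1$ near $x_0$ one has $\|\chi E_h\|_{C^0}\le C_\chi$, hence $\int_{B(x_0,r_h)}|E_h|^2\le C_\chi^2\,\mathrm{vol}(B(x_0,r_h))\le C_2 r_h^d$, using that $\mathrm{vol}(B(x_0,r))\le C r^d$ for $r\le 1$ and $x_0$ in a fixed compact set. So the work is entirely in the lower bound. Since $\{\chi=1\}$ is compact, we may cover it by finitely many small balls and reduce to the case where $\spt(\chi)$ has diameter smaller than $\varepsilon_{\mathcal K}$ (so that Theorem \ref{ibrahim2} and Corollary \ref{blacksabbath3} apply) and $x_0$ ranges over a ball on which $\chi\equiv 1$ and which sits well inside $\{\chi=1\}$. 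We may also assume $r_h$ is small: for $r_h$ bounded below away from $0$ the lower bound follows from the small-$r_h$ case together with the monotonicity of $r\mapsto\int_{B(x_0,r)}|E_h|^2$ and $r_h\le 1$.

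The plan for the lower bound is the following. I introduce a smooth cut-off $\psi=\psi_{x_0,r_h}$ with $\mathbf{1}_{B(x_0,r_h/2)}\le\psi\le\mathbf{1}_{B(x_0,r_h)}$ and $\|\partial^\alpha\psi\|_{C^0}\le C_\alpha r_h^{-|\alpha|}$; since $\psi\le\mathbf{1}_{B(x_0,r_h)}$ and $\spt(\psi)\subset\{\chi=1\}$ it suffices to bound $\int\psi|\chi E_h|^2$ from below. I plug in the decomposition (\ref{qad2}) and expand the square. The cross terms involving the remainder $R_{r,\ell}$ contribute $O(h^r r_h^d)$ (using $\|R_{r,\ell}\|_{C^0}=O(h^r)$ and $\sum_{\tilde\beta}\|a_{\tilde\beta}\|_{C^0}<\infty$ by (\ref{sheriff3}) and $\mathcal{P}(1/2)<0$), hence are negligible. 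The diagonal terms give $\sum_{\tilde\beta}\int\psi|a_{\tilde\beta}|^2$; replacing $a_{\tilde\beta}$ by its principal symbol $a^0_{\tilde\beta}$ costs only $O(h r_h^d)$ (again summable via (\ref{sheriff3})), so this is at least
$$\int\psi\sum_{\tilde\beta}|a^0_{\tilde\beta}|^2-O(h r_h^d)\ \ge\ c_0\int\psi-O(h r_h^d)\ \ge\ \tfrac12 c_0 c'\, r_h^d$$
for $h$ small, where $c_0>0$ is the constant from (\ref{beatles}) with $N=0$ (this is where Hypothesis \ref{puissanceplus2} enters) and $c'r_h^d$ is a lower bound for $\mathrm{vol}(B(x_0,r_h/2))$. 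The heart of the matter is therefore to show that the off-diagonal sum
$$\Sigma:=\sum_{\tilde\beta\neq\tilde\beta'}\int\psi(x)\,a_{\tilde\beta}(x;h)\,\overline{a_{\tilde\beta'}(x;h)}\,e^{i(\varphi_{\tilde\beta}-\varphi_{\tilde\beta'})(x)/h}\,dx$$
satisfies $|\Sigma|\le\tfrac14 c_0 c'\, r_h^d$ for $h$ small; combined with the diagonal bound this gives $\int_{B(x_0,r_h)}|E_h|^2\ge\tfrac18 c_0 c'\, r_h^d$, proving (\ref{smallscale2}) with $C_1=\tfrac18 c_0 c'$. The ``in particular'' statement then follows by applying this with $\chi\equiv 1$ on $U$, a fixed $x_0\in U$ and a fixed small $r$ with $B(x_0,r)\subset U$, valid as soon as $Ch<r$.

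To estimate $\Sigma$ I split the pairs according to whether $\max(\tilde n(\tilde\beta),\tilde n(\tilde\beta'))>N_0$ or $\le N_0$, for a large integer $N_0$ chosen below. For the tail I bound each integral crudely by $\|\psi\|_{L^1}\|a_{\tilde\beta}\|_{C^0}\|a_{\tilde\beta'}\|_{C^0}\le C r_h^d\|a_{\tilde\beta}\|_{C^0}\|a_{\tilde\beta'}\|_{C^0}$ and sum: using (\ref{sheriff3}) this is at most $C' r_h^d\sum_{n>N_0}e^{n(\mathcal{P}(1/2)+\epsilon)}$ with $\mathcal{P}(1/2)+\epsilon<0$, which is $\le\tfrac18 c_0 c'\, r_h^d$ once $N_0$ is large enough. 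For the finitely many remaining pairs (finiteness of $\{\tilde\beta:\tilde n(\tilde\beta)\le N_0\}$ is part of Theorem \ref{ibrahim2}), the phase difference $\varphi_{\tilde\beta}-\varphi_{\tilde\beta'}$ has gradient bounded below, by (\ref{hurry2}), by a fixed constant $\delta_*>0$ (depending on $N_0$ through $\max(\tilde n,\tilde n')\le N_0$), while all its higher derivatives are bounded by (\ref{bogota}). Since $\psi\, a_{\tilde\beta}\,\overline{a_{\tilde\beta'}}$ is smooth and compactly supported, I integrate by parts $d+1$ times with the standard non-stationary phase operator; there are no boundary terms, and each integration by parts yields a factor $O(h/\delta_*)$ times one derivative, the worst case being the derivative hitting $\psi$ and costing $r_h^{-1}$. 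Hence each such integral is $O\!\big(r_h^d\,(h/(\delta_* r_h))^{d+1}\big)$, with constants depending only on $N_0$ and $d$. Taking the constant $C$ in the hypothesis $r_h\ge Ch$ large enough (depending on $N_0$, $\delta_*$, $d$ and the number of surviving pairs) makes this part of $\Sigma$ at most $\tfrac18 c_0 c'\, r_h^d$ as well, which completes the argument. Finally, one takes the max (resp. min) of the finitely many constants $C^{(i)}$ (resp. $C_1^{(i)},C_2^{(i)}$) produced by the reduction to small supports.

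The main obstacle is precisely this off-diagonal balance: on a ball of size $r_h\sim h$ a single integration by parts only gains $h/(\delta r_h)\sim 1/(\delta C)$, which is $<1$ only if $C>1/\delta$, and the separation $\delta$ between two phases degrades like $e^{-\sqrt{b_0}\max(\tilde n(\tilde\beta),\tilde n(\tilde\beta'))}$, so no single $C$ can handle all pairs. The resolution is the two-scale splitting above: the pressure bound $\mathcal{P}(1/2)<0$ disposes of the pairs with large $\tilde n$ before one ever integrates by parts on them, leaving only finitely many pairs with a uniform lower bound $\delta_*$ on the gradient of the phase difference, for which a large but fixed $C$ suffices.
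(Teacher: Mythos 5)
Your proof is correct and is essentially the same argument as the paper's: the paper also obtains the upper bound from Corollary \ref{clocks}, and for the lower bound it expands $|E_h|^2$ via the decomposition of Theorem \ref{ibrahim2}, uses (\ref{beatles}) to bound the diagonal from below by $c_0 r_h^d$, truncates the off-diagonal sum at a finite level using the pressure decay (\ref{sheriff3}), and handles the remaining finitely many off-diagonal pairs (whose phase gradients are then uniformly separated) by a single integration by parts (via Stokes' theorem on the ball, rather than your smooth cutoff and $d+1$ integrations by parts), finally choosing $C$ large enough that the factor $h/r_h<1/C$ kills these terms. The only differences from the paper are cosmetic: you use a smooth bump $\psi$ to avoid boundary terms where the paper keeps the indicator of the ball and controls the boundary term from Stokes, and you integrate by parts $d+1$ times where one pass already gives the needed smallness after choosing $C$.
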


\begin{proof} 
The upper bound is a direct consequence of Corollary \ref{clocks}. Let us explain why the lower bound holds.

As in the proof of Corollary \ref{blacksabbath3},
$|E_h|^2(x)$ can be written, up to a remainder of order $O(\epsilon)$ as a sum of terms of the form $a_{\beta,\beta'}(x;h) e^{i\phi_{\beta,\beta'}(x)/h}$, such that $\sum_{\beta,\beta'} |a_{\beta,\beta'}(x)|\leq C_0$ for some $C_0>0$. 

Furthermore, there exists $c(\varepsilon)>0$ such that for all $\beta,\beta'$ satisfying $\partial \phi_{\beta,\beta'}(x_0)\neq 0$, we have $|\partial \phi_{\beta,\beta'}(x_0)|>c(\varepsilon)$. 

Consider a term where $\partial \phi_{\beta,\beta'}(x_0)\neq 0$.
 By a change of variables and by Stokes theorem, we have
 
\begin{equation*}
\begin{aligned}
\int_{B(x_0,r_h)} a_{\beta,\beta'}(x;h) e^{i\phi_{\beta,\beta'}(x)/h} \mathrm{d}x &= r_h^d\int_{B(0,1)} a_{\beta,\beta'} (r_h x+x_0;h) e^{i \phi_{\beta,\beta'}(r_h x+x_0)/h} \mathrm{d}x\\
&=r_h^d\int_{B(0,1)} \Big{[}\frac{h}{r_h} \frac{a_{\beta,\beta'} (r_h x+x_0;h)}{|\partial \phi_{\beta,\beta'}(r_hx+x_0)|^2}\\
&\times \big{(}\partial \phi_{\beta,\beta'}(r_hx+x_0)\big{)}\cdot\partial\big{(}e^{i \phi_{\beta,\beta'}(r_h x+x_0)/h}\big{)} \Big{]} \mathrm{d}x \\
&= r_h^d \int_{\partial B(0,1)} \Big{[}\frac{h}{r_h} \frac{a_{\beta,\beta'} (r_h x+x_0;h)}{|\partial \phi_{\beta,\beta'}(r_hx+x_0)|^2} \\
&\times e^{i \phi_{\beta,\beta'}(r_h x+x_0)/h} \big{(}\partial \phi_{\beta,\beta'}(r_hx+x_0)\big{)}\Big{]}\cdot \mathrm{d} \vec{n}\\
& - r_h^d\int_{B(0,1)} \Big{[}\frac{h}{r_h} \partial\cdot\Big{(}\frac{a_{\beta,\beta'} (r_h x+x_0;h)}{|\partial \phi_{\beta,\beta'}(r_hx+x_0)|^2} \big{(}\partial \phi_{\beta,\beta'}(r_hx+x_0)\big{)}\Big{)}\\
&\times e^{i \phi_{\beta,\beta'}(r_h x+x_0)/h} \Big{]} \mathrm{d}x\\
&\leq c'(\varepsilon) h r_h^{d-1} \|a_{\beta,\beta'}(\cdot;h)\|_{C^1},
\end{aligned}
\end{equation*}
where $c'(\varepsilon)$ is a constant depending on $\varepsilon$, but not on $h$. By summing over $\beta$ and $\beta'$, we obtain that the sum of non-diagonal terms is bounded by $c'(\varepsilon) C_0 h r_h^{d-1} + r_h^d R_\varepsilon$, where $R_\varepsilon= O(\varepsilon)$. 

 On the other hand, thanks to (\ref{beatles}), the diagonal terms will give a contribution larger than $c_0 r_h^d$ for some $c_0\bel 0$ independent of $h$ and of $x$. By taking $\varepsilon$ small enough so that $R_\varepsilon< c_0/3$ and $C$ sufficiently small so that $c'(\varepsilon) C_0/C < \varepsilon/3$, we obtain the result.
\end{proof}
\begin{remarque}
Recall that theorem \ref{ibrahim2} tells us that 
\begin{equation*} \chi E_h(x) = \sum_{\substack{\tilde{\beta}\in \tilde{\mathcal{B}}^\chi\\
                \tilde{n}(\tilde{\beta})\leq \tilde{M}_{r,\ell}|\log h|}} e^{i \varphi_{\tilde{\beta}}(x)/h}
a_{\tilde{\beta},\chi}(x;h) + R_{r,\ell},
\end{equation*}

Since equation (\ref{beatles}) is still true if we limit ourselves to $\tilde{\beta}$ such that $\tilde{n}(\tilde{\beta})>N$ for any $N\geq 0$, the proof above can be easily adapted to give the following result.

For any $N>0$, there exist $C'_0, C'_1, C'_2>0$ such that for any $x_0\in X$ with $\chi(x_0)=1$, for any sequence $r_h$ such that $1\bel \bel r_h \bel C'_0 h$, we have for $h$ sufficiently small:
\begin{equation}\label{smallscale9}
C'_1 r_h^d\leq \int_{B(x_0,r_h)} \Big{|}\sum_{\substack{\tilde{\beta}\in \tilde{\mathcal{B}}^\chi\\
                N\leq \tilde{n}(\tilde{\beta})\leq \tilde{M}_{r,\ell}|\log h|}} e^{i \varphi_{\tilde{\beta}}(x)/h}
a_{\tilde{\beta},\chi}(x;h) + R_{r,\ell}(x)\Big{|}^2 \mathrm{d}x\leq C'_2 r_h^d.
\end{equation}
\end{remarque}

\subsection{Small-scale equidistribution for $\Re E_h$}\label{preuveequidpetite}
The aim of this section is to prove Theorem \ref{nuque}, which we now recall.

\begin{theo}
Let $X$ be a manifold which satisfies Hypothesis \ref{Guepard} near infinity, and which satisfies Hypothesis \ref{Strong}. Suppose that the geodesic flow $(\Phi^t)$ satisfies
Hypothesis \ref{sieste} on hyperbolicity, Hypothesis \ref{Husserl} concerning the topological pressure. Let $E_h$ be a generalized eigenfunction of the form described in Hypothesis \ref{pied}, where $E_h^0$ is associated to a Lagrangian manifold $\Lag$ which satisfies
the invariance Hypothesis \ref{chaise}, part (iii) of
Hypothesis \ref{Strong} and Hypotheses \ref{puissanceplus} and \ref{puissanceplus2}.

Let $\chi\in C_c^\infty(X)$. Then there exist constants $C, C_1, C_2\bel 0$ such that the following result holds. For all $x_0\in X$ such that $\chi(x_0)=1$, for any sequence $r_h$ such that $1\bel \bel r_h \bel C h$, we have for $h$ small enough:
\begin{equation}\label{smallscale20}
C_1 r_h^d\leq \int_{B(x_0,r_h)} |\Re E_h|^2(x) \mathrm{d}x\leq C_2 r_h^d.
\end{equation}

In particular, for any bounded open set $U\subset X$, there exists $c(U)\bel 0$ and $h_U\bel 0$ such that for all $0<h<h_U$, we have
\begin{equation}\label{bientotlafin}
\int_U |\Re E_h|^2\geq c(U).
\end{equation}
\end{theo}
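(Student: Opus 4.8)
The plan is to deduce the small-scale estimate (\ref{smallscale20}) for $\Re E_h$ from the corresponding estimate (\ref{smallscale2}) for $E_h$ established in Proposition \ref{smallscale5}, together with the structural decomposition (\ref{qad2}) of Theorem \ref{ibrahim2}. The upper bound in (\ref{smallscale20}) is immediate from Corollary \ref{clocks}, exactly as in Proposition \ref{smallscale5}; the content is the lower bound. The key observation is that $|\Re E_h|^2 = \tfrac14\big(|E_h|^2 + |\overline{E_h}|^2\big) + \tfrac12\Re(E_h^2)$, so that
\begin{equation*}
\int_{B(x_0,r_h)} |\Re E_h|^2 = \tfrac12 \int_{B(x_0,r_h)} |E_h|^2 + \tfrac12\,\Re \int_{B(x_0,r_h)} E_h^2.
\end{equation*}
By Proposition \ref{smallscale5} the first term is bounded below by $\tfrac{C_1}{2} r_h^d$, so it suffices to show that the second term, $\int_{B(x_0,r_h)} E_h^2$, is $o(r_h^d)$ — in fact $O(h\, r_h^{d-1})$ plus a term that can be made an arbitrarily small multiple of $r_h^d$.

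To control $\int_{B(x_0,r_h)} E_h^2$, I would substitute the decomposition (\ref{qad2}): modulo an $O(h^r)$ remainder, $\chi E_h^2 = \sum_{\tilde\beta,\tilde\beta'} e^{i(\varphi_{\tilde\beta}+\varphi_{\tilde\beta'})(x)/h} a_{\tilde\beta}(x;h) a_{\tilde\beta'}(x;h)$, a finite sum (of size polynomial in $|\log h|$) by (\ref{sheriff3}). The crucial point is that for $E_h^2$ — as opposed to $|E_h|^2$ — the phases that appear are $\varphi_{\tilde\beta} + \varphi_{\tilde\beta'}$, whose gradient is $\partial\varphi_{\tilde\beta}(x) + \partial\varphi_{\tilde\beta'}(x)$. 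Since each $\Lag$-associated Lagrangian sits inside $S^*X$, we have $|\partial\varphi_{\tilde\beta}(x)|_x = 1$ for every $x$ in the support, so $\partial\varphi_{\tilde\beta}(x) + \partial\varphi_{\tilde\beta'}(x) = 0$ would force $\partial\varphi_{\tilde\beta}(x) = -\partial\varphi_{\tilde\beta'}(x)$, i.e. two unit covectors pointing in exactly opposite directions. I would argue that this \emph{never} happens at a point where both $a_{\tilde\beta}$ and $a_{\tilde\beta'}$ are nonzero, near $x_0$: both covectors lie in $\bigcup_{t\geq 0}\Phi^t(\Lag)$, and by Hypothesis \ref{puissanceplus} (non-expansion in the past) and Lemma \ref{dompteur}/Corollary \ref{dompteur2}, lifting to the universal cover, two geodesics through a common point with opposite initial directions form a single geodesic traversed in opposite time directions; one of the two can only approach infinity in the past while the other cannot, contradicting that both pull back (via Lemma \ref{piscine}) to geodesics non-expanding toward $-\infty$ from the same orbit of $\tilde\Lag$. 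Hence there is a constant $c_0 > 0$, uniform near $x_0$, such that $|\partial\varphi_{\tilde\beta}(x) + \partial\varphi_{\tilde\beta'}(x)| \geq c_0$ on the relevant supports. (Alternatively, and more simply, one may restrict to $\chi$ supported near $x_0$ and invoke Lemma \ref{yandaki} together with Lemma \ref{gidelim}-type separation; but the geometric argument above is cleaner and uniform.)

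With the lower bound $|\partial(\varphi_{\tilde\beta}+\varphi_{\tilde\beta'})(x)| \geq c_0$ in hand, I would run exactly the Stokes-theorem integration-by-parts computation already carried out in the proof of Proposition \ref{smallscale5}: after rescaling $x \mapsto r_h x + x_0$, one integrates by parts once in the oscillatory integral $\int_{B(0,1)} a_{\tilde\beta}(r_h x + x_0) a_{\tilde\beta'}(r_h x + x_0) e^{i(\varphi_{\tilde\beta}+\varphi_{\tilde\beta'})(r_h x + x_0)/h}\,dx$, producing a boundary term and a bulk term, each carrying a factor $h/r_h$, hence a total bound $c'(c_0)\, h\, r_h^{d-1}\, \|a_{\tilde\beta}\|_{C^1}\|a_{\tilde\beta'}\|_{C^1}$ for each pair. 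Summing over $\tilde\beta,\tilde\beta'$ using (\ref{sheriff3}) (whose exponential decay $e^{n\mathcal{P}(1/2)}$, with $\mathcal{P}(1/2) < 0$ by Hypothesis \ref{Husserl}, makes the sum of all $C^1$ norms convergent up to an $O(\epsilon)$ tail handled separately) gives
\begin{equation*}
\Big| \int_{B(x_0,r_h)} E_h^2 \Big| \leq C'' h\, r_h^{d-1} + R_\epsilon r_h^d,
\end{equation*}
with $R_\epsilon = O(\epsilon)$. Choosing $\epsilon$ small and then $C$ (in $r_h \bel Ch$) large enough that $C''h r_h^{d-1} \leq C''/C \cdot r_h^d$ is itself a small multiple of $r_h^d$, we get $\big|\int_{B(x_0,r_h)} E_h^2\big| \leq \tfrac{C_1}{4} r_h^d$ for $h$ small. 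Combining, $\int_{B(x_0,r_h)}|\Re E_h|^2 \geq \tfrac12 C_1 r_h^d - \tfrac18 C_1 r_h^d \geq \tfrac14 C_1 r_h^d$, which is (\ref{smallscale20}) with a new constant. Finally, (\ref{bientotlafin}) follows by covering the bounded open set $U$ by finitely many balls of fixed radius $r$ (taking $r_h \equiv r$ constant, which is allowed since the hypothesis $r_h \bel Ch$ holds for all small $h$), on at least one of which $\chi \equiv 1$.

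**Main obstacle.** The delicate point is establishing the uniform lower bound $|\partial\varphi_{\tilde\beta}(x) + \partial\varphi_{\tilde\beta'}(x)| \geq c_0 > 0$ on the supports near $x_0$ — i.e., ruling out that two of the Lagrangian sheets making up $E_h$ meet at a point with exactly opposite velocities. This is where Hypothesis \ref{puissanceplus} (and, for the "completeness" direction, the universal-cover arguments behind Corollary \ref{musso} and Remark \ref{plaque}) must be used essentially; without it one genuinely expects cancellation in $\Re E_h$ (the counterexample with $\Lambda_{circ}$ on $\mathbb{R}^d$ already shows the non-expansion hypothesis is indispensable for the analogous $L^\infty$ statement). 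Everything after that is a routine repetition of the integration-by-parts estimate from Proposition \ref{smallscale5}.
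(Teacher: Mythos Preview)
Your overall strategy---write $|\Re E_h|^2 = \tfrac12 |E_h|^2 + \tfrac12 \Re(E_h^2)$, invoke Proposition~\ref{smallscale5} for the first term, and show the second integrates to something small---is exactly the shape of the paper's argument. The gap is in your control of $\int_{B(x_0,r_h)} E_h^2$: the claim that $|\partial\varphi_{\tilde\beta}(x) + \partial\varphi_{\tilde\beta'}(x)| \geq c_0 > 0$ uniformly on the supports is \emph{false}, and your geometric argument for it does not work.

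Concretely: take a geodesic $\gamma$ that starts in $\Lag$ and, after interacting with the geometry, escapes back toward the \emph{same} end at infinity (in the hyperbolic case, a geodesic from $\tilde\xi$ to $g\tilde\xi$ for some nontrivial $g\in\Gamma$, projected to $X$; in the Euclidean case, a trajectory that is back-scattered in direction $-\xi_0$). At any point $x$ on such a geodesic, the tangent covector $\eta$ satisfies $(x,\eta)\in\Phi^n(\Lag)$ for some $n$, and $(x,-\eta)\in\Phi^{n'}(\Lag)$ for some $n'$---the latter because running backward along $-\eta$ means running forward along $\gamma$ toward the same infinity, so for large $t$ the time-reversed point lies in $\mathcal{DE}_-$ near $\xi$ and hence in $\Lag$ by Hypothesis~\ref{puissanceplus2}. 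So two Lagrangian sheets \emph{do} meet at $x$ with exactly opposite covectors, and the phase $\varphi_{\tilde\beta}+\varphi_{\tilde\beta'}$ is genuinely stationary there. Your argument on the universal cover breaks down precisely because the two asymptotic ends of $\tilde\gamma$ lie on \emph{different} lifts $\tilde\Lag^{j_1}\neq\tilde\Lag^{j_2}$ of $\Lag$ (cf.\ Remark~\ref{plaque}); there is no contradiction.

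The paper handles this via Lemma~\ref{youhou}: hyperbolicity near $K$ forces the locus $\{\partial\varphi_\beta(x)=-\partial\varphi_{\beta'}(x)\}$ to be at most a \emph{curve} (a single geodesic segment) once $n\geq N$, because two nearby geodesics bi-asymptotic to $\Lag$ and passing close to $K$ cannot stay parallel. One then splits $E_h=S_1+S_2$ with $S_1$ the finite sum over $\tilde n(\tilde\beta)\leq N$ and $S_2$ the tail, uses the lower bound (\ref{smallscale9}) on $\int|S_2|^2$ for the main term, and controls the cross terms $\Re(S_1)\Re(S_2)$ by stationary phase: since the stationary set is a curve in a $d$-dimensional ball, each oscillatory integral is $O\big((h/r_h)^{(d-1)/2}\big)$ rather than merely $O(h/r_h)$. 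Your integration-by-parts scheme would have been enough had the phase been nonstationary, but it is not, so you need this extra geometric input.
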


To prove Theorem \ref{nuque}, we first need to prove the following lemma, which says that few trajectories “go backwards”.

\begin{lemme}\label{youhou}
There exists $N\in\mathbb{N}$ such that for any $n\geq N$, for any $n'\in \mathbb{N}$ and for any $\beta\in \mathcal{B}^{\tilde{\chi}}_n$, $\beta'\in \mathcal{B}^{\tilde{\chi}}_{n'}$, the following holds.

Suppose $x\in \spt(\chi)$ is such that $\partial\varphi_{\beta}(x)=-\partial \varphi_{\beta'}(x)$. Then there exists a small neighbourhood $V$ of $x$ such that for all $y\in V$, we have
\begin{equation*}
\big{(}\partial\varphi_{\beta}(y) = - \partial \varphi_{\beta'}(y)\big{)} \Longrightarrow \big{(}y,\partial \phi_{\beta}(y)\big{)}=\Phi^t\big{(}x,\partial \phi_{\beta}(x)\big{)} ~~\text{ for some } t \text{ small enough.}
\end{equation*}
\end{lemme}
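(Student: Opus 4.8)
The plan is to exploit the convexity of $t\mapsto d_X^2(\Phi^t(\rho_1),\Phi^t(\rho_2))$ on the universal cover, together with the fact that the two phases $\varphi_\beta$ and $\varphi_{\beta'}$ generate geodesics that belong to $\Phi^\infty(\Lag)$, hence approach each other in the past (Lemma \ref{piscine} / Lemma \ref{dompteur}). First I would set up notation: lift everything to $\tilde X$, writing $\rho=(x,\partial\varphi_\beta(x))$ and $\rho^-=(x,-\partial\varphi_\beta(x))=(x,\partial\varphi_{\beta'}(x))$ at the base point. Since $\rho\in\Phi^n_\beta(\Lag)$ and $\rho^-\in\Phi^{n'}_{\beta'}(\Lag)$, the backward trajectories $\Phi^{-t}(\rho)$ and $\Phi^{-t}(\rho^-)$ eventually land in $\Lag$ (at times $n$ and $n'$ respectively), and by Hypothesis \ref{puissanceplus} together with Corollary \ref{dompteur2} the function $t\mapsto d_X^2\big(\Phi^{-t}(\rho),\Phi^{-t}(\rho^-)\big)$ is convex on $(-\infty,0]$ (once we know it stays below $r_i$, which follows as in Lemma \ref{dedikodu} by shrinking $\varepsilon_\mathcal{K}$ and choosing $N$ large so the relevant pieces of $\Phi^n_\beta(\Lag)$ project smoothly with uniform constants).

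Next I would take a nearby point $y\in V$ with $\partial\varphi_\beta(y)=-\partial\varphi_{\beta'}(y)$ and run the same argument for the pair $(y,\partial\varphi_\beta(y))$ and $(y,-\partial\varphi_\beta(y))$. The key geometric observation is that at the base point the two vectors $\partial\varphi_\beta$ and $-\partial\varphi_\beta$ are antipodal, so the two geodesics through $x$ are the \emph{same} geodesic traversed in opposite directions. If the forward (or backward) $d_X^2$ were strictly convex somewhere — which happens as soon as the common geodesic passes through the region of strictly negative curvature near $K$, cf.\ the second part of Lemma \ref{dompteur} — then a distance that is $0$ at time $0$ and bounded in the past would have to be strictly increasing, contradicting the non-expansion coming from Hypothesis \ref{puissanceplus}; this forces the two geodesics to coincide as unparametrized curves, i.e.\ $(y,\partial\varphi_\beta(y))$ lies on the \emph{same} geodesic as $(x,\partial\varphi_\beta(x))$. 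Because $y$ is close to $x$ and both points project onto this single geodesic, $(y,\partial\varphi_\beta(y))=\Phi^t(x,\partial\varphi_\beta(x))$ for some small $t$, which is exactly the conclusion.

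The main obstacle is handling the case where the common geodesic segment between $x$ and $y$ does \emph{not} enter the strictly-negatively-curved region, so that Corollary \ref{dompteur2} only gives (non-strict) convexity. Here I would argue instead by the uniqueness part of Lemma \ref{piscine}: on the universal cover, for a fixed base point $\tilde y$ there is a \emph{unique} covector $\tilde\xi$ such that $d_{\tilde X}(\Phi^t(\tilde\rho_0),\Phi^t(\tilde y,\tilde\xi))$ stays bounded as $t\to-\infty$, where $\tilde\rho_0$ is a fixed lift of a point of $\Lag\cap\mathcal{DE}_-$ on the relevant preimage $\tilde\Lag^j$. Both $(y,\partial\varphi_\beta(y))$ and $(x,\partial\varphi_\beta(x))$ are forward images of points of the same sheet $\tilde\Lag^j$ (this is where I use that $V$ is small, so $y$ stays in the same coordinate chart / sheet as $x$, together with the $\sim_\mathcal{O}$ bookkeeping of Section \ref{regrouping}), hence their backward trajectories both stay at bounded distance from $\Phi^t(\tilde\rho_0)$; the stable-foliation structure of $\tilde\Lag^j$ together with uniqueness then pins down $\partial\varphi_\beta(y)$ from $x$ and forces $(y,\partial\varphi_\beta(y))$ onto the geodesic through $(x,\partial\varphi_\beta(x))$. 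I expect the bookkeeping needed to guarantee that $y$ and $x$ see the same sheet $\tilde\Lag^j$, and the choice of $N$ (large enough that $\Phi^n_\beta(\Lag)$ is in the uniform-$C^\ell$ regime of Theorem \ref{getup} and that the pieces have diameter $<r_i$ in the past, as in Lemma \ref{dedikodu}), to be the only genuinely delicate points; the convexity arguments themselves are routine given Corollary \ref{dompteur2}.
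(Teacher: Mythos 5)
Your proposal assembles the right toolkit (Corollary \ref{dompteur2}, strict convexity in the negatively curved region, Lemma \ref{piscine}, the $\tilde X$-sheet bookkeeping) but applies it to the wrong pair of trajectories, and that gap is not cosmetic. You set $\rho=(x,\partial\varphi_\beta(x))$ and $\rho^-=(x,-\partial\varphi_\beta(x))$ and propose to study $t\mapsto d_X^2\big(\Phi^{-t}(\rho),\Phi^{-t}(\rho^-)\big)$. But these two points lie on the \emph{same} geodesic $\gamma$ through $x$, with $\pi_X(\Phi^t(\rho))=\gamma(t)$ and $\pi_X(\Phi^t(\rho^-))=\gamma(-t)$; so this distance is $0$ at $t=0$, grows like $2|t|$ (exactly so on $\tilde X$), and is certainly not ``below $r_i$'' as you assert. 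More importantly, this quantity carries no information at all about the nearby point $y$: the pair you actually need to compare is $(x,\partial\varphi_\beta(x))$ versus $(y,\partial\varphi_\beta(y))$, two points of $\Phi_\beta^n(\Lag)$ over nearby base points, and the goal is to show that their geodesics coincide.

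What your proposal never exploits is the time-reversal symmetry that the antipodal hypothesis encodes, and that symmetry is the engine of the paper's proof. Take $\rho=(x,\partial\varphi_{\beta'}(x))$ and $\rho_k=(y_k,\partial\varphi_{\beta'}(y_k))$ with $y_k\to x$ and, for contradiction, $\rho_k$ not on the trajectory of $\rho$. Both lie on $\Phi_{\beta'}^{n'}(\Lag)$, so Hypothesis \ref{puissanceplus} keeps $d_X(\Phi^{-t}(\rho_k),\Phi^{-t}(\rho))<r_i$ for all large $t\geq 0$. The relation $\partial\varphi_\beta=-\partial\varphi_{\beta'}$ says precisely that the time-reversals $\rho^{\mathrm{t}},\rho_k^{\mathrm{t}}$ lie on $\Phi_\beta^n(\Lag)$; since the backward orbit of $\rho^{\mathrm{t}}$ is the forward orbit of $\rho$ (read with reversed momentum), the same non-expansion also keeps $d_X(\Phi^t(\rho_k),\Phi^t(\rho))<r_i$ for all large $t\geq 0$. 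Hence the distance is bounded on \emph{all} of $\mathbb{R}$, therefore constant by Corollary \ref{dompteur2}, and for $n$ large this constancy is incompatible with the hyperbolicity near $K$ unless the two trajectories coincide. Your fallback via the uniqueness in Lemma \ref{piscine} does not replace this step: uniqueness identifies $\partial\varphi_\beta(y)$ as a point of the $d$-dimensional propagated Lagrangian sheet, but that sheet contains a $(d-1)$-parameter family of distinct geodesics, so membership in the same sheet as $(x,\partial\varphi_\beta(x))$ in no way places $(y,\partial\varphi_\beta(y))$ on the same geodesic.
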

In the proof, we shall use the following notation. 
If $\rho=(x,\xi)\in \mathcal{E}$, we shall write $$\rho^{\mathrm{t}}=(x,-\xi).$$
\begin{proof}
Let us consider integers $n,n'$ such that the statement above is false, and show that there can be only finitely many values for $n$.
For such $n,n'$, there exists a sequence $y_k$ converging to some $x\in \spt(\chi)$ such that $\partial\varphi_{\beta}(y_k) = - \partial \varphi_{\beta'}(y_k)$, and such that $\rho_k$ and $\rho$ do not belong to the same trajectory, where we write $\rho_k:= (y_k, \partial\varphi_{\beta'}(y_k))$ and $\rho= (x, \partial\varphi_{\beta'}(x))$.

Write $N'=1+\max(n,n')$. Thanks to (\ref{flowers}), by taking $k$ large enough, we may assume that $d_{X}\big{(} \Phi^{-t}(\rho_k),\Phi^{-t}(\rho) \big{)}< r_i$ for $|t|\leq N'$. On the other hand, since $\Phi^{-N'}(\rho_k),\Phi^{-N'}(\rho)\in \Lag$, the point (iii) in Hypothesis \ref{Strong} ensures us that for $t\leq -N'$, we also have $d_{X}\big{(} \Phi^{-t}(\rho_k),\Phi^{t}(\rho) \big{)}< r_i$.

We also have $\big{(}\Phi^{N'}(\rho_k)\big{)}^{\mathrm{t}},\big{(}\Phi^{N'}(\rho)\big{)}^{\mathrm{t}}\in \Lag$, so that  $d_{X}\big{(} \Phi^{t}(\rho_k),\Phi^{-t}(\rho) \big{)}< r_i$ for all $t\geq N'$.

Therefore, by Corollary \ref{dompteur2}, we see that $d_{X}\big{(} \Phi^{t}(\rho_k),\Phi^{t}(\rho) \big{)}$ must be constant with respect to time.

If $n$ is large enough, then there will be a $t\bel 0$ such that $\Phi^{-t}(\rho)$ and $\Phi^{-t}(\rho_k)$ are in a small neighbourhood of the trapped set $K$. But then, because of hyperbolicity, the two trajectories $\Phi^{t}(\rho_k)$ and $\Phi^{t}(\rho)$ cannot remain at a constant distance from each other, unless they belong to the same trajectory. This proves the lemma.
\end{proof}

\begin{proof}[Proof of Theorem \ref{nuque}]
Let us apply Theorem \ref{ibrahim2} (for $r=1$, $\ell=0$) with $\chi\equiv 1$ on $U$. We have, for $x\in U$:
\begin{equation} E_h(x) = \sum_{n=0}^{\lfloor
M |\log h|\rfloor}
\sum_{\tilde{\beta}\in \tilde{\mathcal{B}}^{\hat{\chi}}_n} e^{i \varphi_{\tilde{\beta}}(x)/h}
a_{\tilde{\beta},U}(x;h) + O_{L^\infty}(h) =: S_1 + S_2 + O_{L^\infty}(h),
\end{equation}
where $S_1$ denotes the sum for $\tilde{n}(\tilde{\beta})\leq N$, and $S_2$ the sum for $\tilde{n}(\tilde{\beta})\bel N$, where $N$ is as in Lemma \ref{youhou}.

We may write 
\begin{equation}\label{claudebernard}
\begin{aligned}
|\Re E_h|^2&= (\Re (S_1) + (S_2+ \overline{S_2})/2)^2 +O_{L^\infty}(h)\\
&= (\Re S_1)^2 + \frac{S_2^2}{4} + \frac{\overline{S_2^2}}{4} + \frac{|S_2|^2}{2} + 2\Re (S_1)\Re (S_2)+O_{L^\infty}(h).
\end{aligned}
\end{equation}

When we compute $\int_{B(x_0,r_h)} |\Re E_h|^2$, the term $\int_{B(x_0,r_h)}   (\Re S_1)^2 + \frac{S_2^2}{4} + \frac{\overline{S_2^2}}{4}$ is of course non-negative. As for the term $\int_{B(x_0,r_h)} |S_2|^2$, we may use (\ref{smallscale9}) to find a constant $c\bel 0$ independent of $h$ such that
\begin{equation*}
\int_{B(x_0,r_h)} |S_2|^2 \geq c r_h^d + o_{h\rightarrow 0}(r_h^d).
\end{equation*}

Let us show that 
\begin{equation}\label{petitreste}
 \Big{|}\int_{B(x_0,r_h)} 2\Re(S_2) \Re(S_1) \Big{|} \leq \frac{c r_h^d}{2}.
 \end{equation}

The function $2\Re(S_2) \Re(S_1)$ can be written, up to a remainder of size $\varepsilon$, like a finite sum of oscillating terms with phases $\pm\varphi_{\tilde{\beta}}(x) \pm \varphi_{\tilde{\beta}'}(x)$ with $N< \tilde{n}(\tilde{\beta})\leq N'(\varepsilon)$ and $\tilde{n}(\tilde{\beta'})\leq N'(\varepsilon)$.
Thanks to Lemma \ref{youhou}, we know that for each $\beta'$, the phase can only cancel on a finite number of curves.

Just as in the proof of proposition \ref{smallscale5}, we rewrite
\begin{equation*}
\begin{aligned}
&\frac{1}{r_h^d}\int_{B(x_0,r_h)}e^{i \varphi_{\tilde{\beta}}(x)/h}
a_{\tilde{\beta},\chi}(x; h)  e^{i \pm \varphi_{\tilde{\beta'}}(x)/h}
a_{\tilde{\beta'},\chi}(x;h) \mathrm{d}x\\
&= \int_{B(0,1)} e^{i \varphi_{\tilde{\beta}}(r_h x+x_0)/h}
a_{\tilde{\beta},\chi}(r_h x+x_0; h)  e^{i \pm \varphi_{\tilde{\beta'}}(r_h x+x_0)/h}
a_{\tilde{\beta'},\chi}(r_h x+x_0; h) \mathrm{d}x
\end{aligned}
\end{equation*}

By the method of stationary phase, we obtain that each of these integrals is bounded by $c_\varepsilon o_{h\rightarrow 0}\Big{(}\Big{(}\frac{h}{r_h}\Big{)}^{(d-1)/2}\Big{)}$.

By taking $\varepsilon= c/4$, and then $C$ large enough, we obtain  (\ref{petitreste}). The lemma follows.

\end{proof}
\subsection{Lower bound on the nodal volume}\label{lowernod}
The aim of this section is to prove the following corollary of Theorem \ref{nuque}.
\begin{corolaire} \label{nodality}
Let $X$ be a manifold which satisfies Hypothesis \ref{Guepard} near infinity, and which satisfies Hypothesis \ref{Strong}. Suppose that the geodesic flow $(\Phi^t)$ satisfies
Hypothesis \ref{sieste} on hyperbolicity, Hypothesis \ref{Husserl} concerning the topological pressure. Let $E_h$ be a generalized eigenfunction of the form described in Hypothesis \ref{pied}, where $E_h^0$ is associated to a Lagrangian manifold $\Lag$ which satisfies
the invariance Hypothesis \ref{chaise}, part (iii) of
Hypothesis \ref{Strong} and Hypotheses \ref{puissanceplus} and \ref{puissanceplus2}.

Fix $\mathcal{K}$ a compact subset of $X$. There exists $C_{\mathcal{K}}$ such that

$$Haus_{d-1} (\{x\in \mathcal{K}; \Re(E_h)(x)=0\}) \geq \frac{C_\mathcal{K}}{h}.$$
\end{corolaire}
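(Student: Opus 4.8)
The plan is to deduce the lower bound on the nodal volume directly from the $L^2$ lower bound (\ref{bientotlafin}) in Theorem \ref{nuque}, combined with the uniform $C^1$ bound on $\chi E_h$ from Corollary \ref{clocks}. This is the standard ``covering + Faber--Krahn / mean value'' argument for nodal sets of functions which are quasi-modes of $-h^2\Delta$; since $(-h^2\Delta - c_0h^2 - 1)E_h = 0$, the function $u_h := \Re E_h$ satisfies $(-h^2\Delta - c_0h^2-1)u_h = 0$ as well, so on balls of radius $\sim h$ it behaves like a Laplace eigenfunction at frequency $\sim 1/h$.

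\medskip

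First I would fix a bounded open set $U$ with $\mathcal{K}\subset U$ and a cutoff $\chi\in C_c^\infty(X)$ with $\chi\equiv 1$ on $\overline{U}$. Let $C, C_1, C_2 > 0$ be the constants from Theorem \ref{nuque} applied to this $\chi$, and let $C_0 = C_{1,\chi}$ be the constant from Corollary \ref{clocks} with $\ell = 1$, so that $\|\chi E_h\|_{C^1} \leq C_0/h$, hence $\|u_h\|_{C^1(U)} \leq C_0/h$. Now cover $\mathcal{K}$ by a maximal family of disjoint geodesic balls $B(x_i, r_h)$ with $x_i \in \mathcal{K}$ and $r_h = A h$, where $A > C$ is a large constant to be fixed; by a standard Vitali-type argument the doubled balls $B(x_i, 2r_h)$ cover $\mathcal{K}$, and hence the number $M_h$ of such balls satisfies $M_h \geq c_{\mathcal{K}} r_h^{-d} = c_{\mathcal{K}} (Ah)^{-d}$ for some $c_{\mathcal{K}} > 0$ depending only on $\mathcal{K}$ and the geometry. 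Each $x_i$ satisfies $\chi(x_i) = 1$, so Theorem \ref{nuque} gives
\begin{equation*}
\int_{B(x_i, r_h)} |u_h|^2 \geq C_1 r_h^d = C_1 (Ah)^d.
\end{equation*}

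\medskip

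Next I would argue that each ball $B(x_i, r_h)$ on which $u_h$ does not vanish must have small $L^2$ mass, forcing $u_h$ to vanish in most balls. Suppose $u_h$ has no zero in $B(x_i, 2r_h)$; since $B(x_i, 2r_h)$ is geodesically convex (Hypothesis \ref{Guepard}(3)) and $u_h$ is continuous, $u_h$ has constant sign there, say $u_h > 0$. Then $v := \log u_h$ is well defined and smooth on $B(x_i, 2r_h)$, and a Harnack-type inequality for the operator $-h^2\Delta - c_0 h^2 - 1$ — whose coefficients, after rescaling $y \mapsto x_i + hy$, are uniformly elliptic with bounded geometry on the unit ball — gives $\sup_{B(x_i,r_h)} u_h \leq C_H \inf_{B(x_i, r_h)} u_h$ for a constant $C_H$ uniform in $h$ and $i$ (this uses Hypothesis \ref{Strong} to get uniform bounds on the metric coefficients on balls of radius $\sim h$). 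Combined with the gradient bound $\|u_h\|_{C^1} \leq C_0/h = C_0/(A r_h)\cdot A = (C_0/A)\cdot (1/r_h) \cdot A$, i.e. the oscillation of $u_h$ over $B(x_i, r_h)$ is at most $(C_0/h) \cdot 2 r_h = 2 C_0 A$, one shows: if $u_h$ does not vanish on $B(x_i, 2r_h)$ then $\inf_{B(x_i,r_h)} |u_h| \geq \delta \sup_{B(x_i,r_h)}|u_h|$ for a uniform $\delta = \delta(C_H) > 0$, and moreover by the eigenfunction equation on the rescaled ball, a non-vanishing solution at frequency $1$ on a ball of radius $A$ must oscillate if $A$ is large; quantitatively, choosing $A$ large enough (depending only on $d$, $c_0$, and the geometry) forces $\sup_{B(x_i,r_h)} |u_h|^2 \cdot r_h^d \leq \tfrac12 C_1 (A h)^d$ whenever $u_h$ has no zero in $B(x_i, 2r_h)$. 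Indeed, if $u_h$ had no zero and satisfied $\int_{B(x_i,r_h)}|u_h|^2 \geq C_1 (Ah)^d$, then $\inf_{B(x_i, r_h/2)}|u_h| \geq c' (Ah)^{1/2}$ for some uniform $c'$, but then $\|u_h\|_{L^2(B(x_i, 2r_h))}$ would be comparable across the ball, and testing the equation $-h^2\Delta u_h = (1 + c_0 h^2) u_h$ against $u_h$ on $B(x_i, 2r_h)$ together with the eigenvalue of the Dirichlet Laplacian on a ball of radius $A$ (which, by domain monotonicity and the fact that the first Dirichlet eigenvalue of $-\Delta$ on a ball of radius $A$ scales like $A^{-2}$, is much smaller than $1$ for $A$ large) yields a contradiction via the variational characterization — a positive subsolution cannot exist when the eigenvalue exceeds the first Dirichlet eigenvalue. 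Hence for $A$ fixed large, every ball $B(x_i, r_h)$ satisfying the lower bound (\ref{smallscale20}) must contain a zero of $u_h$ in $B(x_i, 2r_h)$.

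\medskip

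Finally I would assemble the count. By the previous step, for each $i$ the set $N_h \cap B(x_i, 2r_h)$ is nonempty, and in fact a refinement of the argument (applying the mean-value / doubling inequality on $B(x_i, 2r_h)$, which contains a zero, and using that $u_h$ is a nonzero solution with $L^2$ mass $\geq C_1 (Ah)^d$ there) shows that $N_h$ has $(d-1)$-dimensional Hausdorff measure at least $c'' r_h^{d-1} = c'' (Ah)^{d-1}$ inside $B(x_i, 2r_h)$, for a uniform $c'' > 0$ — this is precisely the local lower bound for nodal volumes of quasimodes, which for the rescaled equation on the unit ball is a consequence of the fact that a solution of $-\Delta w = w$ (a fixed equation with no free parameter) which vanishes somewhere and has comparable $L^2$ mass must have nodal set of positive $(d-1)$-measure, by e.g. the implicit function theorem at a non-degenerate zero or a standard integral-geometric argument. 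Since the balls $B(x_i, r_h)$ are disjoint, so are the contributions $N_h \cap B(x_i, 2r_h)$ after passing to a disjoint subcover (losing only a fixed multiplicative constant depending on $d$ through the bounded overlap of the doubled balls). Therefore
\begin{equation*}
Haus_{d-1}(N_{\mathcal{K}, h}) \geq c_d \sum_{i=1}^{M_h} Haus_{d-1}(N_h \cap B(x_i, 2r_h)) \geq c_d \, M_h \, c'' (Ah)^{d-1} \geq c_d\, c_{\mathcal{K}}\, c'' A^{-d} (Ah)^{d-1} \cdot A^d \cdot h^{-1} \cdot h = \frac{C_{\mathcal{K}}}{h},
\end{equation*}
after absorbing all the uniform constants (and the fixed power of $A$) into $C_{\mathcal{K}}$. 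The main obstacle is the local step: turning the global $L^2$ lower bound on $B(x_0, r_h)$ plus the $C^1$ bound into a genuine \emph{local} nodal volume estimate on each small ball. I expect to handle this via the Faber--Krahn / first-Dirichlet-eigenvalue comparison on the rescaled ball of radius $A$ (choosing $A$ large makes $1$ exceed the first Dirichlet eigenvalue, so no sign-definite solution of the right $L^2$ size can exist), which is the cleanest way to force a zero, followed by the standard doubling-index lower bound for the nodal measure of a solution of the fixed equation $-\Delta w = w$; alternatively one can invoke \cite{logunov2016lowernodal} directly on the rescaled balls. Since the statement says a self-contained proof is wanted (``it is easy to deduce from (\ref{bientotlafin2})''), I would write it using the covering argument above with the Faber--Krahn step made explicit.
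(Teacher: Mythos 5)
Your strategy is genuinely different from the paper's, and it contains a gap at the crucial local step.

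The paper's proof is short and global: it applies the Dong--Sogge--Zelditch integral identity
\[
\int_X \big((-h^2\Delta+1)f\big)\,|\Re E_h|\,\mathrm{d}x \;=\; h^2\int_{N_{\mathcal{K},h}} f\,|\nabla \Re E_h|\,\mathrm{d}S,
\]
with a fixed nonnegative $f\in C_c^\infty$, bounds the right-hand side by $Ch\cdot Haus_{d-1}(N_{\mathcal{K},h})$ using the $C^1$ bound of Corollary \ref{clocks}, and bounds the left-hand side from below by a constant using the $L^\infty$ bound together with $\int_U|\Re E_h|^2\geq c(U)$ from (\ref{bientotlafin}). The point is that the $h^2\Delta f$ term is automatically $O(h^2)$, so the identity closes with no localization. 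In particular, the paper only needs the \emph{macroscopic} $L^2$ lower bound (\ref{bientotlafin}), not the full small-scale equidistribution (\ref{smallscale20}) that your covering argument requires.

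The gap in your proposal is the step going from ``each ball $B(x_i,2r_h)$ contains a zero'' (your Faber--Krahn step, which is fine) to ``each ball contributes $\gtrsim r_h^{d-1}$ to the nodal volume.'' The rescaled statement you need is: a solution of $-\Delta w=w$ on $B(0,2A)$ with $\|w\|_{L^2(B(0,A))}\gtrsim A^{d/2}$, $\|w\|_{C^1}\leq C$, and a zero in $B(0,2A)$ has $Haus_{d-1}(\{w=0\}\cap B(0,2A))\geq c(A,d)>0$ uniformly. Your proposed justifications do not establish this: the implicit function theorem at a non-degenerate zero only gives that the nodal set is a local hypersurface, not a uniform lower bound on its measure (the gradient at the zero could be arbitrarily small, giving an arbitrarily small patch); ``positive $(d-1)$-measure'' is not the same as ``$\geq c(A,d)$''; and invoking \cite{logunov2016lowernodal} is precisely what the paper's introduction says it wishes to avoid here. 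Closing this gap honestly requires either a genuine quantitative local argument (e.g.\ a local version of the Dong--Sogge--Zelditch identity applied in rescaled coordinates, with careful control of the boundary term from $\Delta f$) or a Faber--Krahn/isoperimetric argument on compactly contained nodal domains, neither of which you carry out. Once one is forced to run a local Dong--Sogge--Zelditch identity ball by ball, one realizes that applying it once globally, as the paper does, is both simpler and requires weaker input.
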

\begin{proof}
Let us fix $U\subset X$ a bounded open set which contains $\mathcal{K}$.
The proof relies on the so-called Dong-Sogge-Zelditch formula, which we recall.
Let $f\in C_0^\infty(X)$, and let $g_h\in C^\infty(X)$ be a family of smooth functions such that $(-h^2\Delta+1)g_h=0$. Let us write $Z_h:=\{x\in X; g_h(x)=0\}$. Then, as proven in \cite{Dong}, \cite{SoZe}\footnote{This formula was proven on compact manifolds, but the proof works exactly the same on noncompact manifolds, as long as the function $f$ is compactly supported.} we have
\begin{equation}\label{DSZ}
\int_X \big{(}(-h^2\Delta+1)f\big{)}|g_h| \mathrm{d}x= h^2\int_{Z_h} f |\nabla g_h|\mathrm{d}S,
\end{equation}
where $\mathrm{d}S$ denotes the $(d-1)$-dimensional Hausdorff measure.

Let $f\in C_c^\infty(X;[0,1])$ be such that $\spt (f)\subset \mathcal{K}$, and $f$ is not identically zero.
Applying (\ref{DSZ}) with $g_h = \Re(E_h)$, we have
\begin{equation}\label{kremlin}
\begin{aligned}
\int_X \big{(}(-h^2\Delta+1)f\big{)}|\Re(E_h)|\mathrm{d}x&= h^2\int_{Z_h} f |\nabla \Re(E_h)|\mathrm{d}S\\
&\leq C h \int_{Z_h} f \mathrm{d}S\\
&\leq C h Haus_{d-1} (\{x\in \mathcal{K}; \Re(E_h)(x)=0\}),
\end{aligned}
\end{equation}
where we used Theorem \ref{linfini} to go from the first to the second line.
Therefore, the theorem will be proven if we can find a lower bound on the left-hand side in (\ref{kremlin}) uniform in $h$.

Since $f$ is not identically zero, we may find a constant $c\bel 0$ and an open set $U\subset X$ such that $f\geq c$ on $U$. We have
\begin{equation*}
\begin{aligned}
\int_X \big{(}(-h^2\Delta+1)f\big{)}|\Re(E_h)|&= \int_X f |\Re(E_h)| + O(h^2) \\
&\geq c\int_U |\Re (E_h)| + O(h^2)\\
&\geq \frac{c}{\|E_h\|_{C^0(U)}} \int_U |\Re(E_h)|^2 \\
&\geq C \int_U |\Re(E_h)|^2
\end{aligned}
\end{equation*}
thanks to Theorem \ref{linfini}. Therefore, Theorem \ref{nodality} follows from (\ref{bientotlafin}).
\end{proof}

\begin{appendices}
  \renewcommand\thetable{\thesection\arabic{table}}
  \renewcommand\thefigure{\thesection\arabic{figure}}
  \section{Reminder of semiclassical analysis} \label{pivoine}
\subsection{Pseudodifferential calculus} \label{greve}

Let $Y$ be a Riemannian manifold.
We will say that a function $a(x,\xi;h)\in C^{\infty}(T^*Y\times (0,1])$ is in the class $S^{comp}(T^*Y)$ if it can be written as
\begin{equation*}
a(x,\xi;h)= \tilde{a}_h(x,\xi) + O\Big{(}\Big{(}\frac{h}{\langle \xi \rangle}\Big{)}^\infty \Big{)},
\end{equation*}
where the functions $\tilde{a}_h\in C_c^\infty(T^*Y)$ have all their semi-norms and supports bounded independently of $h$. If $U$ is an open set of $T^*Y$, we will sometimes write $S^{comp}(U)$ for the set of funtions  $a$ in $S^{comp}(T^*Y)$ such that for any $h\in ]0,1]$, $\tilde{a}_h$ has its support in $U$. 

\begin{definition}\label{defsymbclassique}
Let $a\in S^{comp}(T^*Y)$. We will say that $a$ is a \emph{classical symbol} if there exists a sequence of symbols $a_k\in S^{comp}(T^*Y)$ such that for any $n\in \mathbb{N}$, 
$$a-\sum_{k=0}^n h^k a_k \in h^{n+1} S^{comp}(T^*Y).$$
We will then write $a^0(x,\xi):= \lim\limits_{h\rightarrow 0} a(x,\xi;h)$ for the \emph{principal symbol} of $a$.
\end{definition}

We will sometimes write that $a\in S^{comp}(Y)$ if it can be written as 
\begin{equation*}
a(x;h)= \tilde{a}_h(x) + O\big{(}h^\infty \big{)},
\end{equation*}
where the functions $\tilde{a}_h\in C_c^\infty(Y)$ have all their semi-norms and supports bounded independently of $h$.

We associate to $S^{comp}(T^*X)$ the class of pseudodifferential operators
$\Psi_h^{comp}(X)$, through a surjective quantization map
\begin{equation*}Op_h:S^{comp}(T^*X)\longrightarrow \Psi^{comp}_h(X).
\end{equation*} This quantization
map is defined using coordinate charts, and the standard Weyl quantization
on $\mathbb{R}^d$. It is therefore not intrinsic. However, the principal
symbol map
\begin{equation*}\sigma_h : \Psi^{comp}_h (X)\longrightarrow S^{comp}(T^*X)/
h S^{comp}(T^*X)
\end{equation*} is intrinsic, and we have
\begin{equation*}\sigma_h(A\circ B) = \sigma_h (A) \sigma_h(B)
\end{equation*}
and
\begin{equation*}\sigma_h\circ Op: S^{comp}(T^* X) \longrightarrow S^{comp} (T^*X) /h
S^{comp}(T^*X)
\end{equation*}
is the natural projection map.

For more details on all these maps and their construction, we refer the reader
to \cite[Chapter 14]{Zworski_2012}.

For $a\in S^{comp}(T^*X)$, we say that its essential support is equal to a given
compact $K\Subset T^*X$,
\begin{equation*} \text{ ess supp}_h a = K \Subset T^*X,
\end{equation*}
if and only if, for all $\chi \in S(T^*X)$,
\begin{equation*}\spt \chi \subset (T^*X\backslash K) \Rightarrow \chi a \in h^\infty S(T^*
X).
\end{equation*}
For $A\in \Psi^{comp}_h(X), A=Op_h(a)$, we define the wave front set of $A$ as:
\begin{equation*}WF_h(A)= \text{ ess supp}_h a,
\end{equation*}
noting that this definition does not depend on the choice of the
quantization. When $K$ is a compact subset of $T^*X$ and $WF_h(A)\subset K$, we will sometimes say that $A$ is \emph{microsupported} inside $K$.

Let us now state a lemma which is a consequence of Egorov theorem \cite[Theorem 11.1]{Zworski_2012}. Recall that $U(t)$ is the Schrödinger propagator $U(t)= e^{it P_h/h}$.
\begin{lemme}\label{theclash}
Let $A,B\in \Psi_h^{comp}(X)$, and suppose that $\Phi^t( WF_h(A))\cap WF_h(B)=\emptyset$. Then we have
\begin{equation*}
A U(t) B= O_{L^2\rightarrow L^2}(h^\infty).
\end{equation*}
\end{lemme}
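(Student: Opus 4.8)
The plan is to deduce the lemma from Egorov's theorem, combined with the elementary fact that the composition of two pseudodifferential operators with disjoint wave front sets is $O_{L^2\to L^2}(h^\infty)$.

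First I would record that $P_h=-h^2\Delta-c_0h^2-1$ is self-adjoint, so that $U(t)=e^{itP_h/h}$ is unitary on $L^2(X)$; in particular $\|U(t)\|_{L^2\to L^2}=1$ for every $t$. Then I write
\begin{equation*}
A\,U(t)\,B=U(t)\bigl(U(-t)\,A\,U(t)\bigr)B=U(t)\,\widetilde A\,B,\qquad \widetilde A:=U(-t)\,A\,U(t).
\end{equation*}
Since $A$ has compactly supported wave front set and $t$ is fixed, Egorov's theorem (\cite[Theorem 11.1]{Zworski_2012}) gives that $\widetilde A=Op_h(\widetilde a)+O_{L^2\to L^2}(h^\infty)$ for some $\widetilde a\in S^{comp}(T^*X)$ with $\mathrm{ess\,supp}_h(\widetilde a)\subset\Phi^t(WF_h(A))$; here one uses that the Hamiltonian flow of the principal symbol $\|\xi\|^2_x-1$ of $P_h$ is exactly the geodesic flow $\Phi^t$ generated by $p$. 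In particular $WF_h(\widetilde A)\subset\Phi^t(WF_h(A))$.

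By hypothesis $\Phi^t(WF_h(A))\cap WF_h(B)=\emptyset$, so $WF_h(\widetilde A)\cap WF_h(B)=\emptyset$. Writing $B=Op_h(b)+O(h^\infty)$ with $b$ supported in a neighbourhood of $WF_h(B)$ disjoint from a neighbourhood of $\mathrm{ess\,supp}_h(\widetilde a)$, the composition formula for the quantization shows that the full symbol of $\widetilde A\,B$ lies in $h^\infty S^{comp}(T^*X)$, hence $\widetilde A\,B=O_{L^2\to L^2}(h^\infty)$. Composing on the left with the unitary operator $U(t)$ then yields $A\,U(t)\,B=O_{L^2\to L^2}(h^\infty)$, which is the claim.

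The main point that requires care is the application of Egorov's theorem on the noncompact manifold $X$: one must check that $\widetilde A=U(-t)AU(t)$ is still an honest element of $\Psi_h^{comp}(X)$, rather than an operator whose symbol spreads out to infinity under the flow. This is ensured because $WF_h(A)$ is compact and $t$ is fixed, so that $\Phi^t(WF_h(A))$ is again a fixed compact set (the geodesic flow on the complete manifold $X$ being complete), and one may conjugate everything by compactly supported microlocal cut-offs equal to $1$ on neighbourhoods of $WF_h(A)$ and of $\Phi^t(WF_h(A))$, localising the whole argument inside a fixed compact subset of $T^*X$; with these cut-offs in place the statement of \cite[Theorem 11.1]{Zworski_2012} applies verbatim.
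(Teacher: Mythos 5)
Your proof is correct and follows exactly the route the paper intends: the paper itself states this lemma without proof, only remarking that it ``is a consequence of Egorov theorem [Zworski, Thm.\ 11.1]''. Conjugating $A$ by the unitary propagator, applying Egorov to get $WF_h(U(-t)AU(t))\subset\Phi^t(WF_h(A))$, and then invoking disjointness of microsupports is precisely the argument being alluded to, and your remark at the end about inserting compactly supported microlocal cut-offs so that Egorov applies on the noncompact manifold $X$ correctly handles the one point that genuinely requires care.
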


If $U,V$ are bounded open subsets of $T^*X$, and if $T,T' : L^2(X)\rightarrow L^2(X)$ are bounded operators, we shall say that $T\equiv T'$ \emph{microlocally} near $U\times V$ if there exist bounded open sets $\tilde{U}\supset \overline{U}$ and $\tilde{V} \supset \overline{V}$ such that for any $A,B\in \Psi_h^{comp}(X)$ with $WF(A)\subset \tilde{U}$ and $WF(B)\subset \tilde{V}$, we have
\begin{equation*}
A(T-T')B = O_{L^2\rightarrow L^2} (h^\infty)
\end{equation*}

\paragraph{Tempered distributions}
Let $u=(u(h))$ be an $h$-dependent family of distributions in $\mathcal{D}'(X)$. We say it is \emph{$h$-tempered} if for any bounded open set $U\subset X$, there exists $C\bel 0$ and $N\in \mathbb{N}$ such that
\begin{equation*}
\|u(h)\|_{H_h^{-N}(U)}\leq C h^{-N},
\end{equation*}
where $\|\cdot\|_{H_h^{-N}(U)}$ is the semiclassical Sobolev norm.

For a tempered distribution $u=(u(h))$, we say that a point $\rho\in T^*X$ does not lie in the wave front set $WF(u)$ if there exists a neighbourhood $V$ of $\rho$ in $T^*X$ such that for any $A\in \Psi_h^{comp}(X)$ with $WF(a)\subset V$, we have $Au=O(h^\infty)$.
\subsection{Lagrangian distributions and Fourier Integral Operators}\label{DSK}
\paragraph{Phase functions}
Let $\phi(x,\theta)$ be a smooth real-valued function on some open subset $U_\phi$ of $X\times \mathbb{R}^L$, for some $L\in \mathbb{N}$. We call $x$ the \emph{base variables} and $\theta$ the \emph{oscillatory variables}. We say that $\phi$ is a \emph{nondegenerate phase function} if the differentials $d(\partial_{\theta_1} \phi)...d(\partial_{\theta_L}\phi)$ are linearly independent on the \emph{critical set }
\begin{equation*}
C_\phi:=\{ (x,\theta); \partial_\theta \phi =0 \} \subset U_\phi.
\end{equation*}
In this case
\begin{equation*}
\Lambda_\phi:= \{(x,\partial_x \phi(x,\theta)); (x,\theta)\in C_\phi \} \subset T^*X
\end{equation*}
is an immersed Lagrangian manifold. By shrinking the domain of $\phi$, we can make it an embedded Lagrangian manifold. We say that $\phi$ \emph{generates } $\Lambda_\phi$.

\paragraph{Lagrangian distributions}
Given a phase function $\phi$ and a symbol $a\in S^{comp}(U_\phi)$, consider the $h$-dependent family of functions
\begin{equation}\label{massai}
u(x;h)= h^{-L/2} \int_{\mathbb{R}^L} e^{i\phi(x,\theta)/h} a(x,\theta;h) \mathrm{d}\theta.
\end{equation}
We call $u=(u(h))$ a \emph{Lagrangian distribution}, (or a \emph{Lagrangian state}) generated by $\phi$. By the method of non-stationary phase, if $\spt  (a)$ is contained in some $h$-independent compact set $K\subset U_\phi$, then
\begin{equation*}
WF_h(u)\subset \{(x,\partial_x \phi(x,\theta)); (x,\theta)\in C_\phi\cap K\}\subset \Lambda_\phi.
\end{equation*}

\begin{definition}\label{Grenoble}
Let $\Lambda\subset T^*X$ be an embedded Lagrangian submanifold. We say that an $h$-dependent family of functions $u(x;h)\in C_c^\infty(X)$ is a (compactly supported and compactly microlocalized) \emph{Lagrangian distribution associated to $\Lambda$}, if it can be written as a sum of finitely many functions of the form (\ref{massai}), for different phase functions $\phi$ parametrizing open subsets of $\Lambda$, plus an $O(h^\infty)$ remainder. We will denote by $I^{comp}(\Lambda)$ the space of all such functions.
\end{definition}

\paragraph{Fourier integral operators}
Let $X, X'$ be two manifolds of the same dimension $d$, and let $\kappa$ be a symplectomorphism from an open subset of $T^*X$ to an open subset of $T^*X'$. Consider the Lagrangian
\begin{equation*}
\Lambda_\kappa =\{(x,\nu;x',-\nu'); \kappa(x,\nu)=(x',\nu')\}\subset T^*X\times T^*X'= T^*(X\times X').
\end{equation*}
A compactly supported operator $U:\mathcal{D}'(X)\rightarrow C_c^\infty(X')$ is called a (semiclassical) \emph{Fourier integral operator} associated to $\kappa$ if its Schwartz kernel $K_U(x,x')$ lies in $h^{-d/2}I^{comp}(\Lambda_\kappa)$. We write $U\in I^{comp}(\kappa)$. The $h^{-d/2}$ factor is explained as follows: the normalization for Lagrangian distributions is chosen so that $\|u\|_{L^2}\sim 1$, while the normalization for Fourier integral operators is chosen so that $\|U\|_{L^2(X)\rightarrow L^2(X')} \sim 1$.

Note that if $\kappa\circ \kappa'$ is well defined, and if $U\in I^{comp}(\kappa)$ and $U'\in I^{comp}(\kappa')$, then $U\circ U'\in I^{comp} (\kappa\circ \kappa')$.

If $U\in I^{comp}(\kappa)$ and $O\subset T^*X$ is an open bounded set, we shall say that $U$ is \emph{microlocally unitary }near $O$ if $U^* U \equiv I_{L^2(X)\rightarrow L^2(X)}$ microlocally near $O\times \kappa (O)$.

\subsection{Local properties of Fourier integral operators}\label{chaussette}
In this section we shall see that, if we work locally, we may describe many Fourier integral operators without the help of oscillatory coordinates. In particular, following \cite[4.1]{NZ}, we will recall the effect of a Fourier integral operator on a Lagrangian distribution which has no caustics. 

Let $\kappa :T^*\mathbb{R}^d\rightarrow T^*\mathbb{R}^d$ be a local symplectic diffeomorphism. By performing phase-space translations, we may assume that $\kappa$ is defined in a neighbourhood of $(0,0)$ and that $\kappa(0,0)=(0,0)$. We furthermore assume that $\kappa$
is such that the projection from the graph of $\kappa$
\begin{equation}\label{tantine}
T^*\mathbb{R}^d\times T^*\mathbb{R}^d \ni (x^1,\xi^1; x^0,\xi^0)\mapsto (x^1,\xi^0)\in \mathbb{R}^d\times \mathbb{R}^d,~~ (x^1,\xi^1)= \kappa(x^0,\xi^0),
\end{equation}
is a diffeomorphism near the origin. Note that this is equivalent to asking that
\begin{equation}\label{minhlien}
\text{the } n\times n \text{ block } (\partial x^1/\partial x^0) \text{ in the tangent map } d\kappa(0,0) \text{ is invertible}.
\end{equation}

 It then follows that there exists a unique function ${\psi}\in C^\infty(\mathbb{R}^d\times \mathbb{R}^d)$ such that for $(x^1,\xi^0)$ near $(0,0)$,
\begin{equation*}
\kappa(\partial_\xi\psi(x^1,\xi^0),\xi^0)=(x^1,\partial_x\psi(x^1,\xi^0)),~~ \det \partial^2_{x,\xi}\psi\neq 0 \text{ and } \psi(0,0)=0.
\end{equation*}
The function $\psi$ is said to \emph{generate} the transformation $\kappa$ near $(0,0)$.

Thanks to assumption (\ref{tantine}), a Fourier integral operator $T\in I^{comp}(\kappa)$ may then be written in the form
\begin{equation}\label{cookie}
T u(x^1):= \frac{1}{(2\pi h)^d}\int \int_{\mathbb{R}^{2n}} e^{i(\psi(x^1,\xi^0)-\langle x^0,\xi^0 \rangle/h} \alpha (x^1,\xi^0;h) u(x^0) \mathrm{d}x\mathrm{d}x^0 \mathrm{d}\xi^0,
\end{equation}
with $\alpha\in S^{comp}(\mathbb{R}^{2d})$.

Now, let us state a lemma which was proven in \cite[Lemma 4.1]{NZ}, and which describes the effect of a Fourier integral operator of the form (\ref{cookie}) on a Lagrangian distribution which projects on the base manifold without caustics.

\begin{lemme}\label{pacal}
Consider a Lagrangian $\Lambda_0=\{(x_0,\phi_0'(x_0)); x\in \Omega_0\},
\phi_0\in C_b^\infty(\Omega_0)$, contained in a small neighbourhood
$V\subset T^*\mathbb{R}^d$ such that $\kappa$ is generated by $\psi$ near
$V$. We assume that
\begin{equation}\label{beignet}
\kappa(\Lambda_0)=\Lambda_1 = \{(x,\phi_1'(x)); x\in
\Omega_1\},~~\phi_1\in C_b^\infty(\Omega_1).
\end{equation}
Then, for any symbol $a\in S^{comp}(\Omega_0)$,
 the application of a Fourier integral operator $T$ of the form (\ref{cookie}) to the Lagrangian state
\begin{equation*}
a(x) e^{i\phi_0(x)/h}
\end{equation*}
associated with $\Lambda_0$ can be expanded, for any $L \bel 0$, into
\begin{equation*}
T (a e^{i\phi_0/h})(x) = e^{i\phi_1(x)/h} \Big{(} \sum_{j=0}^{L-1} b_j(x)
h^j+ h^L r_L(x,h) \Big{)},
\end{equation*}
where $b_j\in S^{comp}$, and for any $\ell\in \mathbb{N}$, we have
\begin{equation*}
\begin{aligned}
\|b_j\|_{C^\ell(\Omega_1)}&\leq C_{\ell,j}
\|a\|_{C^{\ell+2j}(\Omega_0)},~~~~0\leq j\leq L-1,\\
\|r_L(\cdot,h)\|_{C^\ell(\Omega_1)}&\leq C_{\ell,L}
\|a\|_{C^{\ell+2L+n}(\Omega_0)}.
\end{aligned}
\end{equation*}
The constants $C_{\ell,j}$ depend only on $\kappa$, $\alpha$ and
$\sup_{\Omega_0} |\partial^\beta \phi_0|$ for $0<|\beta|\leq 2\ell +j$.
Furthermore, if we write $g :\Omega_1\ni x \mapsto g(x):= \pi\circ \kappa^{-1} (x,\phi_1'(x))\in \Omega_0$, the principal symbol $b_0$ satisfies
\begin{equation}\label{shotgun}
b_0(x^1)= e^{i\theta/h}\frac{\alpha_0(x^1,\xi^0)}{|\det \psi_{x,\xi}(x^1,\xi^0)|^{1/2}}|\det dg(x^1)|^{1/2} a\circ g(x^1),
\end{equation}
where $\xi_0= \phi_0'\circ g(x^1)$ and where $\theta\in \mathbb{R}$.
\end{lemme}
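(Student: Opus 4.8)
The plan is to substitute the Lagrangian state into the oscillatory representation (\ref{cookie}) of $T$ and then apply stationary phase in the $2d$ oscillatory variables $(x^0,\xi^0)$. Writing $u=a\,e^{i\phi_0/h}$ in (\ref{cookie}) one gets
\begin{equation*}
T(a\,e^{i\phi_0/h})(x^1) = \frac{1}{(2\pi h)^d}\int\!\!\int_{\mathbb{R}^{2d}} e^{i\Theta(x^1,x^0,\xi^0)/h}\,\alpha(x^1,\xi^0;h)\,a(x^0)\,\mathrm{d}x^0\,\mathrm{d}\xi^0,
\end{equation*}
with total phase $\Theta(x^1,x^0,\xi^0):=\psi(x^1,\xi^0)-\langle x^0,\xi^0\rangle+\phi_0(x^0)$, viewing $x^1$ as a parameter.

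First I would determine the critical set in $(x^0,\xi^0)$: the equations are $\xi^0=\phi_0'(x^0)$ and $x^0=\partial_\xi\psi(x^1,\xi^0)$, which by the defining relation of the generating function $\psi$ say exactly that $(x^0,\xi^0)\in\Lambda_0$ and $\kappa(x^0,\xi^0)=(x^1,\partial_x\psi(x^1,\xi^0))$. Since $\kappa(\Lambda_0)=\Lambda_1$ and, by (\ref{beignet}), $\Lambda_1$ projects diffeomorphically onto $\Omega_1$, for every $x^1\in\Omega_1$ there is a unique critical point $x^0_c=g(x^1)$, $\xi^0_c=\phi_0'(g(x^1))$. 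The Hessian of $\Theta$ in $(x^0,\xi^0)$ is $\left(\begin{smallmatrix}\phi_0''(x^0)&-I\\-I&\partial^2_{\xi\xi}\psi(x^1,\xi^0)\end{smallmatrix}\right)$, whose determinant is, up to sign, $\det\big(I-\phi_0''(x^0)\,\partial^2_{\xi\xi}\psi(x^1,\xi^0)\big)$. Differentiating the fixed-point identity $g(x^1)=\partial_\xi\psi\big(x^1,\phi_0'(g(x^1))\big)$ in $x^1$ gives $\big(I-\partial^2_{\xi\xi}\psi\,\phi_0''\big)\,dg=\partial^2_{\xi x}\psi$, hence $|\det\mathrm{Hess}|=|\det\psi_{x,\xi}|/|\det dg|$; in particular the critical point is nondegenerate precisely because $g$ is a local diffeomorphism, i.e. because $\Lambda_0$ and $\Lambda_1$ both project without caustics.

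Next I would identify the critical value: for $\Theta_c(x^1):=\Theta(x^1,x^0_c,\xi^0_c)$, differentiating and using criticality yields $\partial_{x^1}\Theta_c(x^1)=\partial_x\psi(x^1,\xi^0_c)$, which by the generating-function relation is the covector of $\kappa(x^0_c,\xi^0_c)\in\Lambda_1$, i.e. $\phi_1'(x^1)$; hence $\Theta_c(x^1)=\phi_1(x^1)+\theta$ for a constant $\theta\in\mathbb{R}$. Then I would invoke the stationary phase expansion with a nondegenerate critical point (see e.g. \cite[Theorem 3.16]{Zworski_2012}) in the $2d$ variables $(x^0,\xi^0)$ — whose $(2\pi h)^d$ normalization cancels the prefactor — which produces $T(a e^{i\phi_0/h})(x^1)=e^{i\phi_1(x^1)/h}\big(\sum_{j=0}^{L-1}b_j(x^1)h^j+h^L r_L(x^1,h)\big)$ with $b_j,r_L\in S^{comp}$. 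The leading term is the amplitude $\alpha_0(x^1,\xi^0_c)\,a(g(x^1))$ at the critical point, times the phase factor $e^{i\theta/h}$, divided by $|\det\mathrm{Hess}|^{1/2}=|\det\psi_{x,\xi}|^{1/2}|\det dg|^{-1/2}$, which is exactly (\ref{shotgun}). The bounds $\|b_j\|_{C^\ell}\leq C_{\ell,j}\|a\|_{C^{\ell+2j}}$ and $\|r_L\|_{C^\ell}\leq C_{\ell,L}\|a\|_{C^{\ell+2L+d}}$, with constants depending only on $\kappa$, $\alpha$ and finitely many derivatives of $\phi_0$, follow from the quantitative form of the stationary-phase estimates, since the full amplitude depends on $a$ only through $\alpha(x^1,\xi^0;h)\,a(x^0)$ and each term of the expansion is a universal differential operator built from the fixed phase $\Theta$ applied to it.

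The main obstacle I expect is the bookkeeping of the last two steps: checking nondegeneracy of the Hessian honestly from the no-caustic hypotheses on both $\Lambda_0$ and $\Lambda_1$, and then carefully matching the stationary-phase Jacobian $|\det\mathrm{Hess}|^{-1/2}$ to the precise product $|\det\psi_{x,\xi}|^{-1/2}|\det dg|^{1/2}$ appearing in (\ref{shotgun}) while keeping track of the phase constant $\theta$. Everything else is routine stationary-phase machinery, which also delivers the claimed bounds on $b_j$ and $r_L$.
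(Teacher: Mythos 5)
The paper does not actually prove this lemma: it is quoted verbatim from \cite[Lemma 4.1]{NZ}, so there is no in-paper proof to compare against. That said, your stationary-phase reconstruction is correct and is certainly the same strategy as in \cite{NZ}: the critical-point equations $\xi^0=\phi_0'(x^0)$, $x^0=\partial_\xi\psi(x^1,\xi^0)$ are right; the identification of the critical value with $\phi_1(x^1)$ up to a constant via $\partial_{x^1}\Theta_c=\partial_x\psi(x^1,\xi^0_c)=\phi_1'(x^1)$ is right; and the Hessian computation $|\det\mathrm{Hess}|=|\det(I-\phi_0''\,\partial^2_{\xi\xi}\psi)|=|\det\psi_{x,\xi}|/|\det dg|$, obtained by differentiating the fixed-point relation defining $g$, is both correct and the key non-routine step, precisely encoding nondegeneracy through the no-caustic hypothesis on $\Lambda_1$.

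One small point you glide over: the stationary-phase expansion in $2d$ variables also produces the Maslov factor $e^{i\pi\,\mathrm{sgn}(\mathrm{Hess})/4}$, a unit-modulus constant that is not of the form $e^{i\theta/h}$ with $\theta\in\mathbb{R}$, so ``which is exactly (\ref{shotgun})'' is a slight overstatement. This factor is also implicitly dropped in the paper's statement of (\ref{shotgun}); since the signature is locally constant and only $|b_0|^2$ enters the applications in the paper (semiclassical measures, (\ref{beatles})), this costs nothing, but a fully honest write-up should either record the factor $e^{i\pi\,\mathrm{sgn}(\mathrm{Hess})/4}$ or observe that it may be absorbed into the phase convention for $\alpha_0$. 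Everything else — the cancellation of $(2\pi h)^{\pm d}$, the $C^\ell$ bounds with loss of $2j$ (resp.\ $2L+d$) derivatives of $a$ coming from the universal differential operators in the stationary-phase expansion, and the dependence of the constants only on $\kappa$, $\alpha$ and finitely many derivatives of $\phi_0$ — is routine and correctly argued.
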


\section{Hyperbolic near infinity manifolds}\label{portugal}

In this appendix, we will explain how the results of \cite{Ing} and of the present paper apply in the case of hyperbolic near infinity manifolds. Namely, we have to check that the hypotheses of Section \ref{assumptions} which depend on the structure of the manifold at infinity are satisfied in the case of Hyperbolic near infinity manifolds. More precisely, we will have to build distorted plane waves $E_h$ so that Hypothesis \ref{pied} is satisfied. This section partly follows \cite[\S 7]{DG}.

\begin{definition} We say that $X$ is \emph{hyperbolic near infinity} if it fulfils
the Hypothesis \ref{Guepard}, and if furthermore, in a collar
neighbourhood of $\partial \overline{X}$, the metric $g$ has sectional
curvature $-1$ and can be put in the form
\begin{equation*}g=\frac{db^2+h(b)}{b^2},
\end{equation*}
where $b$ is the boundary defining function introduced in section
\ref{Hector}, and where $h(b)$ is a smooth 1-parameter family of metrics on $\partial\overline{X}$ for $b\in [0,\epsilon)$.
\end{definition}

\paragraph{Construction of $E_h^0$}
Let us fix a $\xi \in \partial \overline{X}$. By definition of a hyperbolic near
infinity manifold, there exists a neighbourhood $\mathcal{V}_\xi$ of $\xi$
in $\overline{X}$ and an isometric diffeomorphism $\psi_\xi$ from
$\mathcal{V}_\xi\cap X$ into a neighbourhood $V_{q_0,\delta}$ of the
north pole $q_0$ in the unit ball $\mathbb{B}:=\{q\in \mathbb{R}^d;
|q|<1\}$ equipped with the metric $g_0$:
\begin{equation*}V_{q_0,\delta} :=\{q\in \mathbb{B}; |q-q_0|<\delta\},~~~~ g_0=\frac{4
dq^2}{(1-|q|^2)^2},
\end{equation*}
where $\psi_\xi(\xi)=q_0$, and $|\cdot|$ denotes the Euclidean length. We
shall choose the boundary defining function on the ball $\mathbb{B}$ to be
\begin{equation}\label{clovis}
b_0=2\frac{1-|q|}{1+|q|},
\end{equation}
and the induced metric $b_0^2 g_0|_{\mathbb{S}^d}$ on $\mathbb{S}^d=
\partial \mathbb{B}$ is the usual one with curvature $+1$. The function
$b_\xi:= b_0\circ \psi_\xi^{-1}$ can be viewed locally as a boundary defining function on $X$.

For each $p\in \mathbb{S}^d$, we define the Busemann function on $\mathbb{B}$
\begin{equation*}\phi_p^{\mathbb{B}}(q)=\log \Big{(}\frac{1-|q|^2}{|q-p|^2}\Big{)}.
\end{equation*}

Note that we have $\lim\limits_{p\rightarrow q}\phi_p^{\mathbb{B}}(q)= +\infty$.

There exists an $\epsilon>0$ such that the set
\begin{equation*}U_\xi := \{x\in \overline{X}; d_{\overline{g}(x,\xi)<\epsilon}\}
\end{equation*}
lies inside $\mathcal{V}_{\xi}$, where $\overline{g}= \bdf^2g$ is the
compactified
metric.
We define the function
\begin{equation*}\phi_\xi(x):=
\phi^{\mathbb{B}}_{q_0}
\big{(}\psi_{\xi}(x)\big{)}, ~~\text{    for   } x\in U_{\xi}, ~~ 0 \text{ otherwise}.
\end{equation*}

Let $\hat{\chi}:\overline{X}\longrightarrow [0,1]$ be a smooth function which
vanishes outside of $U_\xi$, which is equal to one in a neighbourhood
of $\xi$.

The incoming wave is then defined as
\begin{equation*}E_h^0(x;\xi) := \tilde{\chi}(x) e^{((d-1)/2+i/h)\phi_\xi(x)}~~
\text{ if } x\in U_\xi ,~~~~0 \text{ otherwise.} 
\end{equation*}

$E_h^0$ is then a Lagrangian state associated to the Lagrangian manifold 
\begin{equation*}
\Lag':= \{ (x,\partial_x \phi_\xi(x)),  x\in U_\xi\},
\end{equation*}
represented on figure \ref{varietehyp}.
$\Lag'$ satisfies the invariance hypothesis (\ref{invfut}), as can be easily seen by working in $\mathbb{B}$, but it will not satisfy hypothesis (\ref{invpass}). To obtain a manifold $\Lag$ which satisfies hypothesis (\ref{invpass}) from  $\Lag'$, we just have to continue propagating the points of $\Lag'$ which are already in $\mathcal{DE}_+$, that is to say, which go directly to infinity in the future, as follows :

$$\Lag:= \Lag' \cup \bigcup_{t\geq 0} \bigcup_{\rho\in \Lag'\cap \mathcal{DE}_+} \Phi^t(\rho).$$

If $U_\xi$ has been chosen small enough, then $\Lag$ will be included in  $\mathcal{V}_\xi$, and by working in $\mathbb{B}$, we can check that the hypotheses \ref{puissanceplus} and \ref{puissanceplus2} are satisfied.

\begin{figure}
    \center
   \includegraphics[scale=0.9]{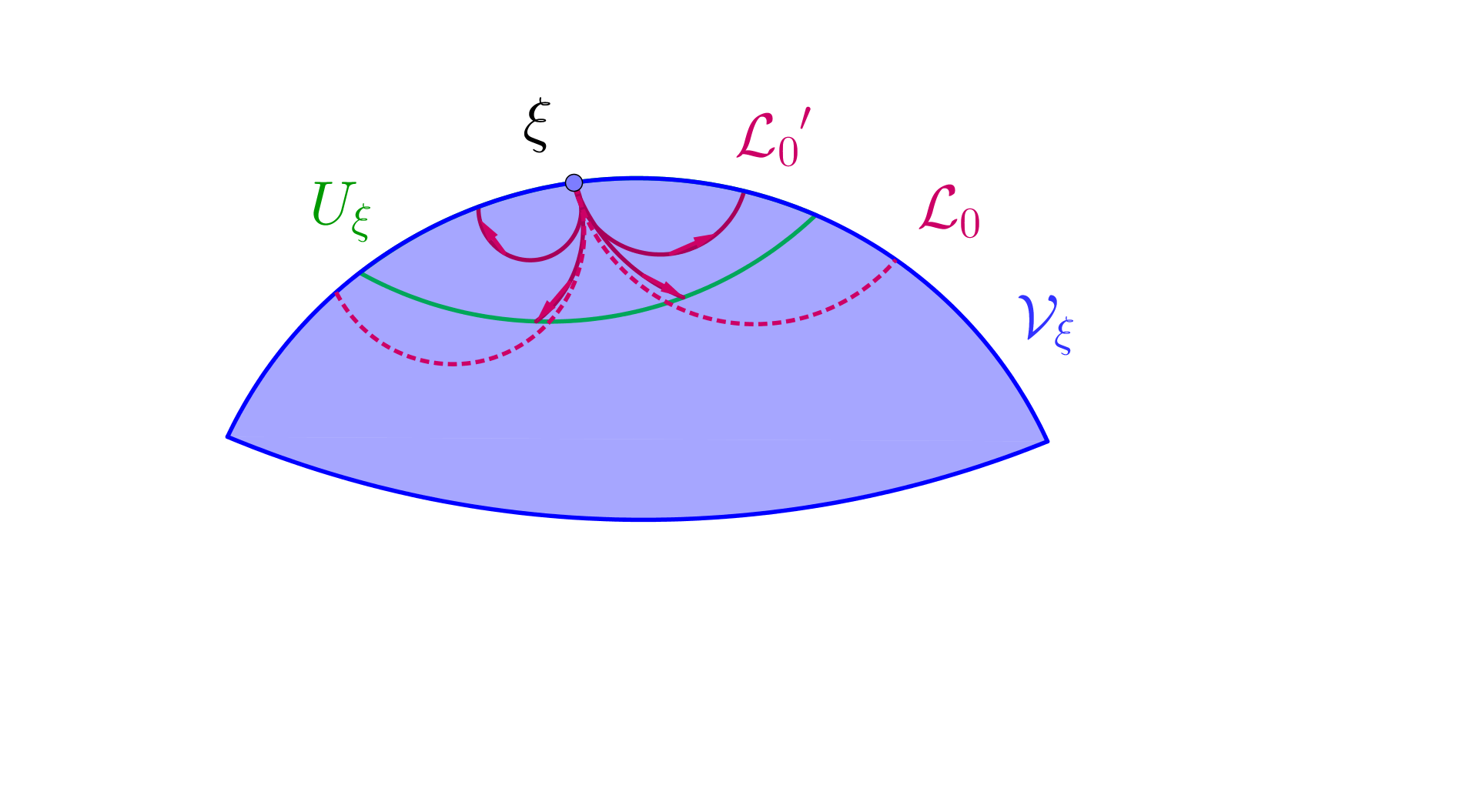}
    \caption{The Lagrangian manifolds $\Lag'$ and $\Lag$ when $(X,g)$ is hyperbolic near infinity.} \label{varietehyp}
\end{figure}

\paragraph{Construction of $E_h^1$}
We set $E_h^1:= -R_h F_h$, where $R_h$ is the outgoing resolvent $$(-h^2\Delta -\frac{(d-1)^2}{4} h^2 - 1 -i0)^{-1},$$ and $F_h:= [h^2\Delta, \tilde{\chi}] e^{((d-1)/2+i/h)\phi_\xi(x)}$. 

Let us now check that $E_h^1$ is a tempered distribution. As explained in \cite[\S 7.2]{DG}, we have 
\begin{equation}\label{grece}
\|b^{-1}F_h\|_{L^2(X)}= O(h).
\end{equation}
We must thus check that for any $\chi\in C_c^\infty(X)$, there exists $C,N\bel 0$ such that
\begin{equation}\label{montaigne}
\|\chi R_h b\|_{L^2\rightarrow L^2}\leq C h^{-N}.
\end{equation}
To prove such an estimate, we want to use the results of \cite{DV}. However, their main theorem does not apply directly here, and we have to adapt it a little.
Let us write $\hat{X}_0:=\{x\in X; b(x) \geq \epsilon_0/4\}$ and $\hat{X}_1:=\{x\in X; b(x) < \epsilon_0/2\}$, where $\epsilon_0$ is as in Hypothesis \ref{Guepard}. These manifolds are represented on figure \ref{gluingfigure}.

\begin{figure}
    \center
   \includegraphics[scale=0.9]{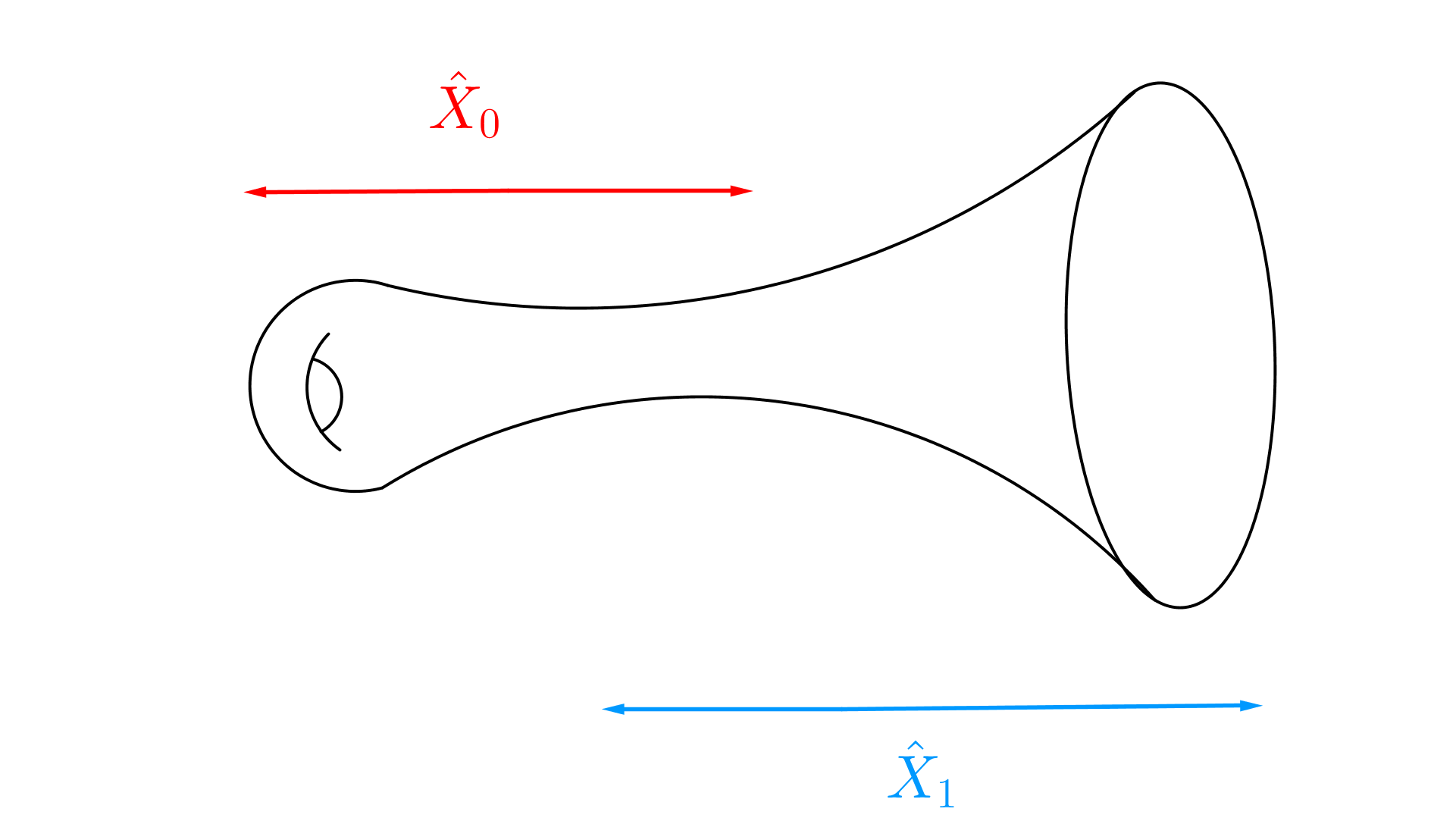}
    \caption{The manifolds $\hat{X}_0$ and $\hat{X}_1$} \label{gluingfigure}
\end{figure}

Let $\rho_0$,$\rho_1$, $\rho$, $\tilde{\rho}_0$, $\tilde{\rho}_1$, $\tilde{\rho}$ be bounded functions in $C^\infty(X)$ such that
\begin{itemize}
\item $\rho_0,\tilde{\rho}_0=1$ on $\hat{X}_0$, and $\rho_0,\tilde{\rho}_0$ vanish outside of $\{x\in X; b(x)\leq \epsilon_0/8\}$.
\item $\rho_1,\tilde{\rho}_1=1$ on $\hat{X}_0\cap \hat{X}_1$, $\rho_1,\tilde{\rho}_1$ vanish in $\{x\in X; b(x)\geq \epsilon_0\}$. We do not specify yet the behaviour of $\rho_1$ and $\tilde{\rho}_1$ in $X_1\backslash X_0$: actually, in the sequel, these two functions will have a different behaviour in this region of $X$.
\item  $\rho,\tilde{\rho}\in C^\infty(X)$ are such that $\rho=\rho_1$ on $X_1$, $\tilde{\rho}=\tilde{\rho}_1$ on $X_1$ and $\rho=\tilde{\rho}=1$ on $X\backslash X_1$.
\end{itemize}

Suppose that there exist constants $C,N\bel 0$ such that for $j=0,1$, we have
\begin{equation}\label{fureter}
\|\rho_j R_h\tilde{\rho}_j\|_{L^2(X)\rightarrow L^2(X)}\leq Ch^{-N}.
\end{equation}
We are going to show that this implies the existence of constants $C',N'\bel 0$ such that
\begin{equation}\label{musil}
\|\rho R_h\tilde{\rho}\|_{L^2(X)\rightarrow L^2(X)}\leq C'h^{-N'}.
\end{equation}
In \cite{DV}, the authors prove (\ref{musil}) only in the case where $\rho_0=\tilde{\rho}_0$, $\rho_1=\tilde{\rho}_1$, $\rho=\tilde{\rho}$. However, the proof of (\ref{musil}) follows exactly the same lines. Let us recall them, for completeness.

Let $\chi_0\in C^\infty(\mathbb{R};[0,1])$ be such that $\chi_0(s)=1$ if $s\geq 5\epsilon_0/12$ and $\chi_0(s)=0$ if $s\leq \epsilon_0/3$ and let $\chi_1=1-\chi_0$.

We then define a parametrix for $(P_h-1)$ as follows. Let
\begin{equation*}
F:= \chi_0(b(x) +\epsilon_0/12) R_h \chi_0(b(x)) + \chi_1(b(x)-\epsilon_0/12) R_h\chi_1(b(x)).
\end{equation*}
We set $A_0:=[P_h, \chi_0\big{(}b(\cdot)+\epsilon_0/12\big{)}] R_h (\chi_0\circ b )$ and $A_1:=[P_h, \chi_1\big{(}b(\cdot)-\epsilon_0/12\big{)}] R_h (\chi_1\circ b)$. By the same algebraic computations as in \cite[\S 3]{DV}, we obtain that
\begin{equation*}
(P_h-1)(F-FA_0-FA_1+FA_0A_1)=Id - A_1A_0+A_1A_0A_1= Id + O_{L^2\rightarrow L^2}(h^\infty),
\end{equation*}
where the second inequality comes from (3.3) in \cite{DV}.

Consequently, for $h$ small enough, we have that
\begin{equation}\label{Gremeau}
\begin{aligned}
\|\rho R_h \tilde{\rho}\|_{L^2\rightarrow L^2} &\leq C \big{\|}\rho(F-FA_0-FA_1+FA_0A_1)\tilde{\rho}\big{\|}_{L^2\rightarrow L^2}\\
& = C \big{\|}\rho ( F-\chi_0\big{(}b(\cdot)+ \epsilon_0/12\big{)} R_h (\chi_0\circ b) A_1\\
&+ \chi_1\big{(}b(\cdot)-\epsilon_0/12\big{)} R_h (\chi_1\circ b) (-A_0+A_0A_1))\tilde{\rho}\big{\|}.
\end{aligned}
\end{equation}
Let us use the triangular inequality, and bound each of the terms. We have
$$\|\rho F \tilde{\rho}\|\leq \|\rho_0 R_h \tilde{\rho}_0\|+\|\rho_1 R_h \tilde{\rho}_1\|\leq C h^{-N}$$
by (\ref{fureter}).

Next, we have 
\begin{equation*}
\begin{aligned}
\big{\|}\rho(\chi_0\big{(}b(\cdot)+ \epsilon_0/12\big{)} R_h (\chi_0\circ b) A_1)\tilde{\rho}\big{\|} &\leq \big{\|}\rho_0 R_h (\chi_0\circ b) \big{[}P_h, \chi_1\big{(}b(\cdot)-\epsilon_0/12\big{)}\big{]}R_h \tilde{\rho}_1\big{\|}\\
&\leq C \|\rho_0 R_h \tilde{\rho}_0\| \|\rho_1 R_h \tilde{\rho}_1\|\\
&\leq C h^{-2N},
\end{aligned}
\end{equation*}
where, to go from the first line to the second, we used the fact that $\rho_1=1$ on $X_0\cap X_1$ and that, by considerations on the supports, we have
\begin{equation}\label{supportborne}
\begin{aligned}
\big{[}P_h, \chi_1\big{(}b(\cdot)-\epsilon_0/12\big{)}\big{]} (1-\rho_1) &\equiv 0\\
 (1-\tilde{\rho}_0)\big{[}P_h, \chi_1\big{(}b(\cdot)-\epsilon_0/12\big{)}\big{]}&\equiv 0.
 \end{aligned}
 \end{equation}

As to the last term of  (\ref{Gremeau}), we may bound it by
\begin{equation*}
\begin{aligned}
&\big{\|}\rho\chi_1(x-\epsilon_0/12) R_h \chi_1(-A_0+A_0A_1)\tilde{\rho}\big{\|}\\
&\leq \big{\|}\rho_1 R_h (\chi_1\circ b) \big{(}-\big{[}P_h, \chi_0\big{(}b(\cdot)\epsilon_0/12\big{)]} R_h (\chi_0\circ b) \\
&+ \big{[}P_h, \chi_0\big{(}b(\cdot)+\epsilon_0/12\big{)]} R_h (\chi_0\circ b) \big{[}P_h, \chi_1\big{(}b(\cdot)+\epsilon_0/12\big{)]} R_h (\chi_1\circ b) \big{)} \tilde{\rho}\big{\|}\\
&\leq C \big{(} \| \rho_1 R_h \tilde{\rho_1} \rho_0 R_h \tilde{\rho_0}\| + \|\rho_0 R_h \tilde{\rho}_0 \rho_1 R_h \tilde{\rho}_1 \|\big{)}\\
&\leq C h^{-2N},
\end{aligned}
\end{equation*}
where, to obtain the second inequality, we used again (\ref{supportborne}).

This concludes the proof of (\ref{musil}).

To prove (\ref{montaigne}), we use  \cite[Theorem 5.1]{vasy2011microlocal} (for $s=1$), which says that if $(X,g)$ is asymptotically hyperbolic and has no trapped set, then for any $\sigma\in \mathbb{R}$, we have
\begin{equation*}
\|b^{-(d-1)/2 +i\sigma} (-\Delta-(\frac{d-1}{2})^2-\sigma^2)^{-1} f\|_{H^1_{|\sigma|^{-1}}(X_{0,even})}\leq \frac{C}{|\sigma|}\|b^{-(d+3)/2+i\sigma} f\|_{H^0_{|\sigma|^{-1}}(X_{0,even})}.
\end{equation*}
In the notations of \cite{vasy2011microlocal}, we have $\|f\|_{H^0_{|\sigma|^{-1}}(X_{0,even})}\sim \|b^{(d+1)/2} f\|_{L^2(X)}$.
Since, furthermore, the $L^2$ norm in a compact set may be bounded by the $H^1_{|\sigma|^{-1}(X_{0,even})}$ norm, we obtain that for any $\chi \in C_c^\infty(X)$ we have \begin{equation*}
\|\chi (-h^2\Delta - h^2 \Big{(}\frac{d-1}{2}\Big{)}^2 - 1)^{-1}f\|_{L^2(X)}\leq \frac{C'}{h}\|b^{-1}f\|_{L^2(X)}.
\end{equation*}

Consequently, for any $f\in L^2(X)$, we have
\begin{equation*}
\|\chi (-h^2\Delta - h^2 \Big{(}\frac{d-1}{2}\Big{)}^2 - 1)^{-1}b f\|_{L^2(X)}\leq \frac{C'}{h}\|f\|_{L^2(X)},
\end{equation*}
which gives us (\ref{montaigne}) in the non-trapping case. 

If the trapped sed is non-empty, we glue together the resolvent estimates as in the proof of Theorem 6.1 in \cite{DV}, but by using (\ref{musil}) instead of \cite[Theorem 2.1]{DV}. More precisely, we take $\rho_0,\rho_1$ and $\rho'_0$ compactly supported, and $\rho'_1=b$. This gives us  (\ref{montaigne}).

By combining this with (\ref{grece}), we have shown that $E_h^1$ is a tempered distribution. We may then check easily that

\begin{equation*}
\Big{(}-h^2\Delta - h^2\frac{(d-1)^2}{4}-1\Big{)}E_h=0. 
\end{equation*}

\paragraph{Wave-front set of $E_h^1$}
Let us now show that $E_h^1$ is such that there exists $\epsilon_2\bel 0$ such that for any $(x,\xi)\in S^*X$ such that $b(x)<\epsilon_2$, we have
\begin{equation}\label{outgoing2}
\rho\in WF_h (E_h^1) \Rightarrow \rho\in\mathcal{DE}_+.
\end{equation}

In \cite[\S 7]{DG}, the authors proved that for all $\rho\in WF_h(E_h^1)$, we have $\rho\in S^*X$ and we have either:

(i) $\rho\in \Gamma^+$, that is, $\rho$ is in the outgoing tail, or

(ii) There exists $t\geq 0$ such that $\Phi^{-t}(\rho)\in \{(x,\partial_x\phi_\xi(x)); x\in \spt(\partial \tilde{\chi})\}$, where $\tilde{\chi}$ and $\phi_\xi$ are as in the construction of $E_h^0$.

Suppose that $\rho\in  WF_h(E_h^1)\cap \mathcal{DE}_-$ is such that there exists $t\geq 0$ such that $\Phi^{-t}(\rho)\in \{(x,\partial_x \phi_\xi(x)); x\in \spt(\partial \tilde{\chi})\}$.
 Then, $d_{\overline{g}}(\Phi^{-t'}(\rho),\xi)$ will be a decreasing function going to zero as $t'\rightarrow +\infty$. 

Let us denote by $U'_\xi\subset U_\xi$ a neighbourhood of $\xi$ in $\overline{X}$ on which $\tilde{\chi}$ is equal to one. 
 
But then, as explained in \cite[Assumption (A7)]{DG}, if we take $\epsilon_2$ small enough, we may suppose that for any $\rho'\in S^*X$ such that $b(\rho') \leq \epsilon_2$ and $d_{\overline{g}}(\Phi^{-t'}(\rho),\xi)$ will be a decreasing function going to zero as $t'\rightarrow +\infty$, we have $\rho'\in \pi_X(U'_\xi)$. Therefore, $\pi_X(\rho)\in U'_\xi$, and $\pi_X(\Phi^{-t}(\rho))\in U'_\xi$. This is absurd, since $\pi_X(\Phi^{-t}(\rho))\in \spt(\partial \tilde{\chi})$, and $\tilde{\chi}\equiv 1$ on $U'_\xi$. This proves (\ref{outgoing2}).
\end{appendices}

\bibliographystyle{alpha}
\bibliography{references}
\end{document}